\documentclass{amsart}
\usepackage[a4paper,bottom=2.5cm]{geometry}

\makeatletter
\def\@seccntformat#1{%
  \protect\textup{\protect\@secnumfont
    \ifnum\pdfstrcmp{subsection}{#1}=0 \bfseries\fi
    \ifnum\pdfstrcmp{subsubsection}{#1}=0 \itshape\fi
    \csname the#1\endcsname
    \protect\@secnumpunct
  }%
}
\def\part{\@startsection{part}{0}%
  \z@{\linespacing\@plus\linespacing}{.5\linespacing}%
  {\normalfont\bfseries\raggedright}}
\def\section{\@startsection{section}{1}%
  \z@{.8\linespacing\@plus\linespacing}{.5\linespacing}%
  {\normalfont\scshape\centering}}
\def\subsection{\vspace{0.8\linespacing}\@startsection{subsection}{2}%
  \z@{-0.01\linespacing}{0.3\linespacing}%
  {\normalfont\bfseries\S\ }}
\def\subsubsection{\@startsection{subsubsection}{3}%
  \z@{.8\linespacing\@plus.7\linespacing}{-.5em}%
  {\normalfont\itshape}}
\def\paragraph{\@startsection{paragraph}{4}%
  \z@{.8\linespacing}{-\fontdimen2\font}%
  {\normalfont\itshape}}
\makeatother

\makeatletter
\newcommand{\leqnomode}{\tagsleft@true\let\veqno\@@leqno}
\newcommand{\reqnomode}{\tagsleft@false\let\veqno\@@eqno}
\makeatother

\usepackage[T1,T2A]{fontenc}
\usepackage[utf8]{inputenc}
\usepackage{microtype}
\usepackage{newunicodechar}
\usepackage[greek,russian,english]{babel}
\usepackage[babel]{csquotes}

\usepackage{etoolbox}
\usepackage{enumitem}

\usepackage[dvipsnames]{xcolor}
\usepackage{tikz}
\usetikzlibrary{3d,%
  calc,%
  math,%
  fadings,%
  arrows.meta,%
  patterns,patterns.meta,%
  decorations.pathmorphing}
\usepackage{caption}
\newcommand\subcaption[1]{\captionsetup{skip=1pt,font=footnotesize}\caption*{#1}}

\usepackage[pdftex,hypertexnames=false,
colorlinks,%
citecolor=black!70!blue,%
linkcolor=black!70!blue,%
urlcolor={black!70!blue}]{hyperref}
\usepackage[nameinlink,nosort]{cleveref}
\usepackage{crossreftools}

\usepackage[backend=biber,%
  style=alphabetic,%
  autolang=other]{biblatex}
\DeclareFieldFormat{hal}{%
  \mkbibacro{HAL}\addcolon\space
  \ifhyperref{\href{https://hal.science/#1}{\nolinkurl{#1}}}{\nolinkurl{#1}}
}
\DeclareFieldAlias{eprint:hal}{hal}
\DeclareFieldAlias{eprint:HAL}{eprint:hal}
\AtBeginBibliography{\footnotesize}
\usepackage{stmaryrd}
\SetSymbolFont{stmry}{bold}{U}{stmry}{m}{n}
\makeatletter

\renewcommand*{\@cdots}{%
  \mathinner{{\cdotp}{\cdotp}{\cdotp}}}
\makeatother
\usepackage{amsmath}
\usepackage{amssymb}
\usepackage[old]{old-arrows} 
\usepackage{bm}
\usepackage{keytheorems}
\keytheoremset{predefined={parent=section}}
\providekeytheorem{claim}[style=plain]
\providekeytheorem{claim*}[style=plain,numbered=false]
\providekeytheorem{remark*}[name=Remark,style=remark,numbered=false]
\providekeytheorem{comment*}[name=Comment,style=remark,numbered=false]
\NewDocumentEnvironment{details}{O{Details}}%
{\begingroup\color{black!75}\fontsize{9pt}{10.5pt}\selectfont%
  \begin{proof}[#1]}%
  {\end{proof}\endgroup}
\NewDocumentEnvironment{numerics}{O{Numerics}}%
{\begingroup\color{black!75}\fontsize{9pt}{10.5pt}\selectfont%
  \begin{proof}[#1]}%
  {\end{proof}\endgroup}

\usepackage{tcolorbox}
\tcbuselibrary{skins}
\tcbuselibrary{breakable}
\tcbuselibrary{theorems}
\tcolorboxenvironment{details}{%
  blanker,breakable,left=1em,
  before skip=10pt,after skip=10pt,
  borderline west={2pt}{0pt}{gray},
  use color stack=true}
\tcolorboxenvironment{numerics}{%
  blanker,breakable,left=1em,
  before skip=10pt,after skip=10pt,
  borderline west={2pt}{0pt}{gray},
  use color stack=true}
\newtcolorbox{tbox}{%
  enhanced,colframe=black!80,boxrule=0.4pt,colback=white,boxsep=0.8\linespacing,left=0pt,right=0pt,top=0pt,bottom=0pt,before={\noindent}}

\colorlet{TilingGrid}{black!15!white}
\definecolor{RainbowA}{HTML}{ec111a}
\definecolor{RainbowB}{HTML}{fb6330}
\definecolor{RainbowC}{HTML}{eed42f}
\definecolor{RainbowD}{HTML}{3ea908}
\definecolor{RainbowE}{HTML}{009dd6}
\definecolor{RainbowF}{HTML}{7849b8}
\definecolor{RainbowG}{HTML}{da70d6}

\colorlet{LevelI}{RainbowC!50!RainbowB}
\colorlet{LevelI+1}{RainbowD}
\colorlet{LevelK}{RainbowE}
\colorlet{LevelINext}{RainbowF}
\colorlet{LevelKLast}{RainbowG!50!RainbowA}
\colorlet{LevelKNext}{black}

\NewDocumentCommand{\TODO}{O{}}{\textcolor{red}{\bfseries [TODO\ifstrempty{#1}{}{ #1}]}}
\NewDocumentCommand{\N}{}{\mathbb{N}}
\NewDocumentCommand{\Z}{}{\mathbb{Z}}
\NewDocumentCommand{\Q}{}{\mathbb{Q}}
\NewDocumentCommand{\R}{}{\mathbb{R}}
\NewDocumentCommand{\bigO}{}{\mathcal{O}}
\NewDocumentCommand{\pcolon}{}{\mspace{3mu}{\scriptstyle\subseteq}\mspace{-2mu}\colon}
\NewDocumentCommand{\mto}{}{\rightrightarrows}
\makeatletter
\newcommand{\mapsmapsto}{\mathpalette\@mapsmapsto\relax}
\newcommand*{\@mapsmapsto}[2]{%
   \dimen@\fontdimen8
       \ifx#1\displaystyle\textfont\else
       \ifx#1\textstyle\textfont\else
       \ifx#1\scriptstyle\scriptfont\else
       \scriptscriptfont\fi\fi\fi 3
   \mathrel{%
      \vcenter{%
         \vbox{%
            \baselineskip\z@skip
            \lineskip\z@
            \ialign{##\cr
              \noalign{\kern 3\dimen@}\cr
              $#1\mapstochar\varrightarrow$\cr
              \noalign{\kern\dimen@}%
            $#1\mapstochar\varrightarrow$\cr}%
         }%
      }%
   }%
}
\makeatother

\usepackage{accents}
\newcommand{\ubar}[1]{\underaccent{\bar}{#1}}

\RenewDocumentCommand{\vec}{m}{\bm{#1}}
\NewDocumentCommand{\pxvec}{m}{\vec{#1}'}
\NewDocumentCommand{\basis}{m}{\vec{e}^{#1}}
\NewDocumentCommand{\interval}{mm}{\{#1,\ldots,#2\}}
\NewDocumentCommand{\ibox}{m}{\llbracket #1 \rrbracket}
\NewDocumentCommand{\restr}{mm}{#1|_{#2}}
\NewDocumentCommand{\norm}{m}{\lVert #1 \rVert}
\DeclareMathOperator{\len}{len}
\DeclareMathOperator{\polylog}{polylog}
\makeatletter
\newcommand{\bigcomp}{%
  \DOTSB
  \mathop{\vphantom{\sum}\mathpalette\bigcomp@\relax}%
  \slimits@
}
\newcommand{\bigcomp@}[2]{%
  \begingroup\m@th
  \sbox\z@{$#1\sum$}%
  \setlength{\unitlength}{0.9\dimexpr\ht\z@+\dp\z@}%
  \vcenter{\hbox{%
    \begin{picture}(1,1)
    \bigcomp@linethickness{#1}
    \put(0.5,0.5){\circle{1}}
    \end{picture}%
  }}%
  \endgroup
}
\newcommand{\bigcomp@linethickness}[1]{%
  \linethickness{%
      \ifx#1\displaystyle 2\fontdimen8\textfont\else
      \ifx#1\textstyle 1.65\fontdimen8\textfont\else
      \ifx#1\scriptstyle 1.65\fontdimen8\scriptfont\else
      1.65\fontdimen8\scriptscriptfont\fi\fi\fi 3
  }%
}
\makeatother
\NewDocumentCommand{\alphabet}{O{A}}{\mathcal{#1}}
\NewDocumentCommand{\blank}{}{\mathtt{B}}
\NewDocumentCommand{\dom}{O{}}{\mathrm{dom}_{#1}}
\NewDocumentCommand{\subpattern}{}{\sqsubseteq}
\RenewDocumentCommand{\square}{O{}}{\begin{tikzpicture}[x=1.4ex,y=1.4ex,baseline=.15ex]
    \ifstrempty{#1}
    {\draw[black!30] (0,0) rectangle (1,1);}
    {\draw[black!30, fill=#1] (0,0) rectangle(1,1);}
  \end{tikzpicture}}
\NewDocumentCommand{\wang}{m}{{#1}}
\NewDocumentCommand{\Swang}{O{}O{T}}{#2_{#1}}
\NewDocumentCommand{\Slabel}{O{}O{\alphabet[L]}}{#2_{#1}}
\NewDocumentCommand{\Scolor}{O{}O{\alphabet[C]}}{#2_{#1}}

\NewDocumentCommand{\lift}{m}{{#1}^{\uparrow}}
\NewDocumentCommand{\decsymbol}{}{
  \begin{tikzpicture}[baseline=0ex,x=0.5em,y=0.5em]
      \begin{scope}[rotate=45]
        \fill[rounded corners=0.5pt, even odd rule] (0,1) -- ++(0.3,-0.3) -- ++(0,-0.7) -- ++(-0.6,0) -- ++(0,0.7) -- cycle (0,0.7) circle (0.08);
      \end{scope}
    \end{tikzpicture}\mspace{-1mu}}
  \NewDocumentCommand{\dec}{}{\smash{f_{\mspace{-1mu}\decsymbol}}}
\NewDocumentCommand{\wire}{}{\mspace{-2mu}\begin{tikzpicture}[x=0.5em,y=0.5em]
    \draw[thick] (0,0.6) -- ++(1,0);
    \draw[thick] (0,0) -- ++(0.2,0) -- ++(0.25,0.25) -- ++(0.55,0);
  \end{tikzpicture}\mspace{-2mu}}
\NewDocumentCommand{\route}{}{\mspace{-2mu}\begin{tikzpicture}[x=0.5em,y=0.5em]
    \draw[thick] (0,0) -- ++(0.2,0) -- ++(0.5,0.6) -- ++(0.25,0);
    \draw[thick] (0,0.6) -- ++(0.2,0) -- ++(0.5,-0.6) -- ++(0.25,0);
  \end{tikzpicture}\mspace{-2mu}}
\NewDocumentCommand{\Srect}{}{\mathfrak{R}}
\NewDocumentCommand{\rect}{m}{\mathfrak{#1}}
\NewDocumentCommand{\pxrect}{m}{\tilde{\mathfrak{#1}}}
\NewDocumentCommand{\rborder}{m}{\partial\mspace{1mu}#1}
\NewDocumentCommand{\rvol}{m}{\lVert #1\rVert}
\NewDocumentCommand{\rtime}{}{\lambda}
\NewDocumentCommand{\rspace}{}{\mu}
\NewDocumentCommand{\facet}{mm}{f\ifstrempty{#1}{}{_{#1}}\ifstrempty{#2}{}{^{\scriptscriptstyle #2}}}
\NewDocumentCommand{\sfacet}{}{\longS}
\NewDocumentCommand{\rnorm}{m}{[#1]}
\DeclareTextSymbol{\textlongs}{TS1}{"73}
\DeclareTextSymbolDefault{\textlongs}{TS1}
\NewDocumentCommand{\longS}{}{\textit{\fontfamily{lmr}\textlongs\normalfont}\mspace{3mu}}
\NewDocumentCommand{\Sgrid}{}{\mathfrak{G}}
\DeclareDocumentCommand{\grid}{m}{\mathfrak{#1}}
\NewDocumentCommand{\pxgrid}{m}{\tilde{\mathfrak{#1}}}
\NewDocumentCommand{\subst}{O{\tau}}{#1}
\NewDocumentCommand{\gsubst}{O{\tau}}{\subst[#1]}
\NewDocumentCommand{\Ssubst}{}{\mathcal{S}}
\NewDocumentCommand{\ndill}{O{\tau}}{#1}
\NewDocumentCommand{\Sndill}{}{\mathcal{D}}
\NewDocumentCommand{\substep}{O{\tau}O{}}{\ifstrempty{#2}
  {\xrightarrow{#1}}
  {\xrightarrow[{\raisebox{0.5ex}[0.5ex]{$\scriptstyle #2$}}]{#1}}}
\NewDocumentCommand{\rsubstep}{O{\tau}O{}}{\ifstrempty{#2}
  {\xleftarrow{#1}}
  {\xleftarrow[{\raisebox{0.5ex}[0.5ex]{$\scriptstyle #2$}}]{#1}}}
\NewDocumentCommand{\limspace}{mO{}O{}}{\smash{\overleftarrow{#1}\ifstrempty{#2}{}{_{\!#2}}\ifstrempty{#3}{}{^{\raisebox{0.7ex}{$\scriptscriptstyle #3$}}}}}
\NewDocumentCommand{\odnone}{}{{\lozenge}}
\NewDocumentCommand{\K}{O{}O{}}{K%
  \ifstrempty{#2}
  {}
  {_{\mathrm{#2}}}}%
\NewDocumentCommand{\pxspace}{mO{}}{\smash{\tilde{\mu}_{#1}\ifstrempty{#2}{}{^{\mspace{2.5mu}#2}}}}
\NewDocumentCommand{\nghb}{}{\mathcal{N}}
\NewDocumentCommand{\field}{m}{{\boldmath\textbf{#1}}}
\NewDocumentCommand{\mwang}{mO{}}{\smash{\wang{\tilde{#1}\ifstrempty{#2}{}{^{(#2)}}}}}
\NewDocumentCommand{\compzone}{O{}}{\tilde{\mathfrak{c}}_{#1}}

\NewDocumentCommand{\ram}{mO{}O{}}{\mathtt{#1}\ifstrempty{#2}{}{\mspace{-1mu}\textnormal{\texttt{[}}#2\textnormal{\texttt{]}}}\ifstrempty{#3}{}{\mspace{-1mu}\textnormal{\texttt{[}}#3\textnormal{\texttt{]}}}}
\NewDocumentCommand{\funram}{m}{\varphi_{#1}}
\NewDocumentCommand{\halt}{O{}}{{\downarrow\ifstrempty{#1}{}{_{[#1]}}}}

\NewDocumentCommand{\code}{m}{\langle #1 \rangle}

\newcommand{\card}[1]{\left|#1\right|}
\newcommand{\sett}[3][]{\left\{#2\ifstrempty{#1}{}{\in #1}\colon #3\right\}}

\newcommand\X{X}
\newcommand\freq{p}

\title[Pimp my fixpoint: sofic realization of substitution-based shift spaces]{Pimp my fixpoint: sofic realization of multidimensional substitution-based shift spaces}

\author{Antonin Callard}
\address{ENS de Lyon, UCBL, CNRS, LIP UMR 5668, Lyon, France}
\curraddr{Université de Montpellier, CNRS, LIRMM UMR 5506, Montpellier, France}
\email{contact@acallard.net}

\author{Pierre Guillon}
\address{CNRS, Aix-Marseille Université, I2M UMR 7373, Marseille, France}
\email{pguillon@math.cnrs.fr}

\author{Léo Paviet Salomon}
\address{ENS de Lyon, UCBL, CNRS, LIP UMR 5668, Lyon, France}
\email{mail@lpaviets.org}

\author{Pascal Vanier}
\address{Université Caen Normandie, ENSICAEN, CNRS, Normandie Univ, GREYC UMR 6072, Caen, France}
\email{pascal.vanier@unicaen.fr}

\subjclass[2020]{Primary 37B10; Secondary 68Q09}

\thanks{No AI tool was used in the research and creation of this article.}

\begin{document}

\begin{abstract}
  In symbolic dynamics, the fixed-point construction from~\cite{Durand-Romashchenko-Shen_2012:fixed-point-tilesets-and-applications} defines shift spaces of finite type whose configurations embed infinite hierarchies of tilings. This article provides a ``black box'' abstraction of this method phrased in terms of substitutions and $S$-adic limit spaces operating over sequences of increasingly large alphabets. By quantifying the amount of information computed by the substitutions at each level, and using a suitable parallel model of computation, we provide a simple positive criterion of multidimensional soficity that generalizes classical examples from the literature.
\end{abstract}

\maketitle

%
\section{Introduction}
Shift spaces are sets of colorings (or ``configurations'') of a discrete space (usually $\Z^d$) with a finite alphabet of colors $\alphabet$ that abide to some family of forbidden patterns. Most commonly, this family is composed of finitely many patterns, and thus defines a shift of finite type (SFT). Perhaps surprisingly, these shifts may already exhibit complex behaviors: it is undecidable whether they are empty~\cite{Berger_1964:phd:undecidability-domino-problem}, they may have only non-computable configurations~\cite{Hanf_1974:nonrecursive-tilings-of-the-plane-1,Myers_1974:nonrecursive-tilings-of-the-plane-2}, they may contain only maximally complex configurations in the sense of Kolmogorov complexity~\cite{Durand-Levin-Shen_2008:complex-tilings}\dots

On the other side of the complexity spectrum are the effective shifts, whose forbidden constraints can be enumerated algorithmically. Classically, the most considered intermediate class of shifts is the class of \emph{sofic shifts}, which are defined as the cellwise recolorings of the shifts of finite type (their \emph{covers}). While they still admit a finite description, they form a superclass of SFTs that exhibit strictly more complex behaviors: for instance, the shift on the alphabet $\{\square,\square[black]\}$ whose tilings contain at most a single $\square[black]$ cell is sofic but not of finite type. In fact, sofic shifts can manifest computationally complex properties: as illustrated by a landmark result of M.~Hochman~\cite{Hochman_2009:dynamics-recursive-properties-of-multidimensional-symbolic-systems}, arbitrary effective shifts can be realized as subsystems of higher-dimensional sofic shifts.

\medskip
While sofic shifts form a strict subset of effective shifts, and enjoy a simple algebraic characterization on $\Z$~\cite[Chapter 3.2]{Lind-Marcus_1995:introduction-symbolic-dynamics}, they do not admit such a straightforward separation criterion in dimension $d \geq 2$. Indeed, the only arguments known to the authors proving that an effective shift is not sofic are based on information-theoretic tools quantifying the amount of information inside their patterns.

More precisely, since shifts of finite type are defined by finite families of forbidden patterns, the validity of patterns in an SFT is a local property. In a shift of finite type, the compatibility of a pattern of domain $\interval{0}{n}^d$ with its exterior is thus constrained by the size of their sufficiently thick border, which is only composed of $\bigO(n^{d-1})$ cells. With only a finite alphabet, this border cannot embed enough information to fully describe the interior of large enough patterns. As cellwise projections of SFTs, this $\bigO(n^{d-1})$ information bound still applies to sofic shifts.

\begin{figure}[t]
  \begin{tikzpicture}[scale=0.32, decoration={random steps,segment length=0.4em,amplitude=0.15em}]
    \pgfmathsetseed{4}
    \draw [TilingGrid,decorate,path picture={
      \fill[black] (0,0) rectangle ++(1,15);
      \foreach \i in {1,...,12} {
        \foreach \j in {0,...,14} {
          \pgfmathrandominteger{\c}{0}{1}
          \node[black,font=\scriptsize] at (\i.5,\j.5) {$\c$};
          \node[black,font=\scriptsize] at ($(-\i,\j) + (0.5,0.5)$) {$\c$};
        }
      }
      \draw (-12,0) grid (13,15);
    }] (-11.5,0.5) rectangle (12.5,14.5);
  \end{tikzpicture}
  \caption{A typical tiling of the mirror shift.}
\end{figure}

The textbook example of this obstruction is the $2$-dimensional mirror shift: on the alphabet~$\{0,1,\square[black]\}$, only a single $\square[black]$ symbol can appear on each line, all $\square[black]$ symbols must constitute an entire column, and the two induced $\{0,1\}$-colored half-planes must be mirrors of one another. The mirror shift defines $2^{n^2}$ binary $\{0,1\}$-colored patterns of size $n \times n$, which must be mirrored across the $\square[black]$-column. However large the alphabet of a cover for the mirror shift, these borders are too small to transmit the description of their interior to the other side with only SFT constraints.

The quest to find the exact separation between sofic and non-sofic effective shifts has found many highly complex examples of sofic multidimensional shifts: substitutive
shifts~\cite{Mozes_1989:tiling_substitution_systems_dynamical_systems_generated_by_them};
S-adic effective shifts~\cite{Aubrun-Sablik_2014:multid-effective-s-adic-subshifts-are-sofic}; seas of squares~\cite{Westrick_2017:seas-of-squares}; all effective shifts of low density~\cite{Destombes_2021:phd:algorithmic-complexity-soficness-subshifts-multidimensional}\dots

All these examples can actually be recovered using ad-hoc modifications of the so-called \emph{fixed-point construction} on tilings (\textit{c.f.}~\cite{Durand-Romashchenko-Shen_2012:fixed-point-tilesets-and-applications} and \Cref{sof:sec}). Based on a fixed-point theorem from computability theory, this construction defines shifts of finite type whose configurations simulate an infinite hierarchy of layers. Inside each layer, a simulation groups the tiles into finite blocks, called \emph{macro-tiles}, in which the computations of a Turing machine emulate a tile of the next level.

Furthermore, this construction is entirely algorithmic: in order to change the tiling simulated at some level, one only needs to change its ``program'' drawn in the macro-tiles of the previous layer. This versatility certainly explains the variety of shifts which have been proved sofic by this construction. For example, if a substitution is a map $\tau \colon \alphabet \to \alphabet^{D}$ that transforms individual cells into finite rectangular patterns, then the resulting \emph{substitutive shift} (in which every tiling admits preimages for the sequence of substitutions $\tau,\, \tau^2,\, \tau^3\dots$) is actually sofic~\cite{Mozes_1989:tiling_substitution_systems_dynamical_systems_generated_by_them}. For every computable (or even $\Pi^{0}_{1}$-computable) set of integers $S \subseteq \N$,~\cite{Westrick_2017:seas-of-squares} proves the soficity of the seas of squares of edge lengths $S$ (where a sea of square is a tiling drawing independent $\square[black]$-squares over a $\square$ background). For every $\alpha < 1$,~\cite{Destombes_2021:phd:algorithmic-complexity-soficness-subshifts-multidimensional} proves the soficity of every effective shift $X \subseteq \{\square,\square[black]\}^{\Z^2}$ in which every $n \times n$ pattern contains at most $\bigO(n^{\alpha})$ $\square[black]$-colored cells.

However, the fixed-point construction has proven difficult to formalize into a single theorem, as each new application has required (sometimes substantial) modifications of the original construction to fit its specificities. Nevertheless, all the aforementioned examples seem to share a common key ingredient: namely, there appear to be sublinear bounds on the amount of ``useful information'' contained inside the patterns of their tilings.

\begin{figure}[ht]
  \begin{tikzpicture}[RainbowF!85,scale=0.14]
    \begin{scope}[shift={(0,15.5)}]
      \fill (0,0) rectangle ++(1,1);
      \draw[TilingGrid] (0,0) grid (1,1);
    \end{scope}
    \begin{scope}[shift={(6,15)}]
      \fill (1,1) rectangle ++(1,1);
      \fill (0,0) rectangle ++(2,1);
      \draw[TilingGrid] (0,0) grid (2,2);
    \end{scope}
    \begin{scope}[shift={(13,14)}]
      \fill (0,0) rectangle ++(4,1);
      \fill (3,1) rectangle ++(1,3);
      \fill (1,1) rectangle ++(1,1);
      \fill (2,2) rectangle ++(1,1);
      \draw[TilingGrid] (0,0) grid (4,4);
    \end{scope}
    \begin{scope}[shift={(22,12)}]
      \fill (0,0) rectangle ++(8,1);
      \fill (7,1) rectangle ++(1,7);
      \fill (3,1) rectangle ++(1,3);
      \fill (4,4) rectangle ++(3,1);
      \fill (1,1) rectangle ++(1,1);
      \fill (2,2) rectangle ++(1,1);
      \fill (5,1) rectangle ++(1,1);
      \fill (6,2) rectangle ++(1,1);
      \fill (5,5) rectangle ++(1,1);
      \fill (6,6) rectangle ++(1,1);
      \draw[TilingGrid] (0,0) grid (8,8);
    \end{scope}
    \begin{scope}[shift={(35,8)}]
      \fill (0,0) rectangle ++(16,1);
      \fill (15,1) rectangle ++(1,15);
      \fill (7,1) rectangle ++(1,7);
      \fill (8,8) rectangle ++(7,1);
      \begin{scope}
        \fill (3,1) rectangle ++(1,3);
        \fill (4,4) rectangle ++(3,1);
        \fill (1,1) rectangle ++(1,1);
        \fill (2,2) rectangle ++(1,1);
        \fill (5,1) rectangle ++(1,1);
        \fill (6,2) rectangle ++(1,1);
        \fill (5,5) rectangle ++(1,1);
        \fill (6,6) rectangle ++(1,1);
      \end{scope}
      \begin{scope}[shift={(8,0)}]
        \fill (3,1) rectangle ++(1,3);
        \fill (4,4) rectangle ++(3,1);
        \fill (1,1) rectangle ++(1,1);
        \fill (2,2) rectangle ++(1,1);
        \fill (5,1) rectangle ++(1,1);
        \fill (6,2) rectangle ++(1,1);
        \fill (5,5) rectangle ++(1,1);
        \fill (6,6) rectangle ++(1,1);
      \end{scope}
      \begin{scope}[shift={(8,8)}]
        \fill (3,1) rectangle ++(1,3);
        \fill (4,4) rectangle ++(3,1);
        \fill (1,1) rectangle ++(1,1);
        \fill (2,2) rectangle ++(1,1);
        \fill (5,1) rectangle ++(1,1);
        \fill (6,2) rectangle ++(1,1);
        \fill (5,5) rectangle ++(1,1);
        \fill (6,6) rectangle ++(1,1);
      \end{scope}
      \draw[TilingGrid] (0,0) grid (16,16);
    \end{scope}
    \begin{scope}[shift={(56,1)}]
      \fill (0,0) rectangle ++(32,1);
      \fill (31,1) rectangle ++(1,31);
      \fill (15,1) rectangle ++(1,15);
      \fill (16,16) rectangle ++(15,1);
      \fill (7,1) rectangle ++(1,7);
      \fill (8,8) rectangle ++(7,1);
      \fill (23,1) rectangle ++(1,7);
      \fill (24,8) rectangle ++(7,1);
      \fill (23,17) rectangle ++(1,7);
      \fill (24,24) rectangle ++(7,1);
      \begin{scope}[shift={(0,0)}]
        \fill (3,1) rectangle ++(1,3);
        \fill (4,4) rectangle ++(3,1);
        \fill (1,1) rectangle ++(1,1);
        \fill (2,2) rectangle ++(1,1);
        \fill (5,1) rectangle ++(1,1);
        \fill (6,2) rectangle ++(1,1);
        \fill (5,5) rectangle ++(1,1);
        \fill (6,6) rectangle ++(1,1);
      \end{scope}
      \begin{scope}[shift={(8,0)}]
        \fill (3,1) rectangle ++(1,3);
        \fill (4,4) rectangle ++(3,1);
        \fill (1,1) rectangle ++(1,1);
        \fill (2,2) rectangle ++(1,1);
        \fill (5,1) rectangle ++(1,1);
        \fill (6,2) rectangle ++(1,1);
        \fill (5,5) rectangle ++(1,1);
        \fill (6,6) rectangle ++(1,1);
      \end{scope}
      \begin{scope}[shift={(16,0)}]
        \fill (3,1) rectangle ++(1,3);
        \fill (4,4) rectangle ++(3,1);
        \fill (1,1) rectangle ++(1,1);
        \fill (2,2) rectangle ++(1,1);
        \fill (5,1) rectangle ++(1,1);
        \fill (6,2) rectangle ++(1,1);
        \fill (5,5) rectangle ++(1,1);
        \fill (6,6) rectangle ++(1,1);
      \end{scope}
      \begin{scope}[shift={(24,0)}]
        \fill (3,1) rectangle ++(1,3);
        \fill (4,4) rectangle ++(3,1);
        \fill (1,1) rectangle ++(1,1);
        \fill (2,2) rectangle ++(1,1);
        \fill (5,1) rectangle ++(1,1);
        \fill (6,2) rectangle ++(1,1);
        \fill (5,5) rectangle ++(1,1);
        \fill (6,6) rectangle ++(1,1);
      \end{scope}
      \begin{scope}[shift={(8,8)}]
        \fill (3,1) rectangle ++(1,3);
        \fill (4,4) rectangle ++(3,1);
        \fill (1,1) rectangle ++(1,1);
        \fill (2,2) rectangle ++(1,1);
        \fill (5,1) rectangle ++(1,1);
        \fill (6,2) rectangle ++(1,1);
        \fill (5,5) rectangle ++(1,1);
        \fill (6,6) rectangle ++(1,1);
      \end{scope}
      \begin{scope}[shift={(24,8)}]
        \fill (3,1) rectangle ++(1,3);
        \fill (4,4) rectangle ++(3,1);
        \fill (1,1) rectangle ++(1,1);
        \fill (2,2) rectangle ++(1,1);
        \fill (5,1) rectangle ++(1,1);
        \fill (6,2) rectangle ++(1,1);
        \fill (5,5) rectangle ++(1,1);
        \fill (6,6) rectangle ++(1,1);
      \end{scope}
      \begin{scope}[shift={(16,16)}]
        \fill (3,1) rectangle ++(1,3);
        \fill (4,4) rectangle ++(3,1);
        \fill (1,1) rectangle ++(1,1);
        \fill (2,2) rectangle ++(1,1);
        \fill (5,1) rectangle ++(1,1);
        \fill (6,2) rectangle ++(1,1);
        \fill (5,5) rectangle ++(1,1);
        \fill (6,6) rectangle ++(1,1);
      \end{scope}
      \begin{scope}[shift={(24,16)}]
        \fill (3,1) rectangle ++(1,3);
        \fill (4,4) rectangle ++(3,1);
        \fill (1,1) rectangle ++(1,1);
        \fill (2,2) rectangle ++(1,1);
        \fill (5,1) rectangle ++(1,1);
        \fill (6,2) rectangle ++(1,1);
        \fill (5,5) rectangle ++(1,1);
        \fill (6,6) rectangle ++(1,1);
      \end{scope}
      \begin{scope}[shift={(24,24)}]
        \fill (3,1) rectangle ++(1,3);
        \fill (4,4) rectangle ++(3,1);
        \fill (1,1) rectangle ++(1,1);
        \fill (2,2) rectangle ++(1,1);
        \fill (5,1) rectangle ++(1,1);
        \fill (6,2) rectangle ++(1,1);
        \fill (5,5) rectangle ++(1,1);
        \fill (6,6) rectangle ++(1,1);
      \end{scope}
      \draw[TilingGrid] (0,0) grid (32,32);
    \end{scope}
    \begin{scope}[black]
      \draw (3.5,16) node {$\mapsto$};
      \draw (10.5,16) node {$\mapsto$};
      \draw (19.5,16) node {$\mapsto$};
      \draw (32.5,16) node {$\mapsto$};
      \draw (53.5,16) node {$\mapsto$};
    \end{scope}
  \end{tikzpicture}
  \caption{A few iterations of the substitution
    $\tau \colon \mspace{1mu}
    \begin{tikzpicture}[RainbowF!85,x=1.6ex,y=1.6ex,baseline=0.3ex]
      \protect\fill (0,0) rectangle ++(1,1);
      \protect\draw[TilingGrid] (0,0) rectangle (1,1);
    \end{tikzpicture}
    \mspace{1mu}
    \mapsto
    \mspace{1mu}
    \begin{tikzpicture}[RainbowF!85,x=1.6ex,y=1.6ex,baseline=1.1ex]
      \protect\fill (1,1) rectangle ++(1,1);
      \protect\fill (0,0) rectangle ++(2,1);
      \protect\draw[TilingGrid,step=1] (0,0) grid (2,2);
    \end{tikzpicture}\mspace{2mu},\;
    \begin{tikzpicture}[x=1.6ex,y=1.6ex,baseline=0.3ex]
      \protect\draw[TilingGrid] (0,0) rectangle (1,1);
    \end{tikzpicture}\mspace{1mu}
    \mapsto
    \mspace{1mu}
    \begin{tikzpicture}[x=1.6ex,y=1.6ex,baseline=1.1ex]
      \protect\draw[TilingGrid,step=1] (0,0) grid (2,2);
  \end{tikzpicture}\mspace{1mu}$
  realizing the Sierpiński triangle.}
\end{figure}

\pagebreak
In this article, we achieve a generalization of the fixed-point theorem that covers all these previous applications while abstracting the underlying construction under the formalism of \emph{substitutions} and \emph{$S$-adic shifts} (where a shift space is \emph{$S$-adic} if its tilings admit preimages by an infinite sequence of substitutions $\tau_0,\, {\tau_0 \circ \tau_1},\, {\tau_0 \circ \tau_1 \circ \tau_2}\dots$).
Informally, we prove under some computability conditions that a sequence of substitutions operating on alphabets of sublinear sizes (with respect to pattern domains) yields an \emph{$S$-adic shift space} that is actually sofic.

We actually consider a model of non-deterministic substitutions (\textit{a.k.a.} \emph{random substitutions}, see~\cite{Rust-Spindeler_2018:dynamical-systems-arising-from-random-substitutions}), which may non-deterministically substitute a single symbol into several finite rectangular patterns.
As did the simultaneity of the constructions \cite{Aubrun-Sablik_2013:simulation-of-effective-subshifts-by-2d-sfts,Durand-Romashchenko-Shen_2012:fixed-point-tilesets-and-applications}, our result emphasizes a duality between the concepts of simulations and substitutions: the notion of self-simulation (from computer science) is intrinsically constructive, and focuses on understanding multi-scale building blocks in tilings; while the substitution/$S$-adic point of view from mathematics can be seen as studying abstract hierarchical objects with specified properties at each level.

More precisely, our main results include:
\begin{itemize}
\item A ``black box'' lemma for the fixed-point construction~(\Cref{sof:lem:fixpoint}) phrased in terms of a computable substitution operating on an infinite alphabet of Wang tiles and generating infinite hierarchies of valid tilings.
\item Two applications of this lemma on limit shift spaces generated by infinite sequences of \emph{ndill maps}: \Cref{sof:thm:square-d-adic-sofic} operating on square substitutions of possibly non-monotonous sizes, and \Cref{sof:thm:bounded-d-adic-sofic} operating on rectangular substitutions of bounded sizes.
\end{itemize}

Ndill maps~\cite{Salo-Törmä_2015:block-maps-between-primitive-uniform-pisot-substitutions} generalize both block maps and classical substitutions by substituting symbols differently depending on the colors of their neighbors. As with substitutions, a sequence of dill maps $(\tau_\ell)_{\ell \in \N}$ defines an \emph{$N$-adic} limit space. Informally, $N$-adic systems allow to consider $S$-adic configurations that satisfy some intermediate validity conditions, \textit{e.g.}~limit configurations
\begin{center}
  \begin{tikzpicture}[scale=1.2]
    \foreach \i in {0,...,3} {
      \node (n\i) at (\i,0) {$x^{(\i)}$};
      \draw (\i,0) node[anchor=west,rotate=-45,shift={(0.15,-0.1)}] {$\scriptstyle \in X_{\i}$};
    }
    \node[anchor=east] at (n0.west) {$x\vphantom{x^{(0)}} = \!\!$};
    \foreach \i in {0,...,2} {
      \tikzmath{int \j; \j=\i+1;}
      \draw[<-] (n\i) edge node[above,shift={(0.04,0)}] {$\scriptstyle \subst_{\i}$} (n\j);
    }
    \node[anchor=west] at (n3.east) {$\!\!\vphantom{x^{(0)}}\dots$};
  \end{tikzpicture}
  \vspace*{-.6em}
\end{center}
in which all intermediate configurations $x^{(\ell)}$ are valid in some shifts of finite type $X_{\ell}$.
$N$-adic limit spaces were, to our knowledge, first introduced in~\cite{Zinoviadis_2015:hierarchy-expansiveness-2d-subshifts-finite-type} as \emph{substitution schemes}.

\medskip
While \Cref{sof:thm:square-d-adic-sofic,sof:thm:bounded-d-adic-sofic} are weaker than \Cref{sof:lem:fixpoint}, they actually are both easier to manipulate and general enough to recover all the classical applications of the fixed-point construction (\textit{e.g.}~\cite{Durand-Romashchenko-Shen_2012:fixed-point-tilesets-and-applications,Aubrun-Sablik_2014:multid-effective-s-adic-subshifts-are-sofic,Westrick_2017:seas-of-squares,Destombes_2021:phd:algorithmic-complexity-soficness-subshifts-multidimensional}). Indeed, the structure of macro-tiles in the fixed-point construction naturally induces a sequence of substitutions $\smash{\tau_{\ell} \colon \alphabet_{\ell+1} \mapsto \alphabet_{\ell}^{\interval{0}{N_{\ell}}^d}}$, each simulating a tile from the $\ell+1$\textsuperscript{th} level by a block of size $\smash{\interval{0}{N_{\ell}}^d}$ on the previous layer.

\Cref{sof:lem:fixpoint} offers technical improvements over existing applications in the literature :
\begin{itemize}
\item Classically, macro-tiles of domain $\interval{0}{N_{\ell}}^{d}$ embed the computations emulating the next layer of the simulation. Thus, the size $N_{\ell}$ must usually grow fast enough to embed increasingly more computations, \textit{e.g.}~$\smash{N_{\ell} = 2^{2^{\ell}}}$. In \Cref{sof:lem:fixpoint}, we actually allow a large variety of possibly non-monotonous sequences $(N_{\ell})_{\ell \in \N}$, \textit{e.g.}~$N_{\ell} = 2$ for infinitely many~$\ell$'s. To bypass the lack of computational space in the $\ell$\textsuperscript{th} layer, we thus embed the computations for the layer of level $\ell+1$ into some deeper layer $\ell' \leq \ell$, which requires to solve some technical challenges in trans-layer communication.
\item Classically, the fixed-point construction is defined with square macro-tiles (\textit{i.e.}~square substitutions). Similarly to~\cite{Zinoviadis_2015:hierarchy-expansiveness-2d-subshifts-finite-type,Zinoviadis_2016:phd:expansiveness-2d-sfts}, we relax this condition by allowing rectangular macro-tiles of possibly unbounded eccentricity;
\item Classically, computability conditions on the next layer of the fixed-point construction are given in terms of $N_{\ell}$, \textit{i.e.}~the size of macro-tiles on the current level of index $\ell$. Since \Cref{sof:lem:fixpoint} actually allows size $N_{\ell}=2$ arbitrarily often, we must give more flexible conditions: we formulate them in terms of $\prod_{i=0}^{\ell} N_i$, \textit{i.e.}~in terms of the size of a tile $t_{\ell+1}$ after $\ell+1$ steps of substitutions $\tau_0 \circ \dots \circ \tau_{\ell}(t_{\ell+1})$.
\item Classically, the fixed-point construction is defined in dimension $d=2$. Generalizations to higher dimensions are often claimed possible, although it is not obvious how the classical computational embedding (space-time diagrams of Turing machines) should be adapted.\\\pagebreak
  In this article, we instead implement a highly parallel model of computation called \emph{processor arrays} (see~\Cref{calc:sec:processors}). By operating on multidimensional grids of processors, it naturally processes inputs of size $\bigO(N^{d-1})$ in time $\bigO(N)$. To the best of our knowledge, this is the first use of this computation model in multidimensional symbolic dynamics.
\end{itemize}

For convenience, we state the computational conditions of our results in the $\log$-RAM model of computation, as the associated time complexity closely resembles those of ``real-world'' programming languages (and is thus easier to compute than with Turing machines).

\subsection*{Outline of the paper}

\begin{itemize}
\item \Cref{prel:sec} provides classical background in computation models, symbolic dynamics (shift spaces, sofic shifts) and substitutions ($S$-adic limit spaces, and their generalization as ndill maps);
\item \Cref{sof:sec} motivates and states our main ``fixed-point''  \Cref{sof:lem:fixpoint}. It also relates \Cref{sof:lem:fixpoint} to other dynamical notions from the literature about shift spaces in \Cref{sof:thm:square-d-adic-sofic,sof:thm:bounded-d-adic-sofic}: namely, $S$-adic and $N$-adic limit spaces;
\item \Cref{app:sec} illustrates our results by considering their applications to shift dynamics (frequencies and growth-type invariants, universality\dots) and recovers or generalizes some well-known examples of multidimensional soficity~\cite{Durand-Romashchenko-Shen_2012:fixed-point-tilesets-and-applications,Westrick_2017:seas-of-squares,Destombes_2021:phd:algorithmic-complexity-soficness-subshifts-multidimensional};
\item \Cref{calc:sec,pos:sec} develop the required tools for the proof of \Cref{sof:lem:fixpoint}: some results from computability theory, and some notions of \emph{positioning} and \emph{wiring} (\textit{c.f.}~\crtlnameref{pos:lem:routing-lemma});
\item \Cref{fix:sec} contains the proof of our main ``fixed-point'' \Cref{sof:lem:fixpoint}. As we aimed at both providing a high-level overview of the construction and the associated technical details, this proof spans most of the article.
\item Finally, \Cref{persp:sec} discusses some future work on/using this construction.
\end{itemize}

%
\section{Preliminaries}
\label{prel:sec}

\subsection{Standard notions and notations}

\emph{While skipping this section may be quite tempting, we strongly recommend spending the time to familiarize oneself about the non-standard notions and notations on rectangles and grids: smallest facets, longest edges, grid products, etc\dots}

\medskip
In dimension $d \in \N$, we will often denote $\vec{i} = (i_1,\ldots,i_d) \in \Z^d$ an element of $\Z^d$, and $\basis{1},\ldots,\basis{d}$ the standard basis. For $n_1,\ldots,n_d \in \N$, we denote by $\ibox{n_1,\ldots,n_d}$ the product of intervals $\interval{0}{n_1-1} \times \cdots \times \interval{0}{n_d-1}$.

\paragraph*{Rectangles}
In dimension $d \in \N$, we denote by $\Srect_0 = \{ \ibox{n_1,\ldots,n_d} : (n_1,\ldots,n_d) \in \N^d \}$ the set of rectangles with bottom-left corner $\vec{0}$; and $\Srect = \bigcup_{\vec{i} \in \Z^d} \vec{i} + \Srect_0$ the set of $d$-dimensional rectangles. We will often denote individual rectangles $\rect{r} \in \Srect$; and for any rectangle $\rect{r} \in \Srect$, we define $\rnorm{\rect{r}}$ the \emph{normalization} of $\rect{r}$, \textit{i.e.}~the unique translation of $\rect{r}$ that belongs to $\Srect_0$. For any position $\vec{i} = (i_1,\dots,i_d)\in \Z^d$ and rectangle $\rect{r} = \ibox{n_1,\dots,n_d}$, we denote by $\vec{i} \bmod \rect{r}$ the position $(i_1 \bmod n_1,\dots,i_d \bmod n_d) \in \rect{r}$. For a given rectangle $\rect{r} \in \Srect_0$, a \emph{cut} of $\rect{r} = \ibox{n_1,\ldots,n_d}$ is a rectangle $\rect{c} = \ibox{n_1',\dots,n_d'}$ such that $n_1' \leq n_1,\ldots,n_d' \leq n_d$, \textit{i.e.}~a rectangle $\rect{c} \in \Srect_0$ such that $\rect{c} \subseteq \rect{r}$.

For a given rectangle $\rect{r} = \ibox{n_1,\ldots,n_d}$, we will denote $\rvol{\rect{r}} = \prod_{k=1}^{d} n_k$ the \emph{volume} of $\rect{r}$, and $\rtime(\rect{r})$ and $\rspace(\rect{r})$ the \emph{time} and \emph{space} of $\rect{r}$: these respectively refer to the length $\rtime(\rect{r}) = \max \{ n_k : 1 \leq k \leq d \}$ of a longest edge and to the area $\rspace(\rect{r}) = \frac{1}{\rtime(\rect{r})} \rvol{\rect{r}}$ of a smallest facet in $\rect{r}$ (this terminology is motivated by the embeddings of space-time diagrams in \Cref{fix:sec:computation-zone}).

Finally, we denote by $\facet{k}{-}(\rect{r}) = \{ \vec{i} \in \rect{r} : \vec{i}_k = 0 \}$ and $\facet{k}{+}(\rect{r}) = \{ \vec{i} \in \rect{r} : \vec{i}_k = n_k-1\}$ the $(d-1)$-dimensional facets of minimal and maximal index in $\rect{r}$ along the direction $\basis{k}$ for $1 \leq k \leq d$. In this paper, we will often refer to \emph{smallest facets}, \textit{i.e.}~facets of smallest area $\rspace(\rect{r})$. Since there are several of them, we arbitrarily pick a convention and denote by $\sfacet(\rect{r})$ the facet $\facet{h}{-}(\rect{r})$ of smallest index $1 \leq h \leq d$ that has area $\rspace(\rect{r})$.

\begin{figure}[ht]
  \begin{tikzpicture}[scale=0.75,z={(0.45cm,0.5cm)}]
    \begin{scope}[canvas is xy plane at z=2]
      \draw (0,0) grid (4,4);
      \foreach \i in {0,...,3} {
        \foreach \j in {0,...,3} {
          \tikzmath{int \c; if Mod(\j,2) == 0 then {\c = 4*\j+\i;} else {\c = 4*(\j+1)-\i-1;};}
          \draw (\i.5,\j.5) node {\c};
        }
      }
    \end{scope}
    \begin{scope}[shift={(8,0)}]
      \foreach \i in {0,...,3} {
        \foreach \j in {0,...,3} {
          \foreach \k in {0,...,3} {
            \tikzmath{int \c;
              if Mod(\k,2) == 0 then
              {if Mod(\j,2) == 0 then
                {\c=16*\k+4*\j+\i;}
                else
                {\c=16*\k+4*(\j+1)-\i-1;};}
              else
              {if Mod(\j,2) == 1 then
                {\c=16*(\k+1)-4*(\j+1)+\i;}
                else
                {\c=16*(\k+1)-4*\j-\i-1;};};}
            \ifnum\k=0{\draw[canvas is xy plane at z=0] (\i.5,\j.5) node {\c};}\fi
            \ifnum\j=3{\draw[canvas is xz plane at y=4, transform shape] (\i.5,\k.5) node[font=\Large] {\c};}\fi
            \ifnum\i=3{\draw[canvas is yz plane at x=4, transform shape] (\j.5,\k.5) node[yscale=-1,rotate=270,font=\Large] {\c};}\fi
          }
        }
      }
      \foreach \i in {1,...,3} {
        \draw (0,\i,0) -- ++(4,0,0);
        \draw (0,4,\i) -- ++(4,0,0);
        \draw (\i,0,0) -- ++(0,4,0);
        \draw (\i,4,0) -- ++(0,0,4);
        \draw (4,\i,0) -- ++(0,0,4);
        \draw (4,0,\i) -- ++(0,4,0);
      }
      \draw (0,0,0) -- ++(4,0,0) -- ++(0,0,4) -- ++(0,4,0) -- ++(-4,0,0) -- ++(0,0,-4) -- cycle (0,4,0) -- ++(4,0,0) -- ++(0,0,4) (4,0,0) -- ++(0,4,0);

    \end{scope}
  \end{tikzpicture}
  \caption{Arrays on $\ibox{4}^d$ for $d=2$ and $d=3$ sorted in boustrophedon ordering.}
\end{figure}

\paragraph*{Boustrophedon order} The \emph{boustrophedon order}\footnote{From the Greek ``\textgreek{βουστροφηδόν}'' (lit.~``in the way an ox turns [while plowing]''), it is also called the \emph{snake} or the \emph{shear order}; thus illustrating a mathematician's love for bucolic metaphors.} on $\ibox{n_1,\dots,n_d}$ is defined inductively: an array ${a \in \N^{\ibox{n_1}}}$ is sorted in boustrophedon order if $a_{i} \leq a_{i+1}$ for all ${i \in \interval{0}{n_1-2}}$; and a $d$\nobreakdash-dimensional array $a \in \N^{\ibox{n_1,\dots,n_d}}$ is sorted in boustrophedon order if
\begin{enumerate}[label=\roman*.]
\item Each subarray $\restr{a}{\{i\} \times \ibox{n_2,\dots,n_d}}$ is sorted in increasing (resp.~decreasing) boustrophedon order if  $i \in \interval{0}{n_1-1}$ is even (resp.~odd);
\item Adjacent subarrays are themselves sorted, \textit{i.e.}~$a_{i,\vec{j}} \leq a_{i+1,\vec{j'}}$ for every ${i \in \interval{0}{n_1-1}}$ and $\vec{j'},\vec{j'} \in \ibox{n_2,\dots,n_d}$.
\end{enumerate}

We are especially interested in the boustrophedon order since it preserves locality: for any rectangle $\rect{r} = \ibox{n_1,\dots,n_d}$ and for $n < \prod_{k=1}^d n_k$, denote by $\vec{i}$ and $\vec{j}$ the positions of index $n$ and $n+1$ in the boustrophedon ordering of $\rect{r}$; then $\vec{i}$ and $\vec{j}$ denote two adjacent positions in~$\rect{r}$.

\paragraph*{Grids} We call \emph{grid} a partition $\grid{g} = (\rect{r}_{\vec{i}})_{\vec{i} \in \Z^d}$ of $\Z^d$ into facet-adjacent disjoint rectangles (\textit{i.e.}~for all $\vec{i} \in \Z^d$ and $1 \leq k \leq d$, the facets $\facet{k}{+}(\rect{r}_{\vec{i}})$ and $\facet{k}{-}(\rect{r}_{\vec{i} + \basis{k}})$ match). For $\grid{g} = (\rect{r}_{\vec{i}})_{\vec{i} \in \Z^d}$ a grid and $\vec{j} \in \Z^d$ a position, there exists a unique $\vec{i} \in \Z^d$ such that $\vec{j} \in \rect{r}_{\vec{i}}$; and we denote by $\vec{j} \bmod \grid{g} \in \N^d$ the translation of $\vec{j} \in \rect{r}_{\vec{i}}$ in the normalization $\rnorm{\rect{r}_{\vec{i}}} \in \Srect_0$.

Generalizing the previous notions of time and space, the \emph{time} of $\grid{g}$ will denote the supremum $\rtime(\grid{g}) = \sup_{\vec{i} \in \Z^d} \rtime(\rect{r}_{\vec{i}})$ and the \emph{space} of $\grid{g}$ will denote the infimum $\rspace(\grid{g}) = \inf_{\vec{i} \in \Z^d} \rspace(\rect{r}_{\vec{i}})$.

We also define the following \emph{composition}: for a rectangle $\rect{r}' \in \Srect$ and a family of rectangles $(\rect{r}_{\vec{i}})_{\vec{i} \in \rect{r}'}$ indexed by the positions $\vec{i} \in \rect{r}'$, we define $\rect{r}' \circ (\rect{r}_{\vec{i}})_{\vec{i} \in \rect{r}'}$ as the rectangle $\pxrect{r} = \bigcup_{\vec{i} \in \rect{r}'} \rect{r}_{\vec{i}}$. Generalizing from individual rectangles to complete grids, we define $\grid{g}' \circ \grid{g} = (\rect{r}'_i \circ (\rect{r}_{\vec{j}})_{\vec{j} \in \rect{r}'_i})_{\vec{i} \in \Z^d}$ for $\grid{g}' = (\rect{r}'_{\vec{i}})_{\vec{i} \in \Z^d}$ and $\grid{g} = (\rect{r}_{\vec{j}})_{\vec{j} \in \Z^d}$; in other words, $\grid{g}' \circ \grid{g}$ is the grid of individual rectangles $\smash{\bigcup_{\vec{j} \in \rect{r}'_{\vec{i}}} \rect{r}_{\vec{j}}}$ for $\vec{i} \in \Z^d$. When composing multiple grids $(\grid{g}_i)_{0 \leq i < n}$, we denote by $\smash{\bigcomp_{i=0}^{n-1} \grid{g}_{i} = \grid{g}_{n-1} \circ \cdots \circ \grid{g}_{0}}$.

Notice that $\pxgrid{g} = \grid{g}' \circ \grid{g}$ is a \emph{subgrid} of $\grid{g}$ (we also say that $\grid{g}$ is \emph{nested} inside $\pxgrid{g}$), in the sense that $\pxgrid{g}$ is a coarser partition of $\Z^d$ than $\grid{g}$. This elementary observation will become especially relevant when we will later consider \emph{sequences of grids} $(\grid{g}_{i})_{0 \leq i < n}$: the products $\pxgrid{g}_i = \bigcomp_{j=0}^{i-1} \grid{g}_j$ define a sequence of \emph{nested} grids, each $\pxgrid{g}_{i+1}$ being a subgrid of $\pxgrid{g}_{i}$.

\paragraph*{(Multi)functions} Formally, a \emph{multifunction} (or \emph{non-deterministic function}) $f \colon A \mto B$ is a binary relation ${\mathcal{R}_{f} \subseteq A \times B}$. For $x \in A$, we denote $f(x) = \sett[B]y{x \mathbin{\mathcal{R}_{f}} y}$ the set of possible images of $x$ (which may be empty), and $f^{-1}\colon B\mto A$ the symmetric relation.

\enlargethispage{\baselineskip}
\paragraph*{Computability} While computability theory is not a prerequisite for the main result of this paper, some examples in \Cref{app:sec} will assume some basic familiarity with the computability of sets.

A set $S \subseteq \{0,1\}^{\ast}$ is \emph{computable} if there exists a total computable function $f \colon \{0,1\}^{\ast} \to \{\top,\bot\}$ such that $f(u) = \top$ if and only if $u \in S$; $S$ is \emph{computably enumerable} if there exists a partial computable function $f \colon \{0,1\}^{\ast} \to \{\top,\bot\}$ of domain $S$, \textit{i.e.}~such that $f(u)$ is defined if and only if $u \in S$.
Equivalently, there exists a computable total function $f \colon \N \to \{0,1\}^{\ast}$ which \emph{enumerates} $S$, in the sense that $\{ \funram{e}(n) : n \in \N\} = U$.
$S$ is \emph{computably co-enumerable} if its complement is computably enumerable.
\emph{Using standard binary encodings, these notions implicitly generalize to sets of integers, tuples, etc\dots}

For a finite alphabet $\alphabet$, a set $S \subseteq \alphabet^{\N}$ is \emph{effectively closed} (or~$\Pi^{0}_{1}$) if there exists a computably enumerable set of forbidden prefixes $\mathcal{F} \subseteq \alphabet^{\ast}$ such that ${S = \{ x \in \alphabet^{\N} : \forall n \in \N, \restr{x}{\{0,\dots,n\}} \notin \mathcal{F} \}}$. Equivalently, it is the set of infinite inputs upon which some given (Turing, RAM, etc\dots) machine never halts (in which case, said machine is said to \emph{recognize} $S$).

\subsection{The RAM model}

Theoretic computations are usually abstracted from the specifics of a computation model, since most models (\textit{e.g.}~Turing machines, variations of register machines\dots) end up defining the same class of polynomial-time computable functions. Unfortunately, this article is not interested in time complexity as a theory, but rather as the precise number of operations performed by an algorithm, which depends heavily on the choice of a computation model.

In this section, we thus recall the notion of \emph{computable function} in the context of \emph{$\log$-RAM machines}: as the latter are very close to real-world programming languages, they define a somewhat intuitive measure of time complexity.

\begin{definition}[RAM machine]
  A \emph{Random Access Machine} (RAM) is an abstract computing machine operating with a fixed \emph{instruction set} $S$. It is controlled by a \emph{program} $(V,P)$, where $V$ is a finite set of symbolic names indexing the \emph{variables} of the program, and $P \in S^{\ast}$ is a finite sequence of instructions. A configuration is then a tuple $(\ram{PC},(\ram{var}_i)_{i \in V},\ram{M},\ram{I},\ram{O})$, where:
  \begin{itemize}
  \item $\ram{PC} \in \N$ is a \emph{program counter} indexing the current instruction in the sequence~$P$;
  \item $(\ram{var}_i)_{i \in V}$ is a collection of words $\ram{var}_i \in \{0,1\}^{\ast}$ assigning values to the variables of $V$;
  \item $\ram{M} = (\ram{M}[i])_{i \in \N}$ is an infinite \emph{memory}, in which individual memory cells each hold a binary word $\ram{M}[i] \in \{0,1\}^{\ast}$, and are indexed by the address $i \in \N$;
  \item $\ram{I} \in \{0,1\}^{\ast\ast}$ and $\ram{O} \in \{0,1\}^{\ast\ast}$ are the \emph{input} and the \emph{output arrays}, whose elements respectively correspond to the words given as input to, and printed by, the machine.
  \end{itemize}
  A \emph{run} is then a sequence of configurations corresponding to an execution of the machine: at each time step, the machine applies the instruction given by the program counter $\ram{PC}$ (thus updating its memory, variables, and typically incrementing $\ram{PC}$ by one). The instruction set $S$ of non-deterministic RAM machines is usually composed of:
  \begin{itemize}[topsep=1pt]
  \item \emph{Arithmetic operations} (such as add, subtract, copy, multiply, constant\dots) operating directly on the \emph{variables} of the program by interpreting the associated binary strings as integers: typically, such instructions include $\ram{var}_i \leftarrow \ram{var}_j \mathbin{\ram{op}} \ram{var}_k$ (resp.~${\ram{var}_i \leftarrow \ram{op}\mspace{4mu} \ram{var}_j}$);
  \item A \emph{non-deterministic} instruction $\ram{var}_i \leftarrow \ram{NDET}(\ram{var}_j)$ that non-deterministically sets a variable $\ram{var}_i$ of the program to any integer in the interval $\interval{0}{\ram{var}_j}$;
  \item \emph{Control flow instructions} (such as $\ram{HALT}$, $\ram{ERROR}$, conditional $\ram{GOTO}$\dots) whose instructions directly operate on the program counter $\ram{PC}$ to allow for conditional execution branching;
  \item \emph{Memory input/output}, which consists in transferring data between the variables of~$V$ and either the memory ($\ram{LOAD} : \ram{var}_i \leftarrow \ram{M}[\ram{var}_j]$ and $\ram{STORE} : \ram{M}[\ram{var}_i] \leftarrow \ram{var}_j$), or the input/output arrays ($\ram{READ} : \ram{var}_i \leftarrow \ram{I}[\ram{var}_j][\ram{var}_k]$ and $\ram{WRITE} : \ram{O}[\ram{var}_i][\ram{var}_j] \leftarrow \ram{var}_k$).
  \end{itemize}
\end{definition}

\noindent For the rest of this article, we fix a finite instruction set $S$ for our RAM machines. Without listing it exhaustively (as it would not yield much insight into our results), we informally assume that its instructions allow to perform  classical arithmetic operations (addition, multiplication, division\dots) and conditional branching in constant time.

\medskip
As with classical programming languages, a \emph{program} will be represented as (and identified with) a \emph{code} $e \in \{0,1\}^{\ast}$.
Before a computation begins, we assume that the program counter is set to zero, that the variables, memory and output arrays are all empty, and that an input array $\ram{I} \in \{0,1\}^{\ast\ast}$ has been set by the user. By considering the associations of inputs and outputs obtained by running the program $e \in \{0,1\}^*$, we obtain a multifunction:

\begin{definition}[Computability]
  Let $\funram{e} \pcolon \{0,1\}^{\ast\ast} \mto \{0,1\}^{\ast\ast}$ be the partial multifunction computed by the RAM program $e \in \{0,1\}^{\ast}$, \textit{i.e.}~the association of arrays $\ram{I} \in \{0,1\}^{\ast\ast}$ and $\ram{O} \in \{0,1\}^{\ast\ast}$ such that $e$ admits a run on the input $\ram{I}$ that halts and outputs $\ram{O}$.
\end{definition}

For $t \in \N$, we denote by $\funram{e}(\ram{I})\halt[t]$ the set of all output arrays $\ram{O}$ for which there exists a run of the program $e$ on $\ram{I}$ of at most $t$ steps halting and outputting $\ram{O}$. We say that $f \pcolon \{0,1\}^{\ast\ast} \mto \{0,1\}^{\ast\ast}$ is \emph{computable} (in time $t(s)$) if there exists a program $e \in \{0,1\}^{\ast}$ such that $f = \funram{e}$ (resp.~for all $\ram{I} \in \{0,1\}^{\ast\ast}$ with \emph{bit length} $\smash{s = \sum_{i=0}^{\len(\ram{I})-1} |\ram{I}[i]|}$, $f(\ram{I}) = \funram{e}(\ram{I})\halt[t(s)]$).

\begin{remark}[name={Data types},label={calc:rem:data-types}]
  In the same way we identified binary words with integers through their binary expansions, computations can actually be performed on arbitrary countable sets $S$ through the choice of an \emph{encoding} $\eta \colon S \to \{0,1\}^{\ast}$. Since most reasonable encodings of classical data types are computable from one another, this article will implicitly consider that RAM algorithms can manipulate integers of $\N$ and $\Z$, booleans $\{\top,\bot\}$, tuples $(\{0,1\}^{\ast})^{m}$ (for arbitrary $m \in \N$), etc\dots
\end{remark}

\subsubsection*{Word-RAM}

RAM machines perform arithmetic operations on integers of arbitrary sizes in unit time. While this simplifies time complexities, it also has the unintended consequence of allowing to solve PSPACE-complete problems in polynomial time~\cite{Hartmanis-Simon_1974:power-multiplication-ram}. We will thus bound the size of words in variables and memory cells to avoid this issue~\cite{Slissenko_1978:models-computation-address-organization-storage_RU,Angluin-Valiant_1979:fast-probabilistic-algorithms-hamiltonian-and-rac}.

\begin{definition}[word-RAM]
  A RAM program $e \in \{0,1\}^{\ast}$ has \emph{word length $w(s)$} if, for any input $\ram{I} \in \{0,1\}^{\ast\ast}$ of bit length $s = \sum_{i=0}^{\len(\ram{I})-1} |\ram{I}[i]|$ and for any execution on $\ram{I}$ halting in an accepting state, the word length of all memory cells and variables\footnote{In particular, we assume that $w(s) \geq \log|e|$ to include the program counter. However, since the $\ram{READ}$ instruction does not load a whole input word $\ram{I}[i]$ but operates one input bit at a time, this $w(s)$ bound does not apply to input words $\ram{I}[i]$.} is bounded at any time by~$w(s)$.\\
  In particular, if $w(s) = \bigO(\log s)$, $e$ is said to be a \emph{$\log$-RAM program}.
\end{definition}

\medskip
The $\log$-RAM model is actually very close to real-world \emph{machine code instructions} (that operate in unit time on integers of bounded lengths). By classical compilation processes (simulating a call stack, local variables and scopes, etc\dots), any C-like\footnote{Replace C by your favorite Turing-complete programming language: Common Lisp, Haskell, OCaml, Python\dots}\textsuperscript{,}\footnote{Since non-determinism is not a canonical feature of the C standard, we assume that some $\ram{NDET}$-like instruction has been added. Similarly, we ignore any 64-bit lengths limitations.} program of polynomial time complexity~$t(s)$ and word size $w(s)$ can be compiled into a RAM program with time $\bigO(t(s))$ and word size $w(s) + \bigO(\log t(s))$: in particular, if $t(s) = s^{\bigO(1)}$ and $w(s) = \bigO(\log s)$, then $e$ is a $\log$-RAM program. \emph{We thus invite the reader to consider the time-complexity and word-length requirements in \Cref{sof:lem:fixpoint} and \Cref{app:sec} as they would in their favorite programming language.}

%
\subsection{Symbolic dynamics}\label{prel:sec:symbolic-dynamics}

\subsubsection*{Shift spaces}

For a finite \emph{alphabet} $\mathcal{A}$, and a \emph{dimension} $d \in \N$, a \emph{pattern} is a coloring $w \colon D \to \alphabet$ of a \emph{domain} $D \subseteq \Z^{d}$ (denoted $\dom(w)=D$) with symbols from $\alphabet$; and we denote by $w_{\vec{p}} \in \alphabet$ the symbol of $\alphabet$ appearing at position $\vec{p} \in \dom(w)$. In case the domain $\dom(w)$ is finite, $w$ is called a \emph{finite} pattern; and in case the domain $\dom(w)$ equals the whole space $\Z^{d}$, $w$ will often be called a \emph{configuration}. For $w$ a finite pattern over the alphabet $\alphabet$ and for $a \in \alphabet$, we denote $|w|_{a} = |\{\vec{i} \in \dom(w) : w_{\vec{i}} = a\}|$ the number of symbols $a$ in $w$.

Abusing notations, we denote by $\alphabet^{\Srect}$ the set of $d$-dimensional rectangular finite patterns, \textit{i.e.}~$\alphabet^{\Srect} = \bigcup_{\rect{r} \in \Srect} \alphabet^{\rect{r}}$ (and $\alphabet^{\Srect_0} = \bigcup_{\rect{r} \in \Srect_0} \alphabet^{\rect{r}}$); and by $\alphabet^{\ast d}$ (resp.~$\alphabet^{\circledast d}$) the set of arbitrarily-shaped $d$-dimensional finite (resp.~possibly infinite) patterns. The set of all $d$-dimensional configurations over $\alphabet$ is denoted $\smash{\alphabet^{\Z^{d}}}$.

A pattern $u$ is said to \emph{appear} in a pattern $v$ (or, conversely, that $v$ \emph{contains} the pattern~$u$), denoted $u \subpattern v$, if there exists a position $\vec{p_0} \in \Z^{d}$ such that $(\vec{p_0} + \dom(u)) \subseteq \dom(v)$ and $\forall \vec{i} \in \dom(u),\, u_{\vec{p}} = v_{\vec{p_0} + \vec{p}}$. For $w \in \alphabet^{\circledast d}$ a pattern and $S \subseteq \dom(w)$, we denote by $\restr{w}{S} \in \alphabet^{S}$ the coloring it induces on $S$. For two patterns $u, v \in \alphabet^{\circledast d}$, we denote $u = v$ if the two patterns are equal and $\dom(u) = \dom(v)$; and $u \equiv v$ if $u$ and $v$ are equal up to translation (\textit{i.e.}~$u \subpattern v$ and $v \subpattern u$). In particular, if $w \in \alphabet^{\Srect}$ is a rectangular pattern, we denote $\rnorm{w}$ the unique pattern in $\alphabet^{\Srect_0}$ such that $w \equiv \rnorm{w}$.

\medskip
Equipping $\alphabet$ with a discrete topology, and $\smash{\alphabet^{\Z^d}}$ with a product topology, the set of configurations $\smash{\alphabet^{\Z^d}}$ is a Cantor space; and its subsystems are:
\begin{definition}[Shift space]
  A \emph{shift space} (or \emph{shift} for short) is a closed and translation-invariant subset of $\alphabet^{\Z^{d}}$. Equivalently, a subset $\smash{X \subseteq \alphabet^{\Z^{d}}}$ is a shift space if and only if there exists a (possibly infinite) family of forbidden finite patterns $\mathcal{F}$ such that $X = X_{\mathcal{F}}$, where
  \[ X_{\mathcal{F}} = \{ x \in \alphabet^{\Z^{d}} : \forall w \in \mathcal{F},\, w \not\subpattern x\}. \]
\end{definition}
A family of forbidden finite patterns $\mathcal{F}$ such that $X = X_{\mathcal{F}}$ may be considered as a \emph{presentation} of the shift space $X$. As usual with presentations, two different families of forbidden finite patterns may define the same shift space.

From this definition, the group $\Z^d$ acts on the configurations of any shift space $X \subseteq \alphabet^{\Z^d}$ by translation: more precisely, $\Z^d$ defines a left group action $\Z^d \curvearrowright X$ given by $(\vec{t} \cdot x)_{\vec{p}} = x_{\vec{p} - \vec{t}}$ (for any translation $\vec{t} \in \Z^d$, position $\vec{p} \in \Z^d$ and configuration $x \in X$). This action, called the \emph{shift}, thus defines a dynamics on the shift space $X$.

\subsubsection*{Factor maps}
Morphisms preserve the structure of shift spaces as dynamical systems:
\begin{definition}[Morphism]
  For $X \subseteq \alphabet^{\Z^{d}}$ and $Y \subseteq \alphabet[B]^{\Z^{d}}$ two shift spaces, a map $\varphi \colon X \to Y$ is a (shift) \emph{morphism}\footnote{Symbolic dynamicists may also use the word \emph{morphism} for morphisms of the free monoid, which yield one-dimensional substitutions. Since our work involves both objects, the reader should be aware that we do not use the latter terminology.} if it is continuous and translation-equivariant.

  Equivalently, $\varphi \colon X \to Y$ is a morphism if it is a (sliding) block map (or some kind of alphabet-changing cellular automaton), \textit{i.e.}~if there exists a finite neighborhood $N \subseteq \Z^{d}$ and a local rule $f \colon \alphabet^{N} \to \alphabet[B]$ such that $\varphi(x)_{\vec{p}} = f(\restr{x}{\vec{p}+N})$ for every $x \in X$ and every $\vec{p} \in \Z^{d}$.
\end{definition}

\begin{definition}[Factor and conjugacy]
  For $X \subseteq \alphabet^{\Z^d}$ and $Y \subseteq \alphabet[B]^{\Z^d}$, a surjective morphism $\varphi \colon X \to Y$ is called a \emph{factor map}; in which case, $Y$ is said to be a \emph{factor} of $X$ (or, conversely, that $X$ is a \emph{cover} of $Y$).

  If $\varphi$ is also bijective, then $\varphi^{-1}$ can also be shown to be a factor map \cite[Curtis-Hedlund-Lyndon theorem]{Hedlund_1969:endomorphisms-automorphisms-of-shift-dynamical-systems}, and $\varphi$ is called a \emph{conjugacy}; in which case, $X$ and $Y$ are said to be \emph{conjugate}.
\end{definition}

\subsubsection*{Classes of shift spaces}
Shift spaces are traditionally classified depending how complex are their presentations:
\begin{itemize}
\item A shift space $X \subseteq \alphabet^{\Z^{d}}$ has \emph{finite type} (Shift of Finite Type, abbreviated SFT) if there exists a finite family $\mathcal{F}$ of forbidden finite patterns such that $X = X_{\mathcal{F}}$;
\item A shift space $Y \subseteq \alphabet^{\Z^{d}}$ is \emph{sofic} if it is a factor of a shift of finite type $X$, called an \emph{SFT cover} of $Y$;
\item A shift space $X \subseteq \alphabet^{\Z^{d}}$ is \emph{effective} if there exists a computably enumerable family $\mathcal{F}$ of forbidden finite patterns such that $X = X_{\mathcal{F}}$.
\end{itemize}

These classes are all closed under conjugacy, and strictly included in one another:
\[ [ \mspace{1mu}\text{SFTs}\mspace{1mu} ] \subsetneq [ \mspace{1mu}\text{Sofic shifts}\mspace{1mu} ] \subsetneq [ \mspace{1mu}\text{Effective shifts}\mspace{1mu} ].\]
Since shifts of finite type are conjugate to their higher-block code (\textit{c.f.}~\cite[Chapter~1]{Lind-Marcus_1995:introduction-symbolic-dynamics}), a shift $\smash{Y \subseteq \alphabet[B]^{\Z^d}}$ is sofic if and only if it is the symbol-by-symbol projections $\pi \colon \alphabet \to \alphabet[B]$ of some (nearest-neighbor) shift of finite type $X \subseteq \alphabet^{\Z^d}$.

\subsubsection*{Wang tiles}
For a set $\alphabet[C]$ of colors and a set $\alphabet[L]$ of labels, a \emph{(decorated) Wang tile} is a tuple ${\wang{t} = (c_1,\dots,c_{2d},c_{\decsymbol}) \in \alphabet[C]^{2d} \times \alphabet[L]}$. We identify Wang tiles with colored unit cubes by arbitrarily indexing its facets: more precisely, for any $1 \leq k \leq d$, we denote $\facet{k}{-}(\wang{t}) = c_{k}$ and $\facet{k}{+}(\wang{t}) = c_{d+k}$ the colors of the facets $\facet{k}{\pm}$; and $\dec(\wang{t}) = c_{\decsymbol}$ the \emph{decoration} of the tile $\wang{t}$. A \emph{Wang tileset} is then a set of tiles $\Swang \subseteq \alphabet[C]^{2d} \times \alphabet[L]$.

Any Wang tileset $\Swang$ defines a set of valid configurations in which any two adjacent tiles bear the same color on their common facet: $X_{\Swang} = \{x \in \Swang^{\Z^d} : \forall \vec{i} \in \Z^d, \forall 1 \leq k \leq d\, \facet{k}{+}(x_{\vec{i}}) = \facet{k}{-}(x_{\vec{i} + \basis{k}}) \}$. If $\Swang$ is \emph{finite}, then $X_{\Swang}$ is compact and thus defines a \emph{nearest-neighbor} shift of finite type.
Conversely, any SFT is essentially equivalent to some set of tilings by a finite Wang tileset.

\smallskip
In this article, we consider \emph{decorated} Wang tiles for two reasons. First, they allow some notational and writing convenience in the proof of our main \Cref{sof:lem:fixpoint}. Second, the configurations of a sofic shift can actually be obtained by projecting the tilings on some decorated Wang tileset
$\Swang$ to their decorations.

\subsection{Substitutions, ndill maps and limit spaces}
\label{prel:sec:subst}

This subsection provides more specific background for our main results in \Cref{sof:sec}: namely, substitutions, ndill maps (\textit{i.e.}~substitutions ``with context'') and the resulting limit shift spaces.

\subsubsection*{Substitutions and ndill maps}

Recall that for $\alphabet$ an alphabet, $\alphabet^{\Srect_0}$ is the set of normalized $d$\nobreakdash-dimensional rectangular patterns over the alphabet $\alphabet$. Rectangular substitutions are then defined as follows:
\begin{definition}[Substitution]
  Given two alphabets $\alphabet$ and $\alphabet[B]$, a $d$-dimensional \emph{(rectangular) substitution} is a multi-valued map $\subst \colon \alphabet \mto \alphabet[B]^{\Srect_0}\!$.\\
  A substitution $\subst$ extends from individual symbols of $\alphabet$ to whole configurations as follows: for $\smash{x \in \alphabet^{\Z^d}}$, we define $\gsubst(x)$ as the set of configurations $\smash{y \in \alphabet[B]^{\Z^d}}$ for which there exists a grid $\grid{g} = (\rect{r}_{\vec{i}})_{\vec{i} \in \Z^d}$ such that $\rnorm{\restr{y}{\rect{r}_{\vec{i}}}} \in \subst(x_{\vec{i}})$: for such a grid $\grid{g}$ and configuration $y$, we denote $\smash{x \substep[\subst][\grid{g}] y}$.
\end{definition}

In the literature, non-deterministic substitutions are also known as \emph{random} substitutions. \cite{Rust-Spindeler_2018:dynamical-systems-arising-from-random-substitutions} began a first systematic study of random substitutions over $\Z$, mostly focusing on the dynamical and ergodic properties of the associated shifts;
and in~\cite{Gohlke-Rust-Spindeler_2019:shift-finite-type-random-substitutions}, the authors prove that one-dimensional random substitutions are general enough to generate (up to conjugacy) all the transitive shifts of finite type.
In higher dimensions, specific random substitutions have been studied (\textit{e.g.}~\cite{Godrèche-Luck_1989:quasiperiodicity-random-tilings-plane} for Penrose tilings).
Outside of symbolic dynamics, L-systems~\cite{Rozenberg-Salomaa_1976:mathematical-theory-of-L-systems} appear to form a closely related framework, but their limit spaces are not generally studied for their dynamical properties.

Usually, substitutions are extended from individual letters to whole configurations by assuming or providing some gluing/compatibility conditions on the images of adjacent letters~(see \textit{e.g.}~\cite{Aubrun-Sablik_2014:multid-effective-s-adic-subshifts-are-sofic}): in our definition, we circumvent these issues altogether by only considering in $\ndill(x)$ the configurations that explicitly admit a desubstitution along a grid.

In this article, we additionally consider substitutions that are ``context-sensitive'', \textit{i.e.}~whose images depend on the symbols held by neighboring cells.
In their deterministic version, these objects were introduced in~\cite{Salo-Törmä_2015:block-maps-between-primitive-uniform-pisot-substitutions} under the name of \emph{dill maps}\footnote{According to~\cite{Salo-Törmä_2015:block-maps-between-primitive-uniform-pisot-substitutions}, ``dill'' comes from Deterministic Interactive Lindenmayer systems on Long words.}:

\begin{definition}[Ndill map]
  An \emph{ndill map}\footnote{Since the ``d'' in \emph{dill map} means ``deterministic'', we looked for alternative names when defining their non-deterministic counterparts. We decided to avoid \emph{nill maps} due to the possible confusions with nilsystems, and eventually settled upon \emph{ndill map}, whose pronunciation is left to the reader.} is the composition of a shift morphism and a substitution. More precisely, given two alphabets $\alphabet$ and $\alphabet[B]$ and a finite neighborhood $V \subseteq \Z^d$, a $d$-dimensional \emph{ndill map} is a non-deterministic map $\ndill \colon \alphabet^V \mto \alphabet[B]^{\Srect_{0}}\!$ that extends to whole configurations as follows: for $x \in \alphabet^{\Z^d}$, we define $\ndill(x)$ as the set of configurations $y \in \alphabet[B]^{\Z^d}$ for which there exists a grid $\grid{g} = (\rect{r}_{\vec{i}})_{\vec{i} \in \Z^d}$ such that $\rnorm{\restr{y}{\rect{r}_{\vec{i}}}} \in \ndill(\restr{x}{\vec{i} + V})$.
  We still write $\smash{x \substep[\subst][\grid{g}] y}$.
  If $V \subseteq \interval{-r}{r}^d$, we say that $\ndill$ has \emph{radius} $r \in \N$.
\end{definition}

In particular, substitutions form exactly the ndill maps of radius $0$; and morphisms are deterministic ndill maps whose images consist of a single cell. For readability purposes, we will sometimes denote $\smash{x \substep[\ndill] y}$ if $y \in \ndill(x)$. For more information about (one-dimensional deterministic) ndill maps, we refer to~\cite{Salo-Törmä_2015:block-maps-between-primitive-uniform-pisot-substitutions,Radhane_2023:phd:symbolic-dynamics-systems-defined-via-edit-distances}.

\medskip
Notice that, with our definitions, substitutions and ndill maps are non-uniform (\textit{i.e.}~all image patterns do not necessarily have the same shape), and not necessarily total (that is, $\ndill(x)$ might be empty for some configurations~$x$: this feature is actually very useful in the examples from \Cref{app:sec}). As multifunctions, they are also non-deterministic.

Finally, we say that an ndill map $\ndill \colon \alphabet^{V} \to \alphabet[B]^{\Srect_0}$ is \emph{expanding} if for every pattern $u \in \alphabet^{V}$ and every image $w \in \ndill(u)$, the domain $\dom(w) = \ibox{n_1,\dots,n_d} \in \Srect_0$ satisfies $n_k \geq 2$ for every $1 \leq k \leq d$.
{In the rest of this paper, we essentially only consider expanding ndill maps.}

\enlargethispage{\baselineskip}
\subsubsection*{$S$-adic and $N$-adic spaces}

A substitution $\subst \colon \alphabet \mto \alphabet^{\Srect}$ naturally defines a \emph{limit} shift space $\smash{\limspace{X}[\subst] \subseteq \alphabet^{\Z^d}}$ by considering the limit configurations obtained after infinitely many substitution steps. While substitutive words and shift spaces have been extensively studied in the one-dimensional setting~\cite{PytheasFogg_2002:substitutions-dynamics-arithmetics-combinatorics,Akiyama-Arnoux_2020:substitution-tiling-dynamics-self-inducing-structures}, they also frame one of the first major results in multidimensional symbolic dynamics: if $\alphabet$ is a finite alphabet and $\subst$ has finite image, then $\limspace{X}[\subst]$ is actually a sofic shift~\cite{Mozes_1989:tiling_substitution_systems_dynamical_systems_generated_by_them} in dimension $d \geq 2$.

$S$-adic systems extend the notion of substitutive shifts by allowing a more flexible structure of substitutions: instead of considering a single substitution $\subst$, an $S$-adic configuration is obtained as a limit of the form
\[ x = x^{(0)} \rsubstep[\subst_0] x^{(1)} \rsubstep[\subst_1] x^{(2)} \rsubstep[\subst_2] x^{(3)} \dots \]
for $(\subst_{\ell})_{\ell \in \N}$ an infinite sequence of substitutions. One can naturally impose various restrictions on such sequences: operating on a fixed alphabet, ranging over a finite set of substitutions, etc\dots\
They enjoy abundant literature in the one-dimensional setting~\cite{Berthé-Delecroix_2014:beyond-substitutive-dynamics-systems-s-adic-expansions} (though the most usual definition slightly differs from ours, see \Cref{sof:sec:comments}), and a natural generalization of~\cite{Mozes_1989:tiling_substitution_systems_dynamical_systems_generated_by_them}'s theorem to multidimensional $S$-adic systems has been proved in~\cite{Aubrun-Sablik_2014:multid-effective-s-adic-subshifts-are-sofic}.

Like sequences of substitutions can define an $S$-adic configuration, ndill maps can be used to define limit \emph{$N$-adic} configurations. More precisely, for $(\ndill_{\ell})_{\ell \in \N}$ a sequence of ndill maps where $\ndill_{\ell} \colon \alphabet_{\ell+1}^{V_{\ell}} \mto \alphabet_{\ell}^{\Srect_0}$,
we say that a configuration $\smash{x \in \alphabet_{0}^{\mspace{2mu}\Z^d}}$ is \emph{$N$-adic} if there exists a sequence of configurations $\smash{(x^{(\ell)})_{\ell \in \N}}$ such that $\smash{x^{(\ell)} \in \alphabet_{\ell}^{\mspace{2mu}\Z^d}}$ and
\[ x = x^{(0)} \rsubstep[\ndill_0] x^{(1)} \rsubstep[\ndill_1] x^{(2)} \rsubstep[\ndill_2] x^{(3)} \dots \]
The sequence $(\ndill_{\ell})_{\ell \in \N}$ is called a \emph{directive sequence} for $x$.

\begin{definition}[name={$N$-adic limit space}, label={sof:def:limit-space}]
  Let $(\alphabet_{\ell})_{\ell \in \N}$ be a sequence of finite alphabets, and let $(\Sndill_{\ell})_{\ell \in \N}$ be a sequence of finite sets $\Sndill_{\ell} \subseteq \big(\alphabet_{\ell+1}^{V_{\ell}} \mto \alphabet_{\ell}^{\Srect_0} \big)$ of ndill maps. A shift space $\smash{X \subseteq \alphabet_0^{\mspace{2mu}\Z^d}}$ and a set $\Sndill \subseteq \prod_{\ell \in \N} \Sndill_{\ell}$ of \emph{directive sequences} together define a \emph{$N$-adic limit shift space}
  \[ \limspace{X}[\Sndill] = \big\{ x^{(0)} \in X : \exists (\ndill_{\ell})_{\ell \in \N} \in \Sndill,\, \exists (x^{(\ell)})_{\ell \geq 1},\; x^{(0)} \rsubstep[\ndill_0] x^{(1)} \rsubstep[\ndill_1] x^{(2)} \rsubstep[\ndill_2] x^{(3)} \dots \big\}. \]
  If all ndill maps have radius $0$ (they are substitutions), we talk about $S$-adic limit shift space.
  When the initial shift $X$ is unspecified, the limit space $\limspace{X}[\Sndill]$ will implicitly be defined for $X = \smash{\alphabet_0^{\mspace{2mu}\Z^d}}$.
\end{definition}


\enlargethispage{\baselineskip}
\paragraph*{Alternative limit spaces}
As is usual with substitutions, one can actually define two possibly different limit spaces: for $\Sndill$ a set of directive sequences, one can consider
\begin{align*}
  \limspace{X}[\Sndill] & = \big\{ x \in X : \exists (\ndill_{\ell})_{\ell \in \N} \in \Sndill,\, \forall n \in \N, \exists (x^{(\ell)})_{0 \leq \ell \leq n+1},\; x = x^{(0)} \rsubstep[\ndill_0] \dots \rsubstep[\ndill_n] x^{(n+1)} \big\}, \\
  \intertext{\textit{i.e.}~the limit shift space (as defined above) whose configurations can be infinitely de-substituted along a directive sequence of $\Sndill$; or the shift space in which we only consider subpatterns of iterated substitions of ``single symbols''\footnotemark\ from the alphabets $\alphabet_{\ell}$'s:}
  \limspace{X}[\Sndill][\subpattern] & = \big\{ x \in X : \exists (\ndill_{\ell})_{\ell \in \N} \in \Sndill,\, \forall w \subpattern x, \exists n \in \N, \exists (x^{(\ell)})_{0 \leq \ell \leq n+1},\; x^{(0)} \rsubstep[\ndill_0][\grid{g}_{0}] \dots \rsubstep[\ndill_n][\grid{g}_{n}] x^{(n+1)} \\[-9pt]
  & \hspace{6.5cm} \text{ and } w \subpattern \restr{x^{(0)}}{\pxrect{r}_{\vec{0}}} \text{ } \big(\text{where } (\pxrect{r}_{\vec{i}})_{\vec{i} \in \Z^d} = \bigcomp_{\ell=0}^{n} \grid{g}_{\ell} \big) \big\}.
\end{align*}
\footnotetext{In the case of substitutions instead of ndill maps, the second condition can be rewritten: ${\forall w \subpattern x},\allowbreak {\exists n \in \N},\allowbreak {\exists a \in \alphabet_n},\allowbreak w \subpattern \subst_0 \circ \dots \circ \subst_n(a)$.}
In this article, we only prove the soficity of $\limspace{X}[\Sndill]$ for some sets of directive sequences $\Sndill$. The soficity of the associated $\limspace{X}[\Sndill][\subpattern]$ is usually harder to prove, and might depend on technical properties of the ndill maps: we claim that, in later \Cref{sof:thm:square-d-adic-sofic,sof:thm:bounded-d-adic-sofic}, additionally respecting the \emph{Property~A} from~\cite{Mozes_1989:tiling_substitution_systems_dynamical_systems_generated_by_them} (see also~\cite[Section~3.1]{Aubrun-Sablik_2014:multid-effective-s-adic-subshifts-are-sofic}) is sufficient to prove the soficity of the limit space $\limspace{X}[\Sndill][\subpattern]$.

\paragraph*{Ndill maps and SFTs} Our main motivation for considering ndill maps in this article comes from their ability to define additional validity conditions in the $S$-adic structure of a configuration. More precisely, let $(X_{\ell})_{\ell \in \N}$ be an arbitrary sequence of shifts of finite type $X_{\ell} \subseteq \alphabet_{\ell}^{\mspace{2mu}\Z^d}\!$. For any sequence $(\subst_{\ell})_{\ell \in \N}$ of substitutions $\subst_{\ell} \colon \alphabet_{\ell+1} \mto \alphabet_{\ell}^{\Srect_0}$, one can consider the limit configurations $x \in X_0$ such that there exists $(x^{(\ell)})_{\ell \in \N}$ satisfying:
\begin{center}
  \begin{tikzpicture}[scale=1.2]
    \foreach \i in {0,...,3} {
      \node (n\i) at (\i,0) {$x^{(\i)}$};
      \draw (\i,0) node[anchor=west,rotate=-45,shift={(0.15,-0.1)}] {$\scriptstyle \in X_{\i}$};
    }
    \node[anchor=east] at (n0.west) {$x\vphantom{x^{(0)}} = \!\!$};
    \foreach \i in {0,...,2} {
      \tikzmath{int \j; \j=\i+1;}
      \draw[<-] (n\i) edge node[above,shift={(0.04,0)}] {$\scriptstyle \subst_{\i}$} (n\j);
    }
    \node[anchor=west] at (n3.east) {$\!\!\vphantom{x^{(0)}}\dots$};
  \end{tikzpicture}
\end{center}
\textit{i.e.}~such that each $x^{(\ell)}$ is furthermore a valid configuration in the SFT $X_{\ell}$. In other words, the corresponding limit configurations are obtained by an alternation of infinitely many substitution steps and validity conditions of finite type.

We claim that such limit configurations can actually be obtained by a corresponding sequence of ndill maps. Indeed, denoting $V_{\ell}$ a neighborhood of the SFT $X_{\ell+1}$, \textit{i.e.}~a finite subset of~$\Z^d$ large enough to contain a family of forbidden patterns, define the ndill maps $\ndill_{\ell}' \colon \alphabet_{\ell+1}^{V_{\ell}} \mto \alphabet_{\ell}^{\Srect_0}$:
\[ \ndill_{\ell}'(u) = \begin{cases}
  \subst_{\ell}(u_{\vec{0}}) & \text{if $u$ is locally valid in $X_{\ell+1}$} \\
  \emptyset & \text{otherwise.}\end{cases}
\]
As each ndill map $\tau_{\ell}'$ checks the validity of its source configuration in $X_{\ell+1}$, the limit configurations in $X_0$ obtained by the ndill maps $(\ndill_{\ell}')_{\ell \in \N}$ exactly correspond to the limit configurations by the substitutions $(\subst_{\ell})_{\ell \in \N}$ in which all intermediate configurations are valid in their respective SFTs. In particular, ndill maps of radius $1$ allow to consider substitutions of valid Wang tilings.

\subsubsection*{Parallelism and computability}
\label{prel:sec:parallel-dill-maps}

As a way to parallelize the computations of an ndill map~$\ndill$, we will distribute the computations $w \in \ndill(u)$ over the individual cells of the image rectangles $\rect{r} = \dom(w)$. To ensure that the concatenation of these independent subcomputations results in a valid image of $\ndill$, we define a notion of parallel ndill map:

\begin{definition}[Parallel ndill map]
  An ndill map $\ndill \colon \alphabet^V \mto \alphabet[B]^{\Srect_{0}}$ is \emph{parallel} if there exists some \emph{local function} $\varphi \colon \Srect_{0} \times \Z^d \times \alphabet^V \mto \alphabet[B]$ that satisfies:
  \[ \forall u \in \alphabet^V, \; \ndill(u) = \!\!\bigcup_{\rect{r} \in \Srect_{0}} \{ w \in \alphabet[B]^{\rect{r}} : \forall \vec{i} \in \rect{r},\, w_{\vec{i}} \in \varphi(\rect{r},\vec{i},u) \}. \]
\end{definition}

\begin{remark}[label={prel:rem:parallelism}] While all deterministic computable ndill maps are parallel, notice that for any rectangle $\rect{r} \in \Srect_0$ and input $u \in \alphabet^V$, the patterns $\{ w \in \alphabet[B]^{\rect{r}} : {w \in \ndill(u)} \} = \prod_{\vec{i} \in \rect{r}} \varphi(\rect{r},\vec{i},u)$ are a product of independent pixels ranges. As a consequence, even the simple substitution
  \[ \subst \colon \square[white] \mapsto \begin{tikzpicture}[baseline=0.85ex,x=1.4ex,y=1.4ex]
      \fill[black!85] (0,1) rectangle (2,2);
      \draw[TilingGrid,step=1] (0,0) grid (2,2);
    \end{tikzpicture}
    \qquad
    \subst \colon \square[white] \mapsto \begin{tikzpicture}[baseline=0.85ex,x=1.4ex,y=1.4ex]
      \fill[black!85] (0,0) rectangle (2,1);
      \draw[TilingGrid,step=1] (0,0) grid (2,2);
    \end{tikzpicture}
    \qquad
    \subst \colon \square[black!85] \mapsto \begin{tikzpicture}[baseline=0.85ex,x=1.4ex,y=1.4ex]
      \fill[black!85] (0,0) rectangle (1,2);
      \draw[TilingGrid,step=1] (0,0) grid (2,2);
    \end{tikzpicture}
    \qquad
    \subst \colon \square[black!85] \mapsto \begin{tikzpicture}[baseline=0.85ex,x=1.4ex,y=1.4ex]
      \fill[black!85] (1,0) rectangle (2,2);
      \draw[TilingGrid,step=1] (0,0) grid (2,2);
    \end{tikzpicture}
  \]
  is not parallel. Nevertheless, it is always possible to change the ``work alphabet'' of the ndill maps to realize a non-parallel $N$-adic limit space using an alternative well-chosen $N$-adic representation. More precisely, for $(\ndill_{\ell})_{\ell \in \N}$ a sequence of arbitrary ndill maps $\ndill_{\ell} \colon \alphabet_{\ell+1}^{V_{\ell}} \mto \alphabet_{\ell}^{\Srect_0}$, define:
  \begin{itemize}
  \item The alphabets $\alphabet[B]_{\ell}$ as the sets of symbols $(a,v) \in \alphabet_{\ell} \times \alphabet_{\ell-1}^{\Srect_0}$; and $\alphabet[B]_{0} = \alphabet_{0}$;
  \item The ndill maps $\ndill_{\ell}' \colon \alphabet[B]_{\ell+1}^{V_{\ell}} \to \alphabet[B]_{\ell}^{\Srect_0}$ by the following local maps $\varphi_{\ell}$: for any pattern  $u' \in \alphabet[B]_{\ell+1}^{V_{\ell}}$, if we denote $(a,v) = u'_{\vec{0}} \in \alphabet[B]_{\ell+1}$ the central tile in $u'$ and $\smash{u \in \alphabet_{\ell+1}^{V_{\ell}}}$ the projection of $u'$ to the alphabet $\alphabet_{\ell+1}$; then for any $\rect{r} \in \Srect_0$, $\vec{i} \in \rect{r}$ and any symbol $(b,w) \in \alphabet[B]_{\ell}$, we have $(b,w) \in \varphi_{\ell}(\rect{r},\vec{i},u')$ if and only if $v \in \ndill_{\ell}(u)$, $\rect{r} = \dom(v)$ and $b = v_{\vec{i}}$.
  \end{itemize}
  Then the ndill maps $\ndill_{\ell}' \colon \alphabet[B]_{\ell+1}^{V_{\ell}} \mto \alphabet[B]_{\ell}^{\Srect_0}$ are parallel and realize the same set of limit configurations. In other words, generic ndill maps yield limit spaces that can also be realized by \emph{parallel} ndill maps, at the cost of operating on larger alternative alphabets; and where the size blowup between $\alphabet_{\ell}$ and $\alphabet[B]_{\ell}$ depends on the size of the images of the ndill maps $\ndill_{\ell-1}$.
\end{remark}

In order to prove the soficity of a $N$-adic shift space, we will consider additional computational restrictions on our directive sequences of ndill maps. More precisely:
\begin{definition}
  A parallel ndill map $\ndill \colon \alphabet^V \mto \alphabet[B]^{\Srect_0}$ is said to be \emph{computable} if there exists a RAM program $e \in \{0,1\}^{\ast}$ computing its local map, \textit{i.e.}~such that:
  \[ \forall u \in \alphabet^V, \; \ndill(u) = \!\!\bigcup_{\rect{r} \in \Srect_{0}} \{ w \in \alphabet[B]^{\rect{r}} : \forall \vec{i} \in \rect{r},\, w_{\vec{i}} \in \funram{e}(\rect{r},\vec{i},u) \}. \]

  \vspace*{-0.5em}
  \noindent Furthermore, it is \emph{computable in time} $t \in \N$ if $\funram{e}(\rect{r},\vec{i},u)\halt[t] = \funram{e}(\rect{r},\vec{i},u)$.
\end{definition}

As is usual in computability theory, any statement operating on multiple computable objects requires some \emph{uniformity conditions}, \textit{i.e.}~that a single program $e \in \{0,1\}^{\ast}$ can enumerate them all. To this end, we say that:
\begin{itemize}
\item A sequence of parallel ndill maps $(\ndill_{\ell})_{\ell \in \N}$ is \emph{computable in time $t(\ell)$} if there exists a program $e \in \{0,1\}^{\ast}$ such that, for all $\ell \in \N$, the function $(\rect{r},\vec{i},u) \mapsto \funram{e}(\ell,\rect{r},\vec{i},u)\halt[t(\ell)]$  is a local map of $\ndill_{\ell}$;
\item A sequence $(\Sndill_{\ell})_{\ell \in \N}$ of sets $\Sndill_{\ell} = (\ndill_{\ell,i})_{i \in I_{\ell}}$ of parallel ndill maps is \emph{computable in time $t(\ell)$} if there exists a program $e \in \{0,1\}^{\ast}$ such that, for all $\ell \in \N$ and $i \in I_{\ell}$, the function $(\rect{r},\vec{i},u) \mapsto \funram{e}(\ell,i,\rect{r},\vec{i},u)\halt[t(\ell)]$ is a local map of $\ndill_{\ell,i}$.
\end{itemize}

\subsection{Recognizability and simulation}

We conclude these definitions by relating the dual notions of \emph{simulation} (from computer science), upon which the fixed-point construction is based; and the notion of \emph{recognizable substitutions} (from mathematics), which allows to recover the source configurations from their substitutions.

More precisely, the \emph{recognizability} of a substitution is a strong injectivity property, which amounts to saying that preimages and the associated grids can always be recognized in substituted configurations. Since the seminal theorem by Mossé that primitive $\Z$-substitutions are recognizable over their limit space \cite{Mossé_1992:puissance-de-mots-et-reconnaissabilité-points-fixes-substitutions}, the notion has appeared under various names and turned out to be a convenient property to prove dynamical properties on each level of an $S$-adic structure:

\begin{definition}[name={Recognizability},label={fix:def:rec}]
  An ndill map (or a substitution) $\tau:\alphabet^{V}\mto\alphabet[B]^{\Srect_0}$ is \emph{recognizable} over $\smash{Y\subset\alphabet[B]^{\Z^d}}$ if for every $y \in Y$, there exist at most one grid $\grid g$ and one $\smash{x\in\alphabet^{\Z^d}}$ such that $\smash{x \substep[\subst][\grid{g}] y}$.
  A directive sequence $(\tau_\ell)_{\ell\in\N}$ is \emph{recognizable} if every level $\ell\in\N$ can be recognized, successively: $\tau_\ell$~is recognizable over the set of configurations $\smash{x^{(\ell)} \in \alphabet_\ell^{\mspace{2mu}\Z^d}}$ which appear in some sequence
  \[ x = x^{(0)} \rsubstep[\ndill_0] x^{(1)} \rsubstep[\ndill_1] x^{(2)} \rsubstep[\ndill_2] x^{(3)} \dots\]
  for some configurations $\smash{x^{(k)} \in \alphabet_k^{\mspace{2mu}\Z^d}}$.
\end{definition}
\noindent One way to enforce the recognizability of a substitution consists in explicitly drawing/hard-coding the grid and the preimage symbols in the substituted configurations. This method is actually known under another name: the notion of simulation, in a sense close to what was developed in \cite{Delorme-Mazoyer-Ollinger-Theyssier_2011:bulking-2-classifications-cellular-automata,Durand-Romashchenko-Shen_2012:fixed-point-tilesets-and-applications}. If we denote $\Sgrid$ the set of all grids on $\Z^d$:

\begin{definition}[name={Simulation},label={fix:def:tiling-simulation}]
  A (block) \emph{simulation} of a shift $\smash{Y \subseteq \alphabet[B]^{\Z^d}}$ by a shift $\smash{X \subseteq \alphabet^{\Z^d}}$ is a partial map $\phi \colon \alphabet^{\Srect_0} \to \alphabet[B]$ such that the resulting partial mapping $\Phi \colon \alphabet^{\Z^d}\mspace{-6mu} \times \Sgrid \to \alphabet[B]^{\Z^d}$ defined by $\Phi(x,(\rect{r}_{\vec i})_{\vec i\in\Z^d}) = (\phi(\rnorm{\restr{x}{\rect{r}_{\vec i}}})_{\vec i\in\Z^d}\!)$ satisfies:
\begin{itemize}
\item For every configuration $x \in X$, there exists a unique grid $\grid{g}$ such that $(x,\grid g)\in\dom(\Phi)$;
\item The image set of $X$ by $\Phi$ is $Y$.
\end{itemize}
We say that $X$ \emph{simulates} $Y$ (through simulation $\phi$).

In the particular context of Wang tilesets, a (block) \emph{simulation} of a tileset $\Swang[][T]$ by another tileset~$\Swang[][S]$ is a simulation between the corresponding shifts of finite type that preserves the local validity of the tilings: given blocks $w,w' \in \Swang[][S]^{\Srect_0}$ compatible along the direction $\basis{k}$ (\textit{i.e.}~satisfying ${\facet{k}{+}(w) = \facet{k}{-}(w')}$), the tiles $\sigma(w)$ and $\sigma(w')$ must also be compatible along $\basis{k}$.

We call \emph{macro-tiles} the rectangular $\alphabet$-patterns that belong to the domain $\dom(\sigma)$.
\end{definition}

In particular, the inverse $\sigma^{-1}$ of a simulation $\sigma$ is a substitution that is recognizable over the full shift. Note however that this substitution may not be deterministic: the injectivity requirements in the definition focus only on the grids, and not the simulated symbols themselves.

\enlargethispage{-\baselineskip}
\section{Fixed-point lemma and sofic realization}
\label{sof:sec}

We now develop the main results of this article. In the context of self-simulation, the fixed-point construction was introduced in a series of articles~\cite{Durand-Romashchenko-Shen_2008:fixed-point-aperiodic-tilings}--\cite{Durand-Romashchenko-Shen_2012:fixed-point-tilesets-and-applications}; themselves based on earlier work on self-simulating cellular automata~\cite{Gács_1986:reliable-computation-cellular-automata,Gács_2001:reliable-cellular-automata-self-organization}. For a more complete history of the construction, we refer to the extensive survey~\cite{Törma_2021:fixed-point-constructions-tilings-cellular-automata}.

This construction creates shifts of finite type that simulate an infinite hierarchy of configurations.
It provides a flexible framework that can be adapted to a large variety of contexts (\textit{e.g.}~projections and subdynamics of SFTs~\cite{Durand-Romashchenko-Shen_2010:effective-closed-subshifts-1D-implemented-2D}, characterization of expansive directions of SFTs~\cite{Zinoviadis_2016:phd:expansiveness-2d-sfts}, construction of a nontrivial uniquely ergodic cellular automaton~\cite{Törmä_2015:uniquely-ergodic-cellular-automaton}, soficity of complex shifts~\cite{Westrick_2017:seas-of-squares}, minimal shifts~\cite{Durand-Romashchenko_2021:expressiveness-quasiperiodic-minimal-sfts}\dots).
In this section, we provide a technical abstraction of the construction under the formalism of non-deterministic substitutions on infinite tilesets, along with several rephrasings in terms of $N$-adic limit spaces.

\pagebreak
\subsection{Main fixed-point lemma}

Recall that for $\rect{r} \in \Srect_0$ a rectangle, we denote $\rtime(\rect{r})$ and $\rspace(\rect{r})$ the length of its longest edges and the area of its smallest facets, respectively referred to as the \emph{time} and \emph{space} of $\rect{r}$. These quantities extend to an infinite grid $\grid{g}$ by defining $\rtime(\grid{g}) = \sup_{\rect{r} \in \grid{g}} \rtime(\rect{r})$ and $\rspace(\grid{g}) = \inf_{\rect{r} \in \grid{g}} \rspace(\rect{r})$. Finally, we denote by $\Swang[\ast]$ the infinite set of $d$-dimensional Wang tiles with binary string colors $\alphabet[C] = \{0,1\}^{\ast}$.

Our main lemma provides a positive criterion for a limit space of Wang tilings to be sofic:
\begin{lemma}[name={Fixed-point lemma},label={sof:lem:fixpoint}]
  Let $\code{\tau} \in \{0,1\}^{\ast}$ be a $\log$-RAM program defining a parallel expanding substitution ${\tau \colon \Swang[\ast] \mto \Swang[\ast]^{\Srect_0}}$ operating on the tileset $\Swang[\ast]$. Let also $\alpha < 1$ and $\delta < \frac{1}{2}$ be two rational numbers, $\K \in \N$ a bound, $\alphabet \subseteq \{0,1\}^{\ast}$ a finite alphabet and $X \subseteq \smash{\alphabet^{\Z^d}}$ a sofic shift. If $\limspace{X}[\code{\tau}]$ denotes the set of configurations $x \in X$ such that:
  \begin{enumerate}[itemsep=2pt]
  \item there exists a sequence of grids $(\grid{g}_{\ell})_{\ell \in \N}$ such that, denoting $\pxspace{\ell} = \prod_{i=0}^{\ell-1} \rspace(\grid{g}_{i})$, we have:
    \begin{enumerate}[label=(\roman*),itemsep=2pt,topsep=1pt]
    \item $2 \leq \rtime(\grid{g}_{\ell}) \leq 2^{\K \cdot \pxspace{\ell}[\delta]}$ {\small (upper bound on the size of the rectangles)};
    \item $\rspace(\grid{g}_{\ell})^{\K \cdot \log \pxspace{\ell+1}} \geq \rtime(\grid{g}_{\ell})$ {\small (upper bound on the eccentricity of the rectangles)};
    \end{enumerate}
  \item there exists a sequence of \underline{valid} Wang tilings $(x^{(\ell)})_{\ell \in \N}$ in $(\Swang[\ast])^{\Z^d}$ such that each tiling $x^{(\ell+1)}$ satisfies $\smash{x^{(\ell+1)} \substep[\tau][\grid{g}_{\ell}] x^{(\ell)}}$ and:
    \begin{enumerate}[label=(\roman*),itemsep=2pt,topsep=2pt]
    \item for every $\ell \in \N$, the colors in $x^{(\ell)}$ have length at most $\norm{x^{(\ell)}} = \K[t] \cdot \pxspace{\ell-1}[\alpha]$;
    \item for every $\ell \in \N$, the substitution step $\smash{x^{(\ell+1)} \substep[\tau][\grid{g}_{\ell}] x^{(\ell)}}$ is computed in time $\K[t] \cdot \pxspace{\ell}[\alpha]$;
    \item $\dec(x^{(0)}) = x$, \textit{i.e.}~the decorations of $x^{(0)}$ project onto $x$.
    \end{enumerate}
  \end{enumerate}
  \noindent Then $\limspace{X}[\code{\tau}]$ is a sofic shift space.
\end{lemma}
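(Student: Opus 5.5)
We will build a finite decorated Wang tileset $\Swang[][S]$ whose tilings self-simulate through infinitely many levels, in the spirit of the Durand--Romashchenko--Shen fixed-point construction, with $\code{\tau}$ hard-coded as the program run at every level. At level $\ell$, a macro-tile of the cover is a block of the grid $\pxgrid{g}_{\ell} = \bigcomp_{i=0}^{\ell-1}\grid{g}_i$. We require that it \emph{represents} a tile $x^{(\ell)}_{\vec{i}} \in \Swang[\ast]$: its facet colors are written as binary strings of length at most $\K[t]\cdot\pxspace{\ell-1}[\alpha]$ along the boundary of the macro-tile, and it also carries its position $\vec{i} \bmod \grid{g}_\ell$ inside the level-$(\ell+1)$ rectangle together with that rectangle's shape. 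The local rules then force each macro-tile to check the following. First, its represented tile is produced by $\funram{\code{\tau}}(\rect{r},\vec{i},x^{(\ell+1)}_{\vec{j}})$, where $\vec{j}$ is its parent. Second, its facet colors agree with those of its neighbours. Third, its parent's data equals the data represented by the level-$(\ell+1)$ macro-tile, which is the fixed-point step. We will prove two inclusions. If $x \in \limspace{X}[\code{\tau}]$, the sequences $(\grid{g}_\ell)$ and $(x^{(\ell)})$ assemble into a valid $\Swang[][S]$-tiling projecting onto $x$, after taking a product with an SFT cover of $X$. Conversely, every valid tiling yields such sequences by recognizability of the hierarchy. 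A compactness argument shows that configurations with only finitely many levels, or with ``infinite'' macro-tiles, can also be realised as limits, or else are excluded.

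The steps are as follows. (1) Positioning: we design a finite SFT layer whose tilings are exactly the nested grid hierarchies $(\pxgrid{g}_\ell)$ of non-uniform rectangles. Here each cell knows its coordinates within its macro-tile only in a distributed binary form, and the grid sizes are themselves chosen non-deterministically and checked by computation. (2) Computation zone: inside each level-$\ell$ macro-tile of volume $\rvol{\pxrect{r}}$, with time $\rtime \leq \prod_{i<\ell}\rtime(\grid{g}_i)$ and space $\geq \pxspace{\ell}$, we embed a space-time diagram of a $(d-1)$-dimensional processor array that runs a $\log$-RAM simulator. Condition~1(ii) keeps the eccentricity at most polynomial in the space, and condition~1(i) together with $\delta<\tfrac12$ bounds the size of the numbers involved. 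These guarantee that $\bigO(\pxspace{\ell}[\alpha]\polylog)$ RAM steps with words of $\bigO(\log\pxspace{\ell+1})$ bits fit in time $\rtime$ and space $\pxspace{\ell}$, since $\alpha<1$. (3) Wiring: we use the routing lemma to carry the color strings of length $\pxspace{\ell}[\alpha]$ between the macro-tile borders, the computation zone, and the parent-data zone. This is feasible because $o(\pxspace{\ell})$ bits fit through a facet of area $\pxspace{\ell}$. (4) Fixed point: by Kleene's recursion theorem, the program executed at every level may refer to its own tileset description, so every level checks the same rules.

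The main obstacle is that $\rtime(\grid{g}_\ell)$ may equal $2$ infinitely often. A single level-$\ell$ cell then cannot compute the level-$(\ell+1)$ data: its macro-tile at level $\ell$ has space $\pxspace{\ell}$, whereas the computation concerns a tile whose description has size about $\pxspace{\ell}[\alpha]$. Our first attempt will be to attach the computation deciding $x^{(\ell)}$ from $x^{(\ell+1)}$ to the macro-tile of level $\ell$ in the cumulative grid $\pxgrid{g}_\ell$, which is a block of base cells. This block has enough room in terms of $\pxspace{\ell}$, not in terms of the local factor $\rspace(\grid{g}_{\ell-1})$. We will therefore have to superpose unboundedly many levels' computations onto the same base cells. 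The remedy we intend to try is to route all computations for the various levels through disjoint ``slots'' of a single base-level computation region, using a geometric sparsity argument: level $\ell$ uses a fraction $\pxspace{\ell}^{\alpha-1}$ of the space, and the sum of these fractions converges. Trans-layer communication will then go through wires whose total density per cell is bounded. Verifying that this bookkeeping can be done with finitely many tiles is where we expect most of the technical work to lie.
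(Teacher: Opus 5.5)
You are right about where the difficulty lies: when $\lambda(\mathfrak{g}_\ell)=2$ for infinitely many $\ell$, a level cannot host its own simulation step. But the remedy you sketch does not resolve it, and this is exactly the point on which the lemma hinges.

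Moving the computations of all levels into ``disjoint slots'' of one base-level region, over a finite alphabet, would require every base cell to know three things. It must know which level's slot it belongs to, its coordinates inside that level's computation zone, and the $\Theta(\log\tilde\mu_\ell)$-bit processor words and positions it manipulates. All of this is unbounded data, and your sketch gives no way to store or distribute it.

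The convergence of $\sum_\ell\tilde\mu_\ell^{\alpha-1}$ is only an averaged statement. A finite tileset needs a \emph{pointwise} bound on the number of computations and wires crossing each cell, uniformly over all admissible hierarchies. Level-$\ell$ macro-tiles are unions of level-$(\ell-1)$ macro-tiles whose shapes are chosen freely by the configuration. So there is no rigid, Robinson-like geometry from which disjoint, locally recognizable slots could be carved, yet the processor-array embedding and the routing lemma both need genuine rectangular zones.

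The plan is also at odds with itself. If base cells do all the computing, your step (1) does not explain how a finite alphabet enforces an infinite hierarchy of non-uniform grids, which is precisely what self-simulation is for. If instead each level self-simulates the next, as in step (4), the levels of size $2$ have no room. Tellingly, the hypothesis $\delta<\frac12$ plays no real role in your argument.

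The missing idea is the paper's staircase lemma, which decouples \emph{simulation levels} from \emph{substitution levels}. It chooses levels $\iota(n)$ and blocks $\{\kappa(n),\dots,\kappa(n+1)-1\}$ with $\kappa(n-1)\le\iota(n)<\kappa(n)$, satisfying two conditions.
First, $\prod_{\ell=\iota(n)}^{\kappa(n)-1}\mu_\ell\ge K_{\mathrm{s}}\,\tilde\mu_{\kappa(n)}^{\gamma}$. Then every macro-tile of a level in the block, built out of tiles of $S_{\iota(n)}$, has a computation zone with enough processors.
Second, the positional data of the whole block costs only $\mathrm{polylog}\,K_{\mathrm{s}}\cdot\tilde\mu_{\iota(n)}^{2\delta}$ bits; this is where conditions 1(i)--(ii) and $\delta<\frac12$ enter. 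Since $2\delta<\gamma<1$, this fits in the capacity $K_{\mathrm{s}}\tilde\mu_{\iota(n)}^{\gamma}$ of the $\sigma$-macro-tiles that simulate $S_{\iota(n)}$ at the previous simulation level.

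The superposition of the block's computations then happens inside each tile of the simulated tileset $S_{\iota(n)}$, with one $\tau$-compzone/processor/wire/consistency field per level of the block. These tiles have colour length that legitimately grows with $n$; the superposition does not happen in base cells. $S_{\iota(n)}$ in turn simulates $S_{\iota(n+1)}$ at a level $\iota(n+1)$ inside the block, via the Kleene fixed point of $e\mapsto F(e)$. The last grid $\mathfrak{g}_{\kappa(n+1)-1}$ and the symbols of level $\kappa(n+1)$ do not fit in the colours of $S_{\iota(n)}$. They are handed over through the next simulation level, using neighbour fields and $\tau$-in/$\tau$-out pipes. Only boundedly many initial levels are handled directly by the finite SFT $\mathcal{X}$.
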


\medskip
The rest of \Cref{sof:sec} discusses some properties of \Cref{sof:lem:fixpoint} and develops its implications in terms of $S$-adic and $N$-adic spaces (more concrete examples will be considered in in \Cref{app:sec}). Due to its length, the proof of \Cref{sof:lem:fixpoint} is postponed to \Cref{fix:sec}.

\enlargethispage{\baselineskip}
\subsection{Additional features}

The statement of our main~\Cref{sof:lem:fixpoint} emphasizes the pleasant (existential) soficity result, in order to make it short and understandable. In this section, we sketch some additional features that can be enforced in conjunction with \Cref{sof:lem:fixpoint}, and that can be easily seen when understanding the main construction.

\subsubsection{Effectivity}\label{sof:par:comment-constructivism}
Even though the notion of shift soficity is intrinsically existential (given a shift, does it admit a cover of finite type?), \Cref{sof:lem:fixpoint} is actually constructive: when given the program ${\code{\tau} \in \{0,1\}^{\ast}}$ and the constants $\alpha,\gamma < 1$, the proof in \Cref{fix:sec} effectively builds a cover of finite type $\mathcal{X}$ for the limit space $\limspace{X}[\code{\tau}]$.

In particular, \Cref{sof:lem:fixpoint} can be used to recover many undecidability results from the literature on shifts of finite type: for example, the \emph{Domino problem} (is a given SFT empty?) can be proved undecidable by defining substitutions that incrementally compute the halting problem in any given number of steps (see~\cite[Section~5]{Durand-Romashchenko-Shen_2012:fixed-point-tilesets-and-applications}).
One can also note that, given a configuration from $\limspace{X}[\code{\tau}]$, its set of preimages in the cover is computable: in other words, the construction does not introduce any artificial uncomputability in the cover $\mathcal{X}$.

In the following, all corollaries and applications will have to be understood with this constructive point of view: even though they will only claim soficity for the sake of conciseness, \Cref{sof:lem:fixpoint} will actually provide, in each case, an explicit cover of finite type.

\subsubsection{$S$-adic structure of the cover $\mathcal{X}$}\label{sof:par:comment-structure}
Denote $\mathcal{X}$ the cover of finite type for $\limspace{X}[\code{\tau}]$ built in \Cref{fix:sec}, and $\pi \colon \mathcal{X} \to \limspace{X}[\code{\tau}]$ the associated factor map.

Then $\mathcal{X}$ itself can be written as an $S$-adic limit space $\limspace{X}[\code{\tilde\tau}]$.
Furthermore, each macro-tile of each level $\ell\in\N$ in $\limspace{X}[\code{\tau}]$ is actually the projection of a macro-tile with the same support in $\mathcal{X}$ (see the notion of \emph{simulation} from \Cref{fix:def:tiling-simulation}).
Hence there exists a sequence $(\pi^{(\ell)})_{\ell\in\N}$ of projection maps such that, if $x \in \limspace{X}[\code{\tau}]$ comes from an $S$-adic structure
\[ x=x^{(0)} \rsubstep[\tau][\grid{g}_0] x^{(1)} \rsubstep[\tau][\grid{g}_1] x^{(2)} \rsubstep[\tau][\grid{g}_2] x^{(3)} \dots, \]
then any preimage $\chi\in\mathcal{X}$ of $x$ by $\pi$ comes from an $S$-adic structure
\[ \chi=\chi^{(0)} \rsubstep[\tilde\tau][\grid{g}_0] \chi^{(1)} \rsubstep[\tilde\tau][\grid{g}_1] \chi^{(2)} \rsubstep[\tilde\tau][\grid{g}_2] \chi^{(3)} \dots \]
with the same grids, and $\pi^{(\ell)}(\chi^{(\ell)})=x^{(\ell)}$. Furthermore, this $S$-adic decomposition is \emph{recognizable}: hence, each configuration $\chi^{(\ell)}$ can be recovered (in a continuous way) from $\chi$, along with the grids $(\grid{g}_i)_{i < \ell}$ and the configuration $x^{(\ell)}$.

\subsubsection{Entropy}\label{sof:par:comment-entropy}
The \emph{pattern complexity} $\smash{N_X(n) = \card{\{w \in \alphabet^{\ibox{n}^d} : \exists x \in X,\, w \subpattern x \}}}$ counts the number of patterns of domain $\ibox{n}^d$ appearing in the configurations of a shift $\smash{X \subseteq \alphabet^{\Z^d}}$. Its asymptotic growth rate defines a conjugacy invariant known as the (topological) \emph{entropy}:
\[ h(X) = \lim_{n \to +\infty} \frac{\log N_X(n)}{n^d}. \]
Informally, $h(X) = \gamma$ if the pattern complexity roughly amounts to $2^{\mspace{1mu}\gamma \cdot n^d}$. The existence of this limit is standard and follows, for example, from the (multivariate) subadditive lemma~\cite{Capobianco_2008:cellular-automata-generalization-fekete-lemma}.

Our construction can be made more precise as to not introduce any artificial entropy in the finite-type cover $\mathcal{X}$.
More precisely, a pattern in $\mathcal{X}$ is essentially given by the corresponding pattern in $\limspace{X}[\code{\tau}]$ and in the different levels in the $S$-adic hierarchy, as well as the possible computations associated with these substitution steps.
Hence, if one additionally assumes in \Cref{sof:lem:fixpoint} that $\tau$ is recognizable (over its limit set, or even simply injective over finite patterns) and that the program $\code{\tau}$ is \emph{unambiguous} (that is: every valid $\wang{t}' \in \funram{\code{\tau}}(\rect{r},\vec{i},\wang{t})$ is accepted by a single run), then the entropy of $\mathcal X$ is equal to that of $\limspace{X}[\code{\tau}]$.
One could even prove a polynomial relation between the two complexity functions, but we do not enter into these details.

The fact that our construction does not introduce any artificial entropy is closely related to Weiss's question~\cite{Desai_2006:subsystem-entropy-for-Zd-sofic-shifts} about the general existence of finite type covers of equal entropy.

\subsubsection{Ndill maps} Instead of a substitution operating on the infinite tileset $\Swang[\ast]$, it is possible to apply \Cref{sof:lem:fixpoint} with ndill maps $\ndill \colon \alphabet_{\ast}^{\Srect_0} \mto \alphabet_{\ast}^{\Srect_0}$ on the infinite alphabet $\alphabet_{\ast} = \{0,1\}^{\ast}$. Indeed, one can reencode the ndill map $\ndill$ by a substitution on a higher-block code Wang tiling (see~\cite[Chapter~1]{Lind-Marcus_1995:introduction-symbolic-dynamics}) of large enough domain to fit the radius of $\ndill$.

More precisely, \Cref{sof:lem:fixpoint} also holds for parallel expanding and computable ndill maps $\ndill \colon \alphabet_{\ast}^{\Srect_0} \mto \alphabet_{\ast}^{\Srect_0}$ if, in a substitution step $x^{(\ell+1)} \substep[\ndill] x^{(\ell)}$, the radius needed to substitute each cell is bounded by $\bigO(\pxspace{\ell}[\beta])$ for some $\beta > 0$ such that $\alpha + d \cdot \beta < 1$. We apply this principle in \Cref{sof:thm:square-d-adic-sofic,sof:thm:bounded-d-adic-sofic} with polylogarithmic radii.

\subsubsection{Non-parallel ndill maps}\label{sof:par:comment-parallelism}
The notion of parallelism for substitutions or ndill maps (see \Cref{prel:sec:parallel-dill-maps}) allows to distribute the computation of a substitution step geometrically, and result in better numeric bounds in \Cref{sof:lem:fixpoint}. Unfortunately, while deterministic ndill maps are always parallel, non-deterministic ndill maps may fail to satisfy this property.

Following \Cref{prel:rem:parallelism}, \Cref{sof:lem:fixpoint} may also be applied to arbitrary (possibly non-parallel) expanding ndill maps $\ndill \colon \alphabet_{\ast}^{\Srect_0} \mto \alphabet_{\ast}^{\Srect_0}$ if the size of the image patterns is not too large, \textit{i.e.}~if the growth of the grids satisfies $\rtime(\grid{g}_{\ell}) = \bigO(\pxspace{\ell}[\beta'])$ for some $\beta' > 0$ such that $\alpha + d \cdot \beta' < 1$.  This idea will be applied in \Cref{sof:thm:square-d-adic-sofic,sof:thm:bounded-d-adic-sofic} to images of polylogarithmic/constant sizes.

\enlargethispage{\baselineskip}
\subsection{\boldmath Applications of the \texorpdfstring{\protect\crtlnameref{sof:lem:fixpoint}}{fixed-point lemma} to \texorpdfstring{$N$}{N}-adic limit spaces}

To provide some broader context for the \crtlnameref{sof:lem:fixpoint}~(\Cref{sof:lem:fixpoint}), we consider two corollaries: their phrasing in the more familiar setting of $S$-adic/$N$-adic limit spaces should help to understand and contextualize the possibilities of this construction.

For the first corollary, we weaken the possibilities on the grids of \Cref{sof:lem:fixpoint}: instead of allowing substitutions of level $\ell$ to generate a large variety of grid sizes and shapes, we consider a square ndill map $\tau_{\ell}$ of fixed computable size $N_{\ell} \in \N$.
\begin{theorem}[label={sof:thm:square-d-adic-sofic}]
For $d \geq 2$, let $(\subst_{\ell})_{\ell \in \N}$ be a sequence of parallel expanding ndill maps $\smash{\ndill_{\ell} \colon \alphabet_{\ell+1}^{V_{\ell}} \mto
  \alphabet_{\ell}^{\mspace{2mu}\ibox{N_{\ell}}^d}}$ uniformly computed by a $\log$-RAM program $e \in \{0,1\}^{\ast}$.
For $\smash{X \subseteq \alphabet_0^{\mspace{2mu}\Z^d}}$ a sofic shift, and denoting $\Sndill = \{ (\ndill_{\ell})_{\ell \in \N} \}$ and $L_{\ell} = \prod_{i < \ell} N_i$, assume that there exists some $\alpha < d-1$ such that:
  \begin{enumerate}[topsep=2pt,itemsep=1pt]
  \item\label{i:tile} For every $\ell \in \N$, we have $\smash{2 \leq N_{\ell} \leq 2^{\bigO(L_{\ell}^{\mspace{3mu}\delta})}}$ for some $\delta < \frac{d-1}{2}$;
  \item\label{i:symbol} For every $\ell \in \N$, symbols in the alphabet $\alphabet_{\ell+1}$ have bit length $\bigO(L_{\ell}^{\mspace{3mu}\alpha})$;
  \item\label{i:radius} For every $\ell \in \N$, the neighborhood $V_{\ell}$ of the ndill map $\tau_{\ell}$ has radius $\bigO(\polylog L_{\ell})$;
  \item\label{i:comput} For every $\ell \in \N$, the RAM program $\funram{e}(\ell,\ldots)$ computes the ndill map $\ndill_{\ell}$ in time $\bigO(L_{\ell}^{\mspace{3mu}\alpha})$.
  \end{enumerate}

  \smallskip
  \noindent Then the $N$-adic limit space $\limspace{X}[\Sndill]$ is a sofic shift space.
\end{theorem}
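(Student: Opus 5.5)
The plan is to derive \Cref{sof:thm:square-d-adic-sofic} directly from \Cref{sof:lem:fixpoint}. We encode the whole directive sequence $(\ndill_\ell)_{\ell}$ into a single computable parallel expanding substitution $\tau$ on the infinite tileset $\Swang[\ast]$. A tile of level $\ell+1$ will carry in its decoration:
\begin{itemize}
\item the level index $\ell+1$ in binary;
\item its symbol $a \in \alphabet_{\ell+1}$;
\item the $V_\ell$-neighbourhood pattern $u \in \alphabet_{\ell+1}^{V_\ell}$ around it.
\end{itemize}
The colours on its facets carry the neighbourhood data that must agree with the adjacent tiles. Since the radius of $V_\ell$ is $\bigO(\polylog L_\ell)$, this neighbourhood information consists of $\bigO(\polylog^d L_\ell)$ symbols of bit length $\bigO(L_\ell^\alpha)$. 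This is the higher-block recoding of ndill maps suggested in the ``Ndill maps'' remark following \Cref{sof:lem:fixpoint}.

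On a tile $t$ of level $\ell+1$, the substitution $\tau$ works as follows. It checks that the facet colours are consistent with the stored neighbourhood $u$. It then outputs, on the square $\ibox{N_\ell}^d$, cells of level $\ell$ whose symbols are given by the local map $\funram{e}(\ell,\ibox{N_\ell}^d,\vec{i},u)$. The new neighbourhoods at level $\ell$ cannot be computed in isolation, because they depend on the images of neighbouring macro-tiles. To handle this, each cell of level $\ell$ non-deterministically guesses its level-$\ell$ neighbourhood and its colours. The Wang matching at level $\ell$, together with a verification inside $\tau$ at the next step, then enforces that these guesses are correct. This keeps $\tau$ parallel.

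The grids are all square, with $\grid{g}_\ell$ consisting of cubes $\ibox{N_\ell}^d$. Hence $\rtime(\grid{g}_\ell)=N_\ell$, $\rspace(\grid{g}_\ell)=N_\ell^{d-1}$, and $\pxspace{\ell}=L_\ell^{d-1}$. The hypotheses of \Cref{sof:lem:fixpoint} then read as follows.
\begin{itemize}
\item Condition 1(i) asks for $N_\ell \le 2^{\K L_\ell^{(d-1)\delta'}}$, which holds with $\delta' = \delta/(d-1) < \tfrac12$.
\item Condition 1(ii) holds trivially, since $N_\ell^{(d-1)\K\log\pxspace{\ell+1}} \ge N_\ell$.
\item For the colour lengths and computation times, the bit length $\bigO(L_\ell^\alpha\polylog L_\ell)$ must be below $\K L_\ell^{(d-1)\alpha'}$. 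This holds for any $\alpha'$ with $\alpha/(d-1)<\alpha'<1$, which exists because $\alpha<d-1$. The polylogarithmic factors are absorbed by the strict gap.
\item The time condition follows similarly from hypothesis~\ref{i:comput}, plus polylog overhead for reading $u$ and comparing colours.
\end{itemize}
The index shift in 2(i), between $\pxspace{\ell-1}$ and $\pxspace{\ell}$, needs some care. Colours at level $\ell$ involve $\alphabet_\ell$, whose bit length is bounded by hypothesis~\ref{i:symbol} in terms of $L_{\ell-1}$, so the bound should match. The program $\code{\tau}$ is a $\log$-RAM program because it only simulates $e$ and performs index arithmetic.

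To realise the initial shift $X$, level $0$ decorates each tile with its $\alphabet_0$ symbol, and we take the lemma's $X$ to be the given sofic shift. It remains to show that $\limspace{X}[\code{\tau}] = \limspace{X}[\Sndill]$.
\begin{itemize}
\item Given an $N$-adic sequence $x^{(\ell)}$, decorate each level with its true neighbourhoods and level index. This produces valid tilings, and $\tau$ steps along the square grids.
\item Conversely, valid Wang tilings force the stored neighbourhoods to be correct, and the level counters force level $\ell$ to use $\ndill_\ell$. Projecting therefore gives $x^{(\ell+1)} \substep[\ndill_\ell] x^{(\ell)}$.
\end{itemize}

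The main obstacle is the neighbourhood-consistency step. Wang colours only enforce nearest-neighbour agreement, but a neighbourhood of radius $r_\ell$ must be propagated consistently across the tiling. I would first try the standard trick: a cell's stored radius-$r$ pattern must agree with the neighbouring cell's pattern on their overlap, passed along facet colours. This is what inflates colour lengths by the $\polylog^d$ factor, and it is why hypothesis~\ref{i:radius} requires polylogarithmic radius.
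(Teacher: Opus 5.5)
Your proposal is correct and follows essentially the same route as the paper. The paper also recodes each level as a higher-block Wang tiling: the decoration stores the level index and the local neighbourhood pattern, and each facet carries that pattern's restriction to the overlap with the adjacent neighbourhood. Its program $\code{\tau}$ then checks this consistency on the parent tile, runs $\funram{e}$ to obtain the child's central symbol, and non-deterministically guesses the child's neighbourhood, which is verified at the next substitution step. Your explicit conversion $\pxspace{\ell}=L_\ell^{d-1}$, $\delta/(d-1)<\tfrac12$, $\alpha'\in(\alpha/(d-1),1)$ spells out numeric bookkeeping that the paper leaves implicit.
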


For the second corollary, we again weaken the growth of the grids in \Cref{sof:lem:fixpoint}: instead of a fixed sequence of square substitutions, we consider effectively closed sets of directive sequences whose ndill maps are of uniformly bounded sizes.
\begin{theorem}[label={sof:thm:bounded-d-adic-sofic}]
  For $d \geq 2$ and $M \in \N$, let $(\Sndill_{\ell})_{\ell \in \N}$ be computably uniform sets $\Sndill_{\ell} = (\ndill_{\ell,i})_{0 \leq i < M}$ of sublinear expanding ndill maps $\smash{\ndill_{\ell,i} \colon
    \alphabet_{\ell+1}^{V_{\ell}} \mto \alphabet_{\ell}^{\Srect_0}}$ uniformly computed by a $\log$-RAM program $e \in \{0,1\}^{\ast}$ and uniformly bounded by some $M \in \N$, \textit{i.e.}~there exists some $\alpha < 1$ such that:
  \begin{enumerate}[topsep=2pt,itemsep=1pt]
  \item For any $\ell \in \N$ and $i \in \interval{0}{M-1}$, the ndill map $\ndill_{\ell,i}$ has radius at most $M$ and every image pattern $w = \ndill_{\ell,i}(u)$ has bounded domain $\smash{\dom(w) \subseteq \ibox{M}^d}$;
  \item For any $\ell \in \N$, symbols in the alphabet $\alphabet_{\ell}$ have bit length $\bigO(2^{\ell \cdot \alpha})$;
  \item For any $\ell \in \N$ and $i < M$, $\funram{e}(\ell,i,\ldots)$ computes the ndill map $\ndill_{\ell,i}$ in time $\bigO(2^{\ell \cdot \alpha})$.
  \end{enumerate}
  Then for any sofic shift $X \subseteq \alphabet_0^{\mspace{2mu}\Z^d}$ and any effectively closed set $\Sndill \subseteq \prod_{\ell \in \N} \Sndill_{\ell}$ of directive sequences, the $N$-adic limit space $\limspace{X}[\Sndill]$ is also sofic.
\end{theorem}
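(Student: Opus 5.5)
We reduce \Cref{sof:thm:bounded-d-adic-sofic} to \Cref{sof:lem:fixpoint} by building one computable substitution $\tau$ on $\Swang[\ast]$ whose tiles at level $\ell$ carry: the level index $\ell$ in binary, the current symbol $a \in \alphabet_{\ell}$, the index $i_\ell<M$ of the ndill map chosen at this level, and enough neighbourhood information to evaluate an ndill map of radius $M$. To cope with non-parallel ndill maps (\Cref{prel:rem:parallelism}, \Cref{sof:par:comment-parallelism}), each tile also stores the whole image pattern $v\in\alphabet_{\ell-1}^{\Srect_0}$ it is committed to. Since $\dom(v)\subseteq\ibox{M}^d$, this costs only $\bigO(M^d\cdot 2^{(\ell-1)\alpha})$ bits. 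The radius-$M$ neighbourhood is obtained through Wang colours: each tile copies its own symbol and its neighbours' symbols within distance $M$ onto its facets, in the spirit of a higher-block presentation, and local Wang matching checks consistency. All of this costs $\bigO(2^{\ell\alpha})$ bits.

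The key parameter computation comes next. The grids have rectangles with sides in $\interval{2}{M}$, so $\rspace(\grid{g}_\ell)\ge 2^{d-1}$ and $\pxspace{\ell}\ge 2^{(d-1)\ell}$. Choose $\alpha' = \alpha/(d-1)<1$, which is where $d\ge 2$ is used. Then colour lengths $\bigO(2^{\ell\alpha})$ are $\bigO(\pxspace{\ell-1}[\alpha'])$ up to a constant shift absorbed in $\K[t]$, and likewise for the computation time. Condition 1(i) holds because $\rtime(\grid{g}_\ell)\le M\le 2^{\K\cdot\pxspace{\ell}[\delta]}$ for any $\delta>0$. Condition 1(ii) holds because $\rspace\ge 2^{d-1}\ge2$ and $\K\log\pxspace{\ell+1}\ge \log M$ for $\K$ large. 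Finally, the decorations of level $0$ give $x$, and $X$ is handled by the lemma directly.

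The main obstacle is enforcing that the directive sequence $(i_\ell)_\ell$ lies in the effectively closed set $\Sndill$, because this is a global, infinitary constraint. We use the standard fixed-point trick. Each level-$\ell$ tile carries the finite prefix $i_0\cdots i_{\ell}$ of the directive sequence, of length $\ell\le \bigO(2^{\ell\alpha})$. The substitution step checks that children store the prefix $i_0\cdots i_{\ell-1}$ of their parent's prefix, which is local consistency. It also runs an enumeration of the forbidden prefixes of $\Sndill$ for $2^{\ell\alpha}$ steps, or $\ell$ steps suitably scaled, and rejects if a prefix of $i_0\cdots i_\ell$ has been enumerated. Every forbidden prefix is eventually enumerated by some finite time, and all deeper levels inherit the same prefix. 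Hence a sequence survives at all levels iff it lies in $\Sndill$, by a compactness argument on levels. All levels share the same sequence because level-$\ell$ tiles are consistent across the whole configuration: Wang colours propagate $\ell$ and the prefix, so each level is constant on these fields. Finally, $\log$-RAM computability of $\tau$ follows from the uniform program $e$ together with bounded word sizes. A brief check is that any $x\in\limspace{X}[\Sndill]$ lifts to such a hierarchy, by labelling it with its desubstitutions, and conversely, by projecting the levels. With that check, \Cref{sof:lem:fixpoint} yields soficity.
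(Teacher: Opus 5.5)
Your proposal is correct and is essentially the paper's proof. Like the paper, you reduce to \Cref{sof:lem:fixpoint} using the level-plus-higher-block encoding from the proof of \Cref{sof:thm:square-d-adic-sofic}, store the committed image pattern to handle non-parallel maps as in \Cref{sof:par:comment-parallelism}, and carry the prefix of the directive sequence in the colours, checked for boundedly many steps against an enumeration of the forbidden prefixes of $\Sndill$. Two small points: you may assume $\alpha>0$ without loss of generality, which is needed so that the $\Theta(\ell)$ bits of level and prefix fit the colour-length bound, and the real role of $d\geq 2$ is to give $\rspace(\grid{g}_\ell)\geq 2$ rather than to make $\alpha/(d-1)<1$.
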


\noindent Notice that, by the comment in \Cref{sof:par:comment-parallelism}, \Cref{sof:thm:square-d-adic-sofic} can be applied to non-parallel ndill maps with constant or polylogarithmic growth $N_{\ell} = \polylog(L_{\ell})$; while the parallel requirement in \Cref{sof:thm:bounded-d-adic-sofic} has already been dropped altogether.

\begin{proof}[Proof of \Cref{sof:thm:square-d-adic-sofic}]
  Following \Cref{sof:par:comment-parallelism}, we encode configurations on the alphabet~$\alphabet_{\ell}$ by higher-block code tilings of domain $\{ - \bigO(\polylog(L_{\ell})), \dots, \bigO(\polylog(L_{\ell})) \}^d$; and then apply \Cref{sof:lem:fixpoint} by replacing the ndill maps $(\tau_{\ell})_{\ell \in \N}$ by equivalent substitutions.

\noindent More precisely, fix a constant $\K \in \N$ for the $\bigO(\dots)$ of \Cref{sof:thm:square-d-adic-sofic}, \textit{i.e.}~such that for all $\ell \in \N$:
  \begin{itemize}
  \item Images by the ndill map $\tau_{\ell}$ are of size $N_{\ell} \leq 2^{\K \cdot L_{\ell}^{\mspace{2mu}\delta}}$;
  \item Symbols in the alphabet $\alphabet_{\ell+1}$ have bit length $\K \cdot L_{\ell}^{\mspace{2mu}\alpha}$;
  \item The neighborhood $V_{\ell}$ satisfies $V_{\ell} \subseteq \{-\K \cdot (\log L_{\ell})^{\K},\dots, \K \cdot (\log L_{\ell})^{\K} \}^d$;
  \item The RAM program $\funram{e}(\ell,\ldots)$ computes the ndill map $\ndill_{\ell}$ in time $\K \cdot L_{\ell}^{\mspace{2mu}\alpha}$.
  \end{itemize}
  We define a program $\code{\tau}$ satisfying the hypotheses of \Cref{sof:lem:fixpoint} and such that $\limspace{X}[\code{\tau}] = \limspace{X}[\Sndill]$. Up to an implicit binary encoding, we consider colors of the form $(\ell, u) \in \N \times \alphabet_{\ell}^{\Srect_0}$, where:
  \begin{itemize}
  \item $\ell \in \N$ is an integer called the \field{level} of the tuple;
  \item $u \in \alphabet_{\ell}^{\Srect_0}$ is a pattern of domain $\dom(u) \subseteq \{-\K \cdot (\log L_{\ell})^{\K},\dots, \K \cdot (\log L_{\ell})^{\K} \}^d$, called the \field{pattern} of the tuple.
  \end{itemize}
  and denote by $\Scolor$ the associated set of colors. We then define the program $\code{\tau}$ as the following algorithm substituting Wang tiles of $\Scolor^{2d+1}$, and the local map $\funram{\code{\tau}}(\rect{r},\vec{i},\wang{t}) \mapsmapsto \wang{t}'$ given by:
  \begin{itemize}
  \item Denoting $(\ell+1, u)$ the color of $\dec(\wang{t})$, check that $\rect{r} = \ibox{N_{\ell}}^{d}$, that $\vec{i} \in \rect{r}$, and that the pattern $u \in \alphabet_{\ell+1}^{\Srect_0}$ has domain $V_{\ell} = \{-\K \cdot (\log L_{\ell})^{\K},\dots, \K \cdot (\log L_{\ell})^{\K} \}^d$;
  \item (Consistency) Check that all facets $\facet{k}{\pm}(\wang{t})$ also have \field{level} $\ell+1$;
  \item (Higher block code) Check that, if $\facet{k}{\pm}(\wang{t})$ has \field{pattern} $v \in \alphabet_{\ell+1}^{\Srect_0}$, then $v = \rnorm{\restr{u}{V_{\ell} \setminus \facet{k}{\pm}(V_{\ell})}}$,
  \item (Ndill map) Run the computation $\funram{e}(\ell,\rect{r},\vec{i},u)$ and let $a \in \alphabet_{\ell}$ be its output;
  \item (Return) Non-deterministically choose a pattern $v \in \alphabet_{\ell}^{V_{\ell-1}}$ such that $v_{\vec{0}} = a$, and return the tile $\wang{t}' \in \Scolor^{2d+1}$ defined as
    \begin{itemize}
    \item $\dec(\wang{t}') = (\ell,, v)$;
    \item $\facet{k}{\pm}(\wang{t}') = (\ell,\rnorm{\restr{v}{V_{\ell-1} \setminus \facet{k}{\pm}(V_{\ell-1})}})$.
    \end{itemize}
  \end{itemize}
  Then, in a sequence of simulations $\smash{x^{(0)} \rsubstep[\tau] \dots \rsubstep[\tau] x^{(\ell)} \rsubstep[\tau] \dots}$ along a sequence of regular grids of size $(N_{\ell})_{\ell \in \N}$, each substitution step $\smash{x^{(\ell+1)} \substep[\tau] x^{(\ell)}}$ is computed in time $\bigO(L_{\ell}^{\mspace{2mu}\alpha} \cdot \polylog L_{\ell})$. This concludes the proof.
\end{proof}

\begin{proof}[Proof of \Cref{sof:thm:bounded-d-adic-sofic}]
  A similar proof applies: in addition to the \field{level} and \field{pattern} of a color, we also embed in the tiles of \field{level} $\ell \in \N$ a sequence of indices $(i_{k})_{0 \leq k < \ell}$ representing the choice of ndill maps $\ndill_{\ell,i_{k}}$ used when substituting from level $k+1$ to $k$. Said finite sequence is then checked (for some $\bigO(\ell)$ steps of computations) against an enumeration of the forbidden prefixes of the effectively closed set of directive sequence $\Sndill$.
\end{proof}

\pagebreak
\subsection{Comments}\label{sof:sec:comments}

We conclude this section with some general comments about \Cref{sof:lem:fixpoint}.

\paragraph*{Relation to existing literature}
The fixed-point construction depends on a notion of local \emph{simulation} between tilings~\cite[Section~2.1]{Durand-Romashchenko-Shen_2012:fixed-point-tilesets-and-applications} that is actually a form a (de)substitution. While the similarity between simulations and substitutions was already noticed in~\cite{Törma_2021:fixed-point-constructions-tilings-cellular-automata}, our \crtlnameref{sof:lem:fixpoint} is -- to the best of our knowledge -- the first abstraction of the construction that is explicitly phrased in terms of substitutions. Furthermore, \Cref{sof:lem:fixpoint} generalizes existing applications of the construction:
\begin{enumerate}
\item The fixed-point construction traditionally requires the sequence
  $\rspace(\grid{g}_{\ell})_{\ell \in \N}$ to increase rather substantially (for example, $\rspace(\grid{g}_{\ell}) = \smash{\raisebox{-0.4ex}{$2^{2^{2^{\ell}}}$}}$ in~\cite{Destombes_2021:phd:algorithmic-complexity-soficness-subshifts-multidimensional}); and in particular never allows for constant-shaped substitutions of size, say, $\rtime(\grid{g}_{\ell}) = 2$.\footnote{Not including here the first author's PhD thesis~\cite{Callard_2025:phd:soficity-multidimensional-subshifts}, upon which this article is built. Notice that, conversely,~\cite[Theorem~10.11]{Callard_2025:phd:soficity-multidimensional-subshifts} only considers substitutions of size $\rtime(\grid{g}_{\ell}) = 2$.}
\item Computability conditions on the configuration $x^{(\ell)}$ usually depend on~$\rspace(\grid{g}_{\ell-1})$ (the size of the next step of substitution) instead of $\pxspace{\ell}$ (the product of all previous sizes). The latter is more flexible, and made necessary here in order to allow for constant or non-monotonous sequences $(\rspace(\grid{g}_{\ell}))_{\ell \in \N}$.
\item The fixed-point construction is usually applied on $\Z^2$, where its computations embed the space-time diagrams of a (multitape) Turing machine~\cite{Durand-Romashchenko-Shen_2012:fixed-point-tilesets-and-applications,Westrick_2017:seas-of-squares} or cellular automaton~\cite{Zinoviadis_2016:phd:expansiveness-2d-sfts,Destombes_2021:phd:algorithmic-complexity-soficness-subshifts-multidimensional}, \textit{i.e.}~formalisms in which programming --~and, hence, precise considerations about time complexity~-- is quite difficult to achieve. In our lemma, we actually embed the space-time diagram of a \emph{processor array} (\textit{c.f.}~\Cref{calc:sec:processors}), and rely on \Cref{calc:lem:pa-ram-simulation} to state our result in the more intuitive setting of $\log$-RAM machines.
\end{enumerate}

\Cref{sof:lem:fixpoint} is also a result of multidimensional soficity on limit spaces defined by sequences of substitutions. To the best of our knowledge, the only similar statement is a characterization of some $S$-adic shifts in~\cite{Aubrun-Sablik_2014:multid-effective-s-adic-subshifts-are-sofic}. Our result mainly differs in that it applies to sequences of substitutions of possibly \emph{infinite alphabet rank} (\textit{i.e.}~the alphabets at different levels of (de)substitution may be of increasing and unbounded cardinality).

\paragraph*{Numeric bounds}
\Cref{sof:lem:fixpoint} and its corollaries call for some technical comments on the optimality of the rational parameters $\alpha < 1$ and $\delta < \frac{1}{2}$.

The bound $\delta < \frac{1}{2}$ on the sizes of the substitutions appears to be rather arbitrary, and is indeed a residue of the bound $\pxspace{\ell}[2\delta]$ appearing in the~``\crtlnameref{fix:lem:staircase-lemma}''~(\Cref{fix:lem:staircase-lemma}). A fixed-point construction for the more natural condition $\delta < 1$ is definitely possible with our proof, but such upper bounds would not allow grids to contain rectangles of arbitrary small sizes (\textit{e.g.}~$\rspace(\grid{g}_{\ell}) = 2$).

The bound on the color lengths $\norm{x^{(\ell)}} = \bigO(\pxspace{\ell-1}[\alpha])$, which is less natural than the expected $\norm{x^{(\ell)}} = \bigO(\pxspace{\ell}[\alpha])$, is also a proof artifact. It comes up when synchronizing information between consecutive levels of the construction. While results from~\cite{Westrick_2017:seas-of-squares,Destombes_2021:phd:algorithmic-complexity-soficness-subshifts-multidimensional} rely on global assumptions on the density of information in their specific shifts to solve the communication between levels as a graph flow problem, such a solution cannot be applied to a generic statement such as ours\footnote{Nevertheless, we still recover~\cite{Westrick_2017:seas-of-squares,Destombes_2021:phd:algorithmic-complexity-soficness-subshifts-multidimensional} in \Cref{app:sec}: indeed, we argue that such density assumptions actually translate into alternative substitutions \emph{of bounded sizes} computing the same space $\limspace{X}[\code{\tau}]$.}.

\paragraph*{Differences between \Cref{sof:thm:square-d-adic-sofic,sof:thm:bounded-d-adic-sofic} and \Cref{sof:lem:fixpoint}}

Both \Cref{sof:thm:square-d-adic-sofic,sof:thm:bounded-d-adic-sofic} substantially weaken our main \crtlnameref{sof:lem:fixpoint} (\Cref{sof:lem:fixpoint}) in two different ways: the first one reduces the shapes that can appear to uniform squares of fixed computable sizes along a single sequence of ndill maps $(\tau_{\ell})_{\ell \in \N}$; the second to arbitrary rectangles of bounded size along branching successions of ndill maps from $\prod_{\ell \in \N} \Sndill_{\ell}$. In both cases, the main lost feature is the non-uniformity of a level: in \Cref{sof:lem:fixpoint}, the possible grids at level $\ell$ can actually be of any sizes $\rtime(\grid{g}_{\ell})$ ranging in $2 \leq \rtime(\grid{g}_{\ell}) \leq 2^{\bigO(\pxspace{\ell}[\delta])}$, thus defining a wide range of possible computational capacities for the next levels of substitutions. Nevertheless, these two theorems together should offer good insight into the potential applications of the \crtlnameref{sof:lem:fixpoint}, and are general enough to recover all examples from \Cref{app:sec}.

\paragraph*{Soficity}
As mentioned in the introduction, in sofic shift spaces, patterns of domain $\ibox{n}^d$ define an information flow bounded by $\bigO(n^{d-1})$. \Cref{sof:thm:square-d-adic-sofic,sof:thm:bounded-d-adic-sofic}, which prove the soficity of shifts with information flow $\bigO(n^{\alpha})$ for $\alpha < d-1$, can be understood as a partial converse statement.

More precisely, the ndill maps operate on alphabets $\alphabet_{\ell}$ of increasing cardinality. As such, the size of the alphabet $\alphabet_{\ell+1}$ provides a bound on the amount of information that $\ndill_{\ell}$ can process. Thus, a sequence of ndill maps $(\ndill_{\ell})_{\ell \in \N}$ defines a recursive hierarchical scheme that verifies the validity of patterns in the configurations of the limit space. This defines a quantification of the amount of information appearing in the patterns of $\limspace{X}[\Sndill]$, which is somewhat orthogonal to other more classical complexity measures\footnote{For example, a full shift $X = \{0,1\}^{\Z^d}$ contains many patterns of maximal Kolmogorov complexity, but can be defined using a non-deterministic $\tau_0 \colon \square \mapsmapsto \{0,1\}$, and $\tau_{\ell} = \square \mapsto \square$ otherwise; thus, with $\bigO(1)$ bits of information at each level.} (\textit{e.g.}~Kolmogorov complexity).

\begin{remark}[A tautology]
  One should note that \Cref{sof:lem:fixpoint}, due to its non-deterministic ndill maps, realizes all sofic shifts $\smash{Y \subseteq \alphabet^{\Z^d}}$ as limit spaces $\limspace{X}[\code{\tau}]$ of the full shift $\smash{X = \alphabet^{\Z^d}}$. Indeed, assuming $Y$ to be sofic, let $\smash{Z \subseteq \alphabet[B]^{\Z^d}}$ be a cover of finite type for $Y$ and $\pi \colon Z \to Y$ be an associated factor map. We define a sequence of non-deterministic ndill maps $\smash{\ndill_{\ell} \colon \alphabet_{\ell} \to \alphabet_{\ell-1}^{\ibox{2}^d}}$ by:
  \begin{itemize}
  \item At level $\ell \geq 2$, all ndill maps $\ndill_{\ell}$ substitute a blank color to blank patterns;
  \item At level $\ell = 1$, the ndill map $\ndill_{1}$ non-deterministically ``guesses'' a configuration $z \in \alphabet[B]^{\Z^d}$;
  \item At level $\ell = 0$, the ndill map $\ndill_{0}$ checks the local validity of the configuration $z$ in the cover~$Z$, and then maps each cell to its image by the factor map $\pi$.
  \end{itemize}
  Since alphabets $\alphabet_{\ell}$ have cardinality $1$ for $\ell \geq 2$ and that the associated ndill maps can be computed in constant time, this $N$-adic structure satisfies the hypotheses of \Cref{sof:thm:square-d-adic-sofic}. The main difficulty in applying \Cref{sof:lem:fixpoint} thus lies in finding a suitable $N$-adic structure for an effective shift that is \emph{not yet} known to be sofic.
\end{remark}

\paragraph*{Alternative $N$-adic shifts}
In this article, we prove, for some sets of directive sequences $\Sndill$, the soficity of the limit space $\limspace{X}[\Sndill]$ whose configurations can be infinitely desubstituted along a directive sequence of $\Sndill$.
In the literature, what is called \emph{$S$-adic shift} is most often a bit different, built from the subpatterns of iterated substitutions of ``single symbols''\footnotemark\ from the alphabets $\alphabet_{n}\!$'s:
\begin{align*}\limspace{X}[\Sndill][\subpattern] & = \big\{ x \in X : \exists (\ndill_{\ell})_{\ell \geq 0} \in \Sndill,\, \forall w \subpattern x, \exists n \in \N, \exists (x^{(\ell)})_{0 \leq \ell \leq n},\; x^{(0)} \rsubstep[\ndill_0][\grid{g}_{0}] \dots \rsubstep[\ndill_{n-1}][\grid{g}_{n-1}] x^{(n)} \\[-8pt]
                                                 & \hspace{6.5cm} \text{ and } w \subpattern \restr{x^{(0)}}{\pxrect{r}_{\vec{0}}} \text{ } \big(\text{where } (\pxrect{r}_{\vec{i}})_{\vec{i} \in \Z^d} = \bigcomp_{\ell=0}^{n-1} \grid{g}_{\ell} \big) \big\}.
\end{align*}
\footnotetext{In the case of substitutions instead of ndill maps, the second condition can be rewritten: ${\forall w \subpattern x},\allowbreak {\exists n \in \N},\allowbreak {\exists a \in \alphabet_n},\allowbreak w \subpattern \subst_0 \circ \dots \circ \subst_n(a)$.}

In both shifts $\limspace{X}[\Sndill]$ and $\limspace{X}[\Sndill][\subpattern]$, we cannot avoid the existence of degenerate configurations admitting two hierarchies of macro-tiles that are neighbors at every level but do not admit any common ancestor. Such a phenomenon is sometimes called a \emph{fault line}, as any pattern intersecting the boundary of these neighboring hierarchies is never fully contained in a macro-tile (see~\Cref{def:fig:fault-line}). The difference between $\limspace{X}[\Sndill]$ and $\limspace{X}[\Sndill][\subpattern]$ is that the latter only allows degenerate configurations in which the patterns overlapping such fault lines can be locally interpreted as also deriving from a common ancestor macro-tile (in another hierarchical interpretation).

\begin{figure}[ht]
  \begin{tikzpicture}[scale=.067,decoration={random steps,segment length=0.4em,amplitude=0.15em}]
    \pgfmathsetmacro{\xmin}{-91}
    \pgfmathsetmacro{\ymin}{-50}
    \pgfmathsetmacro{\height}{101}
    \pgfmathsetmacro{\width}{182}
    \draw[TilingGrid,decorate,path picture={
      \begin{scope} 
        \draw[thick,opacity=.1] (\xmin,\ymin) grid ++(\width,\height);
        \draw[thick,LevelI,opacity=.1,step=3,shift={(0,-1)}] (\xmin,\ymin) grid ++($(\width,\height) + (0,1)$);
        \draw[thick,LevelI+1,opacity=.1,step=9,shift={(0,-4)}] (\xmin,\ymin) grid  ++($(\width,\height) + (0,4)$);
        \draw[very thick,LevelK,opacity=.2,step=27,shift={(0,-13)}] (\xmin,\ymin) grid ++($(\width,\height) + (0,13)$);
      \end{scope}
      \begin{scope}
        \draw[thick] (-3,-1) grid ($(3,2*1)$);
        \draw[thick,LevelI,step=3,shift={(0,-1)}] (-9,-3) grid ($(9,2*3)$);
        \draw[thick,LevelI+1,step=9,shift={(0,-4)}] (-27,-9) grid ($(27,2*9)$);
        \draw[thick,LevelK,step=27,shift={(0,-13)}] (-81,-27) grid ($(81,2*27)$);
        \draw[thick,LevelINext,step=81,shift={(0,-40)}] (-243,-81) grid ($(243,2*81)$);
      \end{scope}
      \draw[black!10!red,line width=.9mm,dash pattern={on 2.5mm off 1.5mm}] (0,\ymin) -- ++(0,\height);
    }] (\xmin,\ymin) rectangle ++(\width,\height);
  \end{tikzpicture}
  \caption{A degenerate hierarchy of grids resulting in a fault line.}
  \subcaption{In a sequence of square grids of size $3 \times 3$, any pattern overlapping the fault line
    $\mspace{2mu}\begin{tikzpicture}[baseline=0.3ex,scale=0.3]
      \draw[very thick,black!10!red,dashed] (0,0) -- ++(0,1.1);
    \end{tikzpicture}\mspace{2mu}$
    will never be fully contained in a macro-tile. Nevertheless, these grids could form a valid $S$-adic structure: one can just desubstitute both halves of the plane independently.}
  \label{def:fig:fault-line}
\end{figure}

In the case of expanding substitutions, the difference between $\limspace{X}[\Sndill]$ and $\limspace{X}[\Sndill][\subpattern]$ mostly amounts to checking that any pair of $2$ neighboring macro-tiles actually appears in the iterated image of some higher-level symbol. In our framework, this check can be implemented at each level, among the local constraints of the ndill maps. In particular, the $N$-adic shift $\limspace{X}[\Sndill][\subpattern]$ can be realized as a $N$-adic limit space $\limspace{X}[\Sndill']$ with the same grids, same alphabet, and comparable neighborhood (the same, up to increasing it to a generating set of $\Z^d$ if it did not already contain one).

In general, these new local constraints may become too computationally intensive for the fixed-point lemma to still apply. But in many natural cases (in particular, a stationary directive sequence), \Cref{sof:thm:square-d-adic-sofic,sof:thm:bounded-d-adic-sofic} will still hold for $N$-adic shifts $\limspace{X}[\Sndill][\subpattern]$ by simply simulating a constant additional number of substitutions to perform this check.

\enlargethispage{\baselineskip}
%
\section{Applications}\label{app:sec}

We now illustrate \Cref{sof:lem:fixpoint} and its corollaries by proving the soficity of a large variety of dynamical and computational-based multidimensional shifts. Most of these applications are written as two-dimensional shift spaces for notational convenience, but could be immediately extended to any higher dimension.

\subsection{\boldmath\texorpdfstring{$S$}{S}-adic shift spaces}

As already mentioned, our results allow to recover~\cite{Aubrun-Sablik_2014:multid-effective-s-adic-subshifts-are-sofic}, which proves the soficity of multidimensional $S$-adic shifts whose $S$-adic structures involves a finite set of substitutions:
\begin{theorem}[note={\cite[Theorem 5]{Aubrun-Sablik_2014:multid-effective-s-adic-subshifts-are-sofic}},label={app:thm:effective-s-adic}]
  Let $\Ssubst_0$ be a finite set of expanding multidimensional substitutions operating on some finite alphabet $\alphabet$, and let $\Ssubst \subseteq \Ssubst_0^{\mathbb{N}}$ be an effectively closed set of directive sequences. Then the limit space $\limspace{X}[\Ssubst]$ is sofic.
\end{theorem}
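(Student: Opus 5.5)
The plan is to derive the statement directly from \Cref{sof:thm:bounded-d-adic-sofic}, taking $X = \alphabet^{\Z^d}$ (a full shift, hence sofic). Write $\Ssubst_0 = \{\subst_0,\dots,\subst_{M'-1}\}$, a finite set of expanding substitutions on $\alphabet$. For every level $\ell$ we use the same alphabet $\alphabet_\ell = \alphabet$ and the same family $\Sndill_\ell = (\ndill_{\ell,i})_{0 \leq i < M}$, obtained by setting $\ndill_{\ell,i} = \subst_i$ for $i < M'$. Each substitution is an ndill map of radius $0$. We choose $M$ at least $M'$ and at least the largest side length of any image pattern of any $\subst_i$, and pad the list with duplicates so it has exactly $M$ entries. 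Since every $\subst_i$ is expanding, each image domain lies in $\ibox{M}^d$ and has all sides at least $2$.

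Next we check the hypotheses of \Cref{sof:thm:bounded-d-adic-sofic}. The radius and domain bound holds by the choice of $M$. Symbols of $\alphabet$ have bit length $\bigO(1) \subseteq \bigO(2^{\ell\alpha})$ for any $\alpha \in (0,1)$. The program $e$ ignores $\ell$ and stores finite lookup tables for all the $\subst_i$. For these to be usable, each $\subst_i$ must have finitely many images. I would take this as the intended meaning of ``substitution on a finite alphabet'' in \cite{Aubrun-Sablik_2014:multid-effective-s-adic-subshifts-are-sofic}, where substitutions are deterministic. With finite tables, $e$ answers queries $(\rect{r},\vec{i},u)$ in $\bigO(1) \subseteq \bigO(2^{\ell\alpha})$ steps, and it is clearly $\log$-RAM.

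The remaining issue is parallelism, since the local map must define the substitution. The remark following \Cref{sof:thm:bounded-d-adic-sofic} states that the parallel requirement has already been dropped there, via \Cref{prel:rem:parallelism}, because images have bounded size. If one prefers to be explicit, we can instead replace $\alphabet_\ell$ by the pairs $(a,v) \in \alphabet \times \alphabet^{\Srect_0}$ with $\dom(v) \subseteq \ibox{M}^d$, exactly as in \Cref{prel:rem:parallelism}. These alphabets are still finite, of size independent of $\ell$, so the bit-length and time bounds remain $\bigO(1)$.

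The effectively closed set $\Ssubst \subseteq \Ssubst_0^{\N}$ corresponds, through the indexing $i \mapsto \subst_i$ (with duplicates identified, which preserves effective closedness), to an effectively closed $\Sndill \subseteq \prod_\ell \Sndill_\ell$. The two limit spaces then coincide by definition, since for radius-$0$ ndill maps the $N$-adic limit space is exactly the $S$-adic one, and \Cref{sof:thm:bounded-d-adic-sofic} yields soficity. The main obstacle is only this bookkeeping, namely the parallelization and the finiteness of images. The computational hypotheses themselves are trivial here because everything is constant-size.
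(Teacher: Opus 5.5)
Your proposal is correct and follows the paper's own proof, which simply notes that the statement is a special case of \Cref{sof:thm:bounded-d-adic-sofic}, instantiated with constant alphabets, constant-time lookup tables, and the full shift as $X$. Your extra bookkeeping fills in what that one-line reduction leaves unsaid: padding the family to exactly $M$ maps, pulling back the effectively closed set along the finite indexing, and making explicit the implicit assumption that each substitution has finitely many, hence bounded, images.
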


\begin{proof}
  This is a special case of \Cref{sof:thm:bounded-d-adic-sofic}.
\end{proof}

As a special case with $\Ssubst_0 = \{\tau\}$, we also recover the seminal result of~\cite{Mozes_1989:tiling_substitution_systems_dynamical_systems_generated_by_them}:
\begin{theorem}[note={\cite{Mozes_1989:tiling_substitution_systems_dynamical_systems_generated_by_them}}, label={prel:thm:mozes-theorem}]
  For $\subst \colon \alphabet \mto \alphabet^{\Srect_0}\!$ a substitution on some finite $\alphabet$, the shift  $\limspace{X}[\gsubst]$ is sofic.
\end{theorem}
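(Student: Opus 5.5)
Mozes' theorem follows as a direct special case of \Cref{app:thm:effective-s-adic}, itself a consequence of \Cref{sof:thm:bounded-d-adic-sofic}. We take $\Ssubst_0 = \{\tau\}$ and $\Ssubst = \{(\tau,\tau,\tau,\dots)\}$, the single constant directive sequence. This set is effectively closed: its set of forbidden prefixes is empty, hence trivially enumerable. The limit space $\limspace{X}[\Ssubst]$ then coincides by definition with $\limspace{X}[\gsubst]$.

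Two hypotheses of \Cref{app:thm:effective-s-adic} remain to be verified. The first is finiteness of the alphabet and of the substitution set, which is immediate. The second is that $\tau$ is expanding, and here the statement of \Cref{prel:thm:mozes-theorem} needs care, since it does not assume this. If some image of $\tau$ has an edge of length $1$, we do not invoke the result directly; instead we reduce to the expanding case by passing to a power. Note that $\limspace{X}[\gsubst]$ is unchanged under replacing $\tau$ by $\tau^k$: a $\tau$-desubstitution sequence gives a $\tau^k$-desubstitution sequence by grouping, and conversely. So it suffices to find $k$ with $\tau^k$ expanding.

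That power $\tau^k$ need not exist for arbitrary $\tau$. An image with an edge that stays of length $1$ in some direction under every iterate has degenerate iterates along that direction. In that case we would restrict to symbols whose iterates genuinely grow, and treat $\tau$ as implicitly expanding. This is the standard convention in Mozes' setting, where substitutions are assumed to have all edges of length at least $2$. Degenerate symbols either cannot appear in an infinitely desubstitutable configuration or are absorbed into higher-block recodings; these are the ones we discard.

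Once expansion holds, the bounds in \Cref{sof:thm:bounded-d-adic-sofic} are trivial to check. Bit lengths are $\bigO(1) \subseteq \bigO(2^{\ell\alpha})$, computation times are constant, radii are $0$, and images lie in $\ibox{M}^d$ for $M$ the maximal image size. The parallelism concern is void, as noted after \Cref{sof:thm:bounded-d-adic-sofic}. The main obstacle is therefore only the expansion issue, and I would handle it by the convention or power argument above.
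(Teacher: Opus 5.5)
Your main argument is correct and is the paper's own: the paper obtains \Cref{prel:thm:mozes-theorem} in one line as the special case $\Ssubst_0=\{\tau\}$ (constant directive sequence) of \Cref{app:thm:effective-s-adic}, which is in turn a special case of \Cref{sof:thm:bounded-d-adic-sofic}. Your check of the hypotheses (trivially effectively closed set of directive sequences, constant bit lengths and running times, radius $0$, images in $\ibox{M}^d$) is all that is needed. The paper does not treat non-expanding $\tau$ at all. It relies on its standing convention that substitutions are expanding, and implicitly that $\tau$ has finite image and $d\ge 2$. This is the ``convention'' option you mention, so you should simply state that hypothesis explicitly.

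You should drop the rest of your expansion paragraph, because it does not hold up.
\begin{itemize}
\item The ``conversely'' in $\limspace{X}[\tau]=\limspace{X}[\tau^k]$ fails when image shapes vary. Grids in this paper are facet-adjacent, hence product-structured. The intermediate partitions inside two horizontally adjacent $\tau^k$-blocks (say, one cut at height $2$ and the other at height $3$) need not glue into a global grid. So only $\limspace{X}[\tau]\subseteq\limspace{X}[\tau^k]$ is guaranteed. Repairing this would require recording the internal partitions in the alphabet and checking their alignment with a radius-$1$ ndill map.
\item When no power of $\tau$ is expanding, degenerate symbols cannot be discarded. If every image of $\tau$ is a vertical $1\times 2$ domino, every symbol is degenerate, and these symbols fill every configuration of the (nonempty) limit space. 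Moreover, regrouping columns into $L_\ell\times L_\ell$ blocks produces, in general, symbols of bit length linear in $L_\ell$. This exceeds the sublinear bounds of \Cref{sof:thm:square-d-adic-sofic,sof:thm:bounded-d-adic-sofic}, so the ``higher-block recoding'' does not fit the framework either.
\end{itemize}
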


Many shift spaces can be defined through this $S$-adic formalism. For example, shifts describing discrete planes~\cite{Arnoux-Berthé-Ito_2002:discrete-planes-z2-actions-jacobi-perron-algorithm-substitutions} are $S$-adic, and the associated sets of directive sequences are computable if the equations of the planes also satisfy computable constraints. Unfortunately, some of the substitutions involved are not expanding; but some assumption on the computability and growth of the partial quotients in the corresponding continuous fraction expansion should help make them instances of our theorems.

\subsection{Embedding computably enumerable information}\label{app:sec:medvedev-equivalence}
We say that a set $S \subseteq \alphabet^\N$ is \emph{Medvedev-reducible} to a set $S' \subseteq \alphabet^\N$ if there exists a computable partial function $\Phi: \alphabet^\N \to \alphabet^\N$ whose domain includes $S'$ and such that $\Phi(S') \subseteq S$. Intuitively, from any element $z \in S'$ (often called an \emph{oracle}), one can compute an associated element $\Phi(z) \in S$. This yields a preorder, whose equivalence classes are called \emph{Medvedev degrees}.
These notions can be extended to other sets endowed with computable structures: for instance, $\smash{\alphabet^{\Z^d}}$ can inherit Medvedev reducibility from $\alphabet^\N$ through any natural enumeration $\N \to \Z^d$.

\begin{theorem}[label={app:thm:medvedev-degrees}]
  Let $S \subseteq \{0,1\}^\N$ be an effectively closed set of binary infinite words.
  Then there exists a sofic shift $\smash{\X\subset\{0,1\}^{\Z^2}}$ in the Medvedev degree of $S$.
\end{theorem}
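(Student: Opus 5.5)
The plan is to build a shift $\X \subseteq \{0,1\}^{\Z^2}$ whose configurations encode elements of $S$ in a recoverable way, and which is sofic by \Cref{sof:thm:bounded-d-adic-sofic} (or \Cref{sof:thm:square-d-adic-sofic}). Since $S$ is $\Pi^0_1$, fix a computable enumeration of forbidden prefixes $\mathcal{F}$ with $S = \{z : \forall n,\ z|_{\interval{0}{n}} \notin \mathcal{F}\}$.

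We take a sequence of $2\times 2$ substitutions in which the level-$\ell$ alphabet $\alphabet_\ell$ consists of a prefix $z|_{\interval{0}{\ell}}$ of a candidate oracle, so the symbols have bit length $\bigO(\ell) = \bigO(2^{\ell\alpha})$. The ndill map $\ndill_\ell$ substitutes a level-$(\ell+1)$ symbol carrying $z|_{\interval{0}{\ell+1}}$ into a $2\times 2$ block of level-$\ell$ symbols all carrying the truncated prefix $z|_{\interval{0}{\ell}}$. At level $0$, the block writes the bit $z_{k}$ onto a designated cell. Concretely, we use a deterministic marking so that the bit $z_\ell$ is written in $x^{(0)}$ at the bottom-left cell of every level-$\ell$ macro-tile, and all other cells are $0$; a second binary layer is avoided by putting the information in the positions (for instance, bit $1$ versus $0$ at a fixed offset inside each macro-tile). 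A radius-$1$ ndill map checks that neighbouring tiles carry the same prefix. The map $\ndill_\ell$ also runs $\bigO(2^{\ell\alpha})$ steps of the enumeration of $\mathcal{F}$ and rejects if the prefix is enumerated. This satisfies the time and size bounds. Since every prefix is eventually checked at some level, the infinite sequence carried along the hierarchy lies in $S$. Moreover, every level-$\ell$ macro-tile carries the same prefix by neighbour-consistency at each level, except across fault lines (\Cref{def:fig:fault-line}). Therefore $\X = \limspace{X}[\Sndill]$ is sofic by \Cref{sof:thm:bounded-d-adic-sofic}.

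For the reductions, first $S \le \X$: given $z\in S$, we compute a configuration (a canonical hierarchy with origin grids aligned at $\vec 0$) by writing the bits of $z$ as above. Second, $\X \le S$: given $x\in\X$, we recover the grids $\grid g_\ell$ locally and continuously by recognizability, and then read the bits $z_\ell$ off the marked cells. The hierarchy must be made recognizable computably, so we superimpose a rigid hierarchical skeleton (for example, a Robinson-like marking encoded by the substitution) so that the level-$\ell$ grid is readable from a bounded window.

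The main obstacle is degenerate configurations: those with fault lines, or those where the origin lies in no finite macro-tile at arbitrary levels, so that two regions carry different oracles $z, z'\in S$ or no single infinite sequence is determined around $\vec 0$. I would first handle this by noting that each half of a fault line still carries a full infinite hierarchy in some quadrant or half-plane. Decoding can then search computably (uniformly) for the first region that exhibits macro-tiles of every level, for example by reading along a fixed computable direction. Any consistent infinite hierarchy yields an element of $S$, since its prefixes are all checked, which suffices for Medvedev reducibility because any element of $S$ is acceptable. Making this search uniform and total on $\X$ is the delicate point.
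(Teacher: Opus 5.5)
Your soficity half is fine and differs from the paper's route. You carry the prefix $z|_{\interval{0}{\ell}}$ in a growing alphabet, impose radius-$1$ agreement, and run a time-bounded enumeration of forbidden prefixes, then apply \Cref{sof:thm:square-d-adic-sofic} with $N_\ell=2$. The gap is in recovering an element of $S$ from $x\in\X$: you never give a binary level-$0$ encoding from which $z$ can be read off $x$.

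As written, your description is not even consistent.
\begin{itemize}
\item If $\alphabet_\ell$ holds only $z|_{\interval{0}{\ell}}$ and every block copies the truncated prefix, then $x^{(0)}$ is the constant configuration $z_0$.
\item ``Write $z_\ell$ at the bottom-left cell of every level-$\ell$ macro-tile'' assigns several bits to one cell. That cell is also the bottom-left corner of the macro-tiles of every lower level.
\item With ``all other cells $0$'', nothing in $x$ marks the grids. For instance, if $z$ has long runs of zeros, $x$ need not determine the hierarchy, and hence need not determine $z$.
\end{itemize}
The ``Robinson-like skeleton'' you invoke to fix this is exactly the nontrivial ingredient. It has to share the single layer $\{0,1\}$ with the data bits, since you explicitly rule out a second layer, and you leave it unspecified.

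The paper supplies this ingredient with the Toeplitz substitutions $\theta_a(b)=\left[\begin{smallmatrix}0&b\\a&1\end{smallmatrix}\right]$ and the effectively closed set of directive sequences $\{(\theta_{z_\ell})_{\ell}: z\in S\}$, to which \Cref{sof:thm:bounded-d-adic-sofic} applies directly.
\begin{itemize}
\item The bit $z_\ell$ is the choice of substitution at level $\ell$. It is written at the coincidence cell, and the cell $b$ carries the preimage symbol.
\item The constant cells $0$ and $1$ act as the skeleton: every second column is all-$0$ under $\theta_0$, and every second row is all-$1$ under $\theta_1$.
\item \Cref{app:lem:toeplitz-recognizability} therefore recovers $z_\ell$, the grid and the preimage. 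You iterate this, and every level is non-monochromatic because each image block contains both symbols.
\end{itemize}

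Your fault-line concern is also misplaced. Each $x^{(\ell)}$ in the definition of $\limspace{X}[\Sndill]$ is a full configuration of $\Z^2$, so radius-$1$ agreement already makes the prefix constant on all of $x^{(\ell+1)}$, whatever the fault lines. In the paper's version the bit is global by construction. So the proposed search for ``a region exhibiting macro-tiles of every level'' is unnecessary, and it is not a finitely verifiable condition anyway. What you actually need is only that the binary configuration $x$ determine $z$, which is exactly the recognizability your encoding lacks.
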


Additionally considering \Cref{sof:par:comment-constructivism}, we recover the main result of~\cite{Simpson_2014:medvedev-degrees-of-2d-sfts}: the cover of finite type $\mathcal{X}$ associated with \Cref{app:thm:medvedev-degrees} is itself in the Medvedev degree of $S$. As there exist effectively closed sets $S \subseteq \{0,1\}^{\N}$ without any computable infinite word, this also recovers~\cite{Myers_1974:nonrecursive-tilings-of-the-plane-2}. Finally, the Medvedev equivalence between $\mathcal{X}$ and $S$ provides another proof of the undecidability of the Domino problem~\cite{Berger_1964:phd:undecidability-domino-problem}: since the emptiness of effectively closed sets is undecidable (by reducing the halting problem), so is the emptiness of shift spaces of finite type.

Going back to \Cref{app:thm:medvedev-degrees}, the Medvedev equivalence between $\X$ and $S$ is not computable because many configurations in $\X$ will represent the same sequence $z \in S$. Nevertheless, the configurations in $\X$ representing $z$ can be identified by their $S$-adic structure (\textit{c.f.}~\Cref{sof:par:comment-structure}): thus, there exists a computable left-invertible map from $\X$ onto $S \times \{0,1\}^\N$. In particular, any oracle that would allow to compute an infinite word of $S$ is computably homeomorphic to a configuration of $\X$: one says that the set of \emph{Turing degrees} of $\X$ is the union of \emph{cones} above those of $S$ (more precisions in \cite[Lemma~51 and Theorem~52]{Zinoviadis_2015:hierarchy-expansiveness-2d-subshifts-finite-type}).

\smallskip
While~\Cref{app:thm:medvedev-degrees} directly follows from~\Cref{sof:thm:bounded-d-adic-sofic} and the comments in \Cref{sof:par:comment-constructivism,sof:par:comment-structure} (using any two (even constant) distinct substitutions), we provide here a more self-contained proof that illustrates how our construction applies to effective Toeplitz shifts.

\subsubsection*{Toeplitz shifts}
In order to prove~\Cref{app:thm:medvedev-degrees}, we will design a shift $\X$ as a union of Toeplitz shifts, each lying in the Medvedev degree of an element of $S$. We first define this class of shifts.

A (rectangular) \emph{Toeplitz substitution} is an expanding substitution $\theta\colon\alphabet\mto\alphabet[B]^{\rect r}$ for some $\rect r\in\Srect_0$ (constant shape) that admits a \emph{coincidence}: there exists a position $\vec{i} \in \rect{r}$ at which all input letters $a \in \alphabet$ are mapped to the same symbol (in other words,~$\card{\sett{\theta(a)_i}{a\in\alphabet}}=1$). A (rectangular) \emph{Toeplitz shift} is an $S$-adic space whose directive sequence only involves (rectangular) Toeplitz substitutions. This definition is known to be equivalent to the more classical one, involving the existence of a (nonuniform) period for every pattern (see for instance \cite{Gjerde-Johansen_2000:bratteli-vershik-models-for-cantor-minimal-systems-applications-toeplitz-flows}). The previous definition, together with \Cref{sof:thm:square-d-adic-sofic}, implies that any effective Toeplitz shift is sofic.

\medskip
One of the simplest examples of two-dimensional Toeplitz substitutions is the following.
On the alphabet $\alphabet=\{0,1\}$, we define two substitutions $\theta_0$ and $\theta_1$ as follows: for $a,b \in \alphabet$,
\[ \theta_a(b)=\left[\begin{array}{cc}0&b\\a&1\end{array}\right]. \]
These two substitutions are mutually recognizable over polychromatic configurations:
\begin{lemma}\label{app:lem:toeplitz-recognizability}
  From any configuration $y \in \theta_0(\alphabet^{\Z^2}\setminus\{0^{\Z^2},1^{\Z^2}\}) \cup \theta_1(\alphabet^{\Z^2}\setminus\{0^{\Z^2},1^{\Z^2}\})$, one can computably recover a unique symbol $a \in \alphabet$, a unique configuration $x \in \alphabet^{\Z^2}\setminus\{0^{\Z^2},1^{\Z^2}\}$ and a unique vector $\vec j\in\{0,1\}^2$, such that the grid $\grid g=(\ibox2^2+\vec i+\vec j)_{\vec i\in\Z^2}$ of $2\times2$-squares satisfies
  \[ x \substep[\theta_a][\grid{g}] y.\]
\end{lemma}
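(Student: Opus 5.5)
The idea is to read off all the data from the four cosets of the sublattice $2\Z^2$. Fix $a \in \alphabet$, a configuration $x \in \alphabet^{\Z^2}\setminus\{0^{\Z^2},1^{\Z^2}\}$, a vector $\vec j \in \{0,1\}^2$, and let $y$ satisfy $x \substep[\theta_a][\grid{g}] y$ along the grid $\grid g = (\ibox2^2 + 2\vec i + \vec j)_{\vec i \in \Z^2}$. (We read the grid as indexed by $2\vec i$, so that its cells tile the plane.) Reading the matrix $\theta_a(b)$ with its top row at height $1$, every block of $y$ carries $a$ at relative position $(0,0)$, $1$ at $(1,0)$, $0$ at $(0,1)$ and $b = x_{\vec i}$ at $(1,1)$. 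Write $C_{\vec v} = \vec v + 2\Z^2$ for $\vec v \in \{0,1\}^2$. Then:
\begin{itemize}
\item $y$ is constant equal to $a$ on $C_{\vec j}$;
\item $y$ is constant equal to $1$ on $C_{\vec j+(1,0)}$ and to $0$ on $C_{\vec j+(0,1)}$, with indices taken modulo $2$;
\item the restriction of $y$ to $C_{\vec j+(1,1)}$ is exactly the configuration $x$, through the map $\vec i \mapsto 2\vec i + \vec j + (1,1)$.
\end{itemize}

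The key step is the following observation. Since $x$ is neither $0^{\Z^2}$ nor $1^{\Z^2}$, the coset $C_{\vec j+(1,1)}$ is the unique coset on which $y$ is non-constant, and the other three cosets are constant. So $\vec j$ is determined intrinsically by $y$ as $\vec j \equiv \vec v - (1,1) \pmod 2$, where $C_{\vec v}$ is the unique non-constant coset.

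Once $\vec j$ is known, the rest follows directly. The symbol $a$ is the constant value of $y$ on $C_{\vec j}$. The configuration $x$ is defined by $x_{\vec i} = y_{2\vec i+\vec j+(1,1)}$. Uniqueness follows because any other triple $(a',x',\vec j')$ with $x' \substep[\theta_{a'}][\grid g'] y$ and $x'$ non-constant would make $C_{\vec j'+(1,1)}$ the non-constant coset. Hence $\vec j'=\vec j$, and then $a'=a$ and $x'=x$ are forced by the three items above. The proof also checks that the hypothesis $y \in \theta_0(\cdot)\cup\theta_1(\cdot)$ is consistent, in the sense that some such triple exists by assumption.

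Computability is the only point needing care, since constancy of a coset is a $\Pi^0_1$ property and cannot be tested directly. The algorithm instead enumerates cells of the four cosets in parallel until it finds two cells of the same coset carrying different symbols. This search halts because a non-constant coset exists, and it identifies $C_{\vec v}$ correctly because the three other cosets never exhibit two distinct values. After that, $\vec j$, $a$ (a single cell lookup) and each $x_{\vec i}$ (a single cell lookup) are computable uniformly from $y$ given as an oracle. The main obstacle is this non-constancy search: it is exactly where the exclusion of $0^{\Z^2}$ and $1^{\Z^2}$ is used, both for uniqueness and for termination.
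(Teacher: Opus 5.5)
Your argument is correct. This includes reading the grid as $(\ibox{2}^2+2\vec i+\vec j)_{\vec i\in\Z^2}$, which is needed for it to be a partition at all.

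The mechanism is the same as the paper's: non-constancy of $x$ removes the translation ambiguity, and injectivity of $\theta_a$ then recovers $x$. The order of the steps differs, though.

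\begin{itemize}
\item \emph{The paper's order.} It first reads $a$ off any single $2\times2$ window. An image of $\theta_0$ has every second column full of $0$s, an image of $\theta_1$ has every second row full of $1$s, and the two patterns cannot occur together. Only afterwards does it locate the grid, using a line that mixes $0$s and $1$s.
\item \emph{Your order.} You locate $\vec j$ first, as the unique non-constant class of $\Z^2/2\Z^2$, and then read $a$ and $x$ from it.
\end{itemize}

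What the paper's order buys is that $a$ comes from a finite window. This needs no search, and it does not use the exclusion of $0^{\Z^2}$ and $1^{\Z^2}$.

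What your coset formulation buys is a uniform treatment of $\theta_0$ and $\theta_1$, with uniqueness and termination of the search made explicit. The paper's phrase ``some row or column involves both a $0$ and a $1$'' only becomes meaningful once restricted to the lines carrying $x$-values. Indeed, in an image of $\theta_0$, every other row alternates $0,1$ whatever $x$ is.
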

\begin{proof}
  Any configuration $y$ in the image $\theta_0(\alphabet^{\Z^2})\cup\theta_1(\alphabet^{\Z^2})$ either has every second column full of~$0$s, in which case it lies in the image of $\theta_0$, or every second row full of~$1$s, in which case it lies in the image of $\theta_1$. This can be checked simply by looking any $2\times2$-pattern.
  Now, if $y$ is neither the image of $\smash{0^{\Z^2}}$ nor $\smash{1^{\Z^2}}$, then some row or column involves both a $0$ and a $1$, which allows to determine the unique grid $\grid{g}$ of the substitution. Once the grid is set, the injectivity of each substitution allows to recover the source configuration $x$.
\end{proof}

We now prove~\Cref{app:thm:medvedev-degrees}:
\begin{proof}[Proof of \Cref{app:thm:medvedev-degrees}]
  Let $S \subseteq \{0,1\}^{\N}$ be an effectively closed set. Denoting $\theta_0$ and $\theta_1$ the Toeplitz substitutions from above, the set of directive sequences $\Sndill = \{ (\theta_{z_\ell})_{\ell \in \N} : z \in S \}$ is also effectively closed. By \Cref{sof:thm:bounded-d-adic-sofic}, the resulting limit space $\limspace{X}[\Sndill]$ is thus a sofic shift. We now prove that $\limspace{X}[\Sndill]$ and $S$ are Medvedev-equivalent.

  Consider first an oracle $x \in \limspace{X}[\Sndill]$: by applying \Cref{app:lem:toeplitz-recognizability} iteratively, we recover a sequence of substitutions defining $x$ and thus a sequence $z \in S$. Conversely, consider an oracle $z \in S$: we build a configuration $x \in \limspace{X}[\Sndill]$ by always using the same grid $(\ibox2^2+\vec i+(-1,1))_{\vec i\in\Z^2}$ and progressively filling (in the macro-tiles of level $\ell$) the coincidences given by each $\theta_{z_{\ell}}$ (which are, by definition, positions that do not depend of the symbols $z_{\ell'}$ of higher levels $\ell' > \ell$). This grid is centered in a way that this procedure will eventually fill the whole plane $\Z^2$.
\end{proof}

\subsection{Prescribed frequencies and growth-type invariants}
For a symbol $a \in \alphabet$ and a configuration $x \in \alphabet^{\Z^d}$, the \emph{frequency} of $a$ in $x$ is defined, when it exists, as the following limit:
\[ \freq_a(x) = \lim_{n\to\infty} \frac{\card{\Big\{ \vec{i} \in \interval{-n}{n}^d : x_{\vec{i}} = a \Big\}}}{\Big|\interval{-n}{n}^d\Big|} \in [0,1]. \]
On the alphabet $\alphabet = \{0,1\}$, consider now the two substitutions defined by
\[ \theta_a(b)= \left[ \begin{array}{cc} a & b \\ b & a \end{array} \right] \]
for $a,b \in \alphabet$. Then the substitutions $\theta_0$ and $\theta_1$ result in well-defined frequencies of $1$s:
\begin{lemma}\label{app:lem:frequencies}
  For $z \in \{0,1\}^{\N}$, let $\Sndill = \{ (\theta_{z_\ell})_{\ell \in \N} \}$. Then every configuration $x \in \limspace{X}[\Sndill]$ admits a frequency of $1$s that is equal to $\sum_{\ell\in\N} z_{\ell} \cdot 2^{-(\ell+1)}$.
\end{lemma}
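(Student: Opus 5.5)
The plan is to compute, for each level $n$, the exact number of $1$s inside any level-$n$ macro-tile, and then to pass from macro-tiles to the centered boxes $\interval{-m}{m}^2$ by a boundary estimate.

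\textbf{Counting inside macro-tiles.} Fix $x \in \limspace{X}[\Sndill]$ with its desubstitutions $x = x^{(0)} \rsubstep[\theta_{z_0}] x^{(1)} \rsubstep[\theta_{z_1}] \cdots$. All substitutions are uniform of shape $\ibox{2}^2$, so each grid $\grid{g}_\ell$ is a translate of the regular $2\times 2$ grid. Hence the composed grid $\pxgrid{g}_n = \bigcomp_{\ell<n}\grid{g}_\ell$ is a translate of the regular grid of $2^n\times 2^n$ squares. Each such square is $\theta_{z_0}\circ\cdots\circ\theta_{z_{n-1}}(b)$ for some symbol $b = x^{(n)}_{\vec{i}}$.

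Let $c_n(b)$ denote the number of $1$s in $\theta_{z_0}\circ\cdots\circ\theta_{z_{n-1}}(b)$. Since $\theta_a(b)$ contains two copies of $a$ and two copies of $b$, I will prove by induction, applying the outermost substitution $\theta_{z_{n-1}}$ first, that
\[ c_n(b) = 2\,c_{n-1}'(z_{n-1}) + 2\,c_{n-1}'(b), \]
where $c'$ counts $1$s for the shorter composition $\theta_{z_0}\circ\cdots\circ\theta_{z_{n-2}}$. Equivalently, I will show directly that the density of $1$s in the block is
\[ \rho_n(b) = \sum_{\ell<n} z_\ell\, 2^{-(\ell+1)} + b\,2^{-n}. \]
The cleanest route to this formula is probabilistic. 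A uniformly random cell of the $2^n$-block traces a path down the hierarchy: at the top level it lands on a copy of $z_{n-1}$ (probability $\tfrac12$) or of $b$ (probability $\tfrac12$). A landing on $z_{n-1}$ at level $n-1$ contributes $\rho_{n-1}(z_{n-1})$, which is a problem of the same shape one level lower. The clean recursion is therefore
\[ \rho_n(b) = \tfrac12\rho_{n-1}(z_{n-1}) + \tfrac12\rho_{n-1}(b), \qquad \rho_0(b) = b. \]
Unrolling this recursion gives the affine form $\rho_n(b) = A_n + b\,2^{-n}$, and the constant term satisfies $A_n = A_{n-1} + z_{n-1}2^{-n}$. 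Hence $A_n = \sum_{\ell<n} z_\ell 2^{-(\ell+1)}$, which is the claimed formula. I expect this index bookkeeping, i.e. checking which $z_\ell$ is paired with which power $2^{-(\ell+1)}$, to be the only delicate point. I would verify it on $n=1$ and $n=2$ before trusting the induction.

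\textbf{Passing to centered boxes.} Let $F = \sum_{\ell} z_\ell 2^{-(\ell+1)}$. Every level-$n$ block has density of $1$s within $2^{-n}$ of $F$, since the tail $\sum_{\ell\ge n} z_\ell 2^{-(\ell+1)}$ and the term $b\,2^{-n}$ are each at most $2^{-n}$.

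Fix $n$ and take $m \gg 2^n$. The box $\interval{-m}{m}^2$ is tiled by full level-$n$ blocks, apart from at most $O(m\cdot 2^n)$ boundary cells. This holds regardless of where the grid is positioned, and even when the configuration has a fault line, because $\pxgrid{g}_n$ is still a regular grid.

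It follows that the proportion of $1$s in the box lies within $2^{-n} + O(2^n/m)$ of $F$. Letting $m\to\infty$ and then $n\to\infty$ shows that the limit exists and equals $F$.
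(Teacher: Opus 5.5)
Your proof is correct and follows essentially the same route as the paper. Both arguments give an exact inductive formula for the density of $1$s in each level-$n$ macro-tile $\theta_{z_0}\circ\cdots\circ\theta_{z_{n-1}}(b)$, then use a boundary estimate: a large box is covered by full $2^n\times 2^n$ blocks up to $O(m\,2^n)$ cells. The differences are cosmetic: you peel off the outermost substitution $\theta_{z_{n-1}}$ cellwise where the paper peels off $\theta_{z_0}$ acting on an arbitrary pattern, and your error term $b\,2^{-n}$ is the correct one (the paper's displayed formula has $b\cdot 2^{-(\ell+1)}$ there, a harmless slip).
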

\begin{proof}
  Consider some $\ell\in\N$. In the image of any pattern $u \in \{0,1\}^{\ibox\ell^2}$, the number of symbols $1$ is exactly:
  \[\card{\{ \vec{i} \in \ibox{2\ell}^2 : \theta_a(u)_{\vec{i}} = 1 \}} = 2a \cdot \ell^2 + 2 \cdot \card{\{ \vec{i} \in \ibox{\ell}^2 : u_{\vec{i}} = 1 \}};\]
  and the ratio of symbols $1$ thus is:
  \[\frac{\card{\{ \vec{i} \in \ibox{2\ell}^2 : \theta_a(u)_{\vec{i}} = 1 \}}}{(2\ell)^2} = \frac{a}{2} + \frac{1}{2} \cdot \frac{\card{\{ \vec{i} \in \ibox{\ell}^2 : u_{\vec{i}} = 1 \}}}{\ell^2}.\]
  By induction, we obtain that for every symbol $b \in \{0,1\}$ and every sequence $z \in \{0,1\}^\ell$, we have:
  \[ \frac{\card{\{ \vec{i} \in \ibox{2^\ell}^2 : (\theta_{z_0} \theta_{z_1} \cdots\mspace{2mu}\theta_{z_{\ell-1}} (b))_{\vec{i}} = 1 \}}}{2^{2\ell}}
    = \sum_{i<\ell} z_i \cdot 2^{-(i+1)} + b \cdot 2^{-(\ell+1)}.\]
  We conclude this proof by considering arbitrary patterns in configurations $x \in \limspace{X}[\Sndill]$, and compute the frequency of symbols $1$ by a classical unique-ergodicity argument in the amenable setting. More precisely, let $(\grid{g}_{\ell})_{\ell \in \N}$ be a sequence of grids defining the $S$-adic structure of $x$ along the substitutions $(\theta_{z_\ell})_{\ell \in \N}$. Furthermore, let us fix any $\varepsilon > 0$, any level $\ell \geq 1 - \log \varepsilon$ and any square $\rect{r} \subseteq \Z^2$ of size $k \times k$ for $k \in \N$ large enough so that $4k \cdot 2^{\ell} < \frac{\varepsilon}{2} \mspace{2mu} k^2$. We now consider how the grid~$\grid{g}_{\ell-1}$ (which consists of squares of size $2^{\ell} \times 2^{\ell}$) and the square~$\rect{r}$ intersect.

  Geometrically, the square $\rect{r}$ contains roughly $(k \cdot 2^{-\ell}) \times (k \cdot 2^{-\ell})$ entire squares from the grid~$\grid{g}_{\ell-1}$, and the corresponding patterns in $x$ are images $\theta_{z_0}\theta_{z_1}\cdots\theta_{z_{\ell-1}}(b)$ of some bits $b \in \{0,1\}$. By the previous paragraphs, their ratios of symbols $1$ are exactly $\sum_{i<\ell} z_i \cdot 2^{-(i+1)} + b \cdot 2^{-(\ell+1)}$, \textit{i.e.}~belong to the interval $\sum_{i \in \N} z_i \cdot 2^{-(i+1)} + [-\frac{\varepsilon}{2},\frac{\varepsilon}{2}]$.

  Furthermore, the boundary of the square $\rect{r}$ intersects at most $4k \cdot 2^{-\ell}$ squares from the grid~$\grid{g}_{\ell-1}$. Since the corresponding area in $\rect{r}$ covers at most $4k \cdot 2^{\ell}$ positions, and that $4k \cdot 2^{\ell} < \frac{\varepsilon}{2} \mspace{2mu} k^2$, we deduce that these positions alter the ratio of symbols $1$ in $\restr{x}{\rect{r}}$ by at most $\frac{\varepsilon}{2}$.

  Overall, we obtain that for every $\varepsilon > 0$, there exists a window size $k \in \N$ that is large enough to witness that the ratio of symbols $1$ in $x$ locally looks like $\sum_{i\in\N} z_i \cdot 2^{-(i+1)}$, up to $\varepsilon$. This concludes the proof.
\end{proof}

\noindent With this lemma, we can realize many sets of frequencies in multidimensional sofic shifts (up to some computational restrictions). A set of real numbers $S \subseteq \R$ is said to be \emph{effectively closed} if its complement is the union of a computably enumerable family of rational balls.

\begin{theorem}[label={app:thm:frequencies}]
  For every effectively closed $S\subset[0,1]$, there exists a sofic shift $X \subseteq \{0,1\}^{\Z^2}$ in which all configurations have well-defined frequencies of $1$s, and these frequencies exactly cover $S$.
\end{theorem}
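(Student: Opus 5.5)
The plan is to mimic the proof of \Cref{app:thm:medvedev-degrees}, replacing the Toeplitz substitutions by the substitutions $\theta_0,\theta_1$ of \Cref{app:lem:frequencies}. Define $\pi\colon\{0,1\}^{\N}\to[0,1]$ by $\pi(z)=\sum_{\ell} z_\ell 2^{-(\ell+1)}$, and let $\hat S=\pi^{-1}(S)\subseteq\{0,1\}^{\N}$. Take the set of directive sequences $\Sndill=\{(\theta_{z_\ell})_{\ell\in\N} : z\in\hat S\}$. The shift will be $X=\limspace{X}[\Sndill]$, realized on the full shift $\{0,1\}^{\Z^2}$. By \Cref{app:lem:frequencies}, every configuration of $X$ admits a frequency of $1$s equal to $\pi(z)\in S$ for some $z\in\hat S$.

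Conversely, fix $s\in S$ and any $z\in\hat S$ with $\pi(z)=s$; such a $z$ exists because $\pi$ is surjective onto $[0,1]$. We must show $\limspace{X}[\Sndill]$ contains a configuration built along $(\theta_{z_\ell})$. Use the fixed grids centered as in the proof of \Cref{app:thm:medvedev-degrees}, with squares $\ibox{2}^2+2\vec i+(-1,1)$. Each $\theta_a$ is total and deterministic, so take a compactness limit of the patterns $\theta_{z_0}\cdots\theta_{z_{n-1}}(b)$ placed around the origin. Because the grids are centered, these patterns grow to cover $\Z^2$. Any accumulation point $x$ admits desubstitutions at every finite depth along the prefixes of $z$. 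A further compactness argument, using finitely many choices of intermediate configurations at each depth, gives a full infinite desubstitution sequence. Hence the frequencies exactly cover $S$.

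It remains to check that $X$ is sofic, which follows from \Cref{sof:thm:bounded-d-adic-sofic}. The two substitutions are constant, expanding, of size $2\times2$ on a fixed finite alphabet, and trivially computable in constant time. The only point needing care, and the one I expect to be the main obstacle, is that $\hat S$ is effectively closed. The complement $[0,1]\setminus S$ is a c.e.\ union of rational open balls $B_j$. A prefix $w\in\{0,1\}^n$ determines the closed dyadic interval $I_w=[\pi(w0^\omega),\pi(w1^\omega)]$. I would enumerate as forbidden every prefix $w$ such that $I_w$ is covered by finitely many of the already-enumerated balls, a condition that is decidable for rational data. If $\pi(z)\notin S$, then $\pi(z)$ lies in some open $B_j$, and $I_{z|_n}\subseteq B_j$ for large $n$, so $z$ is excluded. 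Conversely, a forbidden $w$ has $I_w\cap S=\emptyset$, so no $z\in\hat S$ extends $w$. Thus $\hat S$ is $\Pi^0_1$, and so is $\Sndill$, as the image of $\hat S$ under the computable homeomorphism $z\mapsto(\theta_{z_\ell})_\ell$. Dyadic rationals have two binary expansions, but both lie in $\hat S$ whenever their common value is in $S$, so this causes no trouble.
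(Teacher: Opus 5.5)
Your proposal is correct and follows the paper's proof: the same set $\hat S=\pi^{-1}(S)$ of binary expansions, the same directive sequences $(\theta_{z_\ell})_{\ell}$ with \Cref{app:lem:frequencies} giving the frequencies, and soficity from \Cref{sof:thm:bounded-d-adic-sofic} (the paper cites its special case \Cref{app:thm:effective-s-adic}); you additionally spell out two points the paper leaves implicit, namely the effective closedness of $\hat S$ and the fact that every $s\in S$ is actually realized. One harmless slip: if the offset $(-1,1)$ is fixed at every level, the origin stays in the right column of its nested macro-tiles, so those nested patterns only fill a quadrant, not $\Z^2$; to fix this, either center the patterns directly or alternate the offsets between levels, or simply note that the limit space along any directive sequence of total substitutions is nonempty by compactness.
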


\begin{proof}[Proof of \Cref{app:thm:frequencies}]
  Let~$\hat{S} \subseteq \{0,1\}^{\N}$ be the set of binary expansions of $S \subseteq [0,1]$. Considering the substitutions $\Sndill_\ell=\{\theta_0,\theta_1\}$ and the set of directive sequences $\Sndill= \{ (\theta_{z_{\ell}})_{\ell \in \N} : z \in \hat{S} \}$, by \Cref{app:lem:frequencies}, every configuration in the limit space $\limspace{X}[\Sndill]$ has a well-defined frequency of symbols $1$, and the set of these frequencies in $\limspace{X}[\Sndill]$ is exactly $S$. Furthermore, since $S \subseteq [0,1]$ is effectively closed, so are the associated binary expansions $\hat{S} \in \{0,1\}^{\N}$: by \Cref{app:thm:effective-s-adic}, the shift $\limspace{X}[\Sndill]$ is thus sofic.
\end{proof}

\subsubsection*{Entropies}

From our considerations on frequencies of symbols, we can easily realize a large class of topological entropies: if a real number $\gamma \in \R$ is said to be \emph{$\Pi^{0}_{1}$-computable} when the set of rational numbers $\{ r \in \Q : r \leq \gamma \}$ is co-computably enumerable, then,
\begin{theorem}
  For every right-computable real number $\gamma \geq 0$, there exists an SFT $\mathcal X \subseteq \{0,1\}^{\Z^2}$ with entropy $\gamma$.
\end{theorem}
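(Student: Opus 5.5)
The plan is to follow the classical Hochman--Meyerovitch strategy and use the frequency machinery above as the ``entropy-free skeleton''. We build a sofic shift with zero entropy whose configurations encode a density $\gamma' \in [0,1]$, and then add free bits at a controlled subset of cells. First normalize: write $\gamma = m + \gamma'$ with $m \in \N$ and $\gamma' \in [0,1)$. The integer part $m$ is handled at the end by a product with a full shift on $2^m$ symbols, recoded into a binary alphabet by a block recoding (up to the usual rescaling of the entropy). So we assume $\gamma \in [0,1]$, and then $\gamma/2 \in [0,1/2]$.

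Since $\gamma$ is right-computable, the set $S = [\gamma, 1]$ is effectively closed: its complement is enumerated by the rational balls lying below rationals $r < \gamma$, and these rationals are exactly the ones certified by the approximations from above. By \Cref{app:thm:frequencies}, there is a sofic shift $Y \subseteq \{0,1\}^{\Z^2}$ whose configurations all have well-defined frequencies of $1$s, ranging exactly over $S$. I would moreover arrange that $Y$ has zero entropy. This holds because $Y$ is $S$-adic along deterministic $2\times2$ substitutions with only two choices per level: the number of $n\times n$ patterns is $2^{\bigO(\log n)}$ times the number of grid offsets and boundary macro-symbols, hence $2^{o(n^2)}$. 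Next define $Z$ as the shift obtained from $Y$ by letting every cell carrying a $1$ additionally carry a free bit. Its topological entropy is $\sup_{y\in Y} \freq_1(y) = \max S = 1$; this is wrong for our purposes, so I would instead take $S = \{\gamma\}$ intersected with the sequences approximating $\gamma$ from above. The cleanest choice is to use $S = [0,\gamma]$. This set is effectively closed only if $\gamma$ is left-computable, which it is not; so the correct choice is to apply the free bits on the $0$-cells with $S=[1-\gamma,1]$, which is effectively closed because $1-\gamma$ is left-computable. Then $h(Z) = \sup_{y}(1-\freq_1(y)) = \gamma$.

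Then $Z$ is sofic, being obtained from a sofic shift by a cellwise product with local constraints, and its entropy is computed by a variational argument. The lower bound comes from one configuration with frequency of zeros equal to $\gamma$, together with uniform convergence of frequencies, which follows from the window argument in \Cref{app:lem:frequencies}. For the upper bound, every $n\times n$ pattern has at most $(\gamma+\varepsilon)n^2$ zeros for large $n$ (uniformly over $Y$, by compactness plus the uniform estimate), and there are only $2^{o(n^2)}$ underlying skeleton patterns. Finally, take the SFT cover $\mathcal X$ of $Z$ provided by \Cref{sof:lem:fixpoint}. By the entropy comment in \Cref{sof:par:comment-entropy}, $\mathcal X$ introduces no extra entropy, since the substitutions $\theta_0,\theta_1$ are recognizable by \Cref{app:lem:toeplitz-recognizability}-type arguments and we may choose an unambiguous program. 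Hence $h(\mathcal X) = \gamma$. A final recoding of $\mathcal X$ to the alphabet $\{0,1\}$ via a higher-block code with rescaled grids finishes, with the entropy scaled back by choosing the initial target accordingly.

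The main obstacle is the uniformity: showing that frequencies converge uniformly over $Y$, so that the supremum bound holds, and justifying zero extra entropy in the cover. The first part is essentially the window estimate already in \Cref{app:lem:frequencies}, made independent of $z$. The second relies on the recognizability and unambiguity claims of \Cref{sof:par:comment-entropy}, which I would invoke rather than reprove.
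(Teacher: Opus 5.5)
Your construction is the paper's up to exchanging $0$ and $1$. The paper realizes frequencies of $1$s ranging over $[0,\gamma]$ via \Cref{app:thm:frequencies} and lets each $1$-cell make a free choice ($1\mapsto 1$ or $1\mapsto 2$). That is exactly the bit-flip of your $S=[1-\gamma,1]$ with free bits on $0$-cells, since the substitutions of \Cref{app:lem:frequencies} commute with the flip. Both arguments then invoke \Cref{sof:par:comment-entropy} for an equal-entropy cover. Your uniform-window and pattern-counting estimates usefully spell out the paper's one-line ``this turns frequencies into entropy''.

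Your computability reasoning along the way is backwards, though. Approximations from above certify the rationals $r>\gamma$, so $(\gamma,1]=\bigcup_{r>\gamma}(r,1]$ is effectively open. Hence $[0,\gamma]$ is effectively closed \emph{because} $\gamma$ is right-computable, which is what the paper uses directly. By contrast, $[\gamma,1]$ would need left-computability. Your final choice is valid, but both intermediate claims are wrong in general.

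Likewise, \Cref{app:lem:toeplitz-recognizability} concerns different substitutions. Those of \Cref{app:lem:frequencies} are not recognizable on $1^{\Z^2}$, which lies in your $Y$ since $1\in S$. What \Cref{sof:par:comment-entropy} actually allows is the weaker injectivity over finite patterns, which does hold here.

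The step that genuinely fails is the final recoding into $\{0,1\}$. Every subshift of $\{0,1\}^{\Z^2}$ has entropy at most $\log 2=1$, so no block recoding yields a binary SFT of entropy $\gamma>1$; your treatment of the integer part $m$ cannot work. Even for $\gamma<1$, ``choosing the initial target accordingly'' is unjustified, since the alphabet of the cover, and hence the block size, depends on the target.

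The paper's proof does not attempt this either. Reducing to $\gamma\le 1$ by Cartesian products and working over $\{0,1,2\}$ only produces an SFT over some finite alphabet, which is the meaningful reading of the statement. Drop that step and your argument stands at the same level of rigor as the paper's, which also rests on the sketched \Cref{sof:par:comment-entropy}.
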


\noindent In fact, it is a well-known characterization of~\cite{Hochman-Meyerovitch_2010:characterization-entropies-multidimensional-sfts} that the set of entropies of $\Z^d$ SFT is exactly the set of positive $\Pi^{0}_{1}$-computable real numbers when $d\ge2$.

\begin{proof}
  Let us fix one such real number $\gamma$.
  Since entropy is additive under Cartesian product, we assume without loss of generality that $\gamma \leq 1$.
  Furthermore, since $\gamma$ is $\Pi^{0}_{1}$-computable, the set $[0,\gamma] \subseteq [0,1]$ is effectively closed:
  by~\Cref{app:thm:frequencies}, there exists a sofic shift $X' \subseteq \{0,1\}^{\Z^2}$ in which all configurations admit a well-defined frequency of symbols $1$, and for which these frequencies exactly cover $[0,\gamma]$.

  As a regular Toeplitz shift, $X'$ has entropy $0$.
  By considering its image under the non-deterministic substitution $\pi \colon 0 \mapsto 0$, $1 \mapsto 1$, $1 \mapsto 2$, we introduce within the set of valid patterns as many choices as there were occurrences of symbols $1$.
  This turns frequencies into entropy, thus leading to the sofic shift $\pi(X')$ having entropy $\gamma$.
  Thanks to \Cref{sof:par:comment-entropy}, we can further consider a finite-type cover for $\pi(X')$ with equal entropy, thus concluding the proof.
\end{proof}

\subsubsection*{Entropy dimensions}

Other conjugacy invariants related to the pattern complexity of shifts $X \subseteq \alphabet^{\Z^d}$ include the upper and lower \emph{entropy dimensions}, which are respectively defined as:
\[ \bar{D}_h(X) = \limsup_{n \to +\infty} \mspace{2mu} \frac{\log \log N_X(n)}{\log n}
  \qquad \text{and} \qquad
  \ubar{D}_h(X) = \liminf_{n \to +\infty} \mspace{2mu} \frac{\log \log N_X(n)}{\log n}
\]
Informally, $\bar{D}_h(X) = \gamma \in [0,d]$ if the pattern complexity of $X$ roughly amounts to $2^{n^{\gamma}}$. Let us now consider the following Toeplitz substitutions $\theta_0$ and $\theta_1$ over the alphabet $\alphabet = \{0,1\}$:
\[ \theta_0(0) = \theta_1(0) = \left[ \begin{array}{cc} 0 & 0 \\ 0 & 0 \end{array} \right]
  \qquad \text{and} \qquad
  \theta_a(1) = \left[ \begin{array}{cc} 1 & a \\ a & a \end{array} \right]
  \text{ for }a\in\alphabet.
\]
Then $\theta_0$ and $\theta_1$ allow to realize arbitrary entropy dimensions:
\begin{lemma}[label={app:lem:entropy-dimensions}]
  For $z \in \{0,1\}^{\N}$, let $\Sndill = \big\{ (\theta_{z'_{\ell}})_{\ell \in \N} : \forall \ell \in \N, z'_{\ell} \leq z_{\ell} \big\}$ and fix the non-deterministic substitution $\pi \colon 0 \mapsto 0$, $1 \mapsto 1$, $1 \mapsto 2$. Then $\pi(\limspace{X}[\Sndill])$ satisfies
  \[ \bar{D}_h(X) = \limsup_{n \to +\infty} \frac{2}{n} \sum_{i=0}^{n-1} z_i
    \qquad \text{and} \qquad
  \ubar{D}_h(X) = \liminf_{n \to +\infty} \frac{2}{n} \sum_{i=0}^{n-1} z_i.\]
\end{lemma}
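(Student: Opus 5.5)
We prove both equalities by bounding $\log N_{Y}(n)$, where $Y=\pi(\limspace{X}[\Sndill])$, above and below by $2^{\,2\sum_{i<\ell} z_i}$ up to multiplicative constants, for $n\approx 2^{\ell}$. The key observation is that in an iterated image $\theta_{z'_0}\cdots\theta_{z'_{\ell-1}}(1)$, the number of symbols $1$ obeys the recursion $c_{\ell}=1+3z'_{\ell-1}\cdot c_{\ell-1}$ read from the top. More precisely, one step of $\theta_a$ sends a $1$ to one $1$ plus $3a$ further $1$s, and a $0$ to four $0$s. Hence, going down from level $\ell$, the number of $1$s in $\theta_{z'_0}\circ\dots\circ\theta_{z'_{\ell-1}}(1)$ is $\prod_{i<\ell}(1+3z'_i)=4^{\sum_{i<\ell} z'_i}$. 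This is maximized for $z'=z$, giving $4^{s_\ell}$ with $s_\ell=\sum_{i<\ell}z_i$. The images of $0$ are all-zero blocks.

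\emph{Upper bound.} Fix $n$ and $\ell$ with $2^{\ell-1}\le n<2^{\ell}$. Any $n\times n$ pattern of $x\in\limspace{X}[\Sndill]$ is covered by at most four macro-tiles of level $\ell$ (squares of side $2^\ell$). Each such macro-tile is the image of a single symbol $b\in\{0,1\}$ under $\theta_{z'_0}\cdots\theta_{z'_{\ell-1}}$. The pattern is therefore determined by:
\begin{itemize}
\item the offset, with at most $4^{\ell}$ choices;
\item the four symbols;
\item the prefix $z'_{<\ell}\le z_{<\ell}$, with at most $2^{\ell}$ choices.
\end{itemize}
For patterns of $Y$, we then add at most $4\cdot 4^{s_\ell}$ binary choices coming from $\pi$. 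This gives $\log N_Y(n)\le 4^{s_\ell+1}+\bigO(\ell)$. Since $\log n\approx \ell$, we get $\log\log N_Y(n)/\log n\le (2s_\ell+\bigO(1))/\ell$ up to vanishing terms. When $s_\ell$ is small, the $\bigO(\ell)$ term only contributes $\log\ell/\ell\to 0$.

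\emph{Lower bound.} Take the configuration built with $z'=z$, as in the proof of \Cref{app:thm:medvedev-degrees}, which contains a full level-$\ell$ macro-tile $\theta_{z_0}\cdots\theta_{z_{\ell-1}}(1)$ of side $2^\ell$. For $n=2^{\ell}$, this single pattern admits $2^{4^{s_\ell}}$ distinct $\pi$-images. Every image is a valid pattern of $Y$, because $\pi$ acts independently cellwise on the configuration. Hence $\log N_Y(2^\ell)\ge 4^{s_\ell}$, and for $2^{\ell}\le n<2^{\ell+1}$ monotonicity gives $\log N_Y(n)\ge 4^{s_\ell}$.

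Combining the two bounds, $\frac{\log\log N_Y(n)}{\log n}=\frac{2s_\ell}{\ell}+o(1)$ along $n\sim 2^\ell$. Changing $\ell$ by one modifies $s_\ell/\ell$ by $\bigO(1/\ell)$, so the limsup and liminf over $n$ coincide with those of $\frac{2}{\ell}\sum_{i<\ell}z_i$, which is the statement.

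The main obstacle is the degenerate case where a configuration admits no common ancestor, as with fault lines. The upper bound must hold for every configuration of the limit space, not only for the ones coming from one hierarchy. I handle this by noting that the grids $\bigcomp_{i<\ell}\grid{g}_i$ still partition $\Z^2$ into squares of side $2^\ell$, whatever the global structure, so the four-macro-tile covering argument above applies verbatim. A secondary point is that the $z'$-prefixes need not agree across the four macro-tiles. A single directive sequence per configuration is fixed in the definition of $\limspace{X}[\Sndill]$, so they do agree, and in any case allowing them to differ only costs a $2^{4\ell}$ factor, which is harmless.
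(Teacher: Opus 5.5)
Your proposal is correct and is essentially the paper's proof: count the $4^{\sum_{i<\ell} z'_i}$ symbols $1$ in $\theta_{z'_0}\cdots\theta_{z'_{\ell-1}}(1)$, cover an $n\times n$ window by at most four level-$\ell$ macro-tiles (polynomially many base patterns, times at most $2^{4\cdot 4^{s_\ell}}$ choices for $\pi$), and get the lower bound from one full macro-tile $\theta_{z_0}\cdots\theta_{z_{\ell-1}}(1)$. Two cosmetic fixes: the recursion you first state should read $c_\ell=(1+3z'_{\ell-1})\,c_{\ell-1}$ (your product formula is the correct one), and the existence of a configuration containing that full macro-tile should be justified by a direct compactness argument rather than by the Medvedev-degree construction, which relied on coincidences that $\theta_1$ does not have here.
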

\begin{proof}
  We first consider the pattern complexity of $\limspace{X}[\Sndill]$. Since any valid pattern $\smash{w \in \alphabet^{\ibox{n}^2}}$ appears in a configuration with a valid $S$-adic structure, it is covered by at most $2 \times 2$ macro-tiles of level~$\ell$ (for $\ell \in \N$ such that $2^{\ell} \leq n < 2^{\ell+1}$). By considering the position of $w$ in these macro-tiles ($(2^{\ell})^2$~possibilities), the symbols of these macro-tiles ($2^4$~possibilities) and choices of substitutions for levels $\ell' \leq \ell$ (less than $2^{\ell}$~possibilities), we deduce that $\limspace{X}[\Sndill]$ has polynomial pattern complexity.

  \smallskip
  We now turn to the pattern complexity of $\pi(\limspace{X}[\Sndill])$. By definition of $\pi$, we must take into account the pattern complexity of $\limspace{X}[\Sndill]$ and the numbers of symbols $1$ in each pattern. For every sequence $z' \in \{0,1\}^{\ell}$, we have:
  \[ \card{\{\vec{i} \in \ibox{2^{\ell}}^2 : (\theta_{z'_0} \theta_{z'_1} \cdots \mspace{2mu} \theta_{z'_{\ell-1}}(1))_{\vec{i}}=1\}} = \prod_{i=0}^{\ell-1} 4^{z'_i}. \]
  More generally, for any valid pattern $w \in \alphabet^{\ibox{n}^2}$ appearing in a configuration $x \in \limspace{X}[\Sndill]$: since $x$ admits an $S$-adic structure, there exists at most $2 \times 2$ macro-tiles of level~$\ell$ (for $\ell$ such that $2^{\ell} \leq n < 2^{\ell+1}$) covering the pattern $w$. By the previous paragraph, these macro-tiles together contain at most $4 \cdot \prod_{i=0}^{\ell-1} 4^{z'_i}$ symbols $1$ for some $z' \in \Sndill$. Since $z_i' \leq z_i$ by definition of $\Sndill$, and that the non-deterministic substitution $\pi$ generates two choices per symbols $1$ in $\limspace{X}$, we deduce that the pattern complexity of $\pi(\limspace{X}[\Sndill])$ satisfies:
  \[ \log N_{\pi(\limspace{X}[\Sndill])}(n) \leq 4 \cdot \mspace{-9mu} \prod_{i=0}^{\lfloor \log n \rfloor} \mspace{-6mu} 4^{z_i} + \bigO(\log n). \]
  while taking $z_i' = z_i$ and considering a whole macro-tile of level $\lfloor \log n \rfloor$ gives:
  \[ \prod_{i=0}^{\lfloor \log n \rfloor} \mspace{-6mu} 4^{z_i} \leq \log N_{\pi(\limspace{X}[\Sndill])}(n). \]
  This concludes the proof, since the pattern complexity then satisfies:
  \[ \frac{2}{\log n} \mspace{-4mu} \sum_{i=0}^{\lfloor \log n \rfloor} \mspace{-4mu} z_i
    \leq \frac{\log \log N_{\pi(\limspace{X}[\Sndill])}(n)}{\log n}
    \leq \frac{2}{\log n} \mspace{-4mu} \sum_{i=0}^{\lfloor \log n \rfloor} \mspace{-4mu} z_i + \bigO\Big(\frac{\log \log n}{\log n}\Big). \qedhere\]
\end{proof}

Using some elementary facts about the computability of real numbers, we then obtain a computational characterization for entropy dimensions of sofic shifts $\bar{D}_h(X)$ and $\ubar{D}_h(X)$ (\textit{c.f.}~\cite{Meyerovitch_2011:growth-type-invariants-multidimensional-sfts}). The class of $\Pi^{0}_{1}$-computable real number actually generalizes into an infinite \emph{arithmetical hierarchy}: a real number $\gamma \in \R$ is said to be $\Sigma^{0}_{n}$-computable (resp.~$\Pi^{0}_{n}$-computable) if it can be defined as an alternation of $n$ suprema and infima (resp.~$n$ infima and suprema) from a uniformly computable sequence of rationals. For more details, see~\cite{Zhen-Weihrauch_2001:arithmetical-hierarchy-of-real-numbers}.

\begin{theorem}[label={app:thm:entropy-dimensions}]
  For every $\Pi^{0}_{3}$-computable (resp.~$\Sigma^{0}_{2}$-computable) real number $\gamma \in [0,2]$, there exists a sofic shift $X\subset\{0,1\}^{\Z^2}$ with upper (resp.~lower) entropy dimension $\gamma$.
\end{theorem}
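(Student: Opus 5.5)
The plan is to combine \Cref{app:lem:entropy-dimensions} with a computable choice of the binary sequence $z$, and then use \Cref{sof:thm:bounded-d-adic-sofic} for soficity. The core step is the computability lemma: for every $\Pi^{0}_{3}$-computable $\gamma \in [0,2]$ there is a $\Pi^{0}_{1}$ (effectively closed) set of sequences $Z \subseteq \{0,1\}^{\N}$ such that every $z \in Z$ satisfies
\[ \limsup_{n} \frac{2}{n}\sum_{i<n} z_i = \gamma, \]
and $Z$ is nonempty. Similarly, for $\Sigma^{0}_{2}$-computable $\gamma$, the $\liminf$ should equal $\gamma$. If I can build such a $Z$, the directive sequences
\[ \Sndill = \{ (\theta_{z'_\ell})_{\ell} : \exists z \in Z,\ \forall \ell,\ z'_\ell \le z_\ell \} \]
form an effectively closed set, since it is the projection of a $\Pi^0_1$ set under a coordinatewise constraint and compactness preserves $\Pi^0_1$-ness. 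By \Cref{sof:thm:bounded-d-adic-sofic}, $\limspace{X}[\Sndill]$ is then sofic, and its image under the non-deterministic letter-to-letter map $\pi$ is also sofic. The final shift is over $\{0,1,2\}$, so I would recode it into $\{0,1\}^{\Z^2}$ with a fixed-size block encoding, which preserves entropy dimension.

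Allowing several $z$'s is harmless. Rerunning the upper bound of \Cref{app:lem:entropy-dimensions} with $\max_{z\in Z}$ over prefixes, I need the partial sums $\max_{z\in Z}\sum_{i\le \log n} z_i$ to have the right $\limsup$. So it is cleaner to make $Z$ essentially one sequence, or to arrange that all its elements are coordinatewise dominated by a single one having the target $\limsup$.

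For the upper entropy dimension, I would write $\gamma/2 = \inf_a \sup_b \inf_c q_{a,b,c}$ and try to build $z$ as a computable sequence directly. A $\Pi^0_3$ real is a $\limsup$ of a computable sequence of rationals $r_k$; this is the standard normal form from \cite{Zhen-Weihrauch_2001:arithmetical-hierarchy-of-real-numbers}. I would construct a computable $z$ whose running density tracks $r_k$ on successive blocks of rapidly increasing length $L_k$. On block $k$, I would add $1$s until the running density reaches $r_k$, or add $0$s so that it decays toward $r_k$. The $\limsup$ of the running density then equals $\limsup_k r_k = \gamma/2$. Because $z$ is computable, $Z=\{z\}$ is $\Pi^0_1$ and domination is trivial.

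For the lower entropy dimension, a $\Sigma^0_2$ real is a $\liminf$ of a computable sequence of rationals, and the same block construction applies with $\liminf$ in place of $\limsup$.

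The main obstacle is the normal-form step: checking that every $\Pi^0_3$ real in $[0,1]$ is $\limsup_k r_k$ of a computable rational sequence, and every $\Sigma^0_2$ real is $\liminf_k r_k$. The $\Sigma^0_2$ case is standard. For $\Pi^0_3$, I would first write $\gamma = \inf_a \beta_a$ with each $\beta_a$ a $\Sigma^0_2$ real, hence $\beta_a = \liminf_c q_{a,c}$ uniformly. I would then interleave these with a priority-style guessing of which $a$ is currently minimal, and prove that the resulting $\limsup$ is exactly the infimum. The remaining details are ensuring that the running density can follow the target sequence, which requires $L_k$ to grow fast, and handling the boundary values $\gamma\in\{0,2\}$, both of which are routine.
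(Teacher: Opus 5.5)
\textbf{There is a genuine gap in the upper entropy dimension case.} Your $\Sigma^0_2$ (lower entropy dimension) half is fine. A computable $z$ whose running density has $\liminf$ equal to $\gamma/2$ works, since $\Sigma^0_2$ reals are exactly the $\liminf$s of computable rational sequences. The $\Pi^0_3$ half fails because the normal form you rely on is false. The $\limsup$ of a computable rational sequence is $\inf_N \sup_{k\ge N} r_k$, which is a $\Pi^0_2$ real, not a $\Pi^0_3$ one. This is the Zheng--Weihrauch characterization dual to the $\liminf$ one. The hierarchy of reals is proper: every $\Pi^0_2$ real is computable from $\emptyset''$, while some $\Pi^0_3$ reals are not. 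So no priority-style interleaving can yield an arbitrary $\Pi^0_3$ real as such a $\limsup$.

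The problem is structural, not just a missing lemma. With $Z=\{z\}$ and $z$ computable, the upper entropy dimension of your shift is $\limsup_n \frac{2}{n}\sum_{i<n} z_i$, itself the $\limsup$ of a computable rational sequence. Your construction can therefore only ever realize $\Pi^0_2$ values.

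\textbf{The missing idea} is that $z$ does not need to be computable. It is enough that $\{\ell : z_\ell = 0\}$ be computably enumerable, i.e.\ $z_\ell = \inf_j g(\ell,j)$ for some computable $g\colon \N\times\N\to\{0,1\}$. The set of directive sequences $\{(\theta_{z'_\ell})_\ell : \forall \ell,\ z'_\ell \le \inf_j g(\ell,j)\}$ is still effectively closed: a sequence is rejected as soon as some $g(\ell,j)=0$ is found with $z'_\ell=1$. This co-c.e.\ choice of bits supplies exactly the extra quantifier. On one side, $\limsup_n \frac1n\sum_{i<n}\inf_j g(i,j)$ is $\Pi^0_3$. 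Conversely, every $\Pi^0_3$ real in $[0,1]$ has this form by \cite[Lemma~4.1]{Meyerovitch_2011:growth-type-invariants-multidimensional-sfts}. Combined with the entropy-dimension lemma and the bounded $N$-adic soficity theorem, this is precisely the paper's proof.

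Your remark about taking $Z$ coordinatewise dominated by a single sequence was pointing in the right direction. The dominating sequence must be a $\Pi^0_1$ sequence, not a computable one.
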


\begin{proof}
  This result follows from \Cref{app:lem:entropy-dimensions} and an alternative characterization of computable real numbers: by \cite[Lemma~4.1]{Meyerovitch_2011:growth-type-invariants-multidimensional-sfts}, a real number $\gamma \in [0,1]$ is $\Pi^{0}_{3}$-computable (resp.~$\Sigma^{0}_{2}$-computable) if there exists a computable function $g \colon \N \times \N \mapsto \{0,1\}$ such that:
  \[ \gamma = \limsup_{n \to +\infty} \frac{1}{n} \sum_{i=0}^{n-1} \inf_{j} g(i,j)
    \qquad \Big(\text{resp.} \mspace{8mu}
    \gamma = \liminf_{n \to +\infty} \frac{1}{n} \sum_{i=0}^{n-1} \inf_{j} g(i,j)
    \Big)\]
  For such function $g$, the set $\{ (\theta_{z'_\ell})_{\ell \in \N} \in \{0,1\}^{\N}  : \forall \ell,\, z'_{\ell} \leq \inf_{j} g(\ell,j) \}$ is then effectively closed, so that the resulting shift is sofic by~\Cref{sof:thm:bounded-d-adic-sofic}.
\end{proof}

\noindent
All along the proof, the $2$ can be replaced by $d$, because similar Toeplitz substitutions can be found with the wanded densities of $1$s, so that $d$-dimensional sofic shifts realize any $\gamma\in[0,d]$ with the same assumptions.

Moreover, a careful analysis of the pattern complexity of the finite-type cover $\mathcal{X}$ built in~\Cref{fix:sec} (see the comments in \Cref{sof:par:comment-entropy}) would allow to also apply the realization of entropy dimensions from~\Cref{app:thm:entropy-dimensions} to multidimensional shifts of finite type, thus fully reobtaining~\cite{Meyerovitch_2011:growth-type-invariants-multidimensional-sfts}.

\subsection{Intrinsic universality}

Since the $S$-adic structure of the fixed-point construction can be recognized locally (see the comments in \Cref{sof:par:comment-structure}), we also obtain the following rephrasing of \cite[Theorem~5.2]{Lafitte-Weiss_2010:tilings-simulation-and-universality} (see also \cite[Theorem~48]{Zinoviadis_2015:hierarchy-expansiveness-2d-subshifts-finite-type} for a deterministic version):
\begin{theorem}[label={app:thm:intrinsic-universality}]
  For every $d\ge2$ and every computably enumerable family of non-empty sofic shifts $(Y_n)_{n \in \N}$ with $\smash{Y_n \subseteq \alphabet[B]_n^{\mspace{2mu}\Z^d}}$, there exists a shift of finite type $\smash{X \subseteq \alphabet^{\Z^d}}$ that simulates every $Y_n$.
\end{theorem}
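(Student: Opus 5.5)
We will build the SFT $X$ as a limit space $\limspace{X}[\Sndill]$ and apply \Cref{sof:thm:bounded-d-adic-sofic}, with the fixed-point SFT cover itself serving as $X$ (see \Cref{sof:par:comment-constructivism,sof:par:comment-structure}). The idea is a two-phase hierarchy. At level $0$, a non-deterministic ndill map guesses an index $n \in \N$ and ``commits'' to simulating $Y_n$. For this, fix a computable enumeration of SFT covers $Z_n \subseteq \alphabet[C]_n^{\Z^d}$ with letter-to-letter factor maps $\pi_n \colon Z_n \to Y_n$; this is possible because the family is c.e.\ and each $Y_n$ is sofic, so we enumerate pairs (finite forbidden set, projection). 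Every tile of the hierarchy carries in its \field{level} field the integer $n$, written in binary and padded. The ndill maps check that $n$ is constant between neighbours and across levels. Levels therefore stay consistent, and the effectively closed set of directive sequences is simply a union over $n$.

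The simulation itself is obtained through the recognizable $S$-adic structure. A configuration of $Z_n$ is guessed cell by cell at some level $\ell_n$: each macro-tile of level $\ell_n$ carries one symbol of $\alphabet[C]_n$ and one decoration in $\alphabet[B]_n$. The radius-$1$ ndill map at level $\ell_n$ checks the Wang/forbidden-pattern validity in $Z_n$ and that the decoration is $\pi_n$ of the guessed symbol, as in the Tautology remark. Levels above $\ell_n$ carry only the blank symbol together with $n$, and levels below copy the information downward.

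The level $\ell_n$ has to be large enough that this check fits the computation budget. With substitutions of size $2$, a macro-tile of level $\ell$ allows time $\Theta(2^{\ell\alpha})$ in \Cref{sof:thm:bounded-d-adic-sofic}. Choose $\ell_n$ computably, for instance the first $\ell$ such that the description of $Z_n,\pi_n$ and the time to check one radius-$1$ window are at most $2^{\ell\alpha}$. Before level $\ell_n$, the number $n$ must fit in symbols of bit length $\bigO(2^{\ell\alpha})$. We handle this by storing $n$ only from level $\ell_n$ upward; below that level, tiles store a generic ``unspecified'' marker, so information flows upward only where it fits. Dually, the guessed $n$ is visible at level $\ell_n$, and the projection to $Y_n$ is read off there.

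The simulation map $\phi$ then reads a macro-tile of level $\ell_n$ (normalized along the unique recognized grid) and returns its $\alphabet[B]_n$ decoration. Uniqueness of the grid follows from recognizability of the cover's $S$-adic decomposition (\Cref{sof:par:comment-structure}). The image of $\Phi$ is exactly $\bigcup_n Y_n$ split by $n$. The simulation of a single $Y_n$ is obtained by additionally restricting to macro-tiles whose stored index equals $n$; the definition of simulating every $Y_n$ should allow this, possibly after fixing the level-$\ell_n$ block shape per $n$. Tileset compatibility holds because adjacent macro-tiles at level $\ell_n$ were checked against $Z_n$.

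The main obstacle is the mismatch between the uniform bounded-size hypotheses of \Cref{sof:thm:bounded-d-adic-sofic} and the variable level $\ell_n$ and index size: the bit length of $n$ is unbounded, while low levels have small budgets. I would first try encoding $n$ in unary as the choice of directive sequence, i.e.\ ``marked'' substitution at step $\ell_n$, read through the sequence $\Sndill$, which is an effectively closed set of choices. If that fails, I would fall back on \Cref{sof:lem:fixpoint} directly, with non-uniform grids allowing larger macro-tiles at level $\ell_n$.
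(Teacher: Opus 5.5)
There is a genuine gap, and it lies in the architecture of your hierarchy rather than in the resource bounds. By \Cref{fix:def:tiling-simulation}, simulating a fixed $Y_n$ requires one partial block map $\phi_n$ such that \emph{every} configuration of $X$ has a unique grid along which $\phi_n$ is defined. The decoded configurations must form exactly $Y_n$, and the same $X$ must satisfy this for all $n$. In your design each configuration commits to a single index, so a configuration that committed to $m \neq n$ contains no information about $Y_n$. If you restrict $\phi_n$ to macro-tiles storing $n$, as you propose, such a configuration has no admissible grid, which violates the first condition. If you do not restrict it, the configuration is decoded into $Y_m$, over a different alphabet, which violates the second. Changing the block shape per $n$ cannot help, because the obstruction concerns what the configurations contain, not how they are cut. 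A telltale sign is that your argument never uses the non-emptiness of the $Y_n$, even though this hypothesis is necessary for the statement.

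The missing idea is that every configuration must carry a configuration of (a cover of) \emph{every} $Y_n$ simultaneously, at distinct levels of the same hierarchy. This is what the paper does. With $N_\ell = 2$, it uses \Cref{app:lem:ram-enumeration-turtle} to obtain a slow computable surjection $\psi$. For this $\psi$, the presentation of an SFT cover $X_{\psi(\ell)}$ of $Y_{\psi(\ell)}$ is produced in time $\bigO(\ell)$, with short letters and forbidden patterns of radius at most $\ell$. The paper then sets $\mathcal{A}_\ell = \mathcal{B}'_{\psi(\ell)}$, and lets $\tau_\ell$, of radius $\ell+1$, merely check that level $\ell+1$ is locally valid in $X_{\psi(\ell+1)}$ before emitting arbitrary symbols at level $\ell$. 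This has four consequences:
\begin{itemize}
\item each level is an independent configuration of $X_{\psi(\ell)}$, fillable precisely because every $Y_n$ is non-empty;
\item the per-level cost is $\mathrm{poly}(\ell) = \polylog L_\ell$, so \Cref{sof:thm:square-d-adic-sofic} applies;
\item for each $n$, you decode $Y_n$ from the macro-tiles of one fixed level $\ell$ with $\psi(\ell) = n$, projected by $\pi_n$, using recognizability of the cover;
\item the index is determined by the level and is never stored or guessed, so your difficulties with storing $n$ at low levels, or encoding it in the directive sequence, do not arise.
\end{itemize}
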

In this statement, we canonically represent sofic shifts by two alphabets $\alphabet[B]',\alphabet[B]\subset\{0,1\}^\ast$, a projection $\alphabet[B]' \to \alphabet[B]$ and a finite list of forbidden patterns over $\alphabet[B]'$.

\begin{remark}
  Notice that the set of all nonempty sofic shifts is itself not computably enumerable: as a matter of fact, there does not exist a sofic shift that simulates all its nonempty counterparts~\cite{Hochman_2009:universality-in-multidimensional-symbolic-dynamics,Ballier_2013:universality-symbolic-dynamics-constrained-by-medvedev-degrees}.
\end{remark}

The main ingredient for \Cref{app:thm:intrinsic-universality} is that, when enumerating a set without caring for speed, the algorithmic complexity is irrelevant.
\begin{lemma}[label={app:lem:ram-enumeration-turtle}]
  Let $U \subseteq \{0,1\}^{\ast}$ be any computably enumerable set of words.
  Then for any (total) computable size function $| \cdot | \colon \{0,1\}^{\ast} \to \N$, there exists a RAM program $e \in \{0,1\}^{\ast}$ such that $\funram{e} \colon \N \to \{0,1\}^{\ast}$ enumerates $U$ in a \emph{resource-efficient} manner:
  \begin{enumerate}[label=(\roman*)]
  \item For each $n \in \N$, $\funram{e}(n)$ halts in time $\bigO(n)$ with word length $\bigO(\log n)$;
  \item For any $n \in \N$, $\funram{e}(n)$ has size $|\funram{e}(n)| = \bigO(n)$.
  \end{enumerate}
\end{lemma}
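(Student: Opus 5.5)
We will prove the lemma with the classical ``slow enumeration'' (or dovetailing) trick. The program $e$ on input $n$ replays a bounded prefix of a fixed enumeration of $U$, and outputs a padding word whenever no new element is certified within budget. We first handle the degenerate case $U=\emptyset$, where any interpretation requires an empty enumeration, so $\funram{e}$ can be taken nowhere defined. Otherwise we fix some element $u_0 \in U$ once and for all. We also fix a Turing (or RAM) machine $M$ enumerating $U$, and let $\mathrm{step}_M(t)$ be the configuration of $M$ after $t$ steps.

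The first step is a \emph{clock reduction}. We define an auxiliary total computable function $h \colon \N \to \N$ that, for each $n$, measures how much simulation of $M$ (together with the computation of $|\cdot|$ on the candidate output) can be certified using only $\bigO(n)$ RAM steps and $\bigO(\log n)$-bit words. Concretely, $h(n)$ is the largest $t$ such that simulating $M$ for $t$ steps, and then computing the sizes of all words output so far, takes at most $n$ elementary steps. Since $M$ and $|\cdot|$ are arbitrary computable functions, $h$ may grow extremely slowly, but it is unbounded and nondecreasing. This is exactly where ``without caring for speed'' is used.

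The program $\funram{e}(n)$ then proceeds as follows. It simulates $M$ for $h(n)$ steps, which is computable within $\bigO(n)$ steps by construction. It retrieves the list of words $u_1,\dots,u_k$ emitted so far. If a new word $u_k$ appeared exactly at a simulation time $t$ with $h(n-1) < t \le h(n)$ and satisfies $|u_k| \le n$, it outputs $u_k$; otherwise it outputs $u_0$. To guarantee that every element of $U$ is eventually output, we instead keep a queue: $\funram{e}(n)$ outputs $u_j$ where $j$ is the number of indices $m<n$ at which some pending element had become ``affordable'', i.e. emitted within $h(m)$ steps and of size at most $m$. We compute this counter by a single monotone pass, which is possible because each stage recomputes it from scratch within $\bigO(n)$ steps only after we shrink $h$ so that the whole recomputation over all $m\le n$ fits in budget $n$. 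Every element $u$ has finite emission time and finite size, so it becomes affordable at some stage and is then output at a later index, which gives surjectivity onto $U$. Outputs have size $\le n$ by the affordability test, or size $|u_0| = \bigO(1)$, which proves (ii). Word lengths stay at $\bigO(\log n)$ because all counters are bounded by $n$ and the simulated machine's memory cells accessed within $n$ steps have addresses at most $n$. Note that the output word itself is written bit by bit through $\ram{WRITE}$, so the footnote on word length excludes it appropriately.

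The main obstacle is making the clock function $h$ rigorous inside the word-RAM model. We must ensure that computing $h(n)$ itself, rather than only the simulation it bounds, costs $\bigO(n)$ steps. We would handle this by never computing $h$ explicitly. Instead, the program runs the simulation of $M$ interleaved with a step counter that aborts once $n$ steps are spent, and takes $h(n)$ to be whatever simulation time was reached. Monotonicity then follows since a larger budget reproduces the same deterministic run prefix. The remaining bookkeeping is routine.
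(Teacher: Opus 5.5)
Your proof is correct and is essentially the paper's: your counter $j$ is the paper's $\psi$, which advances only when the next element of a fixed enumeration can be certified within the current budget and has size at most the current stage. Two small precisions are needed. ``Pending'' must mean the head $u_{j+1}$ of the queue, otherwise the output $u_j$ need not satisfy (ii). And the paper instead uses an indexed enumeration with a per-element budget of $\sqrt{n}$ steps and the cap $\psi(n)+1 \le \sqrt{n}$, whereas in your version the $\bigO(n)$ bound comes from the prefix property you state: a single budget-$n$ run of the streaming enumerator gives every stage's affordability threshold at once, so the linear pass is cheap without any ``shrinking'' of $h$ (recomputing each $h(m)$ separately could never be made linear).
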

\begin{proof}
  Since $U$ is computably enumeratble, there exists a program $\code{U} \in \{0,1\}^{\ast}$ such that $\funram{\code U}$ enumerates the set $U \subseteq \{0,1\}^{\ast}$.
  We build a surjective computable function $\psi \colon \N \to \N$ such that each $\funram{\code{U}}(\psi(n))$ can be computed with low complexity (relatively to $n$).
  We first set $\psi(0) = 0$; and then, for $n \in \N$, define $\psi(n+1) = \psi(n)+1$ provided that $ \funram{\code{U}}(\psi(n)+1)$ halts in time at most $\sqrt{n}$ with word length at most $\log n$, that its size can be computed with the same requirements and satisfies $|\funram{\code{U}}(\psi(n)+1)| \leq n$, and that $\psi(n)+1 \leq \sqrt{n}$;
  otherwise (if some check returns false or if the computation is too long), we set $\psi(n+1) = \psi(n)$.
  By construction (and since inequalities between short integers can be checked by log-RAM machines in constant time), $\psi(n)$ can be computed in time $\bigO(n)$, and we define the RAM program $e \in \{0,1\}$ to compute and return, on a given input $n \in \N$, the word $\funram{\code{U}}(\psi(n))$.

  The program $e$ naturally satisfies the required computational constraints and the inclusion ${\{ \funram{e}(n) : n \in \N \} \subseteq U}$.
  Conversely, $\psi$ is surjective: indeed, for any fixed $\psi(n)$, there will exist some minimal $N > n$ such that the computations of $\funram{\code{U}}(\psi(n)+1)$ will satisfy the required complexity constraints, so that $\psi(N) = \psi(n)+1$.
\end{proof}

\begin{proof}[Proof of \Cref{app:thm:intrinsic-universality}]
  Let $(Y_n)_{n \in \N}$ be a computably enumerable family of non-empty sofic shifts $\smash{Y_n \subseteq \alphabet[B]_n^{\mspace{2mu}\Z^d}}$:
  There exists a program $e \in \{0,1\}^{\ast}$ such that, for every $n \in \N$, $\funram{e}(n)$ outputs a presentation $(\alphabet[B]'_n,\alphabet[B]_n,\pi_n,\mathcal{F}_n)$ of the sofic shift $Y_n$, where $\mathcal{F}_n$ is a finite family of forbidden patterns over the alphabet $\alphabet[B]'_n$ that defines a finite-type cover $X_n$ of $Y_n$ by the projection $\pi_n \colon \alphabet[B]'_n \to \alphabet[B]_n$.

  By \Cref{app:lem:ram-enumeration-turtle}, we can assume that $e$ is a RAM program such that~$\funram{e}(n)$ halts in time~$\bigO(n)$ with word length~$\bigO(n)$, that symbols in the alphabets $\alphabet[B]'_n$ and $\alphabet[B]_n$ have bit length $\bigO(\log n)$, and that all patterns $f \in \mathcal{F}_n$ have domain $\dom(f) \subseteq \{-n,\ldots,n\}^d$.

  In what follows, we build a limit space in which the $n$\textsuperscript{th} level of the $N$-adic hierarchy embeds the sofic shifts $(Y_n)_{n \in \N}$ via their covers of finite type. More precisely, our limit space starts with the shift $Y_0$. For $\ell \in \N$, we set $N_{\ell} = 2$ so that $L_{\ell} = 2^{\ell}$, $V_{\ell} = \interval{-\ell-1}{\ell+1}^d$, and $\smash{\alphabet_{\ell} = \alphabet[B]'_{\psi(\ell)}}$. We define the parallel ndill maps $\smash{\ndill_{\ell} \colon \alphabet_{\ell+1}^{V_{\ell}} \to \alphabet_{\ell}^{\ibox{2}^d}}$ with the following algorithm: on input $(\rect{r},\vec{i},u)$ such that $\rect{r} = \ibox{2}^d$, $\vec{i} \in \rect{r}$ and $\smash{u \in \alphabet_{\ell+1}^{V_{\ell}}}$,
  \begin{itemize}
  \item Compute the presentation $\funram{e}(\ell)$ of the sofic shift $Y_{\ell}$;
  \item Check that the input pattern $u \in \alphabet_{\ell+1}^{V_{\ell}}$ does not contain any forbidden pattern from the presentation $\funram{e}(\psi(\ell+1))$. If any forbidden pattern is found, reject the computation; otherwise, continue.
  \item Finally, non-deterministically return any symbol $a \in \alphabet_{\ell}$ for the output tile $v_{\vec{i}}$.
  \end{itemize}
  Then for every $\ell \in \N$, all the configurations of the finite-type cover $X_{\ell}$ of $Y_{\ell}$ can appear at the $\ell$\textsuperscript{th}~level of the $N$-adic hierarchy; and conversely, every configuration at the $\ell$\textsuperscript{th} level belongs to~$X_{\ell}$. Furthermore, each ndill map $\tau_{\ell}$ is computable in time $\smash{\mathrm{poly}(\ell) = \bigO(\polylog L_{\ell})}$ and word length $\bigO(\log L_{\ell})$, so that the resulting limit space $\limspace{(Y_0)}[\mspace{2mu}\Sndill]$ is sofic by \Cref{sof:thm:square-d-adic-sofic}. Finally, using the recognizability of the $N$-adic structure (\textit{c.f.}~\Cref{sof:par:comment-structure}) and composing with the projections~$\pi_{\ell}$, we conclude that the associated cover of finite type simulates all the sofic shifts $(Y_n)_{n \in \N}$.
\end{proof}

\subsection{Effective subdynamics and periodic lifts}
\label{app:sec:periodic-lifts}

We recover a major result from the literature~\cite[etc\dots]{Hochman_2009:dynamics-recursive-properties-of-multidimensional-symbolic-systems,Durand-Romashchenko-Shen_2010:effective-closed-subshifts-1D-implemented-2D,Aubrun-Sablik_2014:multid-effective-s-adic-subshifts-are-sofic}: effective $\Z^{d}$ shifts are exactly the subdynamics of $\Z^{d+1}$ sofic shifts (this can be seen as a dynamical analog to Higman's theorem on effectively presented groups).
More precisely, for a $d$-dimensional shift space $\smash{X \subseteq \alphabet^{\Z^d}}$, its \emph{periodic lift} $\smash{\lift{X} \subseteq \alphabet^{\Z^{d+1}}}$ is the $(d+1)$-dimensional shift defined as
\[ \lift{X} = \{y \in \alphabet^{\Z^{d+1}} : \exists x \in X, \forall i \in \Z,\, \restr{y}{\{ i \} \times \Z^d} = x \}. \]

\begin{theorem}
  A $\Z^d$ shift $X \subseteq \alphabet^{\Z^d}$ is effective if and only if $\lift{X}$ is a sofic $\Z^{d+1}$ shift.
\end{theorem}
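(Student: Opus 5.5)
The easy direction comes first. If $\lift{X}$ is sofic, then it is effective: an SFT cover has a finite forbidden list, so the projection is effective. Patterns of $X$ are exactly the slices $\restr{y}{\{0\}\times\rect{r}}$ of patterns of $\lift{X}$, and a pattern $w$ of $X$ is forbidden exactly when the periodic extension $w'$ of $w$ over $\interval{0}{0}\times\rect{r}$ is forbidden in $\lift{X}$. Since the forbidden patterns of the effective shift $\lift{X}$ are computably enumerable, so are those of $X$.

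For the converse, I would apply \Cref{sof:thm:square-d-adic-sofic} in dimension $d+1\geq 2$. Fix an enumeration of the forbidden patterns $\mathcal{F}$ of $X$ and, using \Cref{app:lem:ram-enumeration-turtle}, make it resource-efficient: the $n$-th forbidden pattern is produced in time $\bigO(n)$ and has support in $\interval{-n}{n}^d$. The limit space will be the periodic lift of $X$, written as a limit over the initial sofic shift $X_0=\{y : y_{\vec{i}}=y_{\vec{i}+\basis{1}}\}$ (an SFT) of a single sequence of square ndill maps $\ndill_\ell$ with $N_\ell=2$, so that $L_\ell=2^\ell$. The symbols of $\alphabet_{\ell+1}$ will be the blocks $\alphabet^{\ibox{L_{\ell+1}}^d}$ of a horizontal slice, which carry $\bigO(L_\ell^{d})$ bits. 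This is too many for the requirement $\alpha<d-1$ in dimension $d+1$, and I expect it to be the main obstacle. The fix I would try is to exploit periodicity. In dimension $d+1$, the symbol at level $\ell+1$ only needs to record its $d$-dimensional slice block of area $L_{\ell}^d$, and the bound $\alpha<(d+1)-1=d$ is then met exactly when $L^d$ bits are replaced by $L^{\alpha}$ with $\alpha<d$. Since $d$ bits is borderline, I would instead let level $\ell$ use $N_\ell=2$ only in the periodic direction $\basis{1}$, while in the $d$ spatial directions a level-$\ell$ tile stores the contents of only a $\sqrt{}$-smaller block. The alternative is to choose nonuniform sizes, with $N_\ell$ larger in the $\basis{1}$ direction, using \Cref{sof:lem:fixpoint} directly with rectangular grids of bounded eccentricity (condition (ii)). In that version each macro-tile of level $\ell$ has time $\rtime=L$ along $\basis{1}$ and spatial side $L^{\beta}$ with $\beta<1$ chosen so that $dL^\beta\cdot\beta$-type exponents stay below the $\pxspace{\ell}$ bound. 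Then the stored slice has $L^{d\beta}$ bits against $\pxspace{\ell}\approx L^{d\beta}\cdot$, and I would check that the ratio of information to space is sublinear, tuning $\beta$ until this holds.

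Each ndill map then desubstitutes its stored slice block into subblocks, enforces agreement with the neighbouring blocks through radius-$1$ consistency (the blocks are glued into one global $d$-dimensional configuration), and at level $\ell$ spends the time budget $\bigO(L_\ell^{\alpha})$ running the resource-efficient enumeration for about $L_\ell^{\alpha}$ steps. It rejects if any enumerated forbidden pattern, with support of size at most the block size, occurs in the stored block. Periodicity along $\basis{1}$ is enforced by $X_0$ at level $0$ together with the fact that all cells in a column of a macro-tile copy the same slice.

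Correctness goes as follows. Every forbidden pattern $f$ is eventually enumerated, at some step $n$, and it is checked at every level $\ell$ with $L_\ell^{\alpha}\geq n$ on every block. Any occurrence of $f$ in the slice lies within a bounded number of adjacent level-$\ell$ blocks, except along fault lines. I would handle those by checking unions of $2^d$ neighbouring blocks, which is available through the radius. Hence limit configurations have slices in $X$, and conversely every $x\in X$ lifts with all checks passing.
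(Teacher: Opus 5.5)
Your easy direction is fine, and you correctly locate the obstacle in the converse. A symbol that records a full $d$-dimensional slice block carries $\Theta(L_\ell^{d})$ bits, which is exactly the exponent that \Cref{sof:thm:square-d-adic-sofic} excludes in dimension $d+1$. Neither of your fixes removes this obstacle, so the converse has a genuine gap.

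The rectangular variant fails for every choice of $\beta$. The slice block that a level-$\ell$ symbol must record is in bijection with the facet of its macro-tile orthogonal to $\basis{1}$. So it has at least as many cells as the smallest facet, and every facet of a level-$\ell$ macro-tile contains at least $\pxspace{\ell}$ cells. Hence your colors have length at least $\pxspace{\ell}\geq\pxspace{\ell-1}$, whereas \Cref{sof:lem:fixpoint} only allows $\bigO(\pxspace{\ell-1}[\alpha])$ with $\alpha<1$. Your own estimate already shows this: $L^{d\beta}$ stored cells against $\pxspace{\ell}\approx L^{d\beta}$ gives a ratio of $\Theta(1)$, not a sublinear one.

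Your other suggestion, that each tile stores only a smaller sub-block, points in the right direction but is not yet a construction. You do not say which sub-block each tile holds. You do not explain how the stored pieces are guaranteed to cover every window where a forbidden pattern may occur, or how they stay consistent with the actual level-$0$ slice. With $N_\ell=2$, a parent has only two children along $\basis{1}$, so there is no room to spread many distinct pieces among them. Your correctness paragraph (``checked \dots\ on every block'') also tacitly assumes that whole blocks are stored.

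The missing idea, which the paper uses (writing out the case $d=1$), is to pay for the information with the redundancy of the periodic direction:
\begin{itemize}
\item A symbol of $\alphabet_\ell$ stores only an \field{offset} $\Delta\in\interval{0}{L_\ell-1}$ and a \field{subpattern}, namely the window of width $2\log L_\ell+1$ of the slice around $\Delta$.
\item The ndill map gives each output cell the offset $\Delta'=i_1$, its index along the periodic direction $\basis{1}$.
\item It checks that input neighbours along $\basis{1}$ have offsets $\Delta\pm1$ (cyclically), and that neighbours along the spatial direction have offset $\Delta$.
\item It guesses the output window so that it agrees with the translated windows of the radius-$1$ input neighbourhood wherever they overlap.
\item It rejects if the window contains one of the forbidden patterns enumerated in $\bigO(\log L_\ell)$ steps.
\end{itemize}
Choosing fast growth $N_0=2$ and $N_\ell=L_\ell$ (so $L_\ell=2^{2^{\ell-1}}$) makes every column of children along $\basis{1}$ run through all offsets. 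Every window of the slice is therefore represented and checked. Symbols then have only $\bigO(\log L_\ell)$ bits, and each step runs in time $\bigO(\log^3 L_\ell)$, well within the hypotheses of \Cref{sof:thm:square-d-adic-sofic}.
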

\noindent It is notable that the constructions from~\cite{Aubrun-Sablik_2014:multid-effective-s-adic-subshifts-are-sofic,Durand-Romashchenko-Shen_2010:effective-closed-subshifts-1D-implemented-2D} had to implement a complex mechanism (like small robots visiting neighboring macro-tiles) in order to check all patterns, even those that overlap macro-tiles at every level.
Our framework allows this check to be directly performed within the local constraints of the ndill map, which advertises the power of $N$-adicity, in particular in terms of conciseness.

\begin{figure}[ht]
  \definecolor{HRainbow0}{HTML}{ff2600}
  \definecolor{HRainbow1}{HTML}{fc6212}
  \definecolor{HRainbow2}{HTML}{f79f23}
  \definecolor{HRainbow3}{HTML}{fac713}
  \definecolor{HRainbow4}{HTML}{feef01}
  \definecolor{HRainbow5}{HTML}{7fd200}
  \definecolor{HRainbow6}{HTML}{00b500}
  \definecolor{HRainbow7}{HTML}{00c2af}
  \definecolor{HRainbow8}{HTML}{0170c4}
  \definecolor{HRainbow9}{HTML}{3a48b9}
  \definecolor{HRainbow10}{HTML}{7321b0}
  \definecolor{HRainbow11}{HTML}{d3288e}
  \begin{tikzpicture}[scale=.085,decoration={random steps,segment length=0.4em,amplitude=0.15em}]
    \pgfmathsetmacro{\N}{16}
    \begin{scope}[shift={(0,0)}]
      \pgfmathsetseed{1024}
      \draw[TilingGrid,decorate,path picture={
        \foreach \i in {-1,...,4} {
          \foreach \j in {-1,...,4} {
            \draw[very thin,pattern={Lines[angle=45,distance={3pt/sqrt(2)}]},pattern color=black!20] ($\N*(\i,\j) + (0,5)$) rectangle ++(\N,1);
            \pgfmathtruncatemacro{\st}{\j+6-4}
            \pgfmathtruncatemacro{\en}{\st+8}
            \foreach \k in {\st,...,\en} {
              \pgfmathtruncatemacro{\c}{Mod(\N*\i+\k+6,12)}
              \fill[HRainbow\c] ($\N*(\i,\j) + (\k,5)$) rectangle ++(1,1);
            }
            \draw[TilingGrid,opacity=.3] ($\N*(\i,\j) + (0,5)$) grid ++(\N,1);
            \draw[very thick,TilingGrid] ($\N*(\i,\j)$) rectangle ++(\N,\N);
            \draw[black,thin,{Classical TikZ Rightarrow[length=1.5pt]}-{Classical TikZ Rightarrow[length=1.5pt]}] ($\N*(\i,\j) + (0,7)$) -- ++($(\j+6.5,0)$);
            \node[black,font=\tiny] at ($\N*(\i,\j) + .5*(\j+6,0) + (.3,8.3)$) {$\Delta$};
          }
        }
      }] (-4,-4) rectangle ++(72,72);
    \end{scope}
    \draw[->] (71,32) -- (91,32) node[midway,above,yshift=1pt] {$\tau_{0} \circ \ldots \circ \tau_{\ell-1}$};
    \begin{scope}[shift={(98,0)}]
      \pgfmathsetseed{1024}
      \draw[TilingGrid,decorate,path picture={
        \foreach \k in {-8,...,71} {
          \pgfmathtruncatemacro{\c}{Mod(\k+6,12)}
          \fill[HRainbow\c] (\k,-8) rectangle ++(1,80);
        }
        \draw[TilingGrid,opacity=.4] (-8,-8) grid ++(80,80);
      }] (-4,-4) rectangle ++(72,72);
      \end{scope}
    \end{tikzpicture}
    \caption{Periodic lifts: a ``preimage'' configuration $x^{(\ell)} \in
      \alphabet_{\ell}^{\mspace{2mu}\Z^d}$ substituted to a vertically periodic
      configuration $x = \tau_0 \circ \ldots \circ \tau_{\ell-1}(x^{(\ell)}) \in
      \alphabet^{\Z^d}$.
      \label{app:fig:liftsigma}
    }
    \subcaption{In the configuration $x^{(\ell)}$, each tile encodes a column of width $2 \cdot \log L_{\ell}+1$ and an offset $\Delta \in \{0,\dots,L_{\ell+1}-1\}$. By construction, two vertically adjacent tiles must encode adjacent offsets (modulo $L_{\ell+1}$), thus ensuring that, in a column of tiles in $x^{(\ell)}$, all distinct slices of width $2 \cdot \log L_{\ell}+1$ (from the final configuration~$x$) are correctly represented.}
\end{figure}

\begin{proof}
  The shift $\lift{X}$ being sofic directly implies that one can computably enumerate the forbidden patterns of $\lift{X}$, and thus of $X$. In the rest of this proof, we consider the converse direction and focus on the case $d=1$ (cases $d \geq 2$ are similar, although more heavy on notations).

  Applying \Cref{sof:thm:square-d-adic-sofic} is actually not straightforward: $\basis{1}$-periodic patterns of size $n \times n$ contain $\bigO(n)$~bits of information, which is too large for our sublinear bounds. To circumvent this issue, we use the redundancy  of the $\basis{1}$-periodicity to distribute the checking of these $\bigO(n)$ bits in small chunks of size $\bigO(\log n)$ (see \Cref{app:fig:liftsigma}).

  Denote by $\mathcal{F}$ a computably enumerable family of forbidden patterns for $X$. We start with the shift $X_0 = \alphabet^{\Z^{2}}$ of all $\basis{1}$-periodic configurations, and define the sequence $(N_{\ell})_{\ell \in \N}$ as follows:
  \[ \begin{cases}
    N_{0} = 2 \\
    N_{\ell} = L_{\ell}, \quad \text{where $\smash{\displaystyle L_{\ell} = \prod_{i=0}^{\ell-1} N_{i}}$}.
  \end{cases} \]

\smallskip
\noindent A simple computation shows that $L_{\ell} = 2^{2^{\ell-1}}$ and thus $N_{\ell} = 2^{2^{\ell-2}}$ for $\ell \geq 2$. We also define the alphabet $\alphabet_{\ell}$ as the set of records on the following two fields:
\begin{itemize}
\item An \field{offset} $\Delta \in \{0,\dots,L_{\ell}-1\}$;
\item A \field{subpattern} $p \in \alphabet^{\Delta + \{-\log L_{\ell},\dots, \log L_{\ell}\}}$.
\end{itemize}
And we finally define a local map for a parallel ndill map
\[ \tau_{\ell} \colon \alphabet_{\ell+1}^{\{-1,0,1\}^{2}} \mto \alphabet_{\ell}^{N_{\ell+1} \times N_{\ell+1}}\]
by the following algorithm: on input $(\rect{r},\vec{i},u)$ such that $\rect{r} = \ibox{N_{\ell+1}}^{2}$, $\vec{i} \in \rect{r}$ and $u \in \alphabet_{\ell+1}^{\{-1,0,1\}^{2}}$,
\begin{itemize}
\item First, we check the geometry of the input tiles $u_{\vec{j}}$ ($\vec{j} \in \{-1,0,1\}^d$): if $\Delta \in \{0,\dots,L_{\ell+1}-1\}$ is the \field{offset} of $u_{\vec{0}}$, check that the \field{offsets} of the symbols $u_{\vec{0} \pm \basis{1}}$ are $\Delta \pm 1 \bmod L_{\ell+1}$; and that other symbols $u_{\vec{0} \pm \basis{2}}$ have \field{offset} $\Delta$. If any mistake is found, reject the computation; otherwise, continue.
\item Then, we check the \field{subpattern} of the input tile $u_{\vec{0}}$: after enumerating patterns from the forbidden family $\mathcal{F}$ for some $\bigO(\log L_{\ell})$ steps of computations, we check that none of the enumerated patterns appear in the \field{subpattern} of $u_{\vec{0}}$. Otherwise, we reject the computation;
\item Finally, we define the output tile $v_{\vec{i}} \in \alphabet_{\ell}$ of position $\vec{i} = (i_{1}, i_{2})$ as follows:
  \begin{itemize}
  \item We set the \field{offset} of $v_{\vec{i}}$ to $\Delta' = i_1 \in \{0,\dots,L_{\ell}-1\}$;
  \item We non-deterministically guess the \field{subpattern} $p' \in \alphabet^{\Delta' + \{-\log L_{\ell},\dots,\log L_{\ell}\}}$ of $v_{\vec{i}}$, but ensure that:
    \begin{itemize}
    \item Let $\bar{p}'$ be the translation of the pattern $p'$ of distance $L_{\ell} \cdot i_2$;
    \item For every $\vec{j} \in \{-1,0,1\}^{2}$, if $p^{(\vec{j})}$ is the \field{subpattern} of $u_{\vec{j}}$, similarly denote $\bar{p}^{(\vec{j})}$ the translation of $p^{(\vec{j})}$ of distance $L_{\ell+1} \cdot j_2$ (where $\vec{j} = (j_1,j_2)$);
    \item If the patterns $\bar{p}^{(\vec{j})}$ and $\bar{p}'$ intersect, they coincide on their common~subdomain.
    \end{itemize}
  \end{itemize}
\end{itemize}

Checking whether a pattern of size $\bigO(\log L_{\ell})$ appears inside a pattern of domain $\llbracket \log L_{\ell} \rrbracket$ can be computed in time $\bigO(\log^2 L_{\ell})$, so that each ndill map $\tau_{\ell}$ is computable in time $\bigO(\log^3 L_{\ell})$.

Furthermore, identifying $\alphabet_{0}$ with $\alphabet$, the $N$-adic configurations in the periodic shift $X_0$ exactly form the lift~$\lift{X}$: indeed, since image patterns of $\tau_{\ell}$ are of size $\ibox{L_{\ell}}^{2}$, all pixels in the domain $\dom(\bar{p}^{(\vec{j})}) \cap \ibox{L_{\ell}}$ of any input \field{subpattern} $\bar{p}^{(\vec{j})}$ will be copied to at least one image tile of~$\tau_{\ell}(u)$. Inductively, we deduce that, if a symbol $a \in \alphabet_{\ell}$ of \field{offset} $\Delta$ and \field{subpattern} $p$ is substituted towards some $\basis{1}$-periodic pattern $v \in \alphabet^{\llbracket L_{\ell} \rrbracket}$ by $\tau_0 \circ \dots \circ \tau_{\ell-1}$, then $\restr{v}{\{0\} \times (\Delta + \{-\log L_{\ell},\dots,\log L_{\ell}\})}$ coincides with the \field{subpattern} $p$. By \Cref{sof:thm:square-d-adic-sofic}, this concludes the proof.
\end{proof}

\subsection{Shifts defined by information quantification}

The informal $\bigO(n^{d-1})$ bound on the information flow between $\ibox{n}^d$-patterns and their exterior has motivated numerous developments in the sofic literature. For example,~\cite{Kass-Madden_2013:sufficient-condition-non-soficness-multidimensional} and~\cite{Ormes-Pavlov_2016:extender-sets-multidimensional} prove necessary conditions on soficity phrased in the formalism of \emph{extender sets}; while the necessary conditions from~\cite{Destombes-Romashchenko_2022:ressource-bounded-Kolmogorov-complexity-soficness-multidimensional-shifts} are based upon resource-bounded Kolmogorov complexity, and those of~\cite{Guillon-Jeandel_2015:infinite-communication-complexity} on non-deterministic communication complexity.

On the other hand, sufficient conditions for the soficity of multidimensional shifts have been relatively scarce. As illustrated by~\cite[Corollary~1]{Destombes-Romashchenko_2022:ressource-bounded-Kolmogorov-complexity-soficness-multidimensional-shifts}, such conditions would need to take into account not only the amount of ``useful information'' contained in the patterns of these shifts, but also the time complexity required to compute the associated efficient descriptions.

\enlargethispage{.5\baselineskip}
\subsubsection{Low-density shifts}
In this direction,~\cite{Destombes_2021:phd:algorithmic-complexity-soficness-subshifts-multidimensional} considers a family of shifts whose patterns have obvious and computationally efficient descriptions: shifts of low density. In particular, we generalize~\cite[Theorem~5]{Destombes_2021:phd:algorithmic-complexity-soficness-subshifts-multidimensional} from $\Z^2$ to arbitrary $\Z^d$ for $d \geq 2$:
\begin{theorem}[name=Low density shifts, label={app:thm:low-density}]
  For $\alpha < d-1$ and $\alphabet = \{\square,\square[black]\}$ the binary alphabet, let $\smash{X \subseteq \alphabet^{\mathbb{Z}^{d}}}$ be an effective shift space such that for any configuration $x \in X$ and any pattern $w \subpattern x$ of domain $\dom(w) = \ibox{n}^d$, $w$ contains fewer than $\bigO(n^{\alpha})$ black cells.
  Then $X$ is sofic.
\end{theorem}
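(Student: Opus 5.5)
We plan to apply \Cref{sof:thm:square-d-adic-sofic} with square ndill maps of constant size $N_\ell = 2$, so that $L_\ell = 2^\ell$. The idea is that a macro-tile of level $\ell$ covers a cube $\ibox{2^\ell}^d$ of the final configuration. Such a cube contains at most $C\cdot 2^{\ell\alpha}$ black cells by hypothesis, and each of them can be described by $d\ell$ bits. Accordingly, we let $\alphabet_\ell$ be the set of finite lists of positions in $\ibox{2^\ell}^d$ of length at most $C\cdot 2^{\ell\alpha}$, so that symbols have bit length $\bigO(\ell\, 2^{\ell\alpha}) = \bigO(L_\ell^{\alpha'})$ for any $\alpha<\alpha'<d-1$. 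This is the information budget required by item~\ref{i:symbol}. The ndill map $\tau_\ell$ is deterministic and parallel. On input a symbol $u_{\vec 0}$ listing the black cells of a $2^{\ell+1}$-cube, the output at position $\vec i \in \ibox{2}^d$ is the sublist of black cells lying in the $\vec i$-th subcube, translated accordingly. This defines an $S$-adic skeleton in which $x^{(0)}$ is the configuration itself.

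Validity is checked through the ndill neighborhood, which we take of radius $1$ (allowed by item~\ref{i:radius}). Before substituting, $\tau_\ell$ verifies that the list in $u_{\vec 0}$ has length at most $C\cdot 2^{(\ell+1)\alpha}$. It then runs the enumeration of forbidden patterns of $X$ for roughly $L_\ell^{\alpha'}$ steps, using the truncation trick of \Cref{app:lem:ram-enumeration-turtle} so that every forbidden pattern is eventually enumerated at some level. Finally, it checks that no enumerated forbidden pattern appears in the region described by the $3^d$ neighboring symbols, \emph{i.e.}~in a $3\cdot 2^{\ell+1}$-cube around the central macro-tile. Reading the neighborhood is essential because it allows the checks to catch patterns straddling macro-tile boundaries at every level. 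This is the advantage of $N$-adicity noted in \Cref{app:sec:periodic-lifts}: any finite pattern of a limit configuration lies in the $3^d$-neighborhood of some macro-tile at a large enough level, including across fault lines, since adjacency of macro-tiles holds at every level. Conversely, every $x\in X$ admits such a hierarchy by choosing arbitrary aligned grids, because all density and forbidden-pattern checks pass. Thus $\limspace{X}[\Sndill]$, started from the full shift, equals $X$.

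The main obstacle is item~\ref{i:comput}: the checks must run in time $\bigO(L_\ell^{\alpha'})$ with $\alpha'<d-1$, although the domain has volume $2^{d\ell}$. The naive pattern matching is too slow, so we exploit sparsity. A forbidden pattern $w$ of size $\le 2^\ell$ is specified by its own black list. An occurrence of $w$ having at least one black cell must align one of those black cells with one of the at most $3^d C 2^{(\ell+1)\alpha}$ black cells in the neighborhood. So for each enumerated pattern we try each alignment and compare sorted lists; using a hash table or sorting, this costs about $\mathrm{poly}(\ell)\cdot 2^{2\ell\alpha}$ per pattern. This bound exceeds the budget when $2\alpha\ge d-1$, so we first try the alternative of only comparing, for each alignment, the black cells of $w$. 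That yields $|w|_{\square[black]}$ times the number of neighborhood black cells. The occurrences to check could instead be distributed across the $2^d$ parallel output cells, or delayed to lower levels where the relevant patterns are smaller, and only patterns with at most $\bigO(2^{\ell(\alpha'-\alpha)})$ black cells would be handled at level $\ell$. All-white occurrences reduce to checking that some white cube of that size exists, which a single pass over the list handles. Tuning the enumeration budget so that patterns appear late enough to be cheap, while every forbidden pattern is still eventually checked, is the delicate bookkeeping. Once the time bound holds, \Cref{sof:thm:square-d-adic-sofic} with $\delta$ arbitrary ($N_\ell=2$) concludes.
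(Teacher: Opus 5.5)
Your architecture matches the paper's: $N_\ell=2$, symbols listing the black positions of an $\ibox{2^\ell}^d$ block, a radius-$1$ neighborhood whose $3^d$ lists are merged, deterministic quadrant splitting, and rejection when an enumerated forbidden pattern is found. However, the proposal does not establish item~\ref{i:comput} of \Cref{sof:thm:square-d-adic-sofic}, which is the only real work in this application. The route you sketch cannot establish it either.

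If level $\ell$ runs the enumeration for about $L_\ell^{\alpha'}$ steps, it may have to handle up to $L_\ell^{\alpha'}$ patterns. Anchoring each of them at up to $\bigO(L_\ell^{\alpha})$ merged black cells already costs $L_\ell^{\alpha+\alpha'}$. This exceeds the $L_\ell^{d-1}$ budget whenever $\alpha+\alpha'\geq d-1$, e.g.\ for any $\alpha\geq (d-1)/2$, and that is before paying to verify the white cells of patterns of side up to $2^\ell$. Your fallbacks do not repair this:
\begin{itemize}
\item comparing only the black cells of $w$ does not detect an occurrence;
\item spreading the work over the $2^d$ output cells only gains a constant.
\end{itemize}
The ``delicate bookkeeping'' you defer is therefore exactly the missing step.

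The missing idea is that no level needs to look for large patterns.
\begin{itemize}
\item Let level $\ell$ run the enumeration for only $\ell$ steps. By \Cref{app:lem:ram-enumeration-turtle} this yields a list $\mathcal F_\ell$ of $\bigO(\ell)$ patterns, each with domain in $\ibox{\bigO(\ell)}^d$, and $\mathcal F_\ell\subseteq\mathcal F_{\ell+1}$.
\item Store the merged black positions in a balanced search tree. Testing one pattern anchored at one black cell then costs $\bigO(\ell^d\log L_\ell)$.
\item The whole check thus runs in $\bigO(L_\ell^{\alpha}\,\mathrm{poly}(\ell))=\bigO(L_\ell^{\alpha'})$.
\end{itemize}
Correctness is unaffected. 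As you yourself observed, a fixed finite window of a limit configuration lies in the $3^d$-neighborhood of a level-$\ell$ macro-tile for every sufficiently large $\ell$. Each forbidden pattern also belongs to $\mathcal F_\ell$ for every sufficiently large $\ell$, so some level sees both. This is precisely the paper's proof.

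Separately, all-white forbidden patterns need no special treatment. If $X\neq\emptyset$, such a pattern would force $\Omega(n^d)$ black cells per $n$-cube, contradicting the density hypothesis since $\alpha<d$.
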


\begin{proof}
  We apply \Cref{sof:thm:square-d-adic-sofic}. Since $X$ is an effective shift, let us fix a computably enumerable family of forbidden patterns $\mathcal{F}$ representing $X$. For $\ell \in \N$, we define $N_{\ell} = 2$, $L_{\ell} = 2^{\ell}$, and the alphabet $\alphabet_{\ell}$ as the subsets of positions $S \subseteq
  \ibox{L_{\ell}}^d$ containing at most $\bigO(L_{\ell}^{\alpha})$ elements.
  Intuitively, a tile $a \in \alphabet_{\ell}$ is meant to represent the set of $\square[black]$-colored positions in the pattern $\smash{\tau_{0} \circ \dots \circ \tau_{\ell-1}(a) \in \alphabet^{\ibox{L_{\ell}}^d}}$ with the parallel ndill map
  \[ \smash{\ndill_{\ell} \colon \alphabet_{\ell+1}^{\{-1,0,1\}^d} \mto \alphabet_{\ell}^{\ibox{2}^d}} \]
  defined by the following algorithm: on input $(\rect{r},\vec{i},u)$ such that $\rect{r} = \ibox{2}^d$, $\vec{i} \in \rect{r}$ and $\smash{u \in \alphabet_{\ell+1}^{\{-1,0,1\}^d}}$,
  \begin{itemize}
  \item From the pattern $u \in \alphabet_{\ell+1}^{\{-1,0,1\}^d}$, define the set of merged positions $S \in (\N^d)^{\ast}$ as
    \[ S = \bigcup_{\vec{j} \in \{-1,0,1\}^d} \big\{ \vec{p}_{i} + L_{\ell+1} \cdot \vec{j} : \vec{p}_{i} \in u_{\vec{j}} \big\}.\]
  \item Let $\mathcal{F}_{\ell}$ be the list of forbidden patterns computed by the first $\ell$ steps of the family $\mathcal{F}$ representing the shift $X$. In particular, each pattern $w_f \in \mathcal{F}_{\ell}$ satisfies  $\dom(w_f) \subseteq \ibox{\ell}^d$.
  \item Then for every pattern $w_f \in \mathcal{F}_{\ell}$:
    \begin{itemize}
    \item And for every position $\vec{p} \in S$:
      \begin{itemize}
      \item Check if the black cells of $w_f$ coincide with a set of positions in $S$ around $\vec{p}$;
      \item If it does, reject. Otherwise, continue.
      \end{itemize}
    \end{itemize}
  \item Finally, we define the output tile at position $\vec{i}$: from the positions stored in the input tile $u_{\vec{0}} \in \alphabet_{\ell}$, we compute the (shifted) subset of positions $S_{\vec{i}}' \subseteq \N^d$ in the quadrant of index $\vec{i}$:
    \[ S_{\vec{i}}' = \big\{ \vec{p} - L_{\ell} \cdot \vec{i} : \vec{p} \in S \cap (L_{\ell} \cdot \vec{i} + \ibox{L_{\ell}}^d) \big\}.\]
    Then $S_{\vec{i}}'$ is actually an element of $\alphabet_{\ell}$, and we return $S_{\vec{i}}'$.
  \end{itemize}

  Checking the appearance of a single forbidden pattern of domain $\ibox{\ell}^d$ around a position $\vec{p} \in S$ can be computed in time $\bigO(\ell^d \cdot \log (3^d \cdot L_{\ell}^{\alpha})) = \mathrm{poly}(\ell)$ if the set $S$ is implemented using an efficient data structure, such as a balanced binary tree. Thus, we deduce that a local function for $\ndill_{\ell}$ can be computed in time $\bigO(L_{\ell}^{\alpha} \cdot \mathrm{poly}(\ell))$ for some $\alpha < 1$. Identifying in $\alphabet_{0}$ the empty set~$\emptyset$ with the symbol $\square$ and the singleton $\{\vec{0}\}$ with the symbol $\square[black]$, we obtain the desired result.
\end{proof}

\begin{figure}[t]
  \centerline{
    \begin{tikzpicture}[scale=.114,decoration={random steps,segment length=0.4em,amplitude=0.15em}]
      \begin{scope}
        \pgfmathsetseed{4}
        \draw[TilingGrid,decorate,path picture={
          \foreach \i/\j in {-4/34,-2/-3,-2/15,-2/23,0/-2,0/28,0/29,4/27,5/30,6/13,6/16,6/24,8/24,9/17,13/4,14/19,15/-3,16/10,18/19,20/3,20/15,20/22,21/30,24/12,24/16,25/27,26/9,27/4,27/8,27/27,30/22,30/35,34/20,35/20} {
            \fill[black,opacity=.1] (\i,\j) rectangle ++(1,1);
          }
          \node[black,font=\ttfamily\tiny,align=center,text width=3.2cm] at (-16,16) {[{(0,27)}, {(2,9)}, {(3,13)}, {(6,17)}, {(7,2)}, {(8,6)}, {(8,16)}, {(11,3)}, {(30,15)}, {(30,23)}]}; 
          \node[black,font=\ttfamily\tiny,align=center,text width=3.2cm] at (16,16) {[{(0,28)}, {(0,29)}, {(4,27)}, {(5,30)}, {(6,13)}, {(6,16)}, {(6,24)}, {(8,24)}, {(9,17)}, {(13,4)}, {(14,19)}, {(16,10)}, {(18,19)}, {(20,3)}, {(20,15)}, {(20,22)}, {(21,30)}, {(24,12)}, {(24,16)}, {(25,27)}, {(26,9)}, {(27,4)}, {(27,8)}, {(27,27)}, {(30,22)}]}; 
          \node[black,font=\ttfamily\tiny,align=center,text width=3.2cm] at (48,16) {[{(2,20)}, {(3,20)}, {(5,7)}, {(5,11)}, {(5,27)}, {(12,27)}, {(16,16)}, {(19,14)}, {(22,20)}, {(22,24)}]}; 
          \node[black,font=\ttfamily\tiny,align=center,text width=3.2cm] at (16,-16) {[{(0,30)}, {(15,11)}, {(15,29)}, {(17,0)}]}; 
          \node[black,font=\ttfamily\tiny,align=center,text width=3.2cm] at (16,48) {[{(26,28)}, {(30,3)}]}; 
          \draw[TilingGrid,opacity=.2] (-8,-8) grid (40,40);
          \draw[very thick,TilingGrid,step=32] (-8,-8) grid (40,40);
        }] (-4,-4) rectangle (36,36);
      \end{scope}
      \draw[->] (38,16) -- (42,16) node[midway,above,yshift=1pt] {$\tau_{\ell-1}$};
      \begin{scope}[shift={(48,0)}]
        \pgfmathsetseed{4}
        \draw[TilingGrid,decorate,path picture={
          \foreach \i/\j in {-4/34,-2/-3,-2/15,-2/23,0/-2,0/28,0/29,4/27,5/30,6/13,6/16,6/24,8/24,9/17,13/4,14/19,15/-3,16/10,18/19,20/3,20/15,20/22,21/30,24/12,24/16,25/27,26/9,27/4,27/8,27/27,30/22,30/35,34/20,35/20} {
            \fill[black,opacity=.1] (\i,\j) rectangle ++(1,1);
          }
          \node[black,font=\ttfamily\tiny,align=center,text width=1.7cm] at ($16*(-1,0) + (8,8)$) {[{(14,15)}]};
          \node[black,font=\ttfamily\tiny,align=center,text width=1.7cm] at ($16*(-1,1) + (8,8)$) {[{(14,7)}]};
          \node[black,font=\ttfamily\tiny,align=center,text width=1.7cm] at ($16*(0,-1) + (8,8)$) {[{(0,14)}, {(15,13)}]};
          \node[black,font=\ttfamily\tiny,align=center,text width=1.7cm] at ($16*(0,0) + (8,8)$) {[{(6,13)}, {(13,4)}]};
          \node[black,font=\ttfamily\tiny,align=center,text width=1.7cm] at ($16*(0,1) + (8,8)$) {[{(0,12)}, {(0,13)}, {(4,11)}, {(5,14)}, {(6,0)}, {(6,8)}, {(8,8)}, {(9,1)}, {(14,3)}]};
          \node[black,font=\ttfamily\tiny,align=center,text width=1.7cm] at ($16*(1,0) + (8,8)$) {[{(0,10)}, {(4,3)}, {(4,15)}, {(8,12)}, {(10,9)}, {(11,4)}, {(11,8)}]};
          \node[black,font=\ttfamily\tiny,align=center,text width=1.7cm] at ($16*(1,1) + (8,8)$) {[{(2,3)}, {(4,6)}, {(5,14)}, {(8,0)}, {(9,11)}, {(11,11)}, {(14,6)}]};
          \node[black,font=\ttfamily\tiny,align=center,text width=1.7cm] at ($16*(1,2) + (8,8)$) {[{(14,3)}]};
          \node[black,font=\ttfamily\tiny,align=center,text width=1.7cm] at ($16*(2,-1) + (8,8)$) {[{(5,12)}]};
          \node[black,font=\ttfamily\tiny,align=center,text width=1.7cm] at ($16*(2,0) + (8,8)$) {[{(5,7)}, {(5,11)}]};
          \node[black,font=\ttfamily\tiny,align=center,text width=1.7cm] at ($16*(2,1) + (8,8)$) {[{(2,4)}, {(3,4)}, {(5,11)}, {(12,11)}]};
          \draw[TilingGrid,opacity=.2] (-8,-8) grid (40,40);
          \draw[very thick,TilingGrid,step=16] (-8,-8) grid (40,40);
        }] (-4,-4) rectangle (36,36);
      \end{scope}
      \draw[->] (87,16) -- (101,16) node[midway,above,yshift=1pt] {$\tau_{0} \circ \ldots \circ \tau_{\ell-2}$};
      \begin{scope}[shift={(108,0)}]
        \pgfmathsetseed{4}
        \draw[TilingGrid,decorate,path picture={
          \foreach \i/\j in {-4/34,-2/-3,-2/15,-2/23,0/-2,0/28,0/29,4/27,5/30,6/13,6/16,6/24,8/24,9/17,13/4,14/19,15/-3,16/10,18/19,20/3,20/15,20/22,21/30,24/12,24/16,25/27,26/9,27/4,27/8,27/27,30/22,30/35,34/20,35/20} {
            \fill[black] (\i,\j) rectangle ++(1,1);
          }
          \draw[TilingGrid,opacity=.5] (-8,-8) grid (40,40);
        }] (-4,-4) rectangle (36,36);
      \end{scope}
    \end{tikzpicture}}
  \caption{Density shifts: a preimage configuration $x^{(\ell)} \in \alphabet_{\ell}^{\mspace{2mu}\Z^d}$ substituted to some $\tau_{\ell-1}(x^{(\ell)}) \in \alphabet_{\ell-1}^{\mspace{2mu}\Z^d}$, then to some low density ${x = \tau_{0} \circ \ldots \circ \tau_{\ell-1}(x^{(\ell)}) \in \alphabet^{\Z^d}}\!$.}
  \subcaption{In the configuration $x^{(\ell)}$, each tile encodes a list of positions in $\ibox{L_{\ell}}^d$, thus simulating a virtual pattern of domain $\ibox{L_{\ell}}^d$. When substituting towards images of size $\ibox{2}^d$, the positions of the tiles are distributed to their $2^d$ quadrants and shifted accordingly. For clarity, this figure also illustrates (using transparent layers) the sets of positions inside each tile by drawing its corresponding grid and $\square[black!10]$ symbols.}
\label{fig:app:lowdensity}
\end{figure}

\medskip
The conclusion of~\cite{Destombes_2021:phd:algorithmic-complexity-soficness-subshifts-multidimensional} already mentions that it should be possible to generalize \Cref{app:thm:low-density} into more abstract computability-based conditions for the soficity of multidimensional shift spaces. In his thesis~\cite{Callard_2025:phd:soficity-multidimensional-subshifts}, the first author designed such a criterion by introducing a framework of \emph{inductive representations} to recursively describe patterns as efficiently-compressed strings. In this framework, a \emph{representation} maps square patterns to binary words of smaller sizes; and an \emph{induction map} merges arbitrary representations of adjacent square patterns into a representation of their union. In~\cite[Theorem~10.11]{Callard_2025:phd:soficity-multidimensional-subshifts}, shift spaces that admit small representations and efficiently-computable induction maps are proved to be sofic.

While these results had the desired information-theoretic flavor, they also resulted in rather convoluted proofs when applied to concrete examples (\textit{e.g.}~periodic lifts). A first motivation for the current article was the rephrasing of these results in a more user-friendly setting. Following the parallel between the \emph{self-simulation} of the fixed-point construction and the well-established notion of substitutions, we ended up generalizing the $(2 \times 2)$-sized induction maps into arbitrarily-sized rectangular ndill maps: in particular,~\cite[Theorem~10.11]{Callard_2025:phd:soficity-multidimensional-subshifts} is a corollary of \Cref{sof:thm:square-d-adic-sofic}.

\subsubsection{Gap shifts}\label{app:sec:gap-shifts}

For $f \colon \N \to \N$ an increasing function, the \emph{gap shift} $X_{G,f} \subseteq \{\square,\square[black]\}^{\Z^d}$ was introduced in~\cite{Törma_2026:countable-SFT-covers-sparse-shifts} as the set of configurations~$x$ in which, if two $\square[black]$~cells appear at distinct position $\vec{i}, \vec{j} \in \Z^d$, then the number of $\square[black]$ cells in $x$ is at most $f(\lVert \vec{i} - \vec{j} \rVert_1)$. It was also conjectured that:
\begin{theorem}[Gap shifts]
  In dimension $d \geq 2$, the gap shift $X_{G,f}$ is sofic if and only if the gap function $f \colon \N \to \N$ is upper semi-computable.
\end{theorem}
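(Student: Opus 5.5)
Both directions go through the characterization of configurations of $X_{G,f}$. Since $f$ takes finite values, a configuration $x\in X_{G,f}$ with at least two $\square[black]$ cells has only finitely many of them. If $m$ denotes their number and $\delta$ the minimal $\lVert\cdot\rVert_1$-distance between two of them, then, because $f$ is increasing, $x\in X_{G,f}$ if and only if $m\le f(\delta)$. This holds because the pair realizing $\delta$ gives the most restrictive constraint.

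\emph{Soficity implies upper semi-computability.} A sofic shift is effective, so its language complement is computably enumerable: a finite pattern is \emph{not} globally admissible iff, by compactness, every extension to some larger box contains an enumerated forbidden pattern. For $n,m\ge2$, consider the pattern $w_{n,m}$ made of $m$ black cells, two of them at distance $n$ and all pairwise distances at least $n$ (for instance along a line with spacing $n$), surrounded by white cells. By the characterization above, $w_{n,m}$ extends (by white) to a configuration of $X_{G,f}$ iff $m\le f(n)$. Hence $\{(n,m): f(n)<m\}$ is computably enumerable, \textit{i.e.}~$f$ is upper semi-computable.

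\emph{Upper semi-computability implies soficity.} I would apply \Cref{sof:thm:square-d-adic-sofic} with $N_\ell=2$, $L_\ell=2^\ell$, radius-$1$ ndill maps, and symbols of $\alphabet_\ell$ of size $\bigO(\ell)$. The key observation is the following. At a level $\ell$ with $L_\ell\lesssim\delta$, every macro-tile contains $\bigO(1)$ black cells, at most $c_d$ for some constant depending only on $d$, since they are pairwise $\ge\delta$ apart. So a tile of $\alphabet_\ell$ is one of two kinds.
\begin{itemize}
\item A \emph{sparse} record stores the explicit list of at most $c_d$ black positions in $\ibox{L_\ell}^d$.
\item A \emph{dense} record stores a count $c\le$ (total number of blacks) and a value $\delta$, both in binary.
\end{itemize}
The ndill map $\tau_\ell$ merges the $3^d$ neighbouring input tiles as in the proof of \Cref{app:thm:low-density}. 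If all neighbours are sparse, it computes the pairwise distances among the merged positions and records their minimum. A tile switches to the dense kind, keeping the current minimum $\delta$ and the count, when its macro-tile would contain more than $c_d$ blacks. Consistency requires that all nonempty tiles at a level carry the same $\delta$ once it is fixed, which is checked through the neighbourhood. Sparse tiles also require all pairwise distances seen to be at least $\delta$. Downwards, dense tiles only split their count among children, which are non-deterministically chosen with consistent counts and must eventually become sparse, so that the counts are certified bottom-up. Finally, each tile checks the sum $C$ of counts visible in its $3^d$-neighbourhood against $f$. It runs the enumeration of $\{(n,m):f(n)<m\}$ for $\bigO(\ell)$ steps and rejects if it finds $f(\delta)<C$. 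Every level's computation is $\mathrm{poly}(\ell)=\bigO(\polylog L_\ell)$ time, which satisfies \Cref{sof:thm:square-d-adic-sofic}.

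The main obstacle is the global nature of the count in the presence of fault lines (\Cref{def:fig:fault-line}). The blacks may never lie inside a single macro-tile, and the reverse inequality $C\le f(\delta)$ must still be enforced exactly. My plan is to exploit finiteness. At some level, all black cells lie within a single $3^d$ block of neighbouring macro-tiles, and from then on every tile sees, through its radius-$1$ neighbourhood, the exact total $C$. Each tile can therefore check $C\le f(\delta)$ with an ever larger enumeration budget, so any violation $f(\delta)<C$ is eventually detected at some level. Conversely, legal configurations admit a valid hierarchy (with the arbitrary positions of blacks relative to grids), giving equality with $X_{G,f}$. I would need to double-check carefully that counts aggregated across neighbours are not double-counted. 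Counting only the central tile plus a canonical choice of neighbours, or storing per-tile counts and summing the $3^d$ neighbours' disjoint supports, should suffice.
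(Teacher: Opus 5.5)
Your construction is the paper's in all essentials: $N_\ell=2$, explicit black positions at low levels and black counts at high levels, counts split non-deterministically among the $2^d$ children, and a neighbourhood sum of counts compared with an $\bigO(\ell)$-step upper approximation of $f$. Your easy direction correctly fills in the paper's one-line argument. However, one hypothesis of the theorem you invoke fails as written. \Cref{sof:thm:square-d-adic-sofic} requires \emph{parallel} ndill maps, and splitting a count among the children under a sum constraint is not parallel: the admissible images are not a product of independent per-cell ranges (\Cref{prel:rem:parallelism}). The paper explicitly switches to the non-parallel variant of \Cref{sof:par:comment-parallelism}, which is available because the images have constant size $\ibox{2}^d$. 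Adjoining to each level-$\ell$ symbol the whole image pattern it was chosen from only multiplies bit lengths by a constant. You need this step too.

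Two further points need tightening. First, the soundness half presupposes that a limit configuration has finitely many black cells, and this must be derived. Your own check does it. The radius-$1$ neighbourhood of the level-$\ell$ tile containing a black cell $p$ contains every cell at $\ell_\infty$-distance less than $2^\ell$ from $p$. So with infinitely many black cells, the visible count tends to infinity along $p$'s lineage, while the distance value it is compared with stays bounded; the count check therefore eventually rejects.

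Second, ``record the minimum seen'' and ``all nonempty neighbouring tiles carry the same $\delta$'' do not fit together as stated. Two clusters with different internal spacings would record different minima and be rejected once their tiles become neighbours. You should pick one of two options.
\begin{itemize}
\item Propagate recorded minima upwards, and let each count check use the minimum over its neighbourhood.
\item Like the paper's bound $B$, guess a single lower bound carried by all tiles (empty ones included) and inherited by children, and check every pair against it. You must then also forbid dense tiles at the checking level, so that close pairs cannot hide inside counts. Since $f$ is increasing, a lower bound suffices.
\end{itemize}

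Your emphasis on fault lines is well placed. In the paper's sketch as written, tiles with $B=\infty$ never compare positions. Hence two black cells adjacent across a fault line, each alone in its tile at every level, are never tested against $f(1)$; your every-level comparison of explicit positions catches them.
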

\noindent While I.~Törmä did not prove the conjecture to avoid the hassle of writing yet another entire fixed-point construction, he suggested in personal communications that it could be a simple application of \Cref{sof:thm:square-d-adic-sofic}:

\begin{proof}
  It is straightforward to produce an algorithm semi-computing $f \colon \N \to \N$ from above if $X_{G,f}$ is sofic. For the converse direction, we apply \Cref{sof:thm:square-d-adic-sofic} in the case of a non-parallel ndill map (see \Cref{sof:par:comment-parallelism}). Assuming that $f$ is upper semi-computable, there exists a program $\code{f} \in \{0,1\}^{\ast}$ such that for $n,t \in \N$, $\funram{\code{f}}(n,t)$ outputs an integer such that $\inf_{t \to +\infty} \funram{\code{f}}(n,t) = f(n)$.

  \smallskip
  For $\ell \in \N$, we define $N_{\ell} = 2$, $L_{\ell} = 2^{\ell}$, and the alphabet $\alphabet_{\ell}$ to encode the $\square[black]$ cells of an image pattern after $\ell$ steps of substitutions:
  \begin{itemize}
  \item A \field{bound} $B \in \{0,\dots,L_{\ell+1}-1\} \cup \{\infty\}$ represents a uniform lower bound on the distance between two neighboring $\square[black]$ cells;
  \item If $B = \infty$, a \field{position} $\vec{p} \in \ibox{L_{\ell}}^d$ represents (if non-blank) the coordinates of the $\square[black]$~cell of the tile; while a blank \field{position} represents a tile with no  $\square[black]$~cell at all;
  \item If $B < L_{\ell+1}$, a \field{number} $n \in \{0,\dots,L_{\ell}^d\}$ represents the number of $\square[black]$~cells in the tile.
  \end{itemize}
   We then define a computable ndill map
  \[ \smash{\ndill_{\ell} \colon \alphabet_{\ell+1}^{\{-2,\dots,2\}^d} \mto \alphabet_{\ell}^{\ibox{2}^d}} \]
  as the following algorithm: on input $\smash{u \in \alphabet_{\ell+1}^{\{-2,\dots,2\}^d}}$,
  \begin{itemize}
  \item All input tiles $u_{\vec{j}}$ ($\vec{j} \in \{-2,\dots,2\}^d$) must share the same \field{bound} ${B \in \{0,\dots,L_{\ell+2}-1\} \cup \{\infty\}}$;
  \item If $L_{\ell+1} \leq B$, the input tile $u_{\vec{0}}$ has a (possibly blank) \field{position} field. In which case,
    \begin{itemize}
    \item If $L_{\ell+1} \leq B < L_{\ell+2}$, we check for every distinct $\vec{j},\vec{j}' \in \{-2,\dots,2\}^d$ that if the input tiles $u_{\vec{j}}$ and $u_{\vec{j}'}$ both have non-blank \field{positions} $\vec{p}_{\vec{j}}$ and $\vec{p}_{\vec{j}'}$ in $\ibox{L_{\ell+1}}$, then the distance $\lVert (L_{\ell+1} \cdot \vec{j} + \vec{p}_{\vec{j}}) - (L_{\ell+1} \cdot \vec{j}' + \vec{p}_{\vec{0}}) \rVert_1$ must be larger than $B$;
    \end{itemize}
    and we define the output pattern $v \in \alphabet_{\ell}^{\ibox{2}^d}$ of the tile $u_{\vec{0}}$ as follows:
    \begin{itemize}
    \item All tiles $v_{\vec{i}}$ ($\vec{i} \in \ibox{2}^d$) have the same \field{bound} $B' = \infty$;
    \item If the \field{position} of the input tile $u_{\vec{0}}$ is blank, then all tiles $v_{\vec{i}}$ ($\vec{i} \in \ibox{2}^d$) have a blank \field{position} field.
    \item If the \field{position} of the input tile $u_{\vec{0}}$ is some non-blank $\vec{p} \in \ibox{L_{\ell+1}}^d$, then let $\vec{i} \in \ibox{2}^d$ be the quadrant index such that $\vec{p} = L_{\ell} \cdot \vec{i} + \vec{p'}$ for some $\vec{p'} \in \ibox{L_{\ell}}^d$. Then $v_{\vec{i}}$ has a non-blank \field{position} $\vec{p}'$, and all other tiles $v_{\vec{i}'}$ ($\vec{i}' \neq \vec{i}$) have blank \field{position}.
    \end{itemize}
  \item If $L_{\ell+1} > B$, all input tiles $u_{\vec{j}}$ ($\vec{j} \in \{-2,\dots,2\}^d$) must have a non-blank \field{number} ${n_{\vec{j}} \in \{0,\dots,L_{\ell+1}^d-1\}}$. In which case,
    \begin{itemize}
    \item Denoting
      \[ n = \sum_{\vec{j} \in \{-2,\dots,2\}^d} n_{\vec{j}} \]
      the total amount of $\square[black]$ cells stored in the tiles of the input pattern $u$, sequentially compute $\funram{\code{f}}(M,t)$ for increasing values of $t \in \N$ and let $N \in \N$ be the minimal value enumerated in at most $\bigO(\ell)$ steps of computation. We check that $n$ satisfies $n \leq N$: otherwise, we reject the computation;
    \end{itemize}
    And we then define the output pattern $v \in \alphabet_{\ell}^{\ibox{2}^d}$ of the tile $u_{\vec{0}}$ as follows:
    \begin{itemize}
    \item Non-deterministically define some \field{numbers} $n'_{\vec{i}} \in \{0,\dots,L_{\ell}^d\}$ such that $\smash{\sum_{\vec{i} \in \ibox{2}^d} n'_{\vec{i}} = n_{\vec{0}}}$;
    \item All tiles $v_{\vec{i}}$ ($i \in \ibox{2}^d$) have the same \field{bound} $B' = B$;
    \item If $B < L_{\ell}$, the tile $v_{\vec{i}}$ ($i \in \ibox{2}^d$) has \field{number} $n'_{\vec{i}}$;
    \item If $B \geq L_{\ell}$, check that each $n'_{\vec{i}}$ is either $0$ or $1$ (otherwise, reject). If $n'_{\vec{i}} = 0$, then $v_{\vec{i}}$ has blank \field{position}. Otherwise, $v_{\vec{i}}$ has any non-deterministic $\vec{p}'_{\vec{i}} \in \ibox{L_{\ell}}^d$.
    \end{itemize}
  \end{itemize}
  The idea is that a unique level $\ell$ (the one such that $L_{\ell+1} \leq B < L_{\ell+1}$) is in charge of checking the distance condition against the embedded \field{bound} $B$. The ndill maps $\tau_{\ell}$ are then computable in time~$\bigO(L_{\ell})$, and have neighborhood $\{-2,\dots,2\}^d$ so that, if $L_{\ell+1} \leq B < L_{\ell+2}$, the \field{positions} in a neighborhood can represent a whole ball of radius $B$. Identifying symbols from $\alphabet_0$ of non-blank \field{position} with $\square[black]$, and those of blank \field{position} with $\square[white]$, we claim that the associated limit space is actually $X_{G,f}$ (since $f$ is assumed to be increasing, and that \field{bound} fields can never grow larger than the distance between two $\square[black]$ cells). Applying the non-parallel version of \Cref{sof:thm:square-d-adic-sofic}, we conclude that $X_{G,f}$ is sofic.
\end{proof}

\subsubsection{Seas of squares}
\label{app:sec:seas-of-squares}

\newcommand{\Xsquare}[1]{\ensuremath{X_{\begin{tikzpicture}[x=1.4ex,y=1.4ex,baseline=.15ex]
        \fill[black] (0,0) rectangle (1,1);
        \draw[thin,TilingGrid,step=0.3333,opacity=0.9] (0,0) grid (1,1);
        \draw[black!30] (0,0) rectangle (1,1);
      \end{tikzpicture},#1}}}

We now recover another example from the literature about soficity: the ``seas of squares'' from~\cite{Westrick_2017:seas-of-squares}. For $S \subseteq \N$ a set of integers, the $S$-square shift $\Xsquare{S} \subseteq \smash{\{\square, \square[black]\}^{\mathbb{Z}^{2}}}$ is the set of configurations where independent connected components of black cells (as a grid subgraph) form (axis-aligned) squares of sizes~$S$: that is, $\Xsquare{S}$ is the set of all configuration $x \in \smash{\{\square, \square[black]\}^{\mathbb{Z}^{2}}}$ such that, if $D \subseteq \mathbb{Z}^{2}$ is a finite and a maximal connected component of $\square[black]$-cells in $x$, then $D$ is a rectangle in~$\Srect$ such that $\rnorm{D} = \ibox{n}^2$ for some $n \in S$.

\begin{theorem}[note={\cite{Westrick_2017:seas-of-squares}}]
  For $S \subseteq \mathbb{N}$ a computably co-enumerable set, the shift $\Xsquare{S}$ is sofic.
\end{theorem}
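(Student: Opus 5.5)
We apply \Cref{sof:thm:square-d-adic-sofic} with $N_{\ell} = 2$ and $L_{\ell} = 2^{\ell}$, following the template of \Cref{app:thm:low-density} and of the gap shifts. The obstacle is that a square of side $n$ contains $n^2$ black cells, so $\Xsquare{S}$ is not a low-density shift. However, its useful information is tiny. Every pattern of $\ibox{L_{\ell}}^2$ is described by the list of black squares meeting it, each given by a corner position and a side length, together with the white sea. Moreover, squares are separated (their connected components are maximal), so a macro-tile of size $L_{\ell}$ can only meet few \emph{small} squares along any line.

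To keep the representation sublinear, we do not store squares individually. A tile $a \in \alphabet_{\ell}$ stores a bounded-size description of how large squares cross the macro-tile: each square of side $\geq L_{\ell}$ meets a macro-tile in at most $O(1)$ of its edges and corners. The macro-tile therefore records, for each such crossing square, its intersection rectangle relative to the tile, plus, for a square whose corner lies inside, its claimed full size $n$. Squares of side $< L_{\ell}$ are handled the same way at the lower level where they first fit. Concretely, the alphabet $\alphabet_{\ell}$ consists of a bounded list (say at most $4$, since distinct black components cannot overlap and each large square meets a tile in a rectangle) of records (rectangle inside $\ibox{L_{\ell}}^2$, optional side length $n \leq$ something, optional relative offset of the square's bottom-left corner). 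Each record has bit length $O(\ell)$, well within $O(L_{\ell}^{\alpha})$. A square of side $n$ is fully contained in some $3\times 3$ block of macro-tiles at level $\lceil \log n\rceil$, and a square can lie entirely inside a single macro-tile.

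The ndill map $\tau_{\ell}$ has neighborhood $\{-1,0,1\}^2$ and performs three tasks. First, it checks consistency of the records of neighboring input tiles: rectangles touching a common edge must agree on the square they belong to (same corner offset and size after translation by $L_{\ell+1}\cdot\vec{j}$). Second, it checks that each square recorded with full size $n$ has $n\in S$. For this it runs the co-enumeration of $S$ for $O(\ell)$ steps and rejects if $n$ has been enumerated out; since every square is seen at infinitely many levels, any $n \notin S$ is eventually rejected. Third, it distributes each record to the four quadrants by intersecting and shifting rectangles; small squares that are born inside a quadrant are guessed nondeterministically at the lower level. We must also enforce the white separation between distinct squares, meaning no two records are adjacent without being the same square. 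This check happens locally within the neighborhood at the level where both fit.

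This map is non-parallel because of the guessing, but its images have constant size, so the non-parallel variant (\Cref{sof:par:comment-parallelism}) applies, with computation time $\mathrm{poly}(\ell)$. At level $0$, a cell is black iff some record covers it. The main obstacle is the correctness argument: showing that the limit space is exactly $\Xsquare{S}$. This includes configurations with infinite black components (half-planes, quarter-planes, strips), which the definition allows and which arise along fault lines. For these we allow records with no recorded size, which are never checked against $S$. The converse inclusion, that every configuration of $\Xsquare{S}$ admits such an $N$-adic structure, follows by taking the standard dyadic grids and recording each square's data at each level.
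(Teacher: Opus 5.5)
There is a genuine gap, in the sentence ``since every square is seen at infinitely many levels''. With the alphabet you describe, that claim is false.

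A level-$\ell$ symbol records only squares of side $\geq L_\ell$ (at most four, one per corner of the macro-tile). The full size $n$ is written only in tiles containing a corner of the square. Squares of side $< L_\ell$ are absent at level $\ell$ and are guessed further down.

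So a finite square of side $n$ carries its size only at the $O(\log n)$ levels with $L_\ell \le n$. Each of these levels runs the co-enumeration for $O(\ell)$ steps, so $n$ is tested against only $O(\log n)$ steps of the enumeration of $\N\setminus S$ in total. For a co-enumerable $S$ there is in general no computable bound on the time needed to enumerate a given $n\notin S$; otherwise $S$ would be decidable. What you build is therefore at best the square shift for the larger computable set $S'=\{n : n \text{ is not enumerated out within } O(\log n) \text{ steps}\}$, which differs from $S$ whenever $S$ is undecidable.

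Keeping individual square records at higher levels does not fix this. An $L_\ell\times L_\ell$ macro-tile may meet $\Theta(L_\ell^2)$ small squares (e.g.\ isolated unit squares), far beyond the sublinear budget. The fact that your symbols use only $O(\ell)$ bits and $\mathrm{poly}(\ell)$ time, while \Cref{sof:thm:square-d-adic-sofic} allows $L_\ell^{\alpha}$, is a symptom of this missing information.

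The missing idea is that the $S$-check needs only the set of \emph{distinct sizes}, and this set must persist at every level. The paper adds to each symbol of $\alphabet_\ell$ a size list $S_{\leftrightarrow}\subseteq\{0,\dots,L_\ell-1\}$ containing the sizes of the squares meeting the macro-tile. This sits alongside a bounded rectangle list for the few large squares, which is essentially your set of records.

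At each substitution step:
\begin{itemize}
\item the size list is passed down to the quadrants;
\item new squares of side in $[L_{\ell-2},L_{\ell-1})$ may be introduced only with a listed size;
\item the list is checked against $\ell$ steps of the co-enumeration.
\end{itemize}
A square of side $n$ then has its size listed at every level above its birth, so any $n\notin S$ is eventually rejected.

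The list stays sublinear because $N$ pairwise disjoint squares of distinct sizes inside an $O(L_\ell)\times O(L_\ell)$ window satisfy $\sum_{k\le N}k^2=O(L_\ell^2)$, hence $N=O(L_\ell^{2/3})$. This gives bit length $O(L_\ell^{2/3}\log L_\ell)$ and time $O(L_\ell^{2/3}\,\mathrm{poly}(\ell))$. That still fits the non-parallel version of \Cref{sof:thm:square-d-adic-sofic} with any $\alpha\in(2/3,1)$.
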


\begin{figure}[t]
  \centerline{
    \begin{tikzpicture}[scale=.114,decoration={random steps,segment length=0.4em,amplitude=0.15em}]
      \begin{scope}
        \pgfmathsetseed{6}
        \draw[TilingGrid,decorate,path picture={
          \fill[black!80] (-2,26) rectangle ++(10,10);
          \fill[black!80] (10,6) rectangle ++(18,18);
          \fill[black!80] (12,-23) rectangle ++(25,25);
          \draw[TilingGrid,opacity=.2] (-8,-8) grid (40,40);
          \node[black,font=\ttfamily\scriptsize] at (20,27.5) {[1,2,4,5,8,10,18,25]};
          \draw[very thick,TilingGrid,step=32] (-8,-8) grid (40,40);
        }] (-4,-4) rectangle (36,36);
      \end{scope}
      \draw[->] (38,16) -- (42,16) node[midway,above,yshift=1pt] {$\tau_{\ell-1}$};
      \begin{scope}[shift={(48,0)}]
        \pgfmathsetseed{6}
        \draw[TilingGrid,decorate,path picture={
          \fill[black!80] (22,30) rectangle ++(8,8);
          \fill[black!80] (31,5) rectangle ++(8,8);
          \fill[black!80] (-2,26) rectangle ++(10,10);
          \fill[black!80] (10,6) rectangle ++(18,18);
          \fill[black!80] (12,-23) rectangle ++(25,25);
          \draw[TilingGrid,opacity=.2] (-8,-8) grid (40,40);
          \node[black,font=\ttfamily\scriptsize] at (24,27) {[2,5,8]};
          \node[black,font=\ttfamily\scriptsize] at (5,22) {[2,5,10]};
          \node[black,font=\ttfamily\scriptsize] at (5,7) {[4,5]};
          \node[black,font=\ttfamily\scriptsize] at (24,4) {[1,8]};
          \draw[very thick,TilingGrid,step=16] (-8,-8) grid (40,40);
        }] (-4,-4) rectangle (36,36);
      \end{scope}
      \draw[->] (87,16) -- (101,16) node[midway,above,yshift=1pt] {$\tau_{0} \circ \ldots \circ \tau_{\ell-2}$};
      \begin{scope}[shift={(108,0)}]
        \pgfmathsetseed{6}
        \draw[TilingGrid,decorate,path picture={
          \fill[black!85] (21,3) rectangle ++(1,1);
          \fill[black!85] (27,4) rectangle ++(1,1);
          \fill[black!85] (15,27) rectangle ++(2,2);
          \fill[black!85] (2,2) rectangle ++(4,4);
          \fill[black!85] (-1,13) rectangle ++(5,5);
          \fill[black!80] (29,22) rectangle ++(5,5);
          \fill[black!85] (22,30) rectangle ++(8,8);
          \fill[black!85] (31,5) rectangle ++(8,8);
          \fill[black!85] (-2,26) rectangle ++(10,10);
          \fill[black!85] (10,6) rectangle ++(18,18);
          \fill[black!85] (12,-23) rectangle ++(25,25);
          \draw[TilingGrid,opacity=.5] (-8,-8) grid (40,40);
        }] (-4,-4) rectangle (36,36);
      \end{scope}
    \end{tikzpicture}}
  \caption{Seas of squares: a preimage configuration $x^{(\ell)} \in \alphabet_{\ell}^{\mspace{2mu}\Z^d}$ substituted to some $\tau_{\ell-1}(x) \in \alphabet_{\ell-1}^{\mspace{2mu}\Z^d}$, then to some sea of squares $x = \tau_{0} \circ \ldots \circ \tau_{\ell-1}(x^{(\ell)}) \in \protect\Xsquare{\N}$.}
  \subcaption{In the configuration $x^{(\ell)}$, each tile encodes a \emph{size list} of square sizes and a \emph{rectangle list} storing the positions of some ``large'' rectangles (\textit{i.e.}~of size $\geq L_{\ell-1}$, or touching a corner), thus simulating a virtual pattern of domain $\ibox{L_{\ell}}^2$. When substituting towards images of size $\ibox{2}^d$, the \emph{size list} of the tiles are distributed to their $2^d$ quadrants (entries $n \geq L_{\ell-1}$ are forgotten), and the new rectangles of sizes $L_{\ell-2} \leq n < L_{\ell-1}$ are introduced non-deterministically.}
  \label{app:fig:seas-of-squares-proof}
\end{figure}

\begin{proof}
  \newcommand{\sizelist}{\ensuremath{S_{\leftrightarrow}}}
    We again apply the non-parallel version of \Cref{sof:thm:square-d-adic-sofic} (see \Cref{sof:par:comment-parallelism}). Assuming that $S \subseteq \N$ is computably co-enumerable, there exists an algorithm enumerating $\N \setminus S$, \textit{i.e.}~a $\log$-RAM program $\code{S}$ such that $\funram{\code{S}}(n)$ halts if and only if $n \notin S$.

  Let us start with the full $\N$-square shift $\Xsquare{\N}$. For $\ell \in \N$, we define $N_{\ell} = 2$, $L_{\ell} = 2^{\ell}$, and the alphabet $\alphabet_{\ell}$ as the set of records on the following two fields:
  \begin{itemize}
  \item A \field{size list} $\sizelist \subseteq \{0,\dots,L_{\ell}-1\}$, which intuitively represents the various sizes of squares intersecting the tile;
  \item A \field{rectangle list} $R \subseteq \Srect$ of non-overlapping rectangles $\rect{r} \subseteq \ibox{L_{\ell}}^2$ such that:
    \begin{enumerate}[label=(\roman*)]
    \item Either $\rect{r}$ covers a corner of the cube~$\ibox{L_{\ell}}^2$;
    \item Or $\rect{r}$ intersects a facet $\facet{k}{\pm}(\ibox{L_{\ell}}^2)$ and the edge of $\rect{r}$ that is parallel to this facet is longer than~$L_{\ell-1}$;
    \item Or $\rect{r}$ is strictly contained in the interior of~$\ibox{L_{\ell}}^2$, and $\rect{r}$ is a square of size~$n \times n$ for some $n \geq L_{\ell-1}$.
    \end{enumerate}
  \end{itemize}
  In particular, a valid \field{rectangle list} $R$ cannot contain more than 9 elements (one in the center, one per edge, and one per corner). We then define a computable ndill map
  \[ \smash{\ndill_{\ell} \colon \alphabet_{\ell+1}^{\{-1,0,1\}^d} \mto \alphabet_{\ell}^{\ibox{2}^d}} \]
  in the following algorithm: on input $\smash{u \in \alphabet_{\ell+1}^{\{-1,0,1\}^d}}$:
  \enlargethispage{-\baselineskip}
  \begin{itemize}
  \item Let $\sizelist \subseteq \{0,\dots,L_{\ell}-1\}$ and $R \subseteq \Srect$ be the \field{size list} and the \field{rectangle list} of the central tile~$u_{\vec{0}}$; and denote $D = \bigcup_{\vec{j} \in \{-1,0,1\}^2} L_{\ell} \cdot \vec{j} + \ibox{L_{\ell}}^2$ the merged domains represented in $u$.
  \item We then check the geometry of the input pattern $u$: for every rectangle $\rect{r} \in R$,
    \begin{itemize}
    \item If $\rect{r}$ is a square of size $n \times n$ strictly contained in the interior of the cube $\ibox{L_{\ell}}^2$, check that $n$ belongs to the \field{size list} $\sizelist$;
    \item If $\rect{r}$ intersects a facet $\facet{k}{\pm}(\ibox{L_{\ell}}^2)$ but no corner, check that there exists a matching rectangle $\rect{r}'$ in the \field{rectangle list} of the tile $u_{\vec{0} \pm \basis{k}}$ such that $\rect{r} \cup (\pm L_{\ell} \cdot \basis{k} + \rect{r}')$ forms a valid square of size $n \times n$ for some $n \in \sizelist$;
    \item If $\rect{r}$ covers a corner of~$\ibox{L_{\ell}}^2$, consider the \field{rectangle list} of the adjacent tiles~$u_{\vec{j}}$ ($\vec{j} \in \{-1,0,1\}^2$) to check if it can be merged with (at most) three other rectangles to form a larger $\rect{r}' \subseteq D$. If $\rect{r}$' is not a rectangle, reject. If $\rect{r}'$ intersects a facet of $D$, check that $\rect{r}'$ could be extended into a square; otherwise, reject. If $\rect{r}'$ entirely fits in the interior of $D$, check that $\rect{r}'$ is a square of size $n \times n$ for some $n \in \N$; and if $n \leq L_{\ell}-1$, check that $n \in \sizelist$. Otherwise, reject.
    \end{itemize}
    Furthermore, check that all sizes of $\sizelist \cap \{L_{\ell-1},\dots,L_{\ell}-1\}$ are accounted for in this way; otherwise, reject.
  \item We then check $\sizelist$ against an enumeration of $S$: for every $n \in \{0,\dots,\ell\}$,
    \begin{itemize}
    \item Check if $\funram{\code{S}}(n)$ halts in less than $\ell$ steps. If it does, then $n \notin S$, and we thus check that $n$ does not appear in the \field{size list} $\sizelist$ (otherwise, we reject);
    \end{itemize}
  \item Finally, we define the output pattern $v \in \alphabet_{\ell}^{\ibox{2}^d}$ of the tile $u_{\vec{0}}$ as follows:
    \begin{itemize}
    \item Non-deterministically define a new \field{rectangle list} $R' \subseteq \Srect$ of non-overlapping squares $\rect{r}'$ of size $n \times n$ for $L_{\ell-2} \leq n < L_{\ell-1}$ and such that:
      \begin{enumerate}[label=(\roman*)]
      \item Each square $\rect{r}' \in R'$ intersects $\ibox{L_{\ell}}^2$;
      \item The squares $\rect{r}' \in R'$ cover exactly the sizes from $\sizelist \cap \{L_{\ell-2},\dots,L_{\ell-1}-1\}$;
      \item The squares $\rect{r}' \in R'$ and the rectangles $(L_{\ell} \cdot \vec{j} + \rect{r})$ from the \field{rectangle lists} of the tiles $u_{\vec{j}}$ ($\vec{j} \in \{-1,0,1\}^2$) are disjoint;
      \end{enumerate}
    \item For every $\vec{i} \in \ibox{2}^d$, set the \field{rectangle list} of $v_{\vec{i}}$ to $R'_{\vec{i}} = \{ (\rect{r}' - L_{\ell-1} \cdot \vec{i}) \cap \ibox{L_{\ell-1}}^2 : \rect{r}' \in R' \cup R \}$; and the \field{size list} of $v_{\vec{i}}$ to the union of $\sizelist \cap \{0,\dots,L_{\ell-2}-1\}$ with the sizes of squares in $\{L_{\ell-2},\dots,L_{\ell-1}-1\}$ from $R'$ and $R$ that intersect $L_{\ell-1} \cdot \vec{i} + \ibox{L_{\ell-1}}^2$.
    \end{itemize}
  \end{itemize}

  Identifying symbols in $\alphabet_0$ of empty \field{rectangle list} with $\square$, and those of \field{rectangle list}~$\{\ibox{1}^d\}$ with~$\square[black]$, we claim that -- starting from the $\N$-square shift $\Xsquare{\N}$ -- the resulting limit space is the $S$-square shift~$\Xsquare{S}$. Yet, \Cref{sof:thm:square-d-adic-sofic} cannot be applied as is: indeed, the \field{size list} in the alphabet $\alphabet_{\ell}$ could be of length $L_{\ell}$, and thus fail to satisfy the required sublinear information bounds. As already noted in~\cite{Westrick_2017:seas-of-squares}, we can actually restrict the cardinality of a \field{size list} in $\alphabet_{\ell}$ to be at most $\smash{\bigO(L_{\ell}^{2/3})}$, since the maximal number $N$ of different square sizes that can fit an $n \times n$ window must satisfy $\smash{\sum_{k=1}^{N} k^2 \leq n^2}$, and thus be at most $\smash{\bigO(n^{2/3})}$.

  After this modification, all the ndill maps $\tau_{\ell}$ are computable in time $\smash{\bigO(L_{\ell}^{2/3} \cdot \mathrm{poly}(\ell))}$, and we conclude that $\Xsquare{S}$ is sofic by the non-parallel version of \Cref{sof:thm:square-d-adic-sofic}.
\end{proof}

%
\section{Programs and higher-dimensional computation models}\label{calc:sec}

This section provides the computational tools required to prove \Cref{sof:lem:fixpoint} in \Cref{fix:sec}, such as classical results from computability theory about program transformations; and presents the computation model of \emph{processor arrays}, which appears especially suited for computation embeddings into multidimensional tilings.

\subsection{Classical results from computability}
By considering RAM programs $e \in \{0,1\}^{\ast}$ as text/code, it is actually possible for RAM programs to read the code of other RAM programs, or even modify such codes. In what follows, we consider some classical computability results of this flavor and their influence on time and word length. To do so, for any RAM program $e \in \{0,1\}^{\ast}$ and input array $\ram{I} \in \{0,1\}^{\ast\ast}$, we denote by
\begin{align*}
  T_{e}(\ram{I}) &: \text{the maximal length of the halting and accepting runs of $e$ on $\ram{I}$}; \\[-2pt]
  W_{e}(\ram{I}) &: \text{the maximal word length of the halting and accepting runs of $e$ on $\ram{I}$.}
\end{align*}
Note that these quantities might be infinite, if, using the non-determinism of the model, there exist arbitrarily long accepting runs. We first claim that there exists a (word-RAM preserving) universal interpreter:
\begin{proposition}
  There exists a RAM program $e_{\mathcal{U}}$ such that, for every RAM program $e \in \{0,1\}^{\ast}$ and every input $\ram{I} \in \{0,1\}^{\ast\ast}$ of bit length $n$, we have
  \[ \funram{e_{\mathcal{U}}}(e \cdot \ram{I}) = \funram{e}(\ram{I}).\]
  Furthermore, time complexity and word length respectively satisfy $T_{e_{\mathcal{U}}}(e \cdot \ram{I}) = \bigO(T_{e}(\ram{I}))$ and $W_{e_{\mathcal{U}}}(e \cdot \ram{I}) = W_{e}(\ram{I}) + \bigO(\log |e|)$.
\end{proposition}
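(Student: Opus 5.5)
The plan is to construct $e_{\mathcal{U}}$ explicitly as an interpreter that keeps the simulated machine's entire configuration inside its own memory. Each instruction of $e$ is then executed by a constant number of instructions of $e_{\mathcal{U}}$. The input to $e_{\mathcal{U}}$ is the array $e \cdot \ram{I}$, whose first entry holds the code of $e$.

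\emph{Loading phase.} First, $e_{\mathcal{U}}$ parses the code $e$, bit by bit through $\ram{READ}$, into a table of instructions. Each instruction is stored as a small record: an opcode from the fixed finite set $S$ together with the indices of its operand variables. The program has at most $|e|$ variables, so each simulated variable $\ram{var}_i$ of $e$ is mapped to a reserved memory cell, at address $c + i$ for a constant $c$. The remaining memory addresses $a$ of $e$ are mapped to $\ram{M}[c' + a]$, placed after the instruction table and the variable block. The offset $c'$ is $\bigO(|e|)$, so it fits in $\bigO(\log|e|)$ bits.

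\emph{Simulation loop.} Next, $e_{\mathcal{U}}$ keeps the simulated program counter in one of its own variables. At each step it loads the current instruction record and dispatches on the opcode by a fixed cascade of conditional $\ram{GOTO}$s; since $S$ is finite, the dispatch costs constant time. It then loads the operands from the variable block, performs the same operation ($\ram{NDET}$ is simulated by $\ram{NDET}$), and stores the result. Simulated $\ram{READ}$ instructions shift the first index by one to skip the code entry, $\ram{WRITE}$ instructions map directly, and $\ram{HALT}$ and $\ram{ERROR}$ map to themselves. Accepting runs of $e_{\mathcal{U}}$ then correspond bijectively to accepting runs of $e$ with identical outputs, which gives $\funram{e_{\mathcal{U}}}(e\cdot\ram{I})=\funram{e}(\ram{I})$.

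\emph{Complexity.} Each simulated step costs $\bigO(1)$ instructions. Since runs correspond, $T_{e_{\mathcal{U}}}(e\cdot\ram{I}) = \bigO(T_e(\ram{I}))$, plus the loading phase. The stored words are of two kinds. Those that are simulated values of $e$ are bounded by $W_e(\ram{I})$. Those that are addresses, whether shifted simulated addresses, instruction indices or variable indices, are bounded by $W_e(\ram{I}) + \bigO(\log|e|)$, because an address offset of size $\bigO(|e|)$ adds only $\bigO(\log |e|)$ bits. Together these give the word-length claim.

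The main obstacle is the loading phase. Parsing the code costs $\bigO(|e|)$ time, which is not dominated by $T_e(\ram{I})$ for programs that halt very quickly. I would handle this by arguing that the constant in the $\bigO$ is allowed to depend on $e$: for a fixed $e$ the parsing cost is a constant, and every run has at least one step. Alternatively, one can avoid the preprocessing entirely by decoding each instruction lazily from $\ram{I}[0]$ with a fixed-width instruction encoding, at $\bigO(\log|e|)$ bit-reads per step. That variant still costs $\bigO(1)$ per step once $e$ is fixed, so I expect the intended reading is the one where the constant depends on $e$.
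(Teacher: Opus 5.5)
Your proposal is correct and follows essentially the paper's argument: an interpreter that stores the simulated variables of $e$ in its own memory, keeps the simulated $\ram{PC}$ in one of its variables, and executes each instruction of $e$ using $\bigO(1)$ dispatch steps plus extra $\ram{LOAD}$/$\ram{STORE}$ steps. The paper does not preload an instruction table; it decodes each instruction directly from the input using the simulated $\ram{PC}$, which is your ``lazy'' variant, and its claim of $\bigO(1)$ decoding likewise treats $e$ as fixed (reading operand indices bit by bit costs $\bigO(\log|e|)$), which is harmless because the proposition is only applied to fixed programs in the fixed-point theorem.
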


\noindent The construction of such an interpreter is a textbook result from computability that is proved by simulating the variables of $e$ in the memory of $e_{\mathcal{U}}$. More precisely, the new machine $e_{\mathcal{U}}$ will contain at least a variable to simulate the program counter $\ram{PC}$ of $e$ and three variables to perform arithmetic operations. Each instruction from $e$ is decoded in time $\bigO(1)$ from the input of $e_{\mathcal{U}}$ using the simulated $\ram{PC}$, and is then simulated in time $\bigO(1)$ (because of the additional $\ram{LOAD}$ and $\ram{STORE}$ instructions required to load the associated variables of $e$ from the memory of $e_{\mathcal{U}}$).

\pagebreak
\enlargethispage{\baselineskip}
Another classical result is the $S^m_n$ theorem~\cite{Kleene_1938:on-notation-for-ordinal-numbers}, which allows to computably ``hardcode'' the value of some input words $\ram{I}[0],\dots$ directly in the code of a program:
\begin{proposition}[name={$S^m$ theorem},label={calc:prop:smn-theorem}]
  For all $m \in \N$, there exists a computable total function ${S^m \colon \{0,1\}^{\ast} \times (\{0,1\}^{\ast})^m \to \{0,1\}^{\ast}}$ such that, for every program $e \in \{0,1\}^{\ast}$, every input ${\ram{I} \in \{0,1\}^{\ast\ast}}$ and every words $(w_1,\dots,w_m) \in (\{0,1\}^{\ast})^m$, we have
  \[ \funram{S^m(e,w_1,\dots,w_m)}(\ram{I}) = \funram{e}(w_1 \cdot \ldots \cdot w_m \cdot \ram{I}). \]
  Furthermore, time complexity and word length satisfy $T_{S^m(e,w_1\dots,w_m)}(\ram{I}) = \bigO(T_{e}(w_1 \cdot \ldots \cdot w_m \cdot \ram{I}))$ and $W_{S^m(e,w_1\dots,w_m)}(\ram{I}) = W_{e}(w_1 \cdot \ldots \cdot w_m \cdot \ram{I}) + \bigO(\log(|e| + \sum_{i=1}^m |w_i|))$.
\end{proposition}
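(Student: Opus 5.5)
The plan is to build $S^m(e,w_1,\dots,w_m)$ as a syntactic transformation of the code $e$: a fixed, computable \emph{preamble} is prepended to $e$, followed by a relocated copy of $e$ in which every $\ram{READ}$ instruction is rewritten. The preamble contains the words $w_1,\dots,w_m$ as hardcoded constants. It is a finite sequence of $\ram{STORE}$ instructions that writes each bit of each $w_i$ into a reserved memory region. The new input array must behave like the concatenation $w_1 \cdot \ldots \cdot w_m \cdot \ram{I}$. So each $\ram{READ}\colon \ram{var}_i \leftarrow \ram{I}[\ram{var}_j][\ram{var}_k]$ of $e$ is replaced by a constant-size macro. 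If $\ram{var}_j < m$, the macro loads bit $\ram{var}_k$ of $w_{\ram{var}_j+1}$ from the reserved region, returning an out-of-range marker when $\ram{var}_k \geq |w_{\ram{var}_j+1}|$ as the original $\ram{READ}$ would. Otherwise, it performs the genuine $\ram{READ}$ on $\ram{I}[\ram{var}_j - m][\ram{var}_k]$.

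To keep the original memory accesses of $e$ intact, I will shift every $\ram{LOAD}$/$\ram{STORE}$ address of $e$ by a fixed offset $c = \sum_i |w_i| + m$. The reserved region then occupies addresses $\interval{0}{c-1}$ and never collides with $e$'s memory. Equivalently, I could interleave addresses, using even cells for $e$ and odd cells for the hardcoded words, which avoids adding a large constant. All goto targets in $e$ are relocated by the preamble length plus the expansion factor of each macro. Computability and totality of $S^m$ are then clear, since the transformation is a straightforward recursive scan of $e$'s instruction list.

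For the complexity bounds, each original instruction becomes $\bigO(1)$ instructions, which gives the time claim $\bigO(T_e(w_1\cdots w_m\cdot\ram{I}))$. The word length increases only through the offset or constant indices, which have magnitude at most $\sum|w_i| + m$, and through the program counter, which ranges over a program of length $\bigO(|e| + \sum|w_i|)$. This yields the additive $\bigO(\log(|e|+\sum_i|w_i|))$.

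The main obstacle is the preamble. Writing the $w_i$ bit by bit costs $\Theta(\sum|w_i|)$ steps, which a pure $\bigO(T_e)$ bound does not absorb when $e$ reads few input bits. I would first try to avoid the preamble entirely and compile the hardcoded bits into a lookup. Each rewritten $\ram{READ}$ then dispatches via a computed jump into a table of constant instructions, one per bit of the $w_i$. Each entry sets $\ram{var}_i$ to that bit and jumps back. This makes each hardcoded read $\bigO(1)$ with no initialization cost, assuming the instruction set allows an indirect or computed $\ram{GOTO}$. If only conditional $\ram{GOTO}$ to constant targets is available, I would fall back to lazy initialization or accept the statement's $\bigO(\cdot)$ as including the bit length of the hardwired input, which the original run on $w_1\cdots w_m\cdot\ram{I}$ already has as its input size.
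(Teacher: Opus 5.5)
Your final route matches the paper's argument: you rewrite each $\ram{READ}$ into an $\bigO(1)$ jump to code appended after $e$, which either reads the genuine input at first index $\ram{var}_j - m$ or fetches the hardcoded bit by a computed jump (arithmetic on the program counter); you relocate every $\ram{GOTO}$ target and pay extra word length only to index the new instructions, and you are right to discard the memory-preamble variant, which would only give $\bigO(T_e + \sum_i |w_i|)$. One detail to make explicit is how a table entry returns to its call site and writes the correct destination variable (save a return address before jumping, or use one table per $\ram{READ}$ site); either way this costs $\bigO(1)$ time and only $\bigO(\log(|e|+\sum_i |w_i|))$ extra bits.
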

\noindent Indeed, we can computably replace all $\ram{READ}$ instructions $\ram{var}_i \leftarrow \ram{I}[\ram{var}_j][\ram{var}_k]$ in the input program $e \in \{0,1\}^{\ast}$ by a $\ram{GOTO}$ to the end of the program $e$, where we add instructions to either read $\ram{I}[\ram{var}_j-m]$ or the hardcoded value of the corresponding $w_{\ram{var}_j}$. This can be implemented in time $\bigO(1)$ using arithmetic directly on the program counter. Since these modifications introduce new operations inside the program $e$, all the original $\ram{GOTO}$ instructions should be updated accordingly. The word length increases in order to index all these new instructions.

\medskip
We now consider Kleene's fixed-point theorems. Somewhat hidden in~\cite{Kleene_1938:on-notation-for-ordinal-numbers}, with modern phrasing from~\cite[Chapter~11]{Rogers_1967:theory-recursive-functions-effective-computability}, the following theorem proves that all total computable functions admit programs whose behavior they leave unchanged:
\begin{proposition}[name={Fixed point theorem},label={calc:prop:fixed-point-theorem}]
  For every computable total function $F \colon \{0,1\}^{\ast} \to \{0,1\}^{\ast}$, there exists a program $e \in \{0,1\}^{\ast}$ such that $\funram{e} = \funram{F(e)}$. Furthermore, time complexity and word length respectively satisfy\footnote{All the following additive $\bigO(1)$ terms actually depend on the computable function $F$} $T_{e}(\ram{I}) = \bigO(T_{F(e)}(\ram{I})) + \bigO(1)$ and $W_{e}(\ram{I}) = W_{F(e)}(\ram{I}) + \bigO(1)$.
\end{proposition}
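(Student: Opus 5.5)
The plan is to obtain the statement from the $S^m$ theorem (\Cref{calc:prop:smn-theorem}) together with the universal interpreter $e_{\mathcal{U}}$, following the classical diagonal argument of Kleene while tracking the complexity costs. First, define the total computable ``diagonal'' function $D \colon \{0,1\}^{\ast} \to \{0,1\}^{\ast}$ by $D(x) = S^1(x,x)$. By \Cref{calc:prop:smn-theorem}, $\funram{D(x)}(\ram{I}) = \funram{x}(x \cdot \ram{I})$.

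Next, consider the program $a$ that, on input $x \cdot \ram{I}$, computes $y = F(D(x))$ and then runs $e_{\mathcal{U}}$ on $y \cdot \ram{I}$. Thus $\funram{a}(x \cdot \ram{I}) = \funram{F(D(x))}(\ram{I})$. We set $e = D(a) = S^1(a,a)$. Then
\[ \funram{e}(\ram{I}) = \funram{a}(a \cdot \ram{I}) = \funram{F(D(a))}(\ram{I}) = \funram{F(e)}(\ram{I}), \]
which gives the fixed-point equality.

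For the complexity bounds, observe that on every run of $e$ the first phase of $a$ computes $F(D(a))$ with both $a$ and $F$ fixed. Its cost is therefore a constant number of steps, depending only on $F$, and it uses a constant word length. In particular, the hardcoded input $a$ is a fixed word, so the additive $\bigO(\log(|a|+|a|))$ word-length term from \Cref{calc:prop:smn-theorem} is $\bigO(1)$. The second phase is the interpretation of $F(e)$ on $\ram{I}$, which by the interpreter proposition costs $\bigO(T_{F(e)}(\ram{I}))$ time and $W_{F(e)}(\ram{I}) + \bigO(\log|F(e)|)$ word length. Since $F(e)$ is a fixed word, the logarithmic term is again $\bigO(1)$. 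Composing with the $\bigO(\cdot)$ overheads of the $S^1$ transformation yields $T_e(\ram{I}) = \bigO(T_{F(e)}(\ram{I})) + \bigO(1)$ and $W_e(\ram{I}) = W_{F(e)}(\ram{I}) + \bigO(1)$.

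The main obstacle is that the computation of $F(e)$ is itself performed at run time inside $e$, and it is not obvious that this step is harmless. If it were charged as time depending on input, the bounds would fail. The point is that it is input-independent, hence a constant. A small care point is that the word length of this phase must also be bounded by a constant: this holds because it is a single fixed terminating computation, so its maximal word length is some finite number. If preferred, one can instead hardcode $F(e)$ to avoid recomputation, but the diagonal argument requires computing it from $x$, so the constant-cost argument is the route I would take.
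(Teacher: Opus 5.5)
Your proof is correct and follows essentially the same route as the paper: Kleene's diagonalization through the $S^m$ theorem, where the self-referential first phase costs an input-independent constant and the second phase is an interpretation of $F(e)$ by $e_{\mathcal{U}}$. The only difference is cosmetic. You compute the diagonal $S^1(x,x)$ directly inside $a$ and set $e = S^1(a,a)$, whereas the paper's auxiliary program interprets $\funram{e'}(e')$ and takes $e = \funram{a}(a)$ for a program $a$ computing $e' \mapsto s(b,e')$.
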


\begin{proof}
  We follow the classical proof and analyze its time and word length. Let us fix $b \in \{0,1\}^{\ast}$ the following program: on a given input array $\ram{I} \in \{0,1\}^{\ast\ast}$:
  \begin{enumerate}
  \item Denoting $e' = \ram{I}[0]$ and $\ram{I}' = \ram{I}[1\mspace{-2mu}\texttt{:}]$, compute $\funram{e'}(e') \in \{0,1\}^{\ast\ast}$; if this computation halts, denote $\ram{O} \in \{0,1\}^{\ast\ast}$ the output result;
  \item If $\ram{O}$ is actually of length $\len(\ram{O}) = 1$ (\textit{i.e.}~${\ram{O} \in \{0,1\}^{\ast}}$), compute $F(\ram{O})$ (otherwise, reject);
  \item Compute and return $\funram{F(\ram{O})}(\ram{I}') \in \{0,1\}^{\ast\ast}$.
  \end{enumerate}
  Applying the $S^m$ theorem, there exists some total and computable $s \colon \{0,1\}^{\ast} \times \{0,1\}^{\ast} \to \{0,1\}^{\ast}$ satisfying $\funram{s(b,e')}(\ram{I}) = \funram{b}(e' \!\cdot \ram{I})$ for all $e' \in \{0,1\}^{\ast}$ and $\ram{I} \in \{0,1\}^{\ast\ast}$. Since $s$ is computable, there exists a program $a \in \{0,1\}^{\ast}$ such that $\funram{a}(e') = s(b,e')$; and since $s$ is total, so is $\funram{a} \colon \{0,1\}^{\ast} \to \{0,1\}^{\ast}$. Thus, the program $e = \funram{a}(a) \in \{0,1\}^{\ast}$ is well-defined; and for any input $\ram{I} \in \{0,1\}^{\ast\ast}$, we have:
  \[ \funram{e}(\ram{I}) = \funram{s(b,a)}(\ram{I}) = \funram{b}(a \cdot \ram{I}) = \funram{F(\funram{a}(a))}(\ram{I}) = \funram{F(e)}(\ram{I}),\]
  since $\funram{a}(a)$ halts and $F$ is total.

  \smallskip
  We now consider the time and word length of $e$. To do so, we fix some program $\code{F}$ for $F$:
  \begin{itemize}
  \item By the $S^m$ theorem, the time complexity $T_{e}(\ram{I}) = T_{s(b,a)}(\ram{I})$ is $\bigO(T_{b}(a \cdot \ram{I}))$. By definition, $T_{b}(a \cdot \ram{I}) = T_{e_{\mathcal{U}}}(a \cdot a) + T_{\code{F}}(e) + T_{e_{\mathcal{U}}}(F(e) \cdot \ram{I})$. Since $a \in \{0,1\}^{\ast}$ and $e \in \{0,1\}^{\ast}$ only depend on our choice of $\code{F}$, $T_{e_{\mathcal{U}}}(a \cdot a)$ and $T_{\code{F}}(e)$ are both constants for varying $\ram{I}$, and by our choice of universal machine $T_{e_{\mathcal{U}}}(F(e) \cdot \ram{I}) = \bigO(T_{F(e)}(\ram{I}))$. Thus:
    \[ T_{e}(\ram{I}) = \bigO(T_{F(e)}(\ram{I})) + \bigO(1). \]
  \item With the same reasoning, we obtain that $W_{e}(\ram{I}) = W_{s(b,a)}(\ram{I}) = W_{b}(a \cdot \ram{I}) + \bigO(\log |a|)$. Since $W_{b}(a \cdot \ram{I} = W_{e_{\mathcal{U}}}(a \cdot a) + W_{\code{F}}(e) + W_{e_{\mathcal{U}}}(F(e) \cdot \ram{I})$, and that $W_{e_{\mathcal{U}}}(a \cdot a)$, $W_{\code{F}}(e)$, $\log |a|$ and $\log |F(e)|$ are constants for varying $\ram{I}$, we deduce that:
    \[ W_{e}(\ram{I}) = W_{F(e)}(\ram{I}) + \bigO(1). \qedhere\]
  \end{itemize}
  \end{proof}

  In conjunction with the \crtlnameref{calc:prop:smn-theorem}, it is classically used to obtain programs that can access their own code (second fixed-point theorem): for every computable function ${f \pcolon \{0,1\}^{\ast\ast} \mto \{0,1\}^{\ast\ast}}$, there exists a program $e \in \{0,1\}^{\ast}$ such that, for every $\ram{I} \in \{0,1\}^{\ast\ast}$, we have:
  \[ \funram{e}(\ram{I}) = f(e \cdot \ram{I}). \]
  Nevertheless, the fixed-point tiling construction in this article will be based upon the \crtlnameref{calc:prop:fixed-point-theorem} (\Cref{calc:prop:fixed-point-theorem}).

\subsection{Grid-connected processor arrays}\label{calc:sec:processors}
When embedding universal computations within tilings of domain $\ibox{n}^2$, one usually draws the space-time diagram of a Turing machine: spatially, the tape occupies $n$ cells on each line; and temporally, $n$ lines allow for $n$ steps of computations. However, we have rarely seen any consideration being given to higher-dimensional embeddings. In this direction, we consider the model of \emph{processor arrays}: a highly parallel model of computation that appears especially suitable for tilings of $d$-dimensional rectangles $\ibox{n_1,\dots,n_d}$.

\subsubsection{Processor arrays}

Processor arrays are a distributed model of computation in which the nodes of a graph, sometimes called \emph{processors}, can perform some elementary operations and most importantly communicate with their neighbors to globally compute a function~\cite[Chapter~5]{Akl_1985:parallel-sorting-algorithms}.

\begin{figure}[ht]
  \scalebox{0.6}{
    \begin{tikzpicture}[scale=0.9]
      \foreach[count=\m from 0] \i in {0,2.8,5.6,8.4} {
        \foreach[count=\n from 0] \j in {0,1.8,3.6,5.4} {
          \draw[fill=white] (\i,\j) rectangle ++(2.2,1.2) node[midway] {$P_{(\m,\n)}$};
          \ifnum\m<3{
              \draw ($(\i,\j) + (2.2,0.6)$) -- ++(0.6,0);
            }\else{
              \draw ($(\i,\j) + (2.2,0.6)$) -- ++(0.3,0);
            }\fi
          \ifnum\n<3{
              \draw ($(\i,\j) + (1.1,1.2)$) -- ++(0,0.6);
            }\else{
              \draw ($(\i,\j) + (1.1,1.2)$) -- ++(0,0.3);
            }\fi
        }
      }
      \foreach[count=\m from 0] \i in {0,2.8,5.6,8.4} {
        \draw (\i,8.4) rectangle ++(2.2,1.2) node[midway] {$P_{(\m,n_2\text{-}1)}$};
        \ifnum\m<3{
            \draw ($(\i,8.4) + (2.2,0.6)$) -- ++(0.6,0);
          }\else{
            \draw ($(\i,8.4) + (2.2,0.6)$) -- ++(0.3,0);
          }\fi
        \draw ($(\i,8.4) + (1.1,0)$) -- ++(0,-0.3);
        \draw ($(\i,7.6) + (1.1,0)$) node[font=\LARGE] {$\vdots$};
      }
      \foreach[count=\n from 0] \j in {0,1.8,3.6,5.4} {
        \draw (12.6,\j) rectangle ++(2.2,1.2) node[midway] {$P_{(n_1\text{-}1,\n)}$};
        \draw ($(12.6,\j) + (0,0.6)$) -- ++(-0.3,0);
        \draw ($(11.6,\j) + (0,0.6)$) node[font=\LARGE] {$\ldots$};
        \ifnum\n<3{
            \draw ($(12.6,\j) + (1.1,1.2)$) -- ++(0,0.6);
          }\else{
            \draw ($(12.6,\j) + (1.1,1.2)$) -- ++(0,0.3);
          }\fi
      }
      \draw (12.6,8.4) rectangle ++(2.2,1.1) node[midway] {$P_{(n_1\text{-}1,n_2\text{-}1)}$};
      \draw (13.7,8.4) -- ++(0,-0.3);
      \draw (13.7,7.6) node[font=\LARGE] {$\vdots$};
      \draw (12.6,9) -- ++(-0.3,0);
      \draw (11.6,9) node[font=\LARGE] {$\ldots$};
    \end{tikzpicture}}

  \caption{A processor array on the rectangle $\ibox{n_1,n_2}$.}
\end{figure}

We call \emph{processor} a register machine $((\ram{var}_i)_{i \in V},\ram{PC})$ holding finitely many variables ${\ram{var}_i \in \{0,1\}^{\ast}}$ and a \emph{program counter} $\ram{PC}$. In addition to performing (non-deterministic) arithmetic operations on its variables and control flow operations on its program counter, we assume that processors can be \emph{connected} to their (at most) $2d$ neighbors and that an instruction $\ram{COMM} : \ram{var}_i \leftarrow P_{\ram{var}_j}(\ram{var}_k)$ allows a processor to read the variable of index $\ram{var}_k$ in its neighbor of index $\ram{var}_j$.

A \emph{processor array} is then a pair $(\rect{r},e)$ consisting of a rectangle $\rect{r} = \ibox{n_1,\dots,n_d}$ and a global program $e \in \{0,1\}^{\ast}$, which specifies the finite set of variables $(\ram{var}_i)_{i \in V}$ and a sequence of instructions. In order to perform computations, we place processors running the program $e$ at each node in the graph of vertices $\rect{r}$, and connect facet-adjacent processors with wires. \emph{Finally, the word length of the variables in a processor is assumed to be bounded by $\smash{\bigO(\log(\sum_{k=1}^d n_k))}$}.

\medskip
For a program $e \in \{0,1\}^{\ast}$ defining variables $V$, and for a variable $\ram{var} \in V$, we can define the projection $\pi_{\ram{var}} \colon (\{0,1\}^{\ast})^{V} \times \N \to \{0,1\}^{\ast}$ projecting the \emph{state} of a processor (\textit{i.e.}~the value of its variables and of its program counter $\ram{PC}$) to the value of its variable $\ram{var}$. This projection extends to the states of whole processor arrays $\pi_{\ram{var}} \colon ((\{0,1\}^{\ast})^{V} \times \N)^{\rect{r}} \to (\{0,1\}^{\ast})^{\rect{r}}$.

\medskip
In order to compute with a processor array $(\rect{r},e)$, we will assume that the set of variables $V$ defined by $e$ includes at least three variables $\ram{pos} = (\rect{r},\vec{i})$ (the position of the processor $\vec{i} \in \rect{r}$), $\ram{input} \in \{0,1\}^{\ast}$ and $\ram{output} \in \{0,1\}^{\ast}$. A computation step of the global array then consists in synchronously executing one instruction on each of its processors. By considering the mapping between the initial $\ram{input}$ and final $\ram{output}$ variables (\textit{i.e.}~resp.~$\pi_{\ram{input}}$ and $\pi_{\ram{output}}$) after all processors have halted in an accepting state, a processor array defines a non-deterministic (partial) function $(\{0,1\}^{\ast})^{\rect{r}} \mto (\{0,1\}^{\ast})^{\rect{r}}$.

\subsubsection{The sorting problem}

In a processor array of domain $\rect{r}$, assume that each processor is given an integer as $\ram{input}$. How fast can the resulting array get sorted in boustrophedon order?

\begin{proposition}[name={\cite{Thompson-Kung_1977:sorting-on-mesh-connected-parallel-computer}},label={calc:prop:pa-sorting}]
  There exists a deterministic program $e \in \{0,1\}^{\ast}$ such that, for any rectangle $\rect{r} = \ibox{n_1,\dots,n_d}$, the processor array $(\rect{r},e)$ solves the sorting problem in time $\bigO(\sum_{k=1}^d n_k)$.
\end{proposition}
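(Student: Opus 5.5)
We argue by induction on the dimension $d$, implementing a multidimensional shearsort-style algorithm in the spirit of Thompson and Kung. The base case $d=1$ is odd--even transposition sort on a linear array of $n_1$ processors. Each processor repeatedly compares its value with the neighbor on its left or right, alternating parity at each round, and swaps if they are out of order. This sorts in $n_1$ rounds, each a constant number of $\ram{COMM}$ and arithmetic instructions, so the running time is $\bigO(n_1)$. Every processor knows its position through $\ram{pos}$, so it can decide its parity and its role locally. Only $\bigO(\log(\sum_k n_k))$-bit words are needed: an input value, a neighbor's value, a round counter, and coordinates.

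For the inductive step, view $\rect{r} = \ibox{n_1,\dots,n_d}$ as $n_1$ slabs $\{i\}\times\ibox{n_2,\dots,n_d}$. We run the following merge-based scheme, following Thompson--Kung and its $d$-dimensional generalization, where sorting in $d$ dimensions costs a constant number of $(d-1)$-dimensional sorts plus $\bigO(\sum_k n_k)$ one-dimensional work.
\begin{enumerate}
\item Sort each slab recursively in boustrophedon order, increasing or decreasing according to the parity of $i$.
\item Perform odd--even transposition along the $\basis{1}$ direction on every line $\ibox{n_1}\times\{\vec{j}\}$.
\item Re-sort the slabs recursively.
\item Finish with a cleanup consisting of $\bigO(1)$ rounds of transposition between adjacent positions in the boustrophedon order. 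These positions are adjacent in $\rect{r}$ by the locality remark on the boustrophedon order.
\end{enumerate}

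Correctness is proved with the 0--1 principle. All operations are oblivious compare-exchanges, so it suffices to treat $\{0,1\}$-valued inputs. After step 1, each slab is a sorted block described by a single count of zeros. After step 2, the column sort along $\basis{1}$ leaves the slab counts differing by at most a bounded number of entries between consecutive slabs. This is the standard ``dirty region'' argument: at most a constant number of slabs remain unsorted (``dirty''). The second slab sort and the final local cleanup then confine and eliminate that dirty region.

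The step I expect to be the main obstacle is the time bound. A naive recursion gives $T_d = 2T_{d-1} + \bigO(n_1)$, which only yields $\bigO(2^d\sum_k n_k)$. That is still $\bigO(\sum_k n_k)$ for fixed $d$, and fixed $d$ is all that is needed here, since the dimension is fixed throughout the paper. If a more careful constant were needed, I would instead invoke Thompson--Kung's $s^2$-way merge directly. Along the way I must check two things: that the cleanup needs only $\bigO(\max_k n_k)$ rounds, using the dirty-region bound, and that the program is uniform in $\rect{r}$, since each processor computes its slab index and parity from $\ram{pos}$. Together these give the stated $\bigO(\sum_{k=1}^d n_k)$ deterministic sorting program $e$.
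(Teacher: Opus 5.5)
There is a genuine gap in the correctness step. Your claim that after step~2 only a bounded number of slabs are dirty is false. One pass of ``sort the slabs in alternating boustrophedon order, then sort every line $\ibox{n_1}\times\{\vec{j}\}$'' is exactly one phase of shearsort. The $0$--$1$ argument for it only shows that the number of dirty slabs is \emph{halved}: each pair of slabs $2k,2k+1$ contributes at most one unit of variation to the number of zeros on the lines along $\basis{1}$.

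Concretely, take $d=2$ and $n_1=n_2=n$ with $4\mid n$. Let each even slab contain $n/2$ zeros and each odd slab contain only ones.
\begin{enumerate}
\item After step~1, the zeros of the even slabs sit at second coordinates $j<n/2$.
\item Step~2 moves the $n/2$ zeros of each line $\ibox{n}\times\{j\}$ with $j<n/2$ to first coordinates $i<n/2$. So slabs $0,\dots,n/2-1$ each contain exactly $n/2$ zeros.
\item After step~3, the boustrophedon word is $(0^{n/2}1^{n/2})^{n/2}1^{n^2/2}$ instead of $0^{n^2/4}1^{3n^2/4}$.
\end{enumerate}
That leaves $n/2$ dirty slabs, and some zeros are $n^2/4-n$ positions away from their target in the boustrophedon order. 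A round of transpositions along the boustrophedon path moves each element by at most one position. So neither $\bigO(1)$ nor $\bigO(\max_k n_k)$ cleanup rounds can finish the sort; your text uses both bounds, and both fail. Repeating the phase until the dirty region vanishes is correct, but costs an extra factor $\log n_1$.

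The missing ingredient is what makes Thompson--Kung linear, and it has two parts.
\begin{itemize}
\item The recursion must be on sub-blocks that shrink geometrically in \emph{every} coordinate, not on full-size slabs. Then the recursive costs form a geometric series, $T(n)=T(n/s)+\bigO(n)$.
\item The merge must contain a data-spreading step. Examples are Thompson--Kung's $s^2$-way merge, or an unshuffle/all-to-all mapping between two rounds of block sorting as in Schnorr--Shamir or Kunde. Its $0$--$1$ analysis confines the dirty region to a window of sublinear length in the boustrophedon order, after which $\bigO(n)$ odd--even transposition steps along the path suffice.
\end{itemize}
A plain sort along $\basis{1}$ does not spread each slab's contents evenly over all slabs, which is exactly why your dirty-region bound fails. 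For the record, the paper does not reprove this proposition; it simply cites Thompson--Kung.
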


\subsubsection{Simulation of $\log$-RAM programs} \Cref{calc:prop:pa-sorting} provides a deterministic sorting algorithm that, combined with non-determinism, allows to efficiently simulate $\log$-RAM programs.

\medskip
For an input array $\ram{I} \in \{0,1\}^{\ast\ast}$ of bit length $s$, we define $\mathrm{flat}_{\ram{I}} \colon \dom(\ram{I}) \to \interval{0}{s-1}$ the indexing function that increases along the lexicographic order, \textit{i.e.}~$\mathrm{flat}_{\ram{I}}(i,j) = \sum_{i' < i} \len(\ram{I}[i']) + j$. For $\rect{r} \in \Srect_0$, we define the folding $\mathrm{fold}_{\rect{r}} \colon \{0,1\}^{\ast\ast} \to (\N^2 \times \{0,1\})^{\rect{r}}$ which maps inputs $\ram{I} \in \{0,1\}^{\ast\ast}$ to the boustrophedon folding of their flattenings (\textit{i.e.}~the patterns $w \in (\N^2 \times \{0,1\})^{\rect{r}}$ such that, if $\vec{i} \in \rect{r}$ is the position of index $\mathrm{flat}_{\ram{I}}(i,j)$ in the boustrophedon ordering of $\rect{r}$, then $w_{\vec{i}} = (i,j,\ram{I}[i][j])$).

Recall that the \emph{volume} of a rectangle $\rect{r} = \ibox{n_1,\dots,n_d}$ is, by definition, $\rvol{\rect{r}} = \prod_{k=1}^d n_k$. In what follows, we prove that there exists a program $e_{\mathcal{U}}$ such that processor arrays $(\rect{r},e_{\mathcal{U}})$ can simulate $\smash{\prod_{k=1}^d n_k}$ steps of $\log$-RAM computations in time $\smash{\bigO(\sum_{k=1}^d n_k)}$:

\begin{lemma}[name={Accelerated RAM simulations},label={calc:lem:pa-ram-simulation}]
  There exists a program $e_{\mathcal{U}}$ such that, for any RAM program $e \in \{0,1\}^{\ast}$ of logarithmic word length $w(s) = \bigO(\log s)$, for every input array $\ram{I} \in \{0,1\}^{\ast\ast}$, for every rectangle $\rect{r} = \ibox{n_1,\dots,n_d}$ and for every $t \in \N$ such that $t \leq \rvol{\rect{r}} = \prod_{k=1}^d n_k$, if the processor array $(\rect{r},e_{\mathcal{U}})$ is initialized as follows:
  \begin{itemize}
  \item The $\ram{program}$ variable in each processor contains the program $e \in \{0,1\}^{\ast}$;
  \item The $\ram{word}$ variable in each processor contains an upper bound on $w(\rvol{\rect{r}})$;
  \item The $\ram{timeout}$ variable in each processor contains the time bound $t \in \N$;
  \item The $\ram{input}$ variables of the processors (\textit{i.e.}~the projection $\pi_{\ram{input}}$) form\footnote{\label{ftn:pa-fold-encoding}Actually, $\mathrm{fold}_{\rect{r}}$ returns a rectangular pattern over the alphabet $\N^2 \times \{0,1\}^{\ast}$ instead of $\{0,1\}^{\ast}$; thus, this equality holds only up to decoding binary strings of $\{0,1\}^{\ast}$ into tuples of $\N^2 \times \{0,1\}^{\ast}$.} the pattern $\mathrm{fold}_{\rect{r}}(\ram{I})$;
  \end{itemize}
  then the outputs $w \in (\{0,1\}^{\ast})^{\rect{r}}$ yielded by accepting computations are exactly\footref{ftn:pa-fold-encoding} the patterns $\mathrm{fold}_{\rect{r}}(\ram{O})$ for $\ram{O} \in \funram{e}(\ram{I})\halt[t]$, \textit{i.e.}~for $\ram{O}$ computed by $e$ in time $t$.

  \noindent Furthermore, the processor array $(\rect{r},e_{\mathcal{U}})$ halts in time $\bigO(\sum_{k=1}^d n_k)$; and the word length in each processor is bounded by $w(s) + \bigO(\log \rvol{\rect{r}}) = \bigO(\log \rvol{\rect{r}})$ during accepting computations.
\end{lemma}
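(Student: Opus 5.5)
The plan is the standard "guess a run, verify it by sorting" simulation of a RAM by a mesh, in the spirit of the PRAM-on-mesh simulations. Since $t \leq \rvol{\rect{r}}$, I would assign one processor to each time step $\tau < t$ of the simulated run. First, every processor non-deterministically guesses the \emph{trace record} of step $\tau$: the value of $\ram{PC}$, the values of all variables of $e$ before and after the step, and the single memory/input/output access made at that step (its kind, address and value). Its own index $\tau$ is computed from $\ram{pos}$ via the boustrophedon order. All of these fields have length $w(\rvol{\rect{r}}) + \bigO(\log\rvol{\rect{r}})$, since $e$ has logarithmic word length and $s \le \rvol{\rect r}$ is needed for the folding anyway. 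The program $e$ itself has constant size for fixed $e$, but it must be read from the $\ram{program}$ variable; the instruction at $\ram{PC}$ is therefore fetched by a constant number of indexed reads, or, to stay uniform in $e$, by sorting requests against copies of $e$'s instructions.

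The local checks come next. Each processor verifies that its record is consistent with the semantics of the instruction at $\ram{PC}$, including the non-deterministic $\ram{NDET}$, which simply constrains the guessed value. It then compares its "after" state with the "before" state of processor $\tau+1$. Because consecutive boustrophedon indices are adjacent, this comparison costs one $\ram{COMM}$. The initial record must have empty variables, and the last one must end in $\ram{HALT}$ at some step $\le t$; shorter runs are padded with idle steps.

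The global checks handle memory and I/O consistency. Memory consistency requires that every $\ram{LOAD}$ at address $a$ and time $\tau$ returns the value of the latest $\ram{STORE}$ to $a$ before $\tau$, or the empty word. I would enforce this by sorting all access records with the key $(a,\tau)$ using \Cref{calc:prop:pa-sorting} in time $\bigO(\sum n_k)$, after which each record only checks its boustrophedon predecessor. Input reads are handled the same way: the input cells, which hold $(i,j,\ram{I}[i][j])$ as given by $\mathrm{fold}_{\rect r}$, are merged as pseudo-records with key $(\text{input},i,j,-\infty)$, together with the $\ram{READ}$ records. Since each processor then carries $\bigO(1)$ records, the sort is on a constant-factor-larger array and can be simulated with constant slowdown.

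For the output, the last $\ram{WRITE}$ to each cell $(i,j)$ is selected by the same sort. The written cells must then be routed to $\mathrm{fold}_{\rect r}(\ram{O})$ positions. This requires computing the lengths $\len(\ram{O}[i])$ and the prefix sums of $\mathrm{flat}_{\ram O}$ (a parallel prefix in the boustrophedon line, simulated in $\bigO(\sum n_k)$ via row/column scans), followed by a final sort on the flat index. A global AND of all checks is then broadcast in $\bigO(\sum n_k)$ steps, and any failure leads to rejection.

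Correctness follows by two inclusions. Every accepting run of $e$ within $t$ steps yields consistent guesses, so every $\mathrm{fold}_{\rect r}(\ram O)$ with $\ram{O}\in\funram{e}(\ram I)\halt[t]$ is produced. Conversely, any accepted guess is exactly a valid run. The main obstacle is the precise handling of output assembly, together with the requirement that word length stays $\bigO(\log\rvol{\rect r})$. In particular, addresses used by $e$ are bounded by $2^{w(s)}$, which is polynomial in $\rvol{\rect r}$, so keys fit. The obstacle is also that the sorting of \Cref{calc:prop:pa-sorting} must be applied to several records per processor; I would resolve this by subdividing or time-multiplexing with constant overhead.
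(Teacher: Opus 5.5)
Your proposal is correct and follows essentially the same route as the paper. Each processor guesses one trace step, indexed by its boustrophedon position, and checks program-counter continuity with its boustrophedon successor. The access records, merged with the folded input, are then sorted by address and time via \Cref{calc:prop:pa-sorting}, so that consistency reduces to neighbour comparisons, and a final sort lays out the output bits.

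There are two differences, both minor. First, you carry the whole variable file of $e$ in each step record and check it along the chain; the paper instead treats variables as one more address space validated by the same sort, so its records have size independent of the number of variables of $e$. Second, you handle repeated output writes and several records per processor explicitly, whereas the paper glosses over both.
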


\begin{proof}[Proof]
  For $e \in \{0,1\}^{\ast}$ a RAM program, assume that we are given a \emph{trace} of the program~$e$ as a list of instructions executed in chronological order (\textit{i.e.}~for each operation, we record the addresses and values that were read and written). By sorting the trace in lexicographic order (first by memory addresses, and then by time), the successive states of any given memory cell now appear consecutively: in other words, the validity of the global computation can be checked locally.

  Thus, if $e \in \{0,1\}^{\ast}$ is a $\log$-RAM program, the word length of the processor array is large enough for processors to each non-deterministically ``guess'' a computation step in $\bigO(1)$ steps. After checking in time $\bigO(1)$ that adjacent processors contain plausibly consecutive instructions (by considering the associated \emph{program counters} $\ram{PC}$), and sorting these records (by memory addresses and time) using \Cref{calc:prop:pa-sorting}, adjacent processors check the chronological consistency of the values stored in each simulated memory cell.
\end{proof}

\begin{details}
  Let us consider more formally the \emph{trace} of a RAM computation. A computation step of a RAM programs can be summarized as:
  \begin{enumerate}
  \item Reading a few variables/memory cells: for example, the instruction $\ram{ADD} : \ram{var}_1 \leftarrow \ram{var}_2 + \ram{var}_3$ reads the values of variables $\ram{var}_2$ and $\ram{var}_3$; the instruction $\ram{LOAD} : \ram{var}_1 \leftarrow \ram{M}[\ram{var}_2]$ reads the value of the variable $\ram{var}_2$ and the memory cell $\ram{M}[\ram{var}_2]$; etc\dots
  \item (Optional) Performing an operation on these read values (\textit{e.g.}~an addition, bit shift\dots);
  \item Writing a variable/memory cell: the instruction $\ram{ADD} : \ram{var}_1 \leftarrow \ram{var}_2 + \ram{var}_3$ updates the variable $\ram{var}_1$; the instruction $\ram{STORE} : \ram{M}[\ram{var}_1] \leftarrow \ram{var}_2$ changes the value of the memory cell $\ram{M}[\ram{var}_1]$;
  \item Updating the program counter $\ram{PC}$: either increasing it by $1$ for most instructions, or setting it to an arbitrary value by a $\ram{GOTO}$ instruction.
  \end{enumerate}

  \medskip
  We now design of a program $e_{\mathcal{U}} \in \{0,1\}^{\ast}$ for processor arrays. Among its variables, it defines a $\ram{program}$ variable that is assumed to contain a whole RAM program $e \in \{0,1\}^{\ast}$, a $\ram{word}$ variable holding an upper bound of the allowed word length, a $\ram{timeout}$ variable and a $\ram{simPC}$ variable indexing an instruction of~$e$. The program operates in two phases:

  \pagebreak
  \paragraph*{Part 1: Simulation} Reading its $\ram{pos} = (\rect{r},\vec{i})$ variable and checking with its neighbors that all processors share a common value $\ram{timeout} \in \N$, each processor computes in time $\bigO(1)$ its index $n \in \interval{0}{\rvol{r}-1}$ in the boustrophedon ordering of $\rect{r}$. If $n \leq \ram{timeout}$, it then non-deterministically ``guesses'' the $n$\textsuperscript{th} step of the RAM computation of $e$:
  \begin{enumerate}[topsep=3pt,itemsep=3pt]
  \item First, if $n=0$, the processor sets its $\ram{simPC}$ variable to $0$; otherwise, it uses non-determinism to fill a value $\ram{simPC}$ that indexes an instruction in $e$;
  \item Then, consider the instruction $\ram{INSTR}$ in $e$ indexed by $\ram{simPC}$:
    \begin{itemize}[topsep=2pt,itemsep=3pt]
    \item The instruction $\ram{INSTR}$ will result in at most three memory readings and one memory writing: using non-determinism, each processor fills three words $u, v, w \in \{0,1\}^{\ast}$ of word length $\ram{word}$ to act as ``guesses'' for these readings; and stores the associated records in (at most) four of its variables (say, $\ram{trace1}$, $\ram{trace2}$, $\ram{trace3}$ and $\ram{trace4}$).

      For example, an instruction $\ram{INSTR} = \ram{LOAD} : \ram{var}_i \leftarrow \ram{M}[\ram{var}_j]$ will generate two reading records and one writing record:
      \[
        \begin{array}{l}
          \ram{trace1} \leftarrow (\ram{type} = \ram{V}, \ram{READ}, \mathtt{time} = n, \mathtt{address} = \ram{var}_j, \mathtt{value} = u) \\
          \ram{trace2} \leftarrow (\ram{type} = \ram{M}, \ram{READ}, \mathtt{time} = n, \mathtt{address} = u, \mathtt{value} = v) \\
          \ram{trace3} \leftarrow (\ram{type} = \ram{V}, \ram{WRITE}, \mathtt{time} = n, \mathtt{address} = \ram{var}_i, \mathtt{value} = v);
        \end{array}
      \]
      an instruction $\ram{INSTR} = \ram{ADD} : \ram{var}_i \leftarrow \ram{var}_j + \ram{var}_k$ will also generate three records:
      \[
        \begin{array}{l}
          \ram{trace1} \leftarrow (\ram{type} = \ram{V}, \ram{READ}, \mathtt{time} = n, \mathtt{address} = \ram{var}_j, \mathtt{value} = u) \\
          \ram{trace2} \leftarrow (\ram{type} = \ram{V}, \ram{READ}, \mathtt{time} = n, \mathtt{address} = \ram{var}_k, \mathtt{value} = v) \\
          \ram{trace3} \leftarrow (\ram{type} = \ram{V}, \ram{WRITE}, \mathtt{time} = n, \mathtt{address} = \ram{var}_i, \mathtt{value} = x);
        \end{array}
      \]
      where $x = u+v$ is the result of the addition $u+v$, as computed by the processor directly; and an instruction $\ram{INSTR} = \ram{WRITE} : \ram{O}[\ram{var}_i][\ram{var}_j] \leftarrow \ram{var}_k$ will generate four records:
        \[
          \begin{array}{l}
            \ram{trace1} \leftarrow (\ram{V}, \ram{READ}, \mathtt{time} = n, \mathtt{address} = \ram{var}_i, \mathtt{value} = u) \\
            \ram{trace2} \leftarrow (\ram{V}, \ram{READ}, \mathtt{time} = n, \mathtt{address} = \ram{var}_j, \mathtt{value} = v) \\
            \ram{trace3} \leftarrow (\ram{V}, \ram{READ}, \mathtt{time} = n, \mathtt{address} = \ram{var}_k, \mathtt{value} = w) \\
            \ram{trace4} \leftarrow (\ram{O}, \ram{WRITE}, \mathtt{time} = n, \mathtt{address} = (u,v), \mathtt{value} = w);
          \end{array}
        \]
      \item To conclude the simulation of $\ram{INSTR}$, the processor stores in a variable $\ram{simPC'}$ the new value of the instruction pointer, as either $\ram{simPC}+1$ or as written in $\ram{res}$.
      \end{itemize}
    \item Finally, the processor program $e_{\mathcal{U}}$ checks the \emph{chronological continuity} of the simulated computation step: reading the $\ram{simPC}$ variable of its successor (in boustrophedon order) inside the processor array, each processor checks that it equals its own variable $\ram{simPC}'$.
  \end{enumerate}

  If we think about this simulation informally, the processors have simulated a plausible sequence of RAM instructions for the program $e$ of length smaller than $\ram{timeout}$; but memory readings/writings may have been guessed incorrectly. We solve this issue in the second ``validation'' phase:

  \paragraph*{Part 2: Validation}
  \begin{enumerate}[topsep=3pt,itemsep=3pt]
  \item Verifying the chronological consistency of variable/memory interactions:
    \begin{itemize}
    \item Using the sorting procedure from \Cref{calc:prop:pa-sorting}~\cite{Thompson-Kung_1977:sorting-on-mesh-connected-parallel-computer}, the processor program $e_{\mathcal{U}}$ sorts the $\ram{input}$ and $\ram{trace}$ variables mixed together in lexicographic order (first by $\mathtt{type}$, then by $\mathtt{address}$, then by $\mathtt{time}$, and finally $\ram{READ}$ before $\ram{WRITE}$) along the boustrophedon indexing of $\rect{r}$.
    \item In the resulting array, all records of the given same address are consecutive in the boustrophedon ordering of $\rect{r}$, and these form a segment of memory interactions sorted in chronological order. In parallel time $\bigO(1)$, processors can communicate with their neighbors (in boustrophedon order) to check that these records are chronologically consistent (successive $\ram{READ}$ records, or two consecutive $\ram{WRITE}$ and $\ram{READ}$ records, should contain the same $\mathtt{value}$, etc\dots).
    \end{itemize}
  Thus, at this point, the RAM computation simulated by the processor array is both chronologically consistent and correctly initialized.
\item Correctly displaying the output:
  \begin{itemize}
  \item Using the sorting procedure described in \Cref{calc:prop:pa-sorting}~\cite{Thompson-Kung_1977:sorting-on-mesh-connected-parallel-computer}, the processor program $e_{\mathcal{U}}$ sorts the $\ram{trace}$ variables recording the writing of an output bit by addresses along the boustrophedon indexing of the domain $\rect{r}$;
  \item Finally, the processors having a $\ram{store}$ record $(\ram{type} = \ram{O},\mathtt{address} = (i,j), \mathtt{value} = b)$ copy their content to their variable $\ram{output} = (i,j,b)$.
  \end{itemize}
\end{enumerate}

If any of these local checks has failed, the processor array rejects the computation. Otherwise, the computation halts in an accepting state.
\end{details}

%
\section{Rectangles and wirings in Wang tilings}\label{pos:sec}

The fixed-point construction (\Cref{fix:sec}) will be based on computational embeddings inside the rectangles of $d$-dimensional grids $\grid{g} = (\rect{r}_{\vec{i}})_{\vec{i} \in \Z^d}$. This section will thus focus on drawing finite rectangles, rectangular grids, and transmitting information inside such rectangles.

\subsection{Drawing rectangles with Wang tiles}\label{pos:sec:rect}

In this section, we define an infinite tileset $\Swang[\Srect]$ to draw finite tilings of rectangles $\rect{r} \in \Srect_0$. To do so, consider $\alphabet[C]_{\Srect} \subseteq (\N^d \times \N^d) \cup \{\blank\}$ the set of \emph{positioning colors} defined as follows
\[ \alphabet[C]_{\Srect} = \{ ((n_1,\dots,n_d), \vec{i}) : \vec{i} \in \ibox{n_1,\dots,n_d} \} \cup \{\blank\},\]
where $\blank$ is a blank symbol. Identifying the tuple $(n_1,\dots,n_d)$ with the rectangle $\rect{r} = \ibox{n_1,\dots,n_d}$, we will always refer to a positioning color as a pair $(\rect{r},\vec{i})$ for $\vec{i} \in \rect{r}$.

\medskip
We now define $\Swang[\Srect]$ as the set of tiles $\wang{t} \in \alphabet[C]_{\Srect}^{2d+1}$ such that there exists $\rect{r} \in \Srect_0$, $\vec{i} \in \rect{r}$ and
\begin{itemize}
\item $\dec(\wang{t}) = (\rect{r},\vec{i})$;
\item If $\vec{i} \notin \facet{k}{-}(\rect{r})$, then $\facet{k}{-}(\wang{t}) = (\rect{r},\vec{i})$; otherwise, $\facet{k}{-}(\wang{t}) = \blank$ is left blank;
\item If $\vec{i} \notin \facet{k}{+}(\rect{r})$, then $\facet{k}{+}(\wang{t}) = (\rect{r}, \vec{i} + \basis{k})$; otherwise, $\facet{k}{+}(\wang{t}) = \blank$ is left blank;
\end{itemize}

\noindent We claim that the decorations of valid patterns on $\Swang[\Srect]$ exactly draw the rectangles of $\Srect_0$:
\begin{proposition}
  For any rectangle $\rect{r} \in \Srect_0$, there exists a valid pattern $w \in (\Swang[\Srect])^{\rect{r}}$ such that, for every position $\vec{i} \in \rect{r}$, we have $\dec(w_{\vec{i}}) = (\rect{r},\vec{i})$. Conversely: for any rectangle $\rect{r} \in \Srect_0$, for any valid pattern $w \in (\Swang[\Srect])^{\rect{r}}$ such that $\dec(w_{\vec{0}}) = (\rect{r},\vec{0})$ and for any position $\vec{i} \in \rect{r}$, we have $\dec(w_{\vec{i}}) = (\rect{r},\vec{i})$.
\end{proposition}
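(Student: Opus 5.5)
The plan is to handle the two directions separately. Both rest on one observation: the facet colors of a tile in $\Swang[\Srect]$ are determined by its decoration.

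For existence, fix $\rect{r} \in \Srect_0$. For each $\vec{i} \in \rect{r}$, let $w_{\vec{i}}$ be the unique tile of $\Swang[\Srect]$ with decoration $(\rect{r},\vec{i})$. Its facet colors are then forced by the three defining clauses. It remains to check validity along each direction $\basis{k}$. Take two adjacent positions $\vec{i}, \vec{i}+\basis{k} \in \rect{r}$. Then $\vec{i} \notin \facet{k}{+}(\rect{r})$, so $\facet{k}{+}(w_{\vec{i}}) = (\rect{r},\vec{i}+\basis{k})$. Also $\vec{i}+\basis{k} \notin \facet{k}{-}(\rect{r})$, so $\facet{k}{-}(w_{\vec{i}+\basis{k}}) = (\rect{r},\vec{i}+\basis{k})$. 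The two colors coincide, and the pattern is valid.

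For the converse, let $w$ be a valid pattern on $\rect{r}$ with $\dec(w_{\vec{0}}) = (\rect{r},\vec{0})$. I will prove $\dec(w_{\vec{i}}) = (\rect{r},\vec{i})$ by induction on $|\vec{i}|_1 = \sum_k i_k$; the base case is the hypothesis.

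For the inductive step, take $\vec{j} \neq \vec{0}$ in $\rect{r}$ and pick $k$ with $j_k \geq 1$, so that $\vec{i} = \vec{j} - \basis{k} \in \rect{r}$. By induction, $\dec(w_{\vec{i}}) = (\rect{r},\vec{i})$. Since $\vec{j} \in \rect{r}$, we have $i_k = j_k - 1 < n_k - 1$, so $\vec{i} \notin \facet{k}{+}(\rect{r})$. Hence $\facet{k}{+}(w_{\vec{i}}) = (\rect{r},\vec{j})$. By validity, $\facet{k}{-}(w_{\vec{j}}) = (\rect{r},\vec{j})$, and in particular this color is not $\blank$.

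Now let $(\rect{r}',\vec{j}')$ be the decoration of $w_{\vec{j}}$. Since its $\facet{k}{-}$ color is not blank, the first clause of the definition forces that color to equal $(\rect{r}',\vec{j}')$ itself. Therefore $(\rect{r}',\vec{j}') = (\rect{r},\vec{j})$, which completes the induction.

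The only delicate point is the inductive step just described. The identification of the decoration from a single non-blank facet uses the convention that a non-blank $\facet{k}{-}$ color of a tile is literally its own decoration. The rest is routine bookkeeping.
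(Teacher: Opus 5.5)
Your proof is correct. A tile of $\Swang[\Srect]$ is determined by its decoration, and any non-blank $\facet{k}{-}$ color of a tile equals its own decoration, so $(\rect{r},\vec{i})$ propagates from $\vec{0}$ by induction on $\sum_k i_k$; existence is the direct adjacency check you give. The paper states this proposition as a claim without writing out a proof, and your argument is exactly the routine verification it leaves implicit.
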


In fact, after adding a blank tile $\blank^{2d+1}$ to this tileset, the valid configurations over $\Swang[\Srect] \cup \{ \blank^{2d+1} \}$ are exactly the drawings of independent rectangles on $\Z^d$ over a blank background.

\subsection{Drawing grids with Wang tiles}\label{pos:sec:grids}

If one aims at drawing rectangular grids instead, we define a similar set of tiles $\Swang[\grid{G}]$ as the set of tiles $\wang{t} \in \alphabet[C]_{\Srect}^{2d+1}$ such that: there exists $\rect{r} \in \Srect_0$ and $\vec{i} \in \rect{r}$ for which
\begin{itemize}
\item $\dec(\wang{t}) = (\rect{r},\vec{i})$;
\item $\facet{k}{-}(\wang{t}) = (\rect{r},\vec{i})$;
\item If $\vec{i} \notin \facet{k}{+}(\rect{r})$, then $\facet{k}{+}(\wang{t}) = (\rect{r},\vec{i} + \basis{k})$; otherwise, $\facet{k}{+}(\wang{t}) = (\rect{r}', \vec{i} + \basis{k} \bmod \rect{r})$ for some rectangle $\rect{r}' \in \Srect_0$ such that $\rnorm{\facet{k}{+}(\rect{r})}$ and $\rnorm{\facet{k}{-}(\rect{r}')}$ are equal.
\end{itemize}

\noindent We claim that valid configurations on $\Swang[\grid{G}]$ draw non-regular rectangular grids:
\begin{proposition}
  For any grid $\grid{g} = (\rect{r}_{\vec{p}})_{\vec{p} \in \Z^d}$, there exists a unique valid tiling $x \in (\Swang[\grid{G}])^{\Z^d}\mspace{-8mu}$ such that, for any position $\vec{i} \in \Z^d$ appearing in the rectangle $\rect{r}_{\vec{p}}$ from the grid $\grid{g}$, we have $\dec(x_{\vec{i}}) = (\rnorm{\rect{r}_{\vec{p}}}, \vec{i} \bmod \grid{g})$.

  \noindent Conversely: for any valid tiling $x \in (\Swang[\grid{G}])^{\Z^d}\mspace{-8mu}$, there exists a unique grid $\grid{g} = (\rect{r}_{\vec{p}})_{\vec{p} \in \Z^d}$ such that, for any position $\vec{i} \in \Z^d$ appearing in the rectangle $\rect{r}_{\vec{p}}$ from the grid $\grid{g}$, we have $\dec(x_{\vec{i}}) = (\rnorm{\rect{r}_{\vec{p}}}, \vec{i} \bmod \grid{g})$.
\end{proposition}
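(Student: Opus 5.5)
The plan is to prove both directions by reading the tile rules as local constraints, and to let validity propagate the information carried by the decorations. The key observation is that in any valid tiling $x$, for every $\vec{j}$ and $k$ we have $\facet{k}{+}(x_{\vec{j}}) = \facet{k}{-}(x_{\vec{j}+\basis{k}}) = \dec(x_{\vec{j}+\basis{k}})$. This holds because the rules of $\Swang[\grid{G}]$ force $\facet{k}{-}(\wang{t}) = \dec(\wang{t})$. So a tile is entirely determined by its decoration and by the decorations of its $d$ successors $x_{\vec{j}+\basis{k}}$.

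\emph{Existence and uniqueness for a given grid.} Given $\grid{g} = (\rect{r}_{\vec{p}})_{\vec{p}}$, I define $x_{\vec{i}}$ by setting $\dec(x_{\vec{i}}) = (\rnorm{\rect{r}_{\vec{p}}}, \vec{i} \bmod \grid{g})$, $\facet{k}{-}(x_{\vec{i}}) = \dec(x_{\vec{i}})$ and $\facet{k}{+}(x_{\vec{i}}) = \dec(x_{\vec{i}+\basis{k}})$. It then remains to check that each $x_{\vec{i}}$ belongs to $\Swang[\grid{G}]$. If $\vec{i}+\basis{k}$ lies in the same rectangle, its decoration is $(\rect{r}, \vec{i}\bmod\grid{g} + \basis{k})$, as required. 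Otherwise $\vec{i}$ lies on the facet $\facet{k}{+}$ of $\rect{r}_{\vec{p}}$, and $\vec{i}+\basis{k}$ lies in $\rect{r}_{\vec{p}+\basis{k}}$. By facet-adjacency, $\facet{k}{+}(\rect{r}_{\vec{p}})$ and $\facet{k}{-}(\rect{r}_{\vec{p}+\basis{k}})$ match. Hence their normalizations coincide, and the local coordinates of $\vec{i}+\basis{k}$ in the new rectangle are exactly $(\vec{i}\bmod\grid{g}+\basis{k}) \bmod \rect{r}$: the $k$-th coordinate becomes $0$ and the others are unchanged. Validity holds by construction. Uniqueness follows from the key observation: the decorations are prescribed, and every facet color is forced by decorations.

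\emph{Converse.} Given a valid $x$ with $\dec(x_{\vec{j}}) = (\rect{r}_{\vec{j}}, \vec{i}_{\vec{j}})$, let $R(\vec{j}) = \vec{j} - \vec{i}_{\vec{j}} + \rect{r}_{\vec{j}}$. I will show that $\vec{j}' \in R(\vec{j})$ implies $\dec(x_{\vec{j}'}) = (\rect{r}_{\vec{j}}, \vec{i}_{\vec{j}} + \vec{j}'-\vec{j})$, by walking along unit steps inside $R(\vec{j})$.
\begin{itemize}
\item A forward step $+\basis{k}$ that stays inside the rectangle is given directly by the tile rule.
\item For a backward step $-\basis{k}$ from a cell with $k$-th local coordinate $>0$, I examine the tile at $\vec{j}-\basis{k}$. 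Its $\facet{k}{+}$ color equals $\dec(x_{\vec{j}})$. If that tile were on its own maximal facet, the rule would give the successor $k$-th coordinate $0$, a contradiction. So that tile is in the first case of the rule, which forces the same rectangle and position $\vec{i}_{\vec{j}}-\basis{k}$.
\end{itemize}
Hence the sets $R(\vec{j})$ are well defined, and two of them either coincide or are disjoint. They therefore partition $\Z^d$ into translates of rectangles whose normalizations and local coordinates match the decorations. For the crossing step, the tile rule shows the following: when $\vec{j}$ is on $\facet{k}{+}$ of its rectangle, the neighbor $\vec{j}+\basis{k}$ has $k$-th coordinate $0$, the same other local coordinates, and a rectangle $\rect{r}'$ with $\rnorm{\facet{k}{+}(\rect{r})} = \rnorm{\facet{k}{-}(\rect{r}')}$. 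Applying this to every cell of the facet, using the result of the previous paragraph, I get that the whole facet $\facet{k}{+}$ of each rectangle abuts exactly the facet $\facet{k}{-}$ of a single rectangle, with aligned corners.

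\emph{Main obstacle.} The main obstacle is turning this partition into a grid indexed by $\Z^d$, i.e.\ checking that the successor maps $\rect{r} \mapsto$ ``the rectangle across $\facet{k}{+}$'' are bijections that commute pairwise. I would handle this with a corner-based argument. The bottom-left corner $\vec{c}$ of a rectangle determines it. By facet alignment, the corner of its $\basis{k}$-successor is $\vec{c} + n_k\basis{k}$, and this successor has the same side lengths as $\rect{r}$ in all directions other than $k$. Both orders of crossing in directions $k$ and $k'$ land on the rectangle containing $\vec{c} + n_k\basis{k} + n_{k'}\basis{k'}$, so the two routes reach the same rectangle. Backward moves are handled symmetrically through the $\facet{k}{-}$ facets, which gives bijectivity. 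Fixing the rectangle containing $\vec{0}$ as $\rect{r}_{\vec{0}}$ then defines an indexing by $\Z^d$ with facet-adjacency built in. This indexing covers every rectangle: the index walk moving along any unit path between cells reaches the rectangle containing each point. Finally, the grid is unique because the partition is read off directly from the decorations.
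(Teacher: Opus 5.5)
The paper does not prove this proposition. It is only asserted (``We claim that valid configurations on $\Swang[\grid{G}]$ draw non-regular rectangular grids''), so there is no argument to compare yours against.

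Your proof is correct and fills the gap in the natural way:
\begin{itemize}
\item Every tile of $\Swang[\grid{G}]$ satisfies $\facet{k}{-}(\wang{t}) = \dec(\wang{t})$. Validity therefore forces $\facet{k}{+}(x_{\vec j}) = \dec(x_{\vec j + \basis{k}})$, which settles existence and uniqueness of the tiling for a given grid.
\item For the converse, the forward/backward propagation of decorations inside $R(\vec j)$ correctly recovers the partition.
\item The corner formula $\vec c \mapsto \vec c + n_k \basis{k}$ for the successor maps $S_k$, together with their bijectivity and pairwise commutation, correctly organizes that partition into a facet-adjacent family indexed by $\Z^d$.
\end{itemize}

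Two points each deserve one explicit line.

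First, a grid consists of \emph{disjoint} rectangles, so you need $\vec p \mapsto S_1^{p_1}\cdots S_d^{p_d}(\rect{r}_{\vec 0})$ to be injective. Equivalently, the $\Z^d$-action generated by the $S_k$ must be free. This follows from your corner formula: $S_k^{\pm 1}$ moves the corner strictly along $\basis{k}$, by at least $1$. Each $S_{k'}^{\pm 1}$ with $k' \neq k$ leaves the $k$-th coordinate of the corner unchanged.

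Second, your argument proves uniqueness of the \emph{partition}. Facet-adjacency forces $\rect{r}_{\vec p + \basis{k}} = S_k(\rect{r}_{\vec p})$, so the indexed family is unique once $\rect{r}_{\vec 0}$ is pinned down, for instance by $\vec 0 \in \rect{r}_{\vec 0}$. Without such a normalization it is unique only up to a translation of indices. That is the only sense in which the ``unique grid'' of the statement can hold, and it is how the paper itself later re-indexes grids.
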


\subsection{Drawing wires with Wang tiles}\label{pos:sec:wiring}

Finally, we will often need to route information inside rectangles using continuous wires. For any set of colors $\alphabet[C]$, we define the \emph{wiring colors} $\alphabet[C]_{\wire} = (\alphabet[C] \times \N)^{\ast} \cup (\alphabet[C] \times \{\mathtt{start},\mathtt{end},\mathtt{cross}\})^{\ast}$; and for a fixed constant $\K[][cross] \in \N$, we define the \emph{wiring tileset} $\Swang[\wire](\alphabet[C])$ as the set of tiles $\wang{t} \in (\alphabet[C]_{\wire})^{2d+1}$ such that there exists an index $I$ such that $|I| \leq \K[][cross]$ and colors $(c_i)_{i \in I} \in \alphabet[C]^{I}$ such that:
\begin{enumerate}
\item $\dec(\wang{t})$ is a list of pairs $((c_i,x_i))_{i \in I}$ (for $x_i \in \{\mathtt{start},\mathtt{end},\mathtt{cross}\}$) encoding the list of wires traversing the tile $\wang{t}$; intuitively, $x_i$ represents whether the wire starts in $\wang{t}$, ends in $\wang{t}$, or crosses $\wang{t}$;
\item The color along each facet $\facet{k}{\pm}(\wang{t})$ is a list of pairs in $\alphabet[C] \times \N$, representing respectively the color and the length of the wires;
\item There is a bijection between the wires $(c_i,\mathtt{cross})$ in $\dec(\wang{t})$ (\textit{i.e.}~the wires crossing $\wang{t}$) and the pairs $(c_i,n_i)$ and $(c_i,n_i+1)$ (for some $x_i \in \N$) appearing across the facets $\facet{k}{\pm}(\wang{t})$;
\item There is a bijection between the $(c_i,\mathtt{start})$ in $\dec(\wang{t})$ (resp.~$(c_i,\mathtt{end})$ in $\dec(\wang{t})$) and the wiring tuples $(c_i,0)$ (resp.~wiring colors $(c_i,n_i)$ without a matching $(c_i,n_i+1)$ for $n_i > 0$) appearing across the facets $\facet{k}{\pm}(\wang{t})$.
\end{enumerate}

\begin{proposition}[label={pos:prop:wiring-bijection}]
  For any valid pattern $w \in \Swang[\wire](\alphabet[C])^{\rect{r}}$ with blank borders\footnote{\textit{i.e.}~for every position $\vec{i} \in \rect{r}$ such that $\vec{i} + \basis{k} \notin \rect{r}$ (resp.~$\vec{i} - \basis{k} \notin \rect{r}$), $\facet{k}{+}(w_{\vec{i}})$ (resp.~$\facet{k}{-}(w_{\vec{i}})$) is the empty list.}, there exists a bijection between the tuples $(c,\mathtt{start})$ and $(c,\mathtt{end})$ (for $c \in \alphabet[C]$) appearing across the decorations $\dec(w_{\vec{i}})$ for $\vec{i}$ ranging over $\rect{r}$.

  \noindent Furthermore, for each $(c,\mathtt{start})$ appearing in some decoration $\dec(w_{\vec{s}})$ for $\vec{s} \in \rect{r}$, there exists a path of contiguous tiles starting at position $\vec{s}$ and ending at some position $\vec{e}$ such that $\dec(w_{\vec{e}})$ contains the wiring tuple $(c,\mathtt{end})$; and if $\vec{i}$ is the $k$\textsuperscript{th} (intermediate) position in this path, then $(c,\mathtt{cross})$ appears in $\dec(w_{\vec{i}})$, and there exists a pair of facets $(\facet{}{},\facet{}{}')$ such that $(c,k) \in \facet{}{}(w_{\vec{i}})$ and $(c,k+1) \in \facet{}{}'(w_{\vec{i}})$.
\end{proposition}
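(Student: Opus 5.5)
The proof will be a direct induction along the wires, using only the local constraints (1)--(4) defining $\Swang[\wire](\alphabet[C])$. I will first formalize the ``wire segments'' carried across facets. Every facet color is a finite list of pairs $(c,n) \in \alphabet[C] \times \N$, and adjacent tiles agree on their common facet. So each occurrence of a pair $(c,n)$ on an interior facet is shared by exactly two tiles. Since the borders of $w$ are blank, no pair ever sits on an exterior facet.

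Inside a single tile, conditions (3) and (4) pair these facet occurrences. A $\mathtt{cross}$ entry links an occurrence $(c,n)$ on one facet to an occurrence $(c,n+1)$ on another facet. A $\mathtt{start}$ entry is attached to exactly one occurrence $(c,0)$. An $\mathtt{end}$ entry is attached to exactly one unmatched occurrence $(c,n)$. I will read off a finite multigraph $G$ from this data:
\begin{itemize}
\item its vertices are the decoration entries $(c_i,x_i)$ over all tiles of $\rect{r}$;
\item its edges are the facet occurrences $(c,n)$, each joining the entries in the two tiles sharing the facet that the local bijections assign to it.
\end{itemize}

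The key step is a degree count. Each $\mathtt{cross}$ vertex has degree exactly $2$, via one incoming label $n$ and one outgoing label $n+1$. Each $\mathtt{start}$ and each $\mathtt{end}$ vertex has degree exactly $1$. The color $c$ is constant along every edge, since it is part of the shared facet color.

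Hence every connected component of $G$ is either a cycle made only of $\mathtt{cross}$ vertices, or a path whose two endpoints are $\mathtt{start}$ or $\mathtt{end}$ vertices of the same color. I will orient each path using the length labels, which increase by exactly one at each $\mathtt{cross}$ vertex. A $\mathtt{start}$ vertex always emits the label $0$. Therefore, reading a component from a $\mathtt{start}$, the labels are $0,1,2,\dots$ and strictly increase, so that walk can never return to a $\mathtt{start}$ vertex. By finiteness of $\rect{r}$ and of the lists (each tile carries at most $\K[][cross]$ wires), the walk must terminate, and it can only stop at an $\mathtt{end}$ vertex.

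Symmetrically, starting from an $\mathtt{end}$ vertex and walking backwards, the labels decrease. They must reach $0$, and a label $0$ arises only from a $\mathtt{start}$ vertex, because a $\mathtt{cross}$ vertex outputs labels of the form $n+1$. So every path component has exactly one $\mathtt{start}$ and one $\mathtt{end}$ endpoint, of the same color $c$. This yields the bijection between $(c,\mathtt{start})$ and $(c,\mathtt{end})$ occurrences.

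The path statement then comes for free. The tiles visited along the component starting at $\vec{s}$ form a sequence of facet-adjacent positions ending at some $\vec{e}$ whose decoration contains $(c,\mathtt{end})$. The $k$\textsuperscript{th} intermediate tile holds a $(c,\mathtt{cross})$ entry whose two incident edges carry the labels $(c,k)$ and $(c,k+1)$ on two facets $\facet{}{}$ and $\facet{}{}'$, by induction on $k$ from the initial label $0$.

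The main obstacle is making the graph well defined. A tile may contain several wires of the same color $c$, and a facet may carry repeated pairs $(c,n)$, so the ``bijections'' of (3)--(4) have to be chosen as explicit matchings of occurrences rather than of values. I will fix these matchings once per tile and treat list positions as distinct occurrences. I will also handle the degenerate case of a wire whose start and end lie in the same tile, or of a length-$0$ wire, by noting that condition (4) forces a $(c,0)$ occurrence to sit on a facet, so it does not actually arise. With these conventions fixed, the degree argument above goes through cleanly.
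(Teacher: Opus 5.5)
\textbf{Gap: the orientation of wires.} Your graph $G$ and the degree count are fine. The step that fails is the claim that, reading a component from a $\mathtt{start}$ vertex, the labels are $0,1,2,\dots$ and strictly increase. The dual claims fail as well: that walking back from an $\mathtt{end}$ the labels decrease, and that a label $0$ ``arises only from a $\mathtt{start}$''.

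Conditions (1)--(4) are local and carry no direction. Inside a tile, a $\mathtt{cross}$ entry only pairs an occurrence $(c,n)$ with an occurrence $(c,n+1)$, a $\mathtt{start}$ claims a $(c,0)$, and an $\mathtt{end}$ claims some $(c,n)$. Nothing forces the tile on the other side of a facet to use the shared occurrence in the complementary role. So a $\mathtt{cross}$ can be entered through its $(c,n+1)$ edge and left through its $(c,n)$ edge: for instance $\mathtt{start}$, cross $\{0,1\}$, cross $\{0,1\}$, $\mathtt{start}$, with labels $0,1,0$. A single occurrence can even be claimed as emitted by both sides.

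The simplest counterexample uses two facet-adjacent tiles $A,B$. Both have decoration $((c,\mathtt{start}))$, the single tuple $(c,0)$ sits on their common facet $f_1^{+}(A)=f_1^{-}(B)$, and all other facets are empty lists. Each tile satisfies (1)--(4), both under your ``each occurrence used exactly once'' reading and under the literal one. The pattern has blank borders, yet contains two starts and no end. Replacing $(c,\mathtt{start}),(c,0)$ by $(c,\mathtt{end}),(c,5)$ gives two ends and no start.

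So the monotonicity step is not merely unproved: the proposition fails for the tileset exactly as defined, and no counting argument from (1)--(4) can close the gap. The paper gives no proof here beyond remarking that the length labels ``eliminate cycling wires'', and that remark tacitly relies on the same orientation assumption you make.

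\textbf{What fixes it.} The tileset needs an orientation that is checked across facets. For example, each facet tuple can also record whether the wire crosses that facet in direction $+\basis{k}$ or $-\basis{k}$, so that both neighbours read the same direction. One then requires that a $\mathtt{start}$'s tuple leaves its tile, an $\mathtt{end}$'s tuple enters it, and a $\mathtt{cross}$ has its $(c,n)$ entering and its $(c,n+1)$ leaving.

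With this rule, every edge of $G$ is emitted on one side and received on the other. Labels then strictly increase along each component in the direction of travel. This rules out start--start paths, end--end paths and cycles, since a cycle would need labels increasing all the way around. Your degree count plus finiteness then yields both the bijection and the labelled paths of the statement.
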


We call \emph{wire} any such path of contiguous tiles in $w$, and \emph{wiring} the bijection between the start and the end of each wire drawn by $w$. We made the wiring tiles $\wang{t}$ encode their position in the wires crossing them because of convenience (it helps to identify them when looking at patterns) and practical purposes: it eliminates cycling wires.

\bigskip
We now consider the condition upon which a wiring between a given matching of starting and ending positions can exist inside a rectangle $\rect{r}$. Inside a facet $\facet{k}{\pm}(\rect{r})$ of a rectangle $\rect{r} \in \Srect_0$, we say that a given set of positions $P \subseteq \facet{k}{\pm}(\rect{r})$ is \emph{greedy} if, for every $\vec{i} \in \facet{k}{\pm}(\rect{r})$ such that $\vec{i} \in P$, all positions $\vec{j} \in \facet{k}{\pm}(\rect{r})$ of boustrophedon index $k' < k$ (for $k$ the boustrophedon index of $\vec{i}$ in $\facet{k}{\pm}(\rect{r})$) also satisfy $\vec{j} \in P$. In other words, $P$ fills the facet $\facet{k}{\pm}(\rect{r})$ in boustrophedon order greedily.

For any $\rect{r} \in \Srect_0$, denote by $\rborder{\rect{r}} = \biguplus_{k=1}^{d} \biguplus_{s \in \{+,-\}} \facet{k}{s}(\rect{r})$ the \emph{border}\footnote{In this case, it is useful to consider this border as a disjoint union of facets: so that, if a position $\vec{i}$ appears at the intersection of two facets $\facet{k}{\pm}(\rect{r})$ and $\facet{k'}{\pm}(\rect{r})$, then it appears twice in $\rborder{\rect{r}}$.} of $\rect{r}$.

\begin{lemma}[name={Routing lemma},label={pos:lem:routing-lemma}]
  Let $\alphabet[C]$ be a set of colors, $\rect{r} \in \Srect_0$ a rectangle, $S \subseteq \rborder{\rect{r}}$ and $R \subseteq \rborder{\rect{r}}$ of size $|S| = |R| \leq \rspace(\rect{r})$, and $(s,r) \in \alphabet[C]^{S} \times \alphabet[C]^{R}$ be \emph{source} and \emph{receiver} patterns satisfying the following conditions:
  \begin{itemize}
  \item For every color $c \in \alphabet[C]$, we have $|s|_c = |r|_c$ (as many $c$ symbols in $s$ and $r$);
  \item For every facet $\facet{k}{\pm}(\rect{r})$, if $\facet{k}{\pm}(\rect{r})$ is not a smallest facet of $\rect{r}$, then both $S \cap \facet{k}{\pm}(\rect{r})$ and $R \cap \facet{k}{\pm}(\rect{r})$ are greedy in~$\facet{k}{\pm}(\rect{r})$.
  \end{itemize}
  Then there exists a valid pattern $w \in (\Swang[\wire])^{\rect{r}}$ with blank borders such that:
  \begin{itemize}
  \item For every $\vec{i} \in \rect{r}$, a tuple $(c,\mathtt{start})$ appears in $\dec(w_{\vec{i}})$ if and only if $\vec{i} \in S$ and $s_{\vec{i}} = c$;
  \item For every $\vec{i} \in \rect{r}$, a tuple $(c,\mathtt{end})$ appears in $\dec(w_{\vec{i}})$ if and only if $\vec{i} \in R$ and $r_{\vec{i}} = c$;
  \item If $\rect{r} = \ibox{n_1,\dots,n_d}$, then wires in $w$ have length $\bigO(\sum_{k=1}^d n_k)$.
  \end{itemize}
\end{lemma}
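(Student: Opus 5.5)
The plan is to reduce the general routing problem to a constant number of ``layers'' of simple routings, each using only straight segments along coordinate directions. Wire crossings are handled by the bounded multiplicity $\K[][cross]$ of the wiring tileset. Let $\sfacet(\rect{r}) = \facet{h}{-}(\rect{r})$ be the chosen smallest facet, of area $\rspace(\rect{r})$, so that the direction $\basis{h}$ spans the longest edge. Since $|S| = |R| \leq \rspace(\rect{r})$, all wires can be parked simultaneously on distinct positions of a single $(d-1)$-dimensional slice orthogonal to $\basis{h}$. The construction therefore has three stages.

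\emph{Gathering.} Every source on $\rborder{\rect{r}}$ is brought to a distinct ``lane'', namely a line parallel to $\basis{h}$ indexed by a position of $\sfacet(\rect{r})$.
\emph{Permuting.} The lanes are permuted so that each color class ends up on the lanes reserved for the matching receivers.
\emph{Scattering.} This is the mirror image of gathering, ending at the receivers.

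The color condition $|s|_c = |r|_c$ is what makes a color-preserving matching exist. Any bijection between $S$ and $R$ that respects colors will do, and we fix one arbitrarily.

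For gathering, treat the two cases separately. Sources on a smallest facet (or on the opposite facet $\facet{h}{+}$) already sit on lanes, and their wires run straight along $\basis{h}$. Sources on a non-smallest facet $\facet{k}{\pm}$ with $k \neq h$ are greedy in boustrophedon order. That facet contains the direction $\basis{h}$, so a greedy prefix occupies consecutive boustrophedon positions. The wires are pushed inward along $\basis{k}$ and then turned into lanes by a staircase pattern: the $j$-th wire travels $j$ steps inward, and this turns a boustrophedon prefix into a set of distinct lanes. Because the boustrophedon order preserves adjacency (as noted in the preliminaries), consecutive wires stay adjacent and never need more than a constant number of crossings per tile. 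Each facet is processed within a slab of bounded thickness relative to $n_h$, and the facets use disjoint slabs along $\basis{h}$.

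Permutation is the main obstacle. I would realize it by a sorting network embedded along $\basis{h}$, which can be viewed either as odd--even transposition along a boustrophedon line of lanes, or more efficiently as a mesh-sorting scheme in the style of \Cref{calc:prop:pa-sorting}. Each comparator becomes a local swap of two adjacent wires, and the swap is a single crossing tile. Realized this way, the permutation uses $\bigO(\sum_k n_k)$ steps along $\basis{h}$. The point to check is that the number of sorting rounds stays below $n_h = \rtime(\rect{r})$ up to constants; if it does not, the sort should be folded back and forth in boustrophedon fashion, which keeps wire lengths within $\bigO(\sum_{k} n_k)$. One must also check that no tile ever carries more than $\K[][cross]$ wires, which fixes the constant.

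Finally, the length and cardinality bounds go as follows. Each wire consists of a gathering part of length $\bigO(\sum_k n_k)$ (straight moves plus a staircase), the sorting transit of length $\bigO(\sum_k n_k)$, and a scattering part of the same order, which gives the stated bound. The decorations are $\mathtt{start}$ exactly on $S$ with color $s_{\vec{i}}$ and $\mathtt{end}$ exactly on $R$ with color $r_{\vec{i}}$, by construction. \Cref{pos:prop:wiring-bijection} then certifies that the pattern is valid with blank borders.
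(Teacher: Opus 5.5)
\emph{Verdict: the gathering step has a genuine gap in dimension $d \geq 3$.}

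\paragraph*{Where the proposal agrees with the paper.} Your overall architecture matches the paper's reduction. You bring the wires onto lines parallel to the longest direction $\basis{h}$, realize an arbitrary permutation on the smallest facet by a mesh-sorting network as in \Cref{calc:prop:pa-sorting}, and scatter symmetrically. Your folding remedy for the permutation is fine, since that sort needs $\bigO(\sum_{k\neq h} n_k) = \bigO(n_h)$ rounds.

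\paragraph*{Why the staircase fails.} The gathering step is where the actual content of the lemma lies once $d \geq 3$. Your staircase (``the $j$-th wire travels $j$ steps inward along $\basis{k}$'') is the two-dimensional argument: for $d=2$ and $k \neq h$ one has $\rspace(\rect{r}) = n_k$, so $j < n_k$ automatically. In higher dimension, pushing a wire along $\basis{k}$ only changes the $k$-th coordinate of its lane, and by less than $n_k$. So all sources on a common line parallel to $\basis{h}$ inside $\facet{k}{\pm}(\rect{r})$ reach at most $n_k$ distinct lanes, while a greedy set may put far more sources on such a line.

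Concretely, let $d=3$ and $\rect{r} = \ibox{n,N,n}$ with $n \geq 2$ and $N \geq n^2$, so that $h = 2$ and $\rspace(\rect{r}) = n^2$. With the convention of the preliminaries (first coordinate most significant), the greedy subset of size $n^2$ of the non-smallest facet $\facet{3}{-}(\rect{r})$ is $\{0\} \times \interval{0}{n^2-1} \times \{0\}$. This set lies on a \emph{single} lane.
\begin{itemize}
\item Whatever depths $< n$ you choose along $\basis{3}$, some lane receives at least $n$ wires. These meet in one tile of the slice where you park them, so no fixed $\K[][cross]$ suffices.
\item Your rule literally asks for depths up to $n^2-1$, beyond the available $n_3-1 = n-1$.
\item The same example refutes the claim that each facet can be gathered inside its own thin slab along $\basis{h}$: one facet's greedy set may already span the whole length $n_h$.
\end{itemize}

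\paragraph*{What is missing.} You need a displacement along a \emph{second} transverse direction, with controlled congestion. The paper distinguishes the most significant direction $\basis{m}$ of the boustrophedon order of the source facet.
\begin{itemize}
\item If $m = h$, the greedy set lies in the first $n_k$ slices along $\basis{h}$. A diagonal move whose depth along $\basis{k}$ is the $\basis{h}$-coordinate of the source (not its rank $j$) already yields distinct lanes.
\item Otherwise, the paper cuts the greedy set into blocks of $n_k$ consecutive positions along $\basis{h}$. It turns each block so that the offset inside the block becomes the depth along $\basis{k}$. It then translates the $b$-th block along $\basis{m}$ by $b \cdot n_m n_k / n_h$ (up to rounding), using zigzags of amplitude $n_m n_k/n_h \leq n_k$. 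This keeps $\bigO(1)$ superimposed wires per tile and makes all lanes distinct. Only then do the wires run straight to the smallest facet.
\end{itemize}
Different source facets are not coordinated at all. Each pair (source facet, receiver facet) is routed separately, and the boundedly many resulting routings are superimposed, their overlap being absorbed into $\K[][cross]$.
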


\smallskip
In this paper, we now fix the value of the constant $\K[][cross]$ (bounding the number of wires that can appear in any given tile of $\Swang[\wire]$) in order to allow the tiles of $\Swang[\wire]$ to implement the geometric construction given in the following proof.

\begin{proof}
  Fix $\rect{r} = \ibox{n_1,\dots,n_d}$. Without loss of generality, we consider a source and a receiver pattern $(s,r)$ of domains $(S,R) \subseteq \facet{S}{} \times \facet{R}{}$ for $\facet{R}{} = \facet{h}{\pm}(\rect{r})$ a smallest facet and $\facet{S}{} = \facet{k}{\pm}(\rect{r})$ an arbitrary facet of $\rect{r}$: indeed, starts and ends of wires are symmetric, and the \crtlnameref{pos:lem:routing-lemma} can be obtained by managing each pair of facets independently.

  Since $\facet{R}{}$ is a smallest facet of $\rect{r}$, the proof of \Cref{calc:prop:pa-sorting}~\cite{Thompson-Kung_1977:sorting-on-mesh-connected-parallel-computer} shows that any permutation of $\facet{R}{}$ can be realized with wires of length $\bigO(\rtime(\rect{r}))$. Thus, instead of wiring the source pattern $s$ with the exact receiver pattern $r$, it is enough to realize a wiring of $s$ with some permutation $r'$ of $r$ in the same facet $\facet{R}{}$. Since the case $\facet{S}{} = \facet{R}{}$ is already solved, we assume that $\facet{R}{} \neq \facet{S}{}$.

  \smallskip
  First, notice that the \crtlnameref{pos:lem:routing-lemma} is immediate in dimension $d=2$. In the rest of this proof, we focus on the dimension $d=3$, since the proof will generalize from dimension $d=3$ to arbitrary $d > 3$ by keeping all other coordinates intact. By symmetry, we further assume that $\facet{R}{} = \facet{1}{-}(\rect{r})$ and that $\facet{S}{} = \facet{3}{-}(\rect{r})$, so that we can denote $\facet{S}{} = \ibox{n_1,n_2,1}$, $\facet{R}{} = \ibox{1,n_2,n_3}$ and $n_1 = \max \{ n_k \}$.

  For $1 \leq k \leq d$ and $n \geq 0$, let $I^{k}_{n} = \left\{ \vec{i} = (i_1,i_2,i_3) \in \facet{S}{} : \vec{i}_k = n \right\}$ be the $n$\textsuperscript{th} slice in $\facet{S}{}$ along the direction $\basis{m}$. Denoting $1 \leq m \leq d$ the most significant direction in the boustrophedon order of $\facet{S}{}$, then the positions in $I^{m}_{n}$ are of lower boustrophedon index than those of $I^{m}_{n+1}$. Since $S$ is greedy in $\facet{S}{}$, the points of $S$ fill the segments $I^{m}_{n}$ increasingly with $n$: introducing the integer $k_m = n_m \cdot \frac{n_3}{n_1}$, any segment $I^m_n$ of index $n > k_m$ will not intersect $S$ (as such segments $I^m_n$ contain elements of boustrophedon index larger than the cardinality of $|S|$).

  \medskip
  If $m=1$, then $k_m = n_3$ and simple diagonal lines sending each position $x = (x_1,x_2,0) \in S$ towards the position $(0,x_2,x_1) \in \facet{R}{}$ draw a wiring between $S$ and a subset $R' \subseteq \facet{R}{}$:
  \begin{center}
    \begin{tikzpicture}[scale=0.5,x={(-1,0)},y={(0,1)},z={(-0.6,0.35)}]
      \fill[RainbowE!40] (1.8,0,0) -- (1.8,2.5,0) -- (2,2.5,0) -- (2,4,0) -- (0,4,0) -- (0,0,0) -- cycle;
      \node[RainbowE!60!black,shift={(0,-0.3)}] at (4,0,0) {$\facet{S}{}$};
      \fill[RainbowF!40] (0,0,1.75) -- (0,2.5,1.75) -- (0,2.5,2) -- (0,4,2) -- (0,4,0) -- (0,0,0) --cycle;
      \node[RainbowF!60!black,shift={(-0.5,-0.1)}] at (0,0,0.2) {$\facet{R}{}$};
      \draw[->,RainbowD,thick] (1,4,0.3) to[bend left,looseness=0.9] (0.2,4,1);
      \begin{scope}
        \clip (8.024,-0.03) rectangle (-1.23,4.73);
        \draw[thick] (0,4,0) -- ++(8,0,0) -- ++(0,0,2) -- ++(-8,0,0) -- cycle;
        \draw[thick] (0,0,0) -- ++(0,4,0);
        \draw[thick] (8,4,0) -- ++(0,-4,0) -- ++(-8,0,0) -- ++(0,0,2) -- ++(0,4,0);
      \end{scope}
      \node[shift={(0,0.25)}] at (4,4,2) {$n_1$};
      \node[shift={(0.35,0)}] at (8,2,0) {$n_2$};
      \node[shift={(0.2,0.2)}] at (8,4,1) {$n_3$};
      \begin{scope}[shift={(-5,0)}]
        \draw[thick, ->] (0,0,0) to (0.7,0,0) node[shift={(0.18,0.04)},font=\footnotesize] {$\basis{1}$};
        \draw[thick, ->] (0,0,0) to (0,0.7,0) node[shift={(-0.03,0.18)},font=\footnotesize] {$\basis{2}$};
        \draw[thick, ->] (0,0,0) to (0,0,0.7) node[shift={(-0.19,0.15)},font=\footnotesize] {$\basis{3}$};
      \end{scope}
    \end{tikzpicture}
  \end{center}

  Otherwise, if $m=2$, then $k_m = \frac{n_2 n_3}{n_1}$. We wire $S$ and a subset $R' \subseteq \facet{R}{}$ in three steps:
  \begin{itemize}
  \item We split the set $S$ into blocks of length $n_3$ and turn them around using straight lines: each position $x = (kn_3 + i, i_2, 0)$ is sent to $(kn_3,i_2,i)$ ($i < n_3$ and $k \leq \frac{n_1}{n_3}$);

    \begin{center}
      \begin{tikzpicture}[scale=0.45,x={(-1,0)},y={(0,1)},z={(-0.6,0.35)}]
        \begin{scope}[shift={(14,0)}]
          \fill[RainbowE!40] (0,0,0) -- (8,0,0) -- (8,1,0) -- (0,1,0) -- cycle;
          \begin{scope}
            \foreach \i in {2,4,6} {
              \draw[RainbowE!30!RainbowF!70,very thick, dashed] (\i,0.05,0) -- ++(0,0.9,0);
            }
            \draw[RainbowE!30!RainbowF!70,very thick, dashed] (0.05,0.05,0) -- ++(7.9,0,0) -- ++(0,0.9,0) -- ++(-7.9,0,0) -- cycle;
          \end{scope}
          \node[RainbowE!60!black,shift={(0,-0.3)}] at (4,0,0) {$\facet{S}{}$};
          \begin{scope}
            \clip (8.024,-0.03) rectangle (-1.23,4.73);
            \draw[thick] (0,4,0) -- ++(8,0,0) -- ++(0,0,2) -- ++(-8,0,0) -- cycle;
            \draw[thick] (0,0,0) -- ++(0,4,0);
            \draw[thick] (8,4,0) -- ++(0,-4,0) -- ++(-8,0,0) -- ++(0,0,2) -- ++(0,4,0);
          \end{scope}
          \node[shift={(0,0.25)}] at (4,4,2) {$n_1$};
          \node[shift={(0.35,0)}] at (8,2,0) {$n_2$};
          \node[shift={(0.2,0.2)}] at (8,4,1) {$n_3$};
        \end{scope}
        \draw[ultra thick,RainbowD,->] (12.5,2,1) -- (9.5,2,1);
        \begin{scope}
          \foreach \i in {6,4,2,0} {
            \draw[RainbowD,thick,->] ($(\i,1,0) + (0.6,0,0)$) to[bend left,in=120,looseness=1.2] (\i.05,1,0.8);
            \fill[RainbowE!40] (\i,0,0) -- ++(0,1,0) -- ++(0,0,2) -- ++(0,-1,0) -- cycle;
            \draw[RainbowE!30!RainbowF!70,very thick, dashed] (\i,0.05,0.05) -- ++(0,0.9,0) -- ++(0,0,1.9) -- ++(0,-0.9,0) --cycle;
          }
          \begin{scope}
            \clip (8.024,-0.03) rectangle (-1.23,4.73);
            \draw[thick] (0,4,0) -- ++(8,0,0) -- ++(0,0,2) -- ++(-8,0,0) -- cycle;
            \draw[thick] (0,0,0) -- ++(0,4,0);
            \draw[thick] (8,4,0) -- ++(0,-4,0) -- ++(-8,0,0) -- ++(0,0,2) -- ++(0,4,0);
          \end{scope}
        \end{scope}
        \begin{scope}[shift={(-4,0)}]
          \draw[thick, ->] (0,0,0) to (0.7,0,0) node[shift={(0.18,0.04)},font=\footnotesize] {$\basis{1}$};
          \draw[thick, ->] (0,0,0) to (0,0.7,0) node[shift={(-0.03,0.18)},font=\footnotesize] {$\basis{2}$};
          \draw[thick, ->] (0,0,0) to (0,0,0.7) node[shift={(-0.19,0.15)},font=\footnotesize] {$\basis{3}$};
        \end{scope}
      \end{tikzpicture}
    \end{center}

  \item We then align these blocks by translating them using ``zigzags'' of amplitude $k_m$ that send a position $x' = (kn_3,i_2,i)$ to $x'' = (kn_3,i_2 + k \frac{n_2 n_3}{n_1},i))$.\footnote{For rounding reasons, we might allow a small overlap between any two adjacent blocks.} As $n_2 \leq n_1$, we have that $k_m = \frac{n_2 n_3}{n_1} \leq n_3$, so that such zigzags can be drawn with at most $\bigO(1)$ superimposed wires at any position $\vec{i} \in \rect{r}$;

    \begin{center}
      \begin{tikzpicture}[scale=0.45,x={(-1,0)},y={(0,1)},z={(-0.6,0.35)}]
        \begin{scope}[shift={(14,0)}]
          \foreach \i in {0,2,4,6} {
            \fill[RainbowE!40] (\i,0,0) -- ++(0,1,0) -- ++(0,0,2) -- ++(0,-1,0) -- cycle;
            \draw[RainbowE!30!RainbowF!70,very thick, dashed] (\i,0.05,0.05) -- ++(0,0.9,0) -- ++(0,0,1.9) -- ++(0,-0.9,0) --cycle;
          }
          \begin{scope}
            \clip (8.024,-0.03) rectangle (-1.23,4.73);
            \draw[thick] (0,4,0) -- ++(8,0,0) -- ++(0,0,2) -- ++(-8,0,0) -- cycle;
            \draw[thick] (0,0,0) -- ++(0,4,0);
            \draw[thick] (8,4,0) -- ++(0,-4,0) -- ++(-8,0,0) -- ++(0,0,2) -- ++(0,4,0);
          \end{scope}
          \node[shift={(0,0.25)}] at (4,4,2) {$n_1$};
          \node[shift={(0.35,0)}] at (8,2,0) {$n_2$};
          \node[shift={(0.2,0.2)}] at (8,4,1) {$n_3$};
        \end{scope}
        \draw[ultra thick,RainbowD,->] (12.5,2,1) -- (9.5,2,1);
        \begin{scope}
          \foreach \i in {6,4,2,0} {
            \fill[opacity=0.7,gray!20] (\i,0,0) -- ++(0,4,0) -- ++(0,0,2) -- ++(0,-4,0) --cycle;
            \tikzmath{int \j, \jo; \j=\i/2; \jo=max(\j-2,0);}
            \fill[RainbowE!40] (\i,\j,0) -- ++(0,1,0) -- ++(0,0,2) -- ++(0,-1,0) -- cycle;
            \draw[RainbowE!30!RainbowF!70,very thick, dashed] (\i,\j.05,0.05) -- ++(0,0.9,0) -- ++(0,0,1.9) -- ++(0,-0.9,0) --cycle;
            \ifnum\i>2\relax{
                \draw[RainbowD,thick,->] (\i,0.5,1) to [out=0,in=-90,looseness=0.3] ++(-1,0.5,0) foreach \x in {0,...,\jo} {to[out=90,in=-90,looseness=0.3] ++(1,0.5,0) to[out=90,in=-90,looseness=0.3] ++(-1,0.5,0)} to[out=90,in=0,looseness=0.2] ++(1,0.5,0);
              }\fi
            \ifnum\i=2\relax{
                \draw[RainbowD,thick,->] (\i,0.5,1) to [out=0,in=-90,looseness=0.3] ++(-1,0.5,0)  to[out=90,in=0,looseness=0.3] ++(1,0.5,0);
              }\fi
          }
          \begin{scope}
            \clip (8.024,-0.03) rectangle (-1.23,4.73);
            \draw[thick] (0,4,0) -- ++(8,0,0) -- ++(0,0,2) -- ++(-8,0,0) -- cycle;
            \draw[thick] (0,0,0) -- ++(0,4,0);
            \draw[thick] (8,4,0) -- ++(0,-4,0) -- ++(-8,0,0) -- ++(0,0,2) -- ++(0,4,0);
          \end{scope}
          \draw[<->] (4,4.3,2) -- ++(-2,0,0) node[midway,above] {$n_3$};
        \end{scope}
        \begin{scope}[shift={(-4,0)}]
          \draw[thick, ->] (0,0,0) to (0.7,0,0) node[shift={(0.18,0.04)},font=\footnotesize] {$\basis{1}$};
          \draw[thick, ->] (0,0,0) to (0,0.7,0) node[shift={(-0.03,0.18)},font=\footnotesize] {$\basis{2}$};
          \draw[thick, ->] (0,0,0) to (0,0,0.7) node[shift={(-0.19,0.15)},font=\footnotesize] {$\basis{3}$};
        \end{scope}
      \end{tikzpicture}
    \end{center}

  \item Finally, we send each of these positions in a straight line towards $\facet{R}{}$: the position $x'' = (kn_3,i_2 + k\frac{n_2n_3}{n_1}, i)$ is sent to $(0,i_2 + k\frac{n_2n_3}{n_1},i)$.

    \begin{center}
      \begin{tikzpicture}[baseline=0.5ex,scale=0.45,x={(-1,0)},y={(0,1)},z={(-0.6,0.35)}]
        \begin{scope}[shift={(14,0)}]
          \foreach \i in {6,4,2,0} {
            \fill[opacity=0.7,gray!20] (\i,0,0) -- ++(0,4,0) -- ++(0,0,2) -- ++(0,-4,0) --cycle;
            \tikzmath{int \j; \j=\i/2;}
            \fill[RainbowE!40] (\i,\j,0) -- ++(0,1,0) -- ++(0,0,2) -- ++(0,-1,0) -- cycle;
            \draw[RainbowE!30!RainbowF!70,very thick, dashed] (\i,\j.05,0.05) -- ++(0,0.9,0) -- ++(0,0,1.9) -- ++(0,-0.9,0) --cycle;
          }
          \begin{scope}
            \clip (8.024,-0.03) rectangle (-1.23,4.73);
            \draw[thick] (0,4,0) -- ++(8,0,0) -- ++(0,0,2) -- ++(-8,0,0) -- cycle;
            \draw[thick] (0,0,0) -- ++(0,4,0);
            \draw[thick] (8,4,0) -- ++(0,-4,0) -- ++(-8,0,0) -- ++(0,0,2) -- ++(0,4,0);
          \end{scope}
          \node[shift={(0,0.25)}] at (4,4,2) {$n_1$};
          \node[shift={(0.35,0)}] at (8,2,0) {$n_2$};
          \node[shift={(0.2,0.2)}] at (8,4,1) {$n_3$};
        \end{scope}
        \draw[ultra thick,RainbowD,->] (12.5,2,1) -- (9.5,2,1);
        \begin{scope}
          \foreach[count=\n from 0] \i in {6,4,2} {
            \tikzmath{\j=\i/2+0.4;}
            \fill[opacity=0.7,gray!20] (\i,0,0) -- ++(0,4,0) -- ++(0,0,2) -- ++(0,-4,0) --cycle;
            \foreach \k in {0,...,\n} {
              \ifnum\i>2\relax{
                  \draw[RainbowD,thick] ($(\i,\j,1.1) + (0,\k,0)$) -- ++(-2,0,0);
                }\else{
                  \draw[RainbowD,->,thick] ($(\i,\j,1.1) + (0,\k,0)$) -- ++(-2,0,0);
                }\fi
            }
          }
          \fill[RainbowE!55,opacity=0.7] (0,0,0) -- ++(0,4,0) -- ++(0,0,2) -- ++(0,-4,0) -- cycle;
          \draw[RainbowE!30!RainbowF!70,very thick,dashed] (0,0.1,0.1) -- ++(0,3.8,0) -- ++(0,0,1.8) -- ++(0,-3.8,0) --cycle;
          \foreach \i in {1,2,3} {
            \draw[RainbowE!30!RainbowF!70,very thick,dashed] (0,\i,0.1) -- ++(0,0,1.8);
          }
          \begin{scope}
            \clip (8.024,-0.03) rectangle (-1.23,4.73);
            \draw[thick] (0,4,0) -- ++(8,0,0) -- ++(0,0,2) -- ++(-8,0,0) -- cycle;
            \draw[thick] (0,0,0) -- ++(0,4,0);
            \draw[thick] (8,4,0) -- ++(0,-4,0) -- ++(-8,0,0) -- ++(0,0,2) -- ++(0,4,0);
          \end{scope}
        \end{scope}
        \begin{scope}[shift={(-4,0)}]
          \draw[thick, ->] (0,0,0) to (0.7,0,0) node[shift={(0.18,0.04)},font=\footnotesize] {$\basis{1}$};
          \draw[thick, ->] (0,0,0) to (0,0.7,0) node[shift={(-0.03,0.18)},font=\footnotesize] {$\basis{2}$};
          \draw[thick, ->] (0,0,0) to (0,0,0.7) node[shift={(-0.19,0.15)},font=\footnotesize] {$\basis{3}$};
        \end{scope}
      \end{tikzpicture}\qedhere
    \end{center}
  \end{itemize}
\end{proof}

\begin{remark}[label={pos:rem:deterministic-routing-lemma}]
  Notice that the wiring in this proof is actually constructive and ``positional'', in the sense that the direction of the wires appearing at a position $\vec{i} \in \rect{r}$ only depends on said position. Since the tileset $\Swang[\Srect]$ from \Cref{pos:sec:rect} labels each tile with its position in $\rect{r}$, we could construct a tileset  $\Swang[\route] \subseteq \Swang[\Srect] \times \Swang[\wire](A \times \alphabet[C])$ (for some finite set $A$ of cardinality $\bigO(d)$) such that $\Swang[\route]$ satisfies the \crtlnameref{pos:lem:routing-lemma} while also ensuring the \emph{uniqueness} of the wiring pattern $w$.
\end{remark}

%
\section{Proof of the fixed-point lemma}
\label{fix:sec}

\setcounter{subsection}{-1}
\subsection{Overview of the fixed-point construction}
At its core, the fixed-point construction is based on the notion of \emph{block simulation} from~\Cref{fix:def:tiling-simulation}.

\subsubsection{The classical construction}

Usually defined on $\Z^2$, the fixed-point construction creates a hierarchy $(\Swang[\ell][S])_{\ell \in \N}$ of tilesets such that each $\Swang[\ell][S]$ simulates $\Swang[\ell+1][S]$ through a deterministic simulation $\sigma_{\ell} \colon \Swang[\ell+1][S]^{\rspace_{\ell} \times \rspace_{\ell}} \to \Swang[\ell][S]$ with zoom $\rspace_{\ell} \times \rspace_{\ell}$ (for some size $\rspace_{\ell} \in \N$).

\smallskip

To realize this hierarchy of simulations, the tiles of $\Swang[\ell][S]$ will arrange themselves into blocks of size $\rspace_{\ell} \times \rspace_{\ell}$ called \emph{macro-tiles}, thus forming a regular grid $\grid{g}_{\ell}$ such that all rectangles $\rect{r} \in \grid{g}_{\ell}$ satisfy $\rnorm{\rect{r}} = \ibox{\mu_{\ell}}^2$. These macro-tiles then emulate the behavior of Wang tiles (\textit{c.f.}~\Cref{fix:fig:macro-tile-configuration}):
\begin{enumerate}
\item If a tile of $\Swang[\ell][S]$ appears on the border of a macro-tile, it carries an additional bit;
\item The concatenations of those bits on each side of a macro-tile form its four \emph{macro-colors};
\end{enumerate}
\begin{figure}[ht]
  \begin{tikzpicture}[scale=0.25,decoration={random steps,segment length=0.4em,amplitude=0.15em}]
    \pgfmathdeclarerandomlist{Rainbow}{{RainbowB!50!RainbowC}{RainbowE}{RainbowF}{RainbowG!50!RainbowA}}
    \begin{scope}
      \pgfmathsetseed{1024}
      \draw [TilingGrid,decorate,path picture={
        \foreach \i in {0,5,...,30} {
          \foreach \j in {0,5,...,20} {
            \pgfmathrandomitem{\Rainbow}{Rainbow}
            \fill[color=\Rainbow] ($(\i,\j) + (-0.2,0.6)$) rectangle ++(0.4,3.8);
            \draw[line width=0.7,draw=\Rainbow] ($(\i,\j) + (0,2.5)$) -- ++(0.7,0) -- ++(0,-2) -- ++(1.2,0) -- ++(0,1);
            \draw[line width=0.7,draw=\Rainbow] ($(\i,\j) + (0,2.5)$) -- ++(-0.9,0) -- ++(0,-1.8) -- ++(-1.8,0) -- ++(0,1);
            \pgfmathrandomitem{\Rainbow}{Rainbow}
            \fill[color=\Rainbow] ($(\i,\j) + (0.6,-0.2)$) rectangle ++(3.8,0.4);
            \draw[line width=0.7,draw=\Rainbow] ($(\i,\j) + (2.5,0)$) -- ++(0,0.5) -- ++(-0.4,0) -- ++(0,1);
            \draw[line width=0.7,draw=\Rainbow] ($(\i,\j) + (2.5,0)$) -- ++(0,-0.9) -- ++(-1.6,0) -- ++(0,-3.4) -- ++(0.8,0) -- ++(0,1);
          }
        }
        \foreach \i in {0,5,...,30} {
          \foreach \j in {0,5,...,20} {
            \fill[rounded corners=0.5pt,opacity=0.5,black] ($(\i,\j) + (1.24,1)$) rectangle ++(2.52,0.4);
            \fill[gray!20!white] ($(\i,\j) + (1.2,1.2)$) rectangle ++(2.6,2.6);
          }
        }
        \draw[very thin,opacity=0.5,step=0.2] (0,0) grid (30,20);
        \foreach \i in {0,5,...,30} {
          \foreach \j in {0,5,...,20} {
            \fill[gray!20!white] ($(\i,\j) + (2.5,2.5)$) node {\color{black} \scriptsize $\sigma_{\ell}(\cdot)$};
          }
        }
        \draw[step=5] (0,0) grid (30,20);
      }] (1.5,1.5) rectangle (28.5,18.5);
    \end{scope}
    \begin{scope}[shift={(29,0)}]
      \pgfmathsetseed{1024}
      \draw [TilingGrid,decorate,path picture={
        \foreach \i in {0,5,...,30} {
          \foreach \j in {0,5,...,20} {
            \pgfmathrandomitem{\Rainbow}{Rainbow}
            \draw[fill=\Rainbow] (\i,\j) -- ++(-1,1) -- ++ (0,3) -- ++(1,1) -- ++(1,-1) -- ++(0,-3) -- cycle;
            \pgfmathrandomitem{\Rainbow}{Rainbow}
            \draw[fill=\Rainbow] (\i,\j) -- ++(1,1) -- ++ (3,0) -- ++(1,-1) -- ++(-1,-1) -- ++(-3,0) -- cycle;
          }
        }
        \draw[step=5] (0,0) grid (30,20);
      }] (1.5,1.5) rectangle (28.5,18.5);
    \end{scope}
  \end{tikzpicture}
  \caption{Schematics of a macro-tile, and a partial configuration of macro-tiles.}
  \label{fix:fig:macro-tile-configuration}
\end{figure}

\noindent When placing such macro-tiles next to one another along the grid $\grid{g}_{\ell}$, adjacent macro-tiles share a common macro-color along their common facet; and thus, tilings of macro-tiles will behave like tilings of classical Wang tiles. In order to restrict the macro-tiles (whose macro-colors currently cover all
binary strings of length $\rspace_{\ell}$) to the tileset $\Swang[\ell+1][S]$
(which may only contain a subset of $4$-tuples of macro-colors):
\begin{enumerate}[start=3]
\item We embed arbitrary computations by drawing, inside each macro-tile, the space-time diagram of a Turing machine computing the simulation $\sigma_{\ell}$ (and, thus, the tileset $S_{\ell+1}$);
\end{enumerate}

\noindent In particular, the zoom $\rspace_{\ell}$ of the simulation $\sigma_{\ell}$ between $\Swang[\ell][S]$ and $\Swang[\ell+1][S]$ \emph{must be large enough} to provide enough time for those computations to terminate.

\medskip
Notice that this method requires to already know the programming of the tileset $\Swang[\ell+1][S]$ \emph{before} defining $\Swang[\ell][S]$.
In order to define an infinite sequence $(\Swang[\ell][S])_{\ell \in \N}$ of tilesets such that $\Swang[\ell][S]$ simulates $\Swang[\ell+1][S]$, we circumvent this dependency issue by applying the \crtlnameref{calc:prop:fixed-point-theorem}~(\Cref{calc:prop:fixed-point-theorem}), which defines a unique program upon which all these tilesets can be based.

\subsubsection{Generalizing to non-uniform $d$-dimensional rectangles}

\Cref{sof:lem:fixpoint} is stated for grids $(\grid{g}_{\ell})_{\ell \in \N}$ whose rectangles $\rect{r} \in \grid{g}_{\ell}$ might draw various shapes and sizes. We thus generalize macro-colors and macro-tiles from $2$-dimensional squares to $d$-dimensional rectangles, and:
\begin{itemize}
\item The growth conditions on the grids $(\grid{g}_{\ell})_{\ell \in \N}$ use uniform bounds $\rspace_{\ell}$, $\rtime_{\ell}$ on the smallest facets $\rspace_{\ell} \leq \rspace(\grid{g}_{\ell}) = \inf_{\rect{r} \in \grid{g}_{\ell}} \rspace(\rect{r})$ and highest edge length $\rtime_{\ell} \geq \rtime(\grid{g}_{\ell}) = \sup_{\rect{r} \in \grid{g}_{\ell}} \rtime(\rect{r})$;
\item We embed computations by drawing the space-time diagrams of a \emph{processor array}, whose parallelism enables the $d$-dimensional geometry to operate in time $\bigO(n)$ on objects of size $\bigO(n^{d-1})$ (as opposed to Turing machines, in which space is always bounded by time);
\item The transmission of information (\textit{e.g.}~macro colors) inside a macro-tile of rectangular domain now relies on the \crtlnameref{pos:lem:routing-lemma}~(\Cref{pos:lem:routing-lemma}).
\end{itemize}

\subsubsection{Computing the substitution \texorpdfstring{$\tau$}{τ}} In order to compute the substitution $\tau$ given by \Cref{sof:lem:fixpoint}, we are left with a major difficulty: while the fixed-point construction classically requires a fast growth on the sizes $(\rspace(\grid{g}_{\ell}))_{\ell \in \N}$ to provide enough time for the computations of the simulations $(\sigma_{\ell})_{\ell \in \N}$ to terminate, \Cref{sof:lem:fixpoint} allows for substitution steps $\smash{x^{(\ell+1)} \substep[\tau][\grid{g}_{\ell}] x^{(\ell)}}$ whose grid $\grid{g}_{\ell}$ contains arbitrarily small rectangles (\textit{e.g.}~$\smash{\rnorm{\rect{r}} = \ibox{2}^d}$).

\emph{Our main idea for this proof consists in compressing the computation of several substitution steps for $\tau$ at the same level of simulation}. More precisely, we make a level $\ell$ of simulation compute several substitution steps
\[ x^{(\ell+1)} \substep[\tau][\grid{g}_{\ell}] x^{(\ell)}
  , \quad
  x^{(\ell+2)} \substep[\tau][\grid{g}_{\ell+1}] x^{(\ell+1)}
  ,\quad
  x^{(\ell+3)} \substep[\tau][\grid{g}_{\ell+2}] x^{(\ell+2)}
  , \quad \text{etc\dots}
\]
until the product $\rspace(\grid{g}_{\ell}) \cdot \rspace(\grid{g}_{\ell+1}) \cdot \rspace(\grid{g}_{\ell+2}) \cdot \ldots$ becomes large enough to compute a simulation step. For a given sequence $(\rspace(\grid{g}_{\ell}))_{\ell \in \N}$, we thus define a function $\kappa \colon \N \to \N$ that partitions $\N$ into segments $\interval{\kappa(n)}{\kappa(n+1)-1}$, which correspond to substitutions steps whose computations are actually implemented at the same level of simulation. This idea of \emph{sub-hierarchies} satisfying the growth constraints of the simulations already appears in~\cite{Zinoviadis_2016:phd:expansiveness-2d-sfts}.

\subsubsection{Interleaving computations of the substitution \texorpdfstring{$\tau$}{τ}}

Such segments $\interval{\kappa(n)}{\kappa(n+1)-1}$ might unfortunately define rectangles so large that the tileset~$\Swang[\kappa(n)][S]$ can no longer simulate $\Swang[\kappa(n+1)][S]$ (for example, a macro-tile might become too big for its size to fit (in binary) inside the computation space of the tiles of $\Swang[\kappa(n)][S]$). To solve this issue, we actually interleave the computations of simulations and substitutions steps. More precisely, we introduce a second function $\iota \colon \N \to \N$ such that:
\begin{itemize}
\item We build a sequence $(\Swang[\iota(n)][S])_{n \in \N}$ of tilesets such that each $\Swang[\iota(n)][S]$ simulates $\Swang[\iota(n+1)][S]$ using the fixed-point construction;
\item For all $n \in \N$, we have $\iota(n) < \kappa(n)$ so that $\Swang[\iota(n)][S]$ can be given the responsability to embed the computation of all the substitution steps $\smash{x^{(\ell+1)} \substep[\tau][\grid{g}_{\ell}] x^{(\ell)}}$ for $\ell \in \interval{\kappa(n)}{\kappa(n+1)-1}$;
\item And for all $n \in \N$, we have $\iota(n+1) \in \interval{\kappa(n)}{\kappa(n+1)-1}$ is small enough so that $\Swang[\iota(n)][S]$ can simulate $\Swang[\iota(n+1)][S]$.
\end{itemize}

\subsection{Deciding the hierarchy of substitutions and simulations}

The whole construction actually relies on the following lemma, which proves that such interleavings of simulations $\iota(n+1)$ inside substitution $\interval{\kappa(n)}{\kappa(n+1)-1}$ segments exist.

\begin{figure}[ht]
  \begin{tikzpicture}[scale=0.4]
    \begin{scope}
      \draw[gray!50,->,>=stealth] (-0.5,0) -- ++(20.3,0);
      \foreach \i in {0,...,19} {
        \draw[gray!50] (\i,0.2) -- ++(0,-0.4);
      }
      \draw[very thick,RainbowE,font=\scriptsize] (8,0.35) -- ++(0,-0.7) node[below] {$\kappa(n)$};
      \draw[very thick,RainbowE,font=\scriptsize] (17,0.35) -- ++(0,-0.7) node[below] {$\kappa(n{+}1)$};
      \draw[very thick,RainbowB,font=\scriptsize] (1,0.35) -- ++(0,-0.7) node[below] {$\iota(n)$};
      \draw[very thick,RainbowB,font=\scriptsize] (12,0.35) -- ++(0,-0.7) node[below] {$\iota(n{+}1)$};
    \end{scope}
  \end{tikzpicture}
  \caption{The functions $\iota \colon \N \to \N$ and $\kappa \colon \N \to \N$ from the ``\protect\crtlnameref{fix:lem:staircase-lemma}''.\protect\footnotemark}
  \subcaption{In the proof, the levels $(\iota(n))_{n \in \N}$ will refer to successive levels of fixed-point simulation. Each level $\iota(n)$ will also be responsible for computing the substitutions $\smash{x^{(\ell+1)} \substep x^{(\ell)}}$ for the block $\ell \in \interval{\kappa(n)}{\kappa(n{+}1){-}1}$.}
\end{figure}

\noindent Before we proceed with the formal statement \Cref{fix:lem:staircase-lemma}, let
us motivate the numerical quantities involved. If the tiles of level $\iota(n)$ organize themselves into macro-tiles of level $\ell \geq \kappa(n)$, then:
\begin{itemize}[itemsep=3pt]
\item A macro-tile of level $\ell$ has space larger than $\prod_{i=\iota(n)}^{\ell-1} \rspace_{i}$: more precisely, each facet in such a macro-tile contains at least that many tiles of level $\iota(n)$;
\item We make macro-tiles of level $\ell$ embed $\bigO(\pxspace{\ell}[\gamma])$ steps of $\log$-RAM computations for some $\gamma < 1$ by embedding the processor array simulations from \Cref{calc:lem:pa-ram-simulation}: Equation~\eqref{eq:stair:1} ensures that each facet in such a macro-tile is large enough to embed this many processors;
\item In a macro-tile of level $\ell$, a tile of level $\iota(n)$ is actually contained in macro-tiles of intermediate levels $\iota(n) < i \leq \ell$. As these rectangles have maximal edge length $\rtime_{i}$, encoding relative positions in binary (Equation~\eqref{eq:stair:2}) requires bit length $\bigO(\sum_{i=\iota(n)}^{\ell-1} \log \rtime_{i})$;
\item Since variables in a processor array have logarithmic bit length, and that $\bigO(\pxspace{\ell}^{\gamma})$ processors are embedded in the macro-tiles of level $\ell$, the tiles of level $\iota(n)$ require $\bigO(\log \pxspace{i})$ bits to encode a processor state for each intermediate level $\iota(n) < i \leq \ell$ (Equation~\eqref{eq:stair:3}).
\end{itemize}

\footnotetext{The \protect\crtlnameref{fix:lem:staircase-lemma} was also named the «~écluse~» lemma, from the French word denoting the devices that can raise/lower boats in a waterway to transition between sections of different heights. To their dismay, the authors later learned that such devices are called ``locks'' in English, which loses all the intended meaning. We thus settled for the terminology ``staircase'', although this staircase is somewhat of a trip hazard.}

\begin{lemma}[name={Staircase lemma},label={fix:lem:staircase-lemma}]
  Let $\K \in \N$ be an integer, $\delta, \gamma \in \R_+$ be two real numbers such that $2\delta < \gamma < 1$, and let $(\rspace_{\ell}, \rtime_{\ell})_{\ell \in \N}$ be sequences of integers such that, denoting $\pxspace{\ell} = \prod_{i < \ell} \rspace_{i}$, we have $\rtime_{\ell}^{\mspace{3mu}d-1} \geq \rspace_{\ell}$, $2 \leq \rtime_{\ell} \leq 2^{\K \cdot \pxspace{\ell}[\delta]}$ and $\smash{\rspace_{\ell}^{\K \cdot \log \pxspace{\ell+1}} \geq \rtime_{\ell}}$. \\[2pt]
  For $\K[][s] \in \R_+$, define $\iota \colon \N \to \N$ and $\kappa \colon \N \to \N$ as $\iota(0) = \kappa(0) = 0$, and: \\[-\linespacing]
  \begin{align*}
    \kappa(n+1) & = \min \big\{ \ell > \kappa(n) : \prod_{i=\kappa(n)}^{\ell-1} \rspace_{i} > \K[][s]\cdot \pxspace{\ell}[\gamma] \big\}; \\[-6pt]
    \iota(n+1) & = \max \{ \ell < \kappa(n+1) : \prod_{i=\ell}^{\kappa(n+1)-1} \rspace_{i} > \K[][s]\cdot \pxspace{\kappa(n+1)}[\gamma] \big\}.
  \end{align*}
  Then if $\K[][s] \in \R_+$ is large enough, $\iota$ and $\kappa$ are well-defined, and satisfy for all $n \geq 1$:\\[-.5\linespacing]
  \begin{align}
    \kappa(n-1) \leq \iota(n) < \kappa(n)
  \end{align}
  \begin{align}
    \prod_{\ell=\iota(n)}^{\kappa(n)-1} \mspace{-9mu} \rspace_{\ell} & \geq \K[][s] \cdot \pxspace{\kappa(n)}[\gamma] \hspace{0.7cm} \label{eq:stair:1}\\
    \sum_{\ell=0}^{\kappa(n+1)-2} \mspace{-12mu} \log \rtime_{\ell} & \leq \polylog \K[][s] \cdot \pxspace{\iota(n)}[2\delta] \label{eq:stair:2} \\
    \sum_{\ell=1}^{\kappa(n+1)-1} \mspace{-12mu} \log \pxspace{\ell} & \leq \polylog \K[][s] \cdot \pxspace{\iota(n)}[2\delta] \label{eq:stair:3}
  \end{align}
\end{lemma}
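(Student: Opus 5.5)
The plan is to work throughout in logarithmic coordinates. Write $P_\ell = \log \pxspace{\ell} = \sum_{i<\ell} \log \rspace_i$, and note that every $\rspace_i \geq 2$. Indeed, $\rtime_i \geq 2$, and combined with $\rspace_i^{\K \log \pxspace{i+1}} \geq \rtime_i$ this forces $\rspace_i \geq 2$. Hence $P_\ell \geq \ell$. The defining condition for $\kappa(n+1)$ then reads
\[ P_\ell - P_{\kappa(n)} > \log \K[][s] + \gamma P_\ell, \quad\text{i.e.}\quad (1-\gamma) P_\ell > P_{\kappa(n)} + \log\K[][s]. \]

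\emph{Well-definedness.} Since $P_\ell \to \infty$ and $\gamma<1$, such an $\ell$ exists, so $\kappa$ is well defined. For $\iota(n+1)$, the candidate $\ell = \kappa(n)$ satisfies the condition by the very definition of $\kappa(n+1)$. The maximum is therefore taken over a nonempty set containing $\kappa(n)$, which gives $\kappa(n) \leq \iota(n+1) < \kappa(n+1)$ at once. Equation~\eqref{eq:stair:1} holds by the definition of $\iota(n)$ itself.

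\emph{Equations~\eqref{eq:stair:2} and~\eqref{eq:stair:3}.} The key quantitative step is to bound how much $P$ can grow past $\iota(n)$, and I will do this in three stages.
\begin{enumerate}
\item By minimality of $\kappa(n+1)$, the step $\ell=\kappa(n+1)-1$ fails the condition, so $(1-\gamma)P_{\kappa(n+1)-1} \leq P_{\kappa(n)} + \log\K[][s]$. Thus $P_{\kappa(n+1)-1} = \bigO(P_{\kappa(n)} + \log\K[][s])$. A single jump is also controlled, since $\log\rspace_\ell \leq \log \rtime_\ell\cdot(d-1) \leq (d-1)\K\pxspace{\ell}[\delta]$. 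So $P_{\kappa(n+1)}$ is at most $\pxspace{\kappa(n)}$-polynomially larger.
\item By maximality of $\iota(n)$, the index $\iota(n)+1$ fails its condition, so $P_{\kappa(n)} - P_{\iota(n)+1} \leq \log\K[][s] + \gamma P_{\kappa(n)}$. This gives $(1-\gamma)P_{\kappa(n)} \leq P_{\iota(n)+1} + \log\K[][s]$. The jump from $\iota(n)$ to $\iota(n)+1$ adds $\log\rspace_{\iota(n)} \leq (d-1)\K\pxspace{\iota(n)}[\delta]$.
\item Chaining these, $P_{\kappa(n+1)-1} \leq C\,(\log\K[][s] + P_{\iota(n)} + \pxspace{\iota(n)}[\delta])$, with constants depending on $\gamma$ and $\K$.
\end{enumerate}
Exponentiating loses too much. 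Instead, I bound the sums directly. For \eqref{eq:stair:3}, each term is $\log\pxspace{\ell} = P_\ell \leq P_{\kappa(n+1)-1}$, and there are $\kappa(n+1)-1 \leq P_{\kappa(n+1)-1}$ terms. The sum is therefore at most $P_{\kappa(n+1)-1}^2 = \bigO(\polylog\K[][s]\cdot \pxspace{\iota(n)}[2\delta])$, using $P_{\iota(n)} \leq \pxspace{\iota(n)}[\delta]$ up to constants. For \eqref{eq:stair:2}, I write $\log\rtime_\ell \leq \K\log\pxspace{\ell+1}\log\rspace_\ell$. Summing up to $\kappa(n+1)-2$ gives at most $\K P_{\kappa(n+1)-1}\cdot P_{\kappa(n+1)-1}$, which is the same bound.

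\emph{Main obstacle.} The delicate point is step~2 above. The constants must be squared away so that $\pxspace{\iota(n)}[\delta]$, rather than some larger power, dominates. This is where $2\delta<\gamma$ should enter: I would use it to ensure that the one-step jump $\log \rspace_{\iota(n)}$ does not let $P_{\kappa(n)}$ exceed a constant multiple of $\pxspace{\iota(n)}[\delta]$. I would also handle small $n$, including the base case $\iota(0)=0$, by taking $\K[][s]$ large, so that all lower-order terms are absorbed into the $\polylog\K[][s]$ factor.
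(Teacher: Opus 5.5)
Your argument is correct and is essentially the paper's proof. Minimality of $\kappa(n+1)$ gives $\log\tilde{\mu}_{\kappa(n+1)-1}=O(\log\tilde{\mu}_{\kappa(n)}+\log K_{\mathrm{s}})$. Maximality of $\iota(n)$, together with the one-step bound $\log\mu_{\iota(n)}\le (d-1)K\tilde{\mu}_{\iota(n)}^{\delta}$, gives $\log\tilde{\mu}_{\kappa(n)}=O(\tilde{\mu}_{\iota(n)}^{\delta}+\log K_{\mathrm{s}})$. Both sums are then at most $K(\log\tilde{\mu}_{\kappa(n+1)-1})^2$, exactly as you do.

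The worry in your last paragraph is unfounded: your step~2 already closes the argument, and the hypothesis $2\delta<\gamma$ is not used in this lemma at all. The paper only needs it later, when fixing $K_{\mathrm{s}}$ and $K_{\mathrm{w}}$.
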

\noindent Since $\gamma < 1$, (2) also implies that for every $\ell \in \interval{\kappa(n)}{\kappa(n+1)-1}$, we have $\mspace{-12mu} \displaystyle \prod_{k=\iota(n)}^{\ell-1} \mspace{-9mu} \rspace_{k} \geq \K[][s] \cdot \pxspace{\ell}[\gamma]$.

\begin{proof}
  First, the functions $\kappa$ and $\iota$ are well-defined: since $\gamma < 1$ and that the product $\smash{\prod_{\ell=\kappa(n)}^{+\infty} \rspace_{\ell}}$ diverges, there must exist some $\ell$ such that $\prod_{i=\kappa(n)}^{\ell-1} \rspace_{i} > (\K[][s] \cdot \pxspace{\kappa(n)}[\gamma]) \cdot \prod_{i=\kappa(n)}^{\ell-1} \rspace_{i}^{\gamma}$. In particular, $\kappa(n+1)$ is well-defined, and in turn so is $\iota(n+1)$. Furthermore, we have $\iota(n+1) < \kappa(n+1)$ by definition of $\iota(n+1)$; and the product $\smash{\prod_{i=\kappa(n)}^{\kappa(n+1)-1} \rspace_{i}}$ is large enough to ensure that $\iota(n+1) \geq \kappa(n)$.

  \medskip
  We now prove that \eqref{eq:stair:1}, \eqref{eq:stair:2} and \eqref{eq:stair:3} hold for all $\K[][s] \in \R_+$ large enough. We claim that~\eqref{eq:stair:1} holds by definition of $\iota(n)$, whereas~\eqref{eq:stair:2} and~\eqref{eq:stair:3} follow from the following computations:
  \reqnomode
  \begin{itemize}[topsep=0.4\linespacing]
  \item First, we have by minimality of $\kappa(n+1)$ that
    \[ \prod_{\ell=\kappa(n)}^{\kappa(n+1)-2} \mspace{-6mu} \rspace_{\ell}
      \leq \K[][s] \cdot \pxspace{\kappa(n+1)-1}[\gamma]
      = \K[][s] \cdot \pxspace{\kappa(n)}[\gamma] \cdot \mspace{-10mu} \prod_{\ell=\kappa(n)}^{\kappa(n+1)-2} \mspace{-10mu} \rspace_{\ell}^{\gamma};
    \]
    so that, by dividing, taking a $\log$ and assuming that $\K[][s]$ is large enough, we deduce
    \begin{equation}
      \tag{$\ast$}
      \log \mspace{-6mu} \prod_{\ell = \kappa(n)}^{\kappa(n+1)-2} \mspace{-6mu} \rspace_{\ell}
      \leq \frac{\gamma}{1-\gamma} \cdot \log \pxspace{\kappa(n)} + \log \K[][s];
    \end{equation}
  \item Furthermore, by $(\ast)$, we obtain that for $\K[][s]$ large enough:
    \[ \log \pxspace{\kappa(n+1)-1}
      = \log \pxspace{\kappa(n)} + \log \mspace{-8mu}\prod_{\ell=\kappa(n)}^{\kappa(n+1)-2} \mspace{-8mu} \rspace_{\ell}
      \leq \log \pxspace{\kappa(n)} \cdot 2 \log \K[][s]; \]
    \pagebreak
  \item And by maximality of $\iota(n)$, and since $\rspace_{\ell} \leq \rtime_{\ell}^{\mspace{3mu}d-1}$ and $\rtime_{\ell} \leq 2^{\K[\rtime] \cdot \pxspace{\ell}[\delta]}$ we have
    \[
      \pxspace{\kappa(n)}
      = \pxspace{\iota(n)} \cdot \rspace_{\iota(n)} \cdot \mspace{-6mu} \prod_{\ell=\iota(n)+1}^{\kappa(n)-1} \mspace{-6mu} \rspace_{\ell}
      \leq \pxspace{\iota(n)} \cdot 2^{\K[\rtime] \cdot (d-1) \cdot \pxspace{\iota(n)}[\delta]} \cdot \K[][s] \cdot \pxspace{\kappa(n)}[\gamma];
    \]
    so that, by dividing, taking a $\log$ and assuming that $\K[][s]$ is large enough, we deduce that\\[-0.3\linespacing]
    \[
      \log \pxspace{\kappa(n)}
      \leq (\log \K[][s]) \cdot \pxspace{\iota(n)}[\delta] + \log \K[][s].
    \]
  \end{itemize}
  \leqnomode

  \bigskip
  \noindent Combining these considerations with the hypothesis $\smash{\rtime_{\ell} \leq \rspace_{\ell}^{\K \cdot \log \pxspace{\ell+1}}}$, we deduce that:
  \begin{align*}
    \sum_{\ell=0}^{\kappa(n+1)-2} \log \rtime_{\ell}
    & \leq \sum_{\ell=0}^{\kappa(n+1)-2} \K \cdot \log \pxspace{\ell+1} \cdot \log \rspace_{\ell} \\[-1\linespacing]
    & \leq \K[\rspace] \cdot \log \pxspace{\kappa(n+1)-1} \cdot \log \mspace{-8mu} \prod_{\ell=0}^{\kappa(n+1)-2} \mspace{-10mu} \rspace_{\ell} \\[-0.8\linespacing]
    & \leq \K[\rspace] \cdot (\log \pxspace{\kappa(n+1)-1})^2 \\
    & \leq \K[\rspace] \cdot (2 \log \K[][s])^4 \cdot \pxspace{\iota(n)}[2\delta]
  \end{align*}
  \textit{i.e.}~that~\eqref{eq:stair:2} holds if $\K[][s] \in \R_+$ is large enough to satisfy $\log \K[][s] \geq 2^4 \cdot \K[\rspace]$. Furthermore, since all $\rspace_{\ell}$ satisfy $\rspace_{\ell} \geq 2$, we deduce and obtain by $(\ast)$ that for $\K[][s]$ large enough:
    \[ \kappa(n+1) - 1
      = \log 2^{\kappa(n+1)-1}
      \leq \log \mspace{-8mu} \prod_{\ell=0}^{\kappa(n+1)-2} \mspace{-8mu} \rspace_{\ell}
      \leq \log \pxspace{\kappa(n+1)-1}
      \leq \log \pxspace{\kappa(n)} \cdot 2 \log \K[][s];
    \]
    from which we deduce $(4)$ if $\K[][s]$ is large enough again:
  \begin{align*}
    \hspace*{-0.08cm} \sum_{\ell=1}^{\kappa(n+1)-1} \log \pxspace{\ell}
    & \leq (\kappa(n+1)-1) \cdot \log \pxspace{\kappa(n+1)-1} \\[-1\linespacing]
    & \leq \big(2 \log \K[][s] \cdot \log \pxspace{\kappa(n)} \big)^2 \\
    & \leq (2 \log \K[][s])^4 \cdot \log \pxspace{\iota(n)}[2\delta] \qedhere
  \end{align*}
\end{proof}

\subsection{\boldmath The algorithm \texorpdfstring{$F(e)$}{F(e)}}\label{fix:sec:algorithm}

We now begin the proof of \Cref{sof:lem:fixpoint}.
Any RAM program $e \in \{0,1\}^{\ast}$ will be computably transformed into a new $\log$-RAM program $F(e) \in \{0,1\}^{\ast}$ which defines a non-deterministic function of the form:
\begin{gather*}
  \funram{F(e)}
  \big((\K[][s],\K[][w]),
  \iota,
  (\rspace_{\ell},\rtime_{\ell})_{0 \leq \ell < \kappa-1},
  (\rect{r}_{\ell},\vec{i}_{\ell})_{\iota \leq \ell < \kappa-1},
  \mspace{1mu} c_1^{\scriptscriptstyle -},\dots,c_d^{\scriptscriptstyle -},
  \mspace{1mu} c_1^{\scriptscriptstyle +},\dots, c_d^{\scriptscriptstyle +},
  \mspace{1mu} c_{\decsymbol}
  \big) \\
  \mapsmapsto
  \big(\nghb_{\kappa-1},
  \rect{r}_{\kappa-1},
  \vec{i}_{\kappa-1},
  u
  \big)
\end{gather*}
where\footnote{Notice that most of these variables are not written as binary words in $\{0,1\}^{\ast}$, but as data types of the form $\N$, $(\{0,1\}^{\ast})^{2d+1}$, etc\dots\ As mentioned in \Cref{calc:rem:data-types}, it is usual in computability theory to consider such elements both as data structures and binary words through implicit encodings.}

\begin{itemize}[topsep=0.4\linespacing,leftmargin=3\parindent,labelindent=1.5\parindent,labelwidth=0pt,itemindent=!,itemsep=0.2\linespacing]
\item $(\K[][s],\K[][w]) \in \N^2$ is a pair of constants;\\[2pt]
  \footnotesize Informally, they hardcode the constant from the \crtlnameref{fix:lem:staircase-lemma} and the $\bigO(\ldots)$ factor on the logarithmic word length of the program $e \in \{0,1\}^{\ast}$.

  \normalsize
\item $\iota \in \N$ is an integer;\\[2pt]
  \footnotesize Informally, $\iota$ refers to the current level of the simulation.

  \normalsize
\item $(\rspace_{\ell},\rtime_{\ell})_{0 \leq \ell < \kappa-1} \in (\N^2)^{\ast}$ is a family of integer bounds;\\[2pt]
  \footnotesize Informally, each pair $(\rspace_{\ell},\rtime_{\ell}) \in \N^2$ consists of a lower bound on the space and an upper bound on the time of the grids~$\grid{g}_{\ell}$ ($\ell < \kappa-1$) defined by the previous steps of substitutions and simulations.

  \normalsize
\item $(\mspace{1mu} c_1^{\scriptscriptstyle -},\dots,c_d^{\scriptscriptstyle -},
  \mspace{1mu} c_1^{\scriptscriptstyle +},\dots, c_d^{\scriptscriptstyle +},
  \mspace{1mu} c_{\decsymbol}) \in (\{0,1\}^{\ast})^{2d+1}$ is a $(2d+1)$-tuple of binary strings;\\[2pt]
  \footnotesize Informally, $(\mspace{1mu} c_1^{\scriptscriptstyle -},\dots,c_d^{\scriptscriptstyle -},
  \mspace{1mu} c_1^{\scriptscriptstyle +},\dots, c_d^{\scriptscriptstyle +},
  \mspace{1mu} c_{\decsymbol})$ represents a $d$-dimensional Wang tile $\wang{t}$ of the infinite tileset $\Swang[\ast]$.

  \normalsize
\item $(\rect{r}_{\ell},\vec{i}_{\ell})_{\iota \leq \ell < \kappa-1}$ is a (possibly empty) family of rectangles $\rect{r}_{\ell} \in \Srect_0$ and positions $\vec{i}_{\ell} \in \rect{r}_{\ell}$;\\[2pt]
  \footnotesize Informally, $\vec{i}_{\ell}$ represents the position of the Wang tile $\wang{t}$ inside its rectangle $\rect{r}_{\ell} = \rnorm{\rect{r}}$ for $\rect{r} \in \grid{g}_{\ell}$, where the grids $\grid{g}_{\ell}$ ($\iota \leq \ell < \kappa-1$) were defined by the previous steps of substitutions and simulations.
\end{itemize}

and
\begin{itemize}[topsep=0.4\linespacing,leftmargin=3\parindent,labelindent=1.5\parindent,labelwidth=0pt,itemindent=!,itemsep=0.2\linespacing]
\item $\nghb_{\kappa-1} \subseteq \{\pm \basis{k} : 1 \leq k \leq d\}$ is a set of (signed) directions.\\[2pt]
  \footnotesize Informally, $\nghb_{\kappa-1}$ indicates the directions of sibling macro-tiles at level $\kappa-1$, that is, determines which macro-tiles of level $\kappa-1$ belong to the same parent macro-tile of level $\kappa$.

  \normalsize
\item $(\rect{r}_{\kappa-1},\vec{i}_{\kappa-1})$ is a rectangle $\rect{r}_{\kappa-1} \in \Srect_0$ and a position $\vec{i}_{\kappa-1} \in \rect{r}_{\kappa}$;\\[2pt]
  \footnotesize Informally, $\vec{i}_{\kappa-1}$ refers to the position of the Wang tile $\wang{t}$ inside its rectangle $\rect{r}_{\kappa-1} = \rnorm{\rect{r}}$ for $\rect{r} \in \grid{g}_{\kappa-1}$, where $\grid{g}_{\kappa-1}$ is the first newly defined grid at the current level of simulation.

  \normalsize
\item $u \in \{0,1\}^{\ast}$ is either blank, or encodes a tuple $(i,j,b) \in (\N^2 \times \{0,1\})$ for $(i,j)$ a pair of integers and $b$ a single bit.\\[2pt]
  \footnotesize Intuitively, a non-blank tuple $(i,j,b)$ corresponds to a bit $b \in \{0,1\}$ of index $(i,j) \in \interval{0}{2d} \times \N$ issued by some RAM computations and needed by the tile $\wang{t}$ from the previous level of simulation.
\end{itemize}

\paragraph*{Proof strategy}
Given any RAM program $e \in \{0,1\}^{\ast}$, and for fixed parameters $(\K[][s],\K[][w])$, $\iota$~and $(\rspace_{\ell},\rtime_{\ell})_{0 \leq \ell < \kappa-1}$, \emph{the non-deterministic function $\funram{F(e)}$ will define a set of accepted Wang tiles ${(\mspace{1mu} c_1^{\scriptscriptstyle -},\dots,c_d^{\scriptscriptstyle -},
  \mspace{1mu} c_1^{\scriptscriptstyle +},\dots, c_d^{\scriptscriptstyle +},
  \mspace{1mu} c_{\decsymbol}) \in \Swang[\ast]}$.} Shortening notations, we denote by $\Swang[\iota][S] \subseteq \Swang[\ast]$ the set
\[ \Swang[\iota][S] =
  \mspace{-20mu} \bigcup_{(\rect{r}_{\ell},\vec{i}_{\ell})_{\iota \leq \ell < \kappa-1}} \mspace{-20mu}
  \big\{ (\mspace{1mu} c_1^{\scriptscriptstyle -},\dots,c_{\decsymbol}) \in \Swang[\ast]
  \; : \;
  \funram{F(e)}((\K[][s],\K[][w]),\iota,(\rspace_{\ell},\rtime_{\ell})_{0 \leq \ell < \kappa-1},(\rect{r}_{\ell},\vec{i}_{\ell})_{\iota \leq \ell < \kappa-1},\mspace{1mu} c_1^{\scriptscriptstyle -},\dots, c_{\decsymbol})\halt
  \big\}\]
of Wang tiles $\wang{t}$ for which there exists a family of rectangles and positions $(\rect{r}_{\ell},\vec{i}_{\ell})_{\iota \leq \ell < \kappa-1}$ such that $\funram{F(e)}(\dots,\wang{t})$ admits an accepting computation; and by $\Scolor[\iota] \subseteq \{0,1\}^{\ast}$ the associated colors; although one should keep in mind that both sets $\Swang[\iota][S]$ and $\Scolor[\iota]$ strongly depend on the fixed parameters $(\K[][s],\K[][w]) \in \N^2$ and $(\rspace_{\ell},\rtime_{\ell}) \in (\N^2)^{\ast}$.

\medskip
For grids $(\grid{g}_{\ell})_{0 \leq \ell < \kappa-1}$, the tiles in $\Swang[\iota][S]$ will be designed to form tilings that informally:
\begin{itemize}[itemsep=1pt]
\item Define some new grids $(\grid{g}_{\ell})_{\kappa-1 \leq \ell < \kappa'-1}$;
\item Simulate a tiling of $\Swang[\iota'][S]$ for some level $\iota'$ in the interval $\kappa \leq \iota' < \kappa'$;
\item Emulate at each new level $\kappa \leq \ell < \kappa'$ a valid $\Swang[\ast]$-tiling $x^{(\ell)}$ such that\footnote{The configuration $x^{(\kappa')}$ and the grid $\grid{g}_{\kappa'-1}$ appear to be missing in this explanation: while we recommend to ignore this technicality at this level of informal explanations, it is actually due to the numeric bounds provided by the~\crtlnameref{fix:lem:staircase-lemma}, which do not allow to store the grid $\grid{g}_{\kappa'-1}$ in the tiles of $\Swang[\iota][S]$. They will thus be defined by the next level of simulation~$\iota'$, and will require some complex synchronization in the simulation between $\Swang[\iota][S]$ and $\Swang[\iota'][S]$.} $\smash{x^{(\ell+1)} \substep[\tau][\grid{g}_{\ell}] x^{(\ell)}}$;
\end{itemize}
To design the tileset $\Swang[\iota][S] \subseteq \Swang[\ast]$, we will progressively enumerate the restrictions that make a tuple of colors $(c_1^{\scriptscriptstyle -},\dots,c_d^{\scriptscriptstyle -}, c_1^{\scriptscriptstyle +},\dots, c_d^{\scriptscriptstyle +}, c_{\decsymbol})$ accepted by the RAM algorithm $F(e) \in \{0,1\}^{\ast}$. By implicitly encoding complex data types into binary strings, we will consider that valid colors of $\Scolor[\iota] \subseteq \{0,1\}^{\ast}$ actually represent a \emph{record}\footnote{Also known as~``compound data type'' or \texttt{struct}, a record allows to define a collection of several variables (called \emph{fields}), possibly of different data types, grouped together into a single value. The fields in a record are typically identified by some explicit names.}, \textit{i.e.}~a collection of several data fields: for both the authors' and the reader's convenience, this proof will identify these fields by explicit names.

\emph{The proof below thus consists in listing the fields appearing in the colors $\Scolor[\iota]$, and the constraints a $(2d+1)$-tuple of them must satisfy to form a valid tile $(c_1^{\scriptscriptstyle -},\dots,c_d^{\scriptscriptstyle -}, c_1^{\scriptscriptstyle +},\dots, c_d^{\scriptscriptstyle +}, c_{\decsymbol}) \in \Swang[\iota][S]$.} The algorithm $F(e) \in \{0,1\}^{\ast}$ will thus be defined somewhat implicitly, as it mostly amounts to checking whether these constraints are satisfied.

\paragraph*{Housekeeping}\label{fix:sec:housekeeping}

Before we begin the proof, let us introduce the objects given by \Cref{sof:lem:fixpoint}: let $\alphabet \subseteq \{0,1\}^{\ast}$ be a finite alphabet and $\smash{X \subseteq \alphabet^{\Z^d}}$ a sofic shift space on $\alphabet$. Fix $\code{\tau} \in \{0,1\}^{\ast}$ a $\log$-RAM program defining a parallel expanding substitution, $\alpha < 1$ and $\smash{\delta < \frac{1}{2}}$ two rationals satisfying the hypotheses of \Cref{sof:lem:fixpoint}. Let also $\K \in \N$ be such that:
\begin{enumerate}[itemsep=4pt,topsep=4pt]
\item \begin{enumerate}[label={(\roman*)},itemsep=2pt]
  \item $2 \leq \rtime(\grid{g}_{\ell}) \leq 2^{\K \cdot \pxspace{\ell}[\delta]}$;
  \item $\rspace_{\ell}^{\K \cdot \log \pxspace{\ell+1}} \geq \rtime_{\ell}$;
  \end{enumerate}
\item \begin{enumerate}[label={(\roman*)},itemsep=2pt]
  \item The bit length in $x^{(\ell+1)}$ satisfies $\smash{\norm{x^{(\ell + 1)}} \leq \K[t] \cdot \pxspace{\ell}[\alpha]}$;
  \item The substitution $\smash{x^{(\ell+1)} \substep[\tau][\grid{g}_{\ell}] x^{(\ell)}}$ is computed in time at most $\smash{\K[t] \cdot \pxspace{\ell}[\alpha]}$;
  \item The program $\code{\tau} \in \{0,1\}^{\ast}$ is a $\log$-RAM program satisfying $W_{\code{\tau}}(\ram{I}) \leq \smash{\K[w] \cdot \log |\ram{I}|}$.
  \end{enumerate}
\end{enumerate}
For the remainder of the proof, we fix a rational $\gamma < 1$ such that $\alpha < \gamma$, $2\delta < \gamma$ and $\frac{1}{2} < \gamma$.

\subsection{Hierarchical positioning}\label{fix:sec:positioning}

Let us fix $\iota \in \N$ an integer corresponding to the current level of tiles; $(\K[][s],\K[][w])$ two integers, and a sequence of pairs $\smash{(\rspace_{\ell},\rtime_{\ell})_{0 \leq \ell < \kappa-1} \in (\N^2)^{\ast}}$ as given as input to~$\funram{F(e)}$. Notice that the latter determines a value $\kappa \in \N$, \textit{via} its length.

The tileset $\Swang[\iota][S]$ resulting from these parameters will embed some ``next'' steps of substitutions
\[ x^{\kappa'} \mspace{6mu} \substep[\tau][\grid{g}_{\kappa'-1}] \mspace{6mu} x^{\kappa'-1} \mspace{6mu} \substep[\tau][\grid{g}_{\kappa'-2}] \mspace{10mu} \ldots \mspace{10mu} \substep[\tau][\grid{g}_{\kappa+1}] \mspace{6mu} x^{(\kappa+1)} \mspace{6mu} \substep[\tau][\grid{g}_{\kappa}] x^{(\kappa)} \]
for some grids $\grid{g}_{\kappa-1}, \dots, \grid{g}_{\kappa'-1}$; and decide on a level $\iota' \geq \kappa$ for the next level of simulation. To proceed, the tileset $\Swang[\iota][S]$ will start by fixing the grids $\grid{g}_{\kappa-1}, \dots, \grid{g}_{\kappa'-2}$ and the uniform bounds on time $\rtime_{\ell} \geq \rtime(\grid{g}_{\ell})$ and space $\rspace_{\ell} \leq \rspace(\grid{g}_{\ell})$ for these new grids.

\paragraph*{Uniform bounds}

The tilings over $\Swang[\iota][S]$ start by choosing an integer $\kappa' > \kappa$ and draw a uniform family of bounds $(\rspace_{\ell},\rtime_{\ell})_{\kappa -1 \leq \ell < \kappa'-1}$.
To do so, we add a \field{bound field} to the colors of~$\Scolor[\iota]$, which contains a family of bounds
\[ (\rspace_{\ell},\rtime_{\ell})_{\kappa-1 \leq \ell < \kappa'-1} \in (\N^2)^{\ast} \]
such that\\[-1.5\linespacing]
\begin{equation}\label{eq:bound-fields1}\tag{b1}
  \prod_{i=\iota}^{\kappa-1} \rspace_{i} \geq \K[][s] \cdot \pxspace{\kappa}[\gamma]
\end{equation}
and for every $\ell \in \interval{\kappa-1}{\kappa'-2}$,
\begin{equation}\label{eq:bound-fields2}\tag{b2}
  \rtime_{\ell} \leq 2^{\K \cdot \pxspace{\ell}[\delta]}\!
  \quad \text{;} \qquad
  \rspace_{\ell}^{\K \cdot \log \pxspace{\ell+1}} \geq \rtime_{\ell}
  \qquad \text{and} \qquad
  \prod_{i=\kappa}^{\ell-1} \rspace_{i} < \K[][s] \cdot \pxspace{\ell}[\gamma]
\end{equation}
for $\K$ the constant fixed in the~\crtlnameref{fix:sec:housekeeping} paragraph, $\K[][s]$ the constant given as input to $\funram{F(e)}$ and $\pxspace{\ell}$ the integer defined as $\smash{\pxspace{\ell} = \prod_{i=0}^{\ell-1} \rspace_{i}}$. In particular, \eqref{eq:bound-fields1} and \eqref{eq:bound-fields2} ensure that, if the prefix $(\rspace_{\ell},\rtime_{\ell})_{0 \leq \ell < \kappa-1}$ satisfies the \crtlnameref{fix:lem:staircase-lemma}, then so does the larger segment $(\rspace_{\ell},\rtime_{\ell})_{0 \leq \ell < \kappa'-1}$; and that parameters $\kappa$ and $\iota$ are consistent with the values given by said lemma\footnote{Notice that the space bound $\rspace_{\kappa}$ was not given as a parameter for $\Swang[\iota][S]$, and was first introduced in the \field{bound fields} of $\Swang[\iota][S]$. Thus, such consistency checks for the  \crtlnameref{fix:lem:staircase-lemma} must be performed in $\Swang[\iota][S]$, and could not have been done at the previous level of fixed-point simulation.}.

\begin{details}
  For any tile in $\wang{t} \in \Swang[\iota][S]$, the same data ${(\rspace_{\ell},\rtime_{\ell})_{\kappa-1 \leq \ell < \kappa'-1}}$ must appear in the \field{bound fields} of all its facets $\facet{k}{\pm}(\wang{t})$, thus ensuring the uniformity of these bounds across entire configurations drawn by the tiles of $\Swang[\iota][S]$. The same data is also copied to the decoration $\dec(\wang{t})$.
\end{details}

\paragraph*{Grids and positions}

The tilings over $\Swang[\iota][S]$ must then draw the grids $\grid{g}_{\kappa-1}$, \dots, $\grid{g}_{\kappa'-2}$. To do so, we add a \field{position field} to the colors of $\Scolor[\iota]$ which contains a family of the form
\[ (\rect{r}_{\ell},\vec{i}_{\ell})_{\iota \leq \ell < \kappa'-1} \in (\Srect_0,\N^d)^{\ast} \]
for $\rect{r}_{\ell} \in \Srect_0$ rectangles such that $\rspace(\rect{r}_{\ell}) \geq \rspace_{\ell}$ and $\rtime(\rect{r}_{\ell}) \leq \rtime_{\ell}$; and $\vec{i}_{\ell} \in \rect{r}_{\ell}$ positions in $\rect{r}_{\ell}$. The prefix ${(\rect{r}_{\ell},\vec{i}_{\ell})_{\iota \leq \ell < \kappa-1}}$ must equal the eponymous input parameters of $\funram{F(e)}$.  In an ``odometer-like'' fashion, the \field{position fields} in an $\Swang[\iota][S]$-tiling will uniquely determine some grids ${\grid{g}_{\iota},\dots,\grid{g}_{\kappa'-2}}$.

\begin{definition}[Odometer]
  For $(\rect{r}_{\ell})_{\iota \leq \ell < \kappa'-1}$ a family of rectangles from $\Srect_0$, the associated \emph{partial odometer} defines, for every $1 \leq k \leq d$, an addition
  \[ (\vec{i}_{\ell})_{\iota \leq \ell < \kappa'-1} + \basis{k} = (\vec{j}_{\ell})_{\iota \leq \ell < \kappa'-1}
    \qquad \text{for } \mspace{6mu} \vec{j}_{\ell} \in \rect{r}_{\ell} \cup \{\odnone\}, \]
  where \\[-1.5\linespacing]
  \[
    \vec{j}_{\ell} =
    \begin{cases}
      \vec{i}_{\ell} & \text{if there exists } \ell' < \ell \text{ such that } \vec{i}_{\ell'} \notin \facet{k}{+}(\rect{r}_{\ell'}) \\
      \vec{i}_{\ell} + \basis{k} & \text{if for all } \ell' < \ell \text{ we have } \vec{i}_{\ell'} \in \facet{k}{+}(\rect{r}_{\ell'}) \text{, and } \vec{i}_{\ell} + \basis{k} \in \rect{r}_{\ell} \\
      \odnone &  \text{if for all } \ell' < \ell \text{ we have } \vec{i}_{\ell'} \in \facet{k}{+}(\rect{r}_{\ell'}) \text{, but } \vec{i}_{\ell} + \basis{k} \notin \rect{r}_{\ell}.
    \end{cases}
  \]
\end{definition}
Similarly, one can define the subtraction $(\vec{i}_{\ell})_{\iota \leq \ell < \kappa'-1} - \basis{k}$ using facets $\facet{k}{-}(\rect{r}_{\ell})$. These odometers intuitively generalize additions with carries in a $d$-dimensional setting, and having a result $\vec{j}_{\ell}' = \odnone$ implies that an overflow occurred in the rectangle $\rect{r}_{\ell}$.\pagebreak

\begin{details}
  For any tile in $\wang{t} \in \Swang[\iota][S]$, let $(\rspace_{\ell},\rtime_{\ell})_{\iota \leq \ell < \kappa'-1}$ be given by the \field{bound field} of any color of $\wang{t}$. And let $(\rect{r}_{\ell},\vec{i}_{\ell})_{\iota \leq \ell < \kappa'-1}$ be the \field{position field} of the decoration $\dec(\wang{t})$. Then for every direction $1 \leq k \leq d$:
  \begin{itemize}
  \item The \field{position field} of the facet $\facet{k}{-}(\wang{t})$ is the same as the decoration's, \textit{i.e.}~is also $(\rect{r}_{\ell},\vec{i}_{\ell})_{\iota \leq \ell < \kappa'-1}$;
  \item The \field{position field} of the facet $\facet{k}{+}(\wang{t})$ contains $(\rect{r}_{\ell}', \vec{i}_{\ell}')_{\iota \leq \ell < \kappa'-1}$, where, if we denote $(\vec{j}_{\ell})_{\iota \leq \ell < \kappa'-1}$ the result of the partial odometer operation $(\vec{i}_{\ell})_{\iota \leq \ell < \kappa'-1} + \basis{k}$, then:
    \begin{itemize}
    \item If $\vec{j}_{\ell} \in \rect{r}_{\ell}$, then $\rect{r}_{\ell}' = \rect{r}_{\ell}$ and $\vec{i}_{\ell}' = \vec{j}_{\ell}$;
    \item If $\vec{j}_{\ell} = \odnone$, then $\rect{r}'_{\ell}$ is any rectangle such that $\rnorm{\facet{k}{+}(\rect{r}_{\ell})}$ and $\rnorm{\facet{k}{-}(\rect{r}_{\ell}')}$ are equal, and $\vec{i}_{\ell}' = \vec{i}_{\ell} + \basis{k} \bmod \rect{r}_{\ell}$.\qedhere
    \end{itemize}
  \end{itemize}
\end{details}

\begin{claim}[label={pclm:positioning}]
  In any valid tiling of the tileset $\Swang[\iota][S]$, the odometer-like structure of the \field{position fields} of the decorations uniquely determines a finite sequence of nested grids
  \[ \left(\bigcomp_{i=\iota}^{\ell-1} \mspace{4mu} \grid{g}_{i} \mspace{3mu} \right)_{\iota+1 \leq \ell < \kappa'}\]
\end{claim}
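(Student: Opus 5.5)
We prove the claim by induction on the level $\ell$, going from the finest grid $\grid{g}_{\iota}$ to the coarsest. Fix a valid tiling $x$ over $\Swang[\iota][S]$, and write $(\rect{r}_{\ell}(\vec{p}),\vec{i}_{\ell}(\vec{p}))_{\iota \leq \ell < \kappa'-1}$ for the \field{position field} of $\dec(x_{\vec{p}})$ at $\vec{p} \in \Z^d$.

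Two observations drive the argument. First, the rules in the preceding details box make the \field{position field} of $\facet{k}{-}(x_{\vec{p}+\basis{k}})$ equal to the \field{position field} of $\dec(x_{\vec{p}+\basis{k}})$. Second, it must also equal the field of $\facet{k}{+}(x_{\vec{p}})$, which is the odometer successor of position $\vec{p}$. Consequently, along direction $\basis{k}$ the decorations evolve exactly by the partial odometer $+\basis{k}$, with the overflowing coordinates $\odnone$ replaced by $\vec{i}_{\ell}+\basis{k} \bmod \rect{r}_{\ell}$ and a new rectangle whose $\facet{k}{-}$ matches the old $\facet{k}{+}$. The same holds for $-\basis{k}$ by symmetry.

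\emph{Base case.} Project to the level-$\iota$ coordinate $(\rect{r}_{\iota},\vec{i}_{\iota})$. The lowest digit always changes: it is either incremented or wrapped with a compatible new rectangle. So this projection satisfies exactly the constraints of the grid tileset $\Swang[\grid{G}]$ of \Cref{pos:sec:grids}. The proposition stated there then yields a unique grid $\grid{g}_{\iota}$ with $\dec$ equal to $(\rnorm{\rect{r}},\vec{p} \bmod \grid{g}_{\iota})$.

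\emph{Inductive step.} Suppose the nested grids up to $\bigcomp_{i=\iota}^{\ell-1}\grid{g}_i$ are determined, and that the lower coordinates of the position field identify the cell of $\grid{g}_{\ell-1}$ containing each $\vec{p}$. We need three facts.
\begin{enumerate}
\item Inside a single rectangle $\tilde{\rect{r}}$ of $\bigcomp_{i=\iota}^{\ell-1}\grid{g}_i$, the coordinates at levels $\geq \ell$ are constant. This holds because the odometer leaves higher digits unchanged unless every lower digit sits on the facet $\facet{k}{+}$, and that only happens when crossing the boundary of $\tilde{\rect{r}}$.
\item Crossing such a boundary along $\basis{k}$ acts on the level-$\ell$ digit as a Wang-grid step, exactly as in the base case.
\item Define a \emph{super-tiling} indexed by the cells of $\bigcomp_{i=\iota}^{\ell-1}\grid{g}_i$, carrying the constant level-$\ell$ digit of each cell. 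This super-tiling is a valid $\Swang[\grid{G}]$ tiling, so it determines a grid $\grid{g}_{\ell}$ on the cell-indices. Composing gives $\grid{g}_{\ell} \circ \bigcomp_{i=\iota}^{\ell-1}\grid{g}_i$.
\end{enumerate}

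Uniqueness follows at every stage because the decorations literally encode $\vec{p} \bmod$ each grid.

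The main obstacle is indexing the cells of the coarser grid by $\Z^d$ so that \Cref{pos:sec:grids} applies at each stage. The cells of a grid are indexed by $\Z^d$ by definition, and facet-adjacency of cells corresponds to the odometer carry. The delicate part is checking that carries are consistent along whole facets. A rectangle $\rect{r}'_{\ell}$ chosen at an overflow must agree across the entire shared facet. Each tile only sees its neighbor, so this agreement has to come from the constancy in fact 1, propagated along the facet via the $\basis{k'}$-moves, $k' \neq k$, which do not touch the level-$\ell$ digit while staying inside the cell. I would handle this step carefully with a short lemma stating that constancy on cells plus local odometer compatibility implies a well-defined adjacency structure.
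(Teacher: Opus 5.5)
Your argument is correct and follows the route the paper intends. The paper gives no separate proof of this claim. It treats it as the level-by-level iteration of the grid-tileset proposition of \Cref{pos:sec:grids}, together with a fact stated in the details box right after the claim: the odometer returns a full $\odnone$-sequence at levels $\iota,\dots,\ell-1$ exactly on the facets of the level-$\ell$ composite rectangles.

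The facet-consistency issue you flag at the end is already settled by your facts 1--3. Once the super-tiling on cell indices is a valid $\Swang[\grid{G}]$-tiling, that proposition yields agreement of the overflow rectangles along whole facets. The one step that still deserves a written side induction is the characterization your fact 1 relies on: full overflow at levels $\iota,\dots,\ell-1$ occurs exactly when the tile lies in the $\basis{k}$-facet of its composite cell.
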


\noindent More precisely, the \field{position fields} of such a tiling determine two sequences of grids $(\grid{g}_{\ell})_{\iota \leq \ell < \kappa'-1}$ and $(\pxgrid{g}_{\ell})_{\iota+1 \leq \ell < \kappa'}$, where $\pxgrid{g}_{\ell}$ refers to $\pxgrid{g}_{\ell} = \bigcomp_{i=\iota}^{\ell-1} \grid{g}_{\ell}$. Each grid $\grid{g}_{\ell-1}$ defines rectangles $\rect{r} \subseteq \Z^d$ whose positions $\vec{i} \in \rect{r}$ are understood as positions of level $\ell-1$; while the grids $\pxgrid{g}_{\ell}$ define rectangles $\pxrect{r} \subseteq \Z^d$ whose positions $\vec{i} \in \pxrect{r}$ are understood as position of level $\iota$, \textit{i.e.}~as individual tiles of $\Swang[\iota][S]$.

Notice that, for $\wang{t} \in \Swang[\iota][S]$ a tile in such a valid tiling, the \field{position field} of $\dec(\wang{t})$ contains the tile's position in the corresponding rectangle $\rnorm{\rect{r}_{\ell-1}}$ for $\rect{r}_{\ell-1} \in \grid{g}_{\ell-1}$; but does \emph{not} allow to deduce the exact position of the tile in (nor the size of) the rectangles $\rnorm{\pxrect{r}_{\ell}}$ for the corresponding $\pxrect{r}_{\ell} \in \pxgrid{g}_{\ell}$.

\begin{figure}
  \begin{tikzpicture}[scale=0.2]
    \draw[thin,LevelI,opacity=0.2,dash pattern=on 1pt off .8pt] (0,0) grid (60,30); 
    \begin{scope}[densely dashed,opacity=0.2]
      \foreach \pos in {3,6,8,10,12,16,20,22,26,31,33,36,38,40,42,47,50,55} { 
        \draw[line width=1pt,LevelINext] (\pos,0) -- ++(0,30);
      }
      \foreach \pos in {6,12,20,26,33,38,50} { 
        \draw[line width=1.2pt,dashed, LevelK] (\pos,0) -- ++(0,30);
      }
      \draw[line width=1.4pt,loosely dash dot, LevelKLast] (26,0) -- ++(0,30); 
      \foreach \pos in {2,5,9,12,14,16,20,23,26} {
        \draw[line width=1pt,LevelINext] (0,\pos) -- ++(60,0); 
      }
      \foreach \pos in {5,12,23} {
        \draw[line width=1.2pt,dashed, LevelK] (0,\pos) -- ++(60,0);
      }
      \draw[line width=1.4pt,loosely dash dot, LevelKLast] (0,12) -- ++(60,0);
      \draw[densely dashed,LevelINext] (0,0) rectangle (60,30);
      \draw[thick,dashed,LevelK] (0,0) rectangle (60,30);
      \draw[line width=1.2pt,loosely dash dot,LevelKLast] (0,0) rectangle (60,30);
    \end{scope}
    \draw[LevelI,very thick] (44,17) rectangle ++(1,1);
    \draw[LevelI,thick,opacity=0.6] (42,16) grid ++(5,4);
    \begin{scope}[LevelINext,very thick]
      \foreach \pos in {12,14,16,20,23} {
        \draw (38,\pos) -- ++(12,0);
      }
      \foreach \pos in {38,40,42,47,50} {
        \draw (\pos,12) -- ++(0,11);
      }
    \end{scope}
    \begin{scope}[LevelK,very thick]
      \foreach \pos in {12,23,30} {
        \draw (26,\pos) -- ++(34,0);
      }
      \foreach \pos in {26,33,38,50,60} {
        \draw (\pos,12) -- ++(0,18);
      }
    \end{scope}
    \begin{scope}[LevelKLast,very thick]
      \foreach \pos in {0,12,30} {
        \draw (0,\pos) -- ++(60,0);
      }
      \foreach \pos in {0,26,60} {
        \draw (\pos,0) -- ++(0,30);
      }
      \draw[LevelKNext,very thick] (0,0) rectangle ++(60,30);
    \end{scope}
  \end{tikzpicture}
  \caption{The information in the \field{position field} of $\dec(\wang{t})$, for $\wang{t}$ a tile in $\Swang[\iota][S]$.}
  \subcaption{We draw five levels of grids: individual tiles of $\Swang[\iota][S]$, $\grid{g}_{\iota},\dots,\grid{g}_{\iota+3}$ (with respective colors
    \begin{tikzpicture}[baseline=-0.6ex,scale=0.3]\protect\draw[opacity=.6,densely dashed,LevelI,thin] (0,0) -- ++(1,0);\end{tikzpicture},
    \begin{tikzpicture}[baseline=-0.6ex,scale=0.3]\protect\draw[opacity=.6,thick,densely dashed,LevelINext] (0,0) -- ++(1,0);\end{tikzpicture},
    \begin{tikzpicture}[baseline=-0.6ex,scale=0.3]\protect\draw[opacity=.6,thick,densely dashed,LevelK] (0,0) -- ++(1,0);\end{tikzpicture},
    \begin{tikzpicture}[baseline=-0.6ex,scale=0.3]\protect\draw[opacity=.6,thick,densely dashed,LevelKLast] (0,0) -- ++(1,0);\end{tikzpicture} and
    \begin{tikzpicture}[baseline=-0.6ex,scale=0.3]\protect\draw[opacity=.6,thick,densely dashed,LevelKNext] (0,0) -- ++(1,0);\end{tikzpicture}).
    Given the tile $\wang{t}$ (drawn \begin{tikzpicture}[baseline=0ex,scale=0.2]\protect\draw[thick,LevelI] (0,0) rectangle ++(1,1);\end{tikzpicture}), the \field{position field} of $\dec(\wang{t})$ encodes the rectangles $\rect{r}_{\iota},\dots,\rect{r}_{\iota+3}$ in which $\wang{t}$ appears (whose cells are respectively drawn
    \begin{tikzpicture}[baseline=-0.6ex,scale=0.3]\protect\draw[LevelI,opacity=0.6,ultra thick] (0,0) -- ++(1,0);\end{tikzpicture},
    \begin{tikzpicture}[baseline=-0.6ex,scale=0.3]\protect\draw[LevelINext,ultra thick] (0,0) -- ++(1,0);\end{tikzpicture},
    \begin{tikzpicture}[baseline=-0.6ex,scale=0.3]\protect\draw[LevelK,ultra thick] (0,0) -- ++(1,0);\end{tikzpicture} and
    \begin{tikzpicture}[baseline=-0.6ex,scale=0.3]\protect\draw[LevelKLast,ultra thick] (0,0) -- ++(1,0);\end{tikzpicture}).}
\end{figure}

\begin{details}
 While a single tile $\wang{t}$ cannot determine its exact position in the rectangle $\rnorm{\pxrect{r}_{\ell}}$, the information in its \field{position fields} allows to determine if it appears on an external facet of $\rnorm{\pxrect{r}_{\ell}}$. Indeed, in any valid $\Swang[\iota][S]$-tiling in which $\wang{t}$ appears, denote $(\pxgrid{g}_{\ell})_{\iota +1 \leq \ell < \kappa'}$ the resulting grids and consider a level $\iota+1 \leq \ell < \kappa'$. By definition of $\pxgrid{g}_{\ell}$, there exists some $\pxrect{r} \in \pxgrid{g}_{\ell}$ such that $\wang{t}$ appears at position $\vec{i} \in \rnorm{\pxrect{r}}$. The tile $\wang{t}$ then appears on the facet $\facet{k}{\pm}(\rnorm{\pxrect{r}})$ if and only if, denoting $(\rect{r}_{i},\vec{i}_{i})_{\iota \leq i < \kappa'-1}$ the \field{position field} of $\dec(\wang{t})$, the operation $(\vec{i}_{i})_{\iota \leq i < \ell} \pm \basis{k}$ in the partial odometer of rectangles $(\rect{r}_{i})_{\iota \leq i < \ell}$ returns a full $\odnone$-sequence.
\end{details}

\begin{numerics}
  Let $\wang{t} \in \Swang[\iota][S]$ be a tile, and denote $(\rspace_{\ell},\rtime_{\ell})_{0 \leq \ell < \kappa'-1}$ the sequence obtained by the \field{bound fields} and the parameters of $\funram{F(e)}$ fixed for $\Swang[\iota][S]$. Then the~\crtlnameref{fix:lem:staircase-lemma}~(\Cref{fix:lem:staircase-lemma}) ensures that the \field{bound fields} and the \field{position fields} of $\wang{t}$ have bit length bounded by:
  \[ \sum_{\ell=0}^{\kappa'-2} \log \rspace_{\ell} + \log \rtime_{\ell} \leq \polylog \K[][s] \cdot \pxspace{\iota}[2\delta].\]

  \pagebreak
  We chose to make the \field{position fields} encode positions in this ``odometer-like'' fashion (\textit{i.e.}~relatively to the rectangles $\rnorm{\rect{r}_{\ell-1}}$ from the grid $\grid{g}_{\ell-1}$) instead of the more traditional ``pixel-like'' way (\textit{i.e.}~relatively to the rectangles $\rnorm{\pxrect{r}_{\ell}}$ from the product grid $\pxgrid{g}_{\ell}$) because it allows a better bound $\delta < \frac{1}{2}$ instead of $\delta < \frac{1}{3}$: by the \crtlnameref{fix:lem:staircase-lemma} the latter position scheme would require bit length:
  \[ \sum_{i=\iota}^{\kappa'-2} \log \left(\prod_{j=\iota}^{i-1} \rspace_{j} \cdot\rtime_{j}\right) \leq d \cdot \sum_{i=\iota}^{\kappa'-2} \sum_{j=\iota}^{i-1} \log \rtime_{j} = \bigO(\pxspace{\iota}^{3\delta}). \qedhere \]
\end{numerics}

\paragraph*{Neighbors} Unfortunately, the \field{position fields} in $\Swang[\iota][S]$ do not determine the grid $\grid{g}_{\kappa'-1}$ associated with the last substitution step: indeed, the allowed word length\footnote{Colors of $\Scolor[\iota]$ will not be allowed to grow larger than $\bigO(\pxspace{\iota}[\gamma])$ bits, as they will to be processed by the computations embedded inside the (not yet defined) macro-tiles of level $\iota$.} for the colors of $\Scolor[\iota]$ might be too small to accommodate the grid sizes for $\grid{g}_{\kappa'-1}$ allowed by \Cref{sof:lem:fixpoint}. Thus, instead of making each tile of $\Swang[\iota][S]$ store a rectangle $\rect{r}_{\kappa'-1} \in \Srect_0$ and its position $\vec{i}_{\kappa'-1} \in \rect{r}_{\kappa'-1}$, we implicitly represent the grid $\grid{g}_{\kappa'-1}$ by neighboring sets, \textit{i.e.}~sets indicating at each position $\vec{i}$ which neighboring positions $\vec{i} \pm \basis{k}$ belong to the same rectangle of $\grid{g}_{\kappa'-1}$. Neighboring sets sets will then be synchronized across the next step of fixed-point simulation.

To proceed, we add to the colors of $\Scolor[\iota]$ a \field{neighbor field}
\[ (\nghb_{\ell})_{\iota \leq \ell < \kappa'} \]
where each $\nghb_{\ell}$ is a \emph{neighboring set} $\nghb_{\ell} \subseteq \{\pm \basis{k} : 1 \leq k \leq d\}$. In a valid $\Swang[\iota][S]$-tiling, any tile $\wang{t}$ belongs to a unique rectangle $\rect{r}_{\vec{i}}$ in the grid $\grid{g}_{\ell-1}$: in the \field{neighbor field} of $\dec(\wang{t})$,  a direction $\pm \basis{k}$ is in $\nghb_{\ell}$ if the rectangles $\pxrect{r}_{\vec{i}}$ and $\pxrect{r}_{\vec{i} \pm \basis{k}}$ in the grid $\grid{g}_{\ell-1}$ belong to the same parent rectangle in the grid $\grid{g}_{\ell}$.

\begin{details}
  For $\wang{t} \in \Swang[\iota][S]$. First, the \field{neighbor field} of the decoration $\dec(\wang{t})$ and the negative facets $\facet{k}{-}(\wang{t})$ must first be equal. Furthermore, let $(\rect{r}_{\ell},\vec{i}_{\ell})_{\iota \leq \ell < \kappa'-1}$ denote the \field{position field} of $\dec(\wang{t})$; and for a fixed direction $\basis{k}$ for $1 \leq k \leq d$, denote $\smash{\vec{j}_{k} = (\vec{i}_{\ell'})_{\iota \leq \ell' < \ell} + \basis{k}}$ the result of the partial odometer operations defined by the rectangles $(\rect{r}_{\ell})_{\iota \leq \ell' < \ell}$. We have:
  \begin{itemize}
  \item If $\vec{j}_k = (\odnone,\dots,\odnone)$, then the direction $\basis{k}$ is in the entry $\nghb_{\ell}$ of the \field{neighbor field} of $\dec(\wang{t})$ if and only if the opposite direction $-\basis{k}$ is in the entry $\nghb_{\ell}$ of the \field{neighbor field} of~$\facet{k}{+}(\wang{t})$;
  \item If $\vec{j}_k = (\odnone,\dots,\odnone)$ and $\ell < \kappa' - 1$, then $\nghb_{\ell}$ is easily deduced from the \field{position field}: the direction $\basis{k}$ is in the entry $\nghb_{\ell}$ of the \field{neighbor field} of $\dec(\wang{t})$ if and only if $\vec{i}_{\ell+1} + \basis{k} \in \rect{r}_{\ell+1}$;
  \item If $\vec{j}_k = (\odnone,\dots,\odnone)$ and $\ell = \kappa' - 1$, we do not impose anything for now on the entry $\nghb_{\kappa'-1}$ of the \field{neighbor field} of $\dec(\wang{t})$.
  \end{itemize}

  Otherwise (\textit{i.e.}~$\vec{j}_k \neq (\odnone,\dots,\odnone)$), the \field{neighbor fields} $\nghb_{\ell}$  of level $\ell$ of $\facet{k}{-}(\wang{t})$ and $\facet{k}{+}(\wang{t})$ must be equal.
\end{details}

These conditions ensure in particular that the entries $\nghb_{\ell}$ in the \field{neighbor fields} of adjacent tiles will be constant inside a rectangle $\rect{r}_{\ell-1}$. To ensure that the entries $\nghb_{\kappa'-1}$ implicitly define some grids $\grid{g}_{\kappa'-1}$ in valid $\Swang[\iota][S]$-tilings, they will be synchronized with the \field{neighbor field} of the (yet to be defined) next level of fixed-point simulation~${\iota' \geq \kappa}$: see \Cref{fix:sec:macro-tile:communication-next-level}.

\medskip
\subsection{Macro-tiles}\label{fix:sec:macro-tiles}

As any valid tiling $x \in \smash{\Swang[\iota][S]^{\Z^d}}$ defines a sequence of nested grids $(\pxgrid{g}_{\ell})_{\iota+1 \leq \ell < \kappa'}$ (\Cref{pclm:positioning}), we define the \emph{macro-tiles of level $\ell$} (for $\iota \leq \ell < \kappa'$) as the rectangular patterns $\mwang{t}[\ell] = \rnorm{\restr{x}{\pxrect{r}_{\ell}}}$ for $\pxrect{r}_{\ell} \in \pxgrid{g}_{\ell}$.\footnote{Where the grid $\pxgrid{g}_{\iota} = (\rect{r}_{\vec{i}})_{\vec{i} \in \Z^d}$ of level $\iota$ is the unit cell grid of rectangles $\rect{r}_{\vec{i}} = \{ \vec{i} \}$.} In other words, a macro-tile of level $\ell$ is a valid pattern $\mwang{t}[\ell] \in \smash{\Swang[\iota][S]^{\mspace{2mu}\pxrect{r}_{\ell}}}$ (for some $\pxrect{r}_{\ell} \in \Srect_0$) such that the tiles $\wang{t}$ appearing in $\mwang{t}$ form a single complete rectangle $\rect{r}_{\ell-1}$ (in their decorations' \field{position field}).

\medskip
Let us make a few comments on the geometry of macro-tiles (see \Cref{fix:fig:macro-tile-hierarchy}):
\begin{itemize}
\item Following from the definition of grid products and the induced nested structure of the grids $(\pxgrid{g}_{\ell})_{\iota +1 \leq \ell < \kappa'}$, any macro-tile of level $\ell$ is a union of macro-tiles of level $\ell-1$;
\item Let $\mwang{t}[\ell]$ be a macro-tile of level $\ell$: it uniquely determines the macro-tiles of level $\ell-1$ whose union constitutes $\mwang{t}[\ell]$. Inductively, it thus determines a unique structure of nested macro-tiles of level $\ell'$ for every $\ell' < \ell$.
\end{itemize}

\begin{figure}[ht]
  \centerline{\hspace*{2cm}\begin{tikzpicture}[scale=0.17]
    \begin{scope}[shift={(3.765,-6.7)}]
      \clip (0,0) circle (2.4cm);
      \foreach \x in {-2.830,-2.690,-2.550,-2.430,-2.230,-2.160,-2.080,-1.950,-1.870,-1.760,-1.700,-1.550,-1.410,-1.340,-1.190,-1.090,-1.030,-0.940,-0.810,-0.660,-0.600,-0.480,-0.320,-0.140,-0.010,0.120,0.280,0.370,0.450,0.510,0.680,0.750,0.850,1.010,1.090,1.230,1.330,1.490,1.590,1.650,1.820,2.020,2.130,2.250,2.390,2.450,2.560,2.620,2.710,2.800} {
        \draw[ultra thin,LevelK!35] (\x,-2.4) -- (\x,2.4);
      }
      \foreach \y in {-2.645,-2.545,-2.475,-2.405,-2.225,-2.025,-1.865,-1.755,-1.685,-1.545,-1.375,-1.215,-1.135,-1.015,-0.895,-0.715,-0.615,-0.515,-0.425,-0.255,-0.115,0.015,0.125,0.285,0.425,0.585,0.725,0.895,1.015,1.175,1.265,1.465,1.605,1.775,1.835,1.985,2.125,2.275,2.415,2.475,2.565,2.675} {
        \draw[ultra thin,LevelK!35] (-2.4,\y) -- (2.4,\y);
      }
    \end{scope}
    \foreach \x in {-22.035,-19.385,-17.295,-16.535,-14.865,-13.535,-12.855,-11.585,-9.745,-7.855,-7.015,-5.375,-3.735,-1.065,0.455,1.685,2.675,4.855,6.325,8.655,9.905,11.655,13.295,15.665,16.535,18.515,20.925,21.785} {
      \draw[LevelINext!50,line width=0.6pt] ($(\x,-22.125) - (0,7)$) -- ($(\x,22.125) + (0,7)$);
    }
    \foreach \y in {-26.765,-24.735,-22.925,-22.125,-20.245,-18.255,-15.545,-14.635,-13.335,-11.955,-10.295,-8.455,-7.595,-5.805,-4.715,-3.705,-1.075,0.405,3.035,5.715,6.415,7.425,9.795,11.145,12.155,13.945,15.885,17.875,18.785,19.665,20.915,22.125,23.675,25.605,27.425} {
      \draw[LevelINext!50,line width=0.6pt] ($(-16.535,\y) - (7,0)$) -- ($(16.535,\y) + (7,0)$);
    }
    \foreach \x in {-16.535,16.535} {
      \draw[LevelKLast,very thick] ($(\x,-22.125) - (0,7)$) -- ($(\x,22.125) + (0,7)$);
    }
    \foreach \y in {-22.125,22.125} {
      \draw[LevelKLast,very thick] ($(-16.535,\y) - (7,0)$) -- ($(16.535,\y) + (7,0)$);
    }
    \draw[thick] (3.765,-6.7) circle (2.4cm);
    \draw[dashed,bend left] (4.6,-3.7) to (30,13);
    \begin{scope}[shift={(45,13)},scale=0.8]
      \begin{scope}
        \clip (0,0) circle (23cm);
        \fill[white] (-25,-25) rectangle (25,25);
        \begin{scope}[shift={(5.95,1.85)}]
          \clip (0,0) circle (1.7cm);
          \draw[line width=0.1pt,LevelI!35,step=0.1] (-1.75,-1.75) grid (1.75,1.75); 
        \end{scope}
        \foreach \x in {-28.300,-26.900,-25.500,-24.300,-22.300,-21.600,-20.800,-19.500,-18.700,-17.600,-17.000,-15.500,-14.100,-13.400,-11.900,-10.900,-10.300,-9.400,-8.100,-6.600,-6.000,-4.800,-3.200,-1.400,-0.100,1.200,2.800,3.700,4.500,5.100,6.800,7.500,8.500,10.100,10.900,12.300,13.300,14.900,15.900,16.500,18.200,20.200,21.300,22.500,23.900,24.500,25.600,26.200,27.100,28.000} {
          \draw[LevelK!50] ($(\x,-20.25) - (0,7)$) -- ($(\x,20.25) + (0,7)$);
        }
        \foreach \y in {-26.450,-25.450,-24.750,-24.050,-22.250,-20.250,-18.650,-17.550,-16.850,-15.450,-13.750,-12.150,-11.350,-10.150,-8.950,-7.150,-6.150,-5.150,-4.250,-2.550,-1.150,0.150,1.250,2.850,4.250,5.850,7.250,8.950,10.150,11.750,12.650,14.650,16.050,17.750,18.350,19.850,21.250,22.750,24.150,24.750,25.650,26.750} {
          \draw[LevelK!50] ($(-22.3,\y) - (7,0)$) -- ($(22.3,\y) + (7,0)$);
        }
        \foreach \x in {-20.800,-10.900,10.900} {
          \draw[LevelINext,very thick] ($(\x,-20.25) - (0,7)$) -- ($(\x,20.25) + (0,7)$);
        }
        \foreach \y in {-17.550,-8.950,8.950,19.850} {
          \draw[LevelINext,very thick] ($(-22.3,\y) - (7,0)$) -- ($(22.3,\y) + (7,0)$);
        }
      \end{scope}
      \draw[thick] (5.95,1.85) circle (1.7cm);
      \draw[dashed,bend left] (7.8,0.8) to (20,-20);
      \draw[ultra thick] (0,0) circle (23cm);
      \begin{scope}[shift={(25,-32)}]
        \begin{scope}
          \clip (0,0) circle (23cm);
          \fill[white] (-25,-25) rectangle (25,25);
          \draw[step=0.666666,LevelI!50,very thin] (-25,-25) grid (25,25);
          \foreach \x in {-30.000,-19.333,-11.333,11.333,20.667} {
            \draw[LevelK, very thick] ($(\x,-20.25) - (0,7)$) -- ($(\x,20.25) + (0,7)$);
          }
          \foreach \y in {-27.333,-8.667,8.667,23.333} {
            \draw[LevelK,very thick] ($(-22.3,\y) - (7,0)$) -- ($(22.3,\y) + (7,0)$);
          }
        \end{scope}
        \draw[ultra thick] (0,0) circle (23cm);
      \end{scope}
    \end{scope}
  \end{tikzpicture}}
  \caption{The nested structure of a macro-tile of level $\kappa'-1$.}
  \subcaption{We represent a macro-tile of level $\kappa'-1$
    (as \begin{tikzpicture}[baseline=0.2ex,scale=0.2]
      \draw[LevelKLast,thick] (0,0) rectangle ++(1,1);
    \end{tikzpicture})
    and its sub-macro-tiles of level $\iota'$
    (as \begin{tikzpicture}[baseline=0.2ex,scale=0.2]
      \draw[LevelINext,thick] (0,0) rectangle ++(1,1);
    \end{tikzpicture}). Zooming on the level $\iota'$, we draw the sub-macro-tiles of level $\kappa$
    (as \begin{tikzpicture}[baseline=0.2ex,scale=0.2]
      \draw[LevelK,thick] (0,0) rectangle ++(1,1);
    \end{tikzpicture}); and zooming again, we draw the tiles of $\Swang[\iota][S]$
    (as \begin{tikzpicture}[baseline=0.2ex,scale=0.2]
      \draw[LevelI,thick] (0,0) rectangle ++(1,1);
    \end{tikzpicture})
    composing the level $\kappa$. This figure implicitly assumes that $\iota' > \kappa$.}
  \label{fix:fig:macro-tile-hierarchy}
\end{figure}

\pagebreak
\subsection{\texorpdfstring{$\bm{\sigma}$}{σ}-macro-tiles (of level \texorpdfstring{$\bm{\iota'}$}{ι'})}\label{fix:sec:sigma-macro-tiles}

Since any tiling of $\Swang[\iota][S]$ defines grids $\pxgrid{g}_{\ell}$ for the next levels of indices $\kappa \leq \ell < \kappa'$, we will use one of these levels to implement the next level of fixed-point simulation. We denote the latter~$\iota'$ by analogy with the notations of the \crtlnameref{fix:lem:staircase-lemma}.

\paragraph*{Level of simulation} We add a \field{p-level field} (for \emph{parent level}) to the colors of $\Scolor[\iota]$, encoding some
\[ \iota' \in \interval{\kappa}{\kappa'-1} \]
that is constant across any valid tiling of $\Scolor[\iota]$. Its role will be to choose the next level of simulation, which could \textit{a priori} be any level in the interval $\interval{\kappa}{\kappa'-1}$.\footnote{The correctness of this choice for $\iota'$ will be checked at the next level of simulation, as a valid \field{bound field} (in $\Swang[\iota'][S]$) must satisfy conditions \eqref{eq:bound-fields1}.}

\begin{details}
For any tile $\wang{t} \in \Swang[\iota][S]$, let $\kappa$ and $\kappa'$ be the two extremal integers indexing the \field{bound field} of its decoration $\dec(\wang{t})$. Denoting ${\iota' \in \interval{\kappa}{\kappa'-1}}$ the \field{p-level field} of $\dec(\wang{t})$, then the \field{p-level fields} of the facets~$\facet{k}{\pm}(\wang{t})$ must all be equal to $\iota'$ (thus ensuring the uniformity of $\iota'$ across entire $\Swang[\iota][S]$-tilings).

Furthermore, in the tiles of $\Swang[\iota][S]$ whose decoration's \field{bound field} determines values $\kappa$ and $\kappa'$, any value $\iota' \in \interval{\kappa}{\kappa'-1}$ may appear in the \field{p-level fields} of the decorations.
\end{details}

\paragraph*{Macro-tiles} The rest of this section focuses on implementing this next level of fixed-point simulation. Inside the macro-tiles of level $\iota'$, we will draw patterns whose role will be to emulate the behavior of Wang tiles, thus properly defining the tiling of the next simulation. For the rest of the proof, we call \emph{$\sigma$-macro-tiles} the macro-tiles whose level $\iota'$ matches their \field{p-level fields}.

\pagebreak
\subsubsection{Computation zone}\label{fix:sec:computation-zone}
The construction of the $\sigma$-macro-tiles begins by designing a way for them to embed arbitrary computations. To do so, we will define a \emph{computation zone} inside each $\sigma$-macro-tile, which will embed the space-time diagram of the processor array from \Cref{calc:lem:pa-ram-simulation}.

\paragraph*{Delimiting the computation zone} For a rectangle $\pxrect{r} = \ibox{n_1,\dots,n_d} \in \Srect_0$, we define the \emph{computation zone} of~$\pxrect{r}$ as the cut $\mathrm{Cut}_{N}(\pxrect{r}) = \ibox{\min(n_1,N),\dots,\min(n_d,N)}$ of bound $N = \pxspace{\iota'}/\pxspace{\iota} = \smash{\prod_{\ell=\iota}^{\iota'-1} \rspace_{\ell}}$ (as read from a tile's \field{bound fields}). This choice of bound follows from:

\begin{claim}
  For any rectangle $\pxrect{r} \in \Srect_0$ and for any $N \in \N$, if $\rspace(\pxrect{r}) \geq N$, then $\rspace(\mathrm{Cut}_{N}(\pxrect{r})) \geq N$.
\end{claim}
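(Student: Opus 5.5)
The claim concerns $\rspace$ of the cut rectangle. Recall that $\rspace(\rect{r}) = \rvol{\rect{r}}/\rtime(\rect{r})$ is the product of all edge lengths except one longest edge, i.e. the area of a smallest facet. Write $\pxrect{r} = \ibox{n_1,\dots,n_d}$, and order the edges so that $n_1 \geq n_2 \geq \dots \geq n_d$; this is harmless, since both $\rspace$ and the cut are symmetric under permuting coordinates. Then $\rspace(\pxrect{r}) = \prod_{k=2}^{d} n_k \geq N$, and we must show that $\prod_{k=2}^{d} \min(n_k,N) \geq N$. This works because the map $x \mapsto \min(x,N)$ is monotone, so the cut edges $m_k = \min(n_k,N)$ keep the same ordering, $m_1 \geq \dots \geq m_d$. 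Consequently $\rspace(\mathrm{Cut}_N(\pxrect{r})) = \prod_{k\geq 2} m_k$, again dropping the first (longest) edge.

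The argument then splits into two cases. In the first case, some $n_k$ with $k \geq 2$ satisfies $n_k \geq N$. Then $m_k = N$, and every other factor $m_j$ is a positive integer, hence at least $1$. So the product $\prod_{k\geq 2} m_k$ is at least $N$.

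In the second case, all $n_k < N$ for $k \geq 2$. Then $m_k = n_k$ for every $k \geq 2$, and the product is unchanged: it equals $\rspace(\pxrect{r}) \geq N$.

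This uses only that edge lengths are integers at least $1$. If some edge were $0$, both quantities would vanish, so the hypothesis $\rspace(\pxrect{r}) \geq N$ with $N \geq 1$ already excludes this degenerate case. The only point needing care is that $\rspace$ is defined through a longest edge, and the cut can change which edge is longest (possibly creating ties). The monotonicity observation, that truncation preserves the weak ordering of edges, settles this, so I expect no real obstacle.
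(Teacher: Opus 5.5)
Your proof is correct. The paper states this claim without proof, treating it as evident, and your argument is the natural verification: truncation preserves the weak ordering of the edges, so $\rspace$ of the cut equals $\prod_{k\geq 2}\min(n_k,N)$, which is either at least $N$ because some factor equals $N$, or unchanged because no factor is truncated. The degenerate case $N=0$ (empty cut, $\rspace$ of the form $0/0$) falls outside your $N\geq 1$ assumption, but that is harmless: the claim is only applied with $N=\pxspace{\iota'}/\pxspace{\iota}\geq 1$ (and the analogous $\tau$-level bounds).
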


\begin{figure}[ht]
  \centerline{\begin{tikzpicture}[scale=0.2]
    \draw[LevelI!25,step=0.4] (0,0) grid (34,20);
    \node[anchor=east,RainbowG!50!RainbowA] (I) at (-2,10) {Input facet};
    \draw[RainbowG!50!RainbowA,bend left,->] (I) to (-0.2,16);
    \fill[opacity=0.4,RainbowG!50!RainbowA] (0,0) rectangle ++(0.4,20);
    \fill[opacity=0.4,RainbowG!50!RainbowA] (24.4,0) rectangle ++(0.4,20);
    \node[anchor=west,RainbowG!50!RainbowA] (O) at (36,10) {Output facet};
    \draw[RainbowG!50!RainbowA,bend left,->] (O) to (25,4);
    \draw[LevelINext,pattern={north east lines},pattern color=LevelINext,opacity=0.5] (0,0) rectangle ++(24.8,20);
    \draw[very thick,LevelINext] (0,0) rectangle (34,20);
  \end{tikzpicture}}
  \caption{The computation zone
    (in \begin{tikzpicture}[baseline=0.2ex,scale=0.25]
      \protect\draw[LevelINext,pattern={north east lines},pattern color=LevelINext] (0,0) rectangle ++(1,1);
    \end{tikzpicture})
    of a macro-tile $\mwang{t}$ of level $\iota'$
    (in \begin{tikzpicture}[baseline=0.2ex,scale=0.25]
      \protect\draw[LevelINext,thick] (0,0) rectangle ++(1,1);
    \end{tikzpicture}).}
  \subcaption{We draw the tiles of level $\iota$ composing $\mwang{t}$ as
    \begin{tikzpicture}[baseline=0.3ex,scale=0.2]
      \draw[LevelI,thick] (0,0) rectangle ++(1,1);
    \end{tikzpicture}.
    The computation zone is a rectangle $\compzone = \mathrm{Cut}_{N}(\pxrect{r})$ that forms a subset of $\pxrect{r} = \dom(\mwang{t})$. The area of its smallest facet $\sfacet(\compzone)$ is larger than $K \cdot \pxspace{\iota'}[\gamma]$, but $\compzone$ is small enough for the positions $\vec{i} \in \compzone$ to fit on $\bigO(\log \pxspace{\iota'})$ bits. The computation zone always contains the position $\vec{0} \in \dom(\wang{t})$, and may or may not span the whole domain $\pxrect{r}$.}
  \label{fix:fig:computation-zone}
\end{figure}

\noindent To mark the positions in the computation zone of a $\sigma$-macro-tile $\mwang{t}$ of domain $\pxrect{r} = \dom(\mwang{t})$, we introduce a \field{compzone field} to the colors of $\Scolor[\iota]$ based on the positioning tileset $\Swang[\Srect]$ from \Cref{pos:sec:rect}. It will contain records of the form
\[ (\compzone, \vec{i}) \in \Srect_0 \times \N^d \]
for $\compzone = \mathrm{Cut}_{\pxspace{\iota'}/\pxspace{\iota}}(\pxrect{r})$ the computation zone of $\pxrect{r}$ and $\vec{i} \in \compzone$ a position in $\compzone$. Intuitively, the \field{compzone fields} encode the positions of the $\Swang[\iota][S]$-tiles (\textit{i.e.}~of level $\iota$) appearing inside the computation zone of $\mwang{t}$ and its geometry.

\begin{details}
  For any tile $\wang{t} \in \Swang[\iota][S]$, denote $(\rect{r}_{\ell},\vec{i}_{\ell})_{\iota \leq \ell < \kappa'-1}$ and $(\rspace_{\ell},\rtime_{\ell})_{\iota \leq \ell < \kappa'-1}$ the content of the \field{position} and \field{bound fields} of $\dec(\wang{t})$. The \field{compzone fields} of $\wang{t}$ can either contain some pair $(\compzone,\vec{i}) \in \Srect_0 \times \N^d$, or be left entirely blank.

  If the \field{compzone field} of $\dec(\wang{t})$ is blank, then so are the \field{compzone fields} of all facets $\facet{k}{\pm}(\wang{t})$. Otherwise, the \field{compzone field} of $\dec(\wang{t})$ is a pair $(\compzone,\vec{i}) \in \Srect_0 \times \Z^d$ and:
  \begin{itemize}
  \item Denoting $\compzone = \ibox{n_1,\dots,n_d}$, each $n_k$ satisfies $n_k \leq \prod_{\ell=\iota}^{\iota'-1} \rspace_{\ell}$; and $\vec{i}$ satisfies $\vec{i} \in \compzone$;
  \item Additionally, if the \field{position field} of $\dec(\wang{t})$ satisfies $\vec{i}_{\ell} = \vec{0}$ for all $\iota \leq \ell < \iota'$ (\textit{i.e.}~the tile $\wang{t}$ appears at position $\vec{0}$ in $\sigma$-macro tiles), then in its \field{compzone field} the position $\vec{i} \in \compzone$ must actually be $\vec{i} = \vec{0}$;
  \end{itemize}

  Furthermore, for $(\compzone,\vec{i})$ the value of the \field{compzone field} of $\dec(\wang{t})$ and $(\rect{r}_{\ell},\vec{i}_{\ell})_{\iota \leq \ell < \iota'}$ read from the \field{position field}, the facets $\facet{k}{\pm}(\wang{t})$ satisfy the following:
  \begin{itemize}
  \item If the partial odometer given by the rectangles $\smash{(\rect{r}_{\ell})_{\iota \leq \ell < \iota'}}$ defines an operation $\smash{(\vec{i}_{\ell})_{\iota \leq \ell < \iota'} \pm \basis{k} = (\odnone,\dots,\odnone)}$ (\textit{i.e.}~the tile $\wang{t}$ appears on the facet $\facet{k}{\pm}$ of $\sigma$-macro-tiles), then the \field{compzone field} of $\facet{k}{\pm}(\wang{t})$ is left blank; in which case, $\vec{i}$ must satisfy $\vec{i} \in \facet{k}{\pm}(\compzone)$;
  \item Otherwise, the \field{compzone field} of the facet $\facet{k}{-}(\wang{t})$ is the same as the decoration's, \textit{i.e.}~$(\compzone,\vec{i})$;
  \item And the \field{compzone field} of the facet $\facet{k}{+}(\wang{t})$ contains $(\compzone,\vec{i}+\basis{k})$ if $\vec{i} \notin \facet{k}{+}(\compzone)$ is not on the border of~$\compzone$; and is left blank otherwise.\qedhere
  \end{itemize}
\end{details}

\enlargethispage{\baselineskip}
\begin{claim}
  Let $\mwang{t}$ be a $\sigma$-macro-tile of domain $\pxrect{r}$. Then for any position $\vec{i} \in \pxrect{r}$, the \field{compzone field} of $\dec(\mwang{t}_{\vec{i}})$ is non-blank if and only if $\vec{i}$ is a position in the computation zone $\compzone = \mathrm{Cut}_{\pxspace{\iota'}/\pxspace{\iota}}(\pxrect{r})$; in which case, said \field{compzone field} contains exactly the pair~$(\compzone,\vec{i})$.
\end{claim}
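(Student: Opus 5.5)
The plan is to mirror the proof of the analogous proposition for the positioning tileset $\Swang[\Srect]$ from \Cref{pos:sec:rect}, using the local constraints on the \field{compzone field} listed in the details above. First, I would show that all tiles of $\mwang{t}$ with non-blank \field{compzone field} carry the same rectangle $\compzone$, and that this rectangle equals $\mathrm{Cut}_{N}(\pxrect{r})$ with $N = \pxspace{\iota'}/\pxspace{\iota}$. The value $N$ is read uniformly from the \field{bound fields}, which are constant across the tiling. I would then argue that the non-blank positions form exactly the cut anchored at $\vec{0}$, with coordinates as prescribed.

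The first step is the anchor. The tile at position $\vec{0}$ of $\pxrect{r}$ has $\vec{i}_{\ell} = \vec{0}$ for all $\iota \leq \ell < \iota'$, because the odometer position of the corner of a $\sigma$-macro-tile is zero at every level. So if its \field{compzone field} is non-blank, it must contain $(\compzone,\vec{0})$. I would then propagate by induction along the directions $\basis{k}$, in boustrophedon order or coordinate by coordinate. Take a tile with non-blank field $(\compzone,\vec{i})$. If $\vec{i} \notin \facet{k}{+}(\compzone)$, its facet $\facet{k}{+}$ carries $(\compzone,\vec{i}+\basis{k})$, so by the Wang matching condition the neighbouring tile has non-blank field $(\compzone,\vec{i}+\basis{k})$. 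This neighbour lies inside $\pxrect{r}$, since otherwise the odometer would overflow and force the facet to be blank. Conversely, a blank field propagates: a blank $\facet{k}{-}$ on the receiving side forces a non-blank tile to satisfy $\vec{i} \in \facet{k}{-}(\compzone)$. It follows that the non-blank region is exactly $\compzone$ translated to $\vec{0}$, with matching coordinates. A symmetric argument with the $\facet{k}{-}$ constraints shows that no non-blank tile sits at a nonzero position without a non-blank predecessor.

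Next I would pin down the shape of $\compzone$. Write $\compzone = \ibox{m_1,\dots,m_d}$ and $\pxrect{r} = \ibox{n_1,\dots,n_d}$. The constraints give $m_k \leq N$, and $\compzone \subseteq \pxrect{r}$ by the overflow argument, so $m_k \leq \min(n_k,N)$. For the reverse inequality, suppose $m_k < n_k$. Then the tile on $\facet{k}{+}(\compzone)$ sends a blank facet inward into $\pxrect{r}$. This is allowed only at the border of $\compzone$, so the constraint as written does not yet force $m_k = \min(n_k,N)$. I therefore expect to need an additional check in the details, namely that a tile on $\facet{k}{+}(\compzone)$ which is not on $\facet{k}{+}$ of the macro-tile must satisfy $m_k = N$. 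This is checkable locally, since $N$ is computable from the \field{bound field}. Together with the first inequality, this gives $m_k = \min(n_k,N)$.

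Finally, existence: I must show the field is non-blank at all, i.e.\ that a $\sigma$-macro-tile cannot be entirely blank. I expect this and the boundary condition above to be the main obstacle, because the stated constraints only say ``non-blank implies $\vec{0}$ at the corner'' and do not force the corner to be non-blank. My first approach is to read the intended constraint as requiring the tile with $\vec{i}_{\ell}=\vec{0}$ for all $\iota \leq \ell < \iota'$ to have a non-blank \field{compzone field}. If that reading fails, I would import this requirement from the later computation-zone constraints, where the processor array at position $\vec{0}$ must exist. With the corner non-blank, the propagation argument above completes the proof.
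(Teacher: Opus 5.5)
This is the argument the paper implicitly relies on (the claim is stated without proof, via the same corner-anchoring and facet-propagation reasoning as for $\Swang[\Srect]$ in \Cref{pos:sec:rect}), and your diagnosis is correct. As written, the Details only yield a possibly empty zone $\ibox{m_1,\dots,m_d}$ anchored at $\vec{0}$, with correct coordinates and $m_k \leq \min(n_k,N)$. So your two extra local checks are exactly what the construction omits and the claim needs: a non-blank \field{compzone field} on the corner tile, and $m_k = N$ for any tile on $\facet{k}{+}(\compzone)$ that does not lie on $\facet{k}{+}$ of the $\sigma$-macro-tile.

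Drop the fallback, though. The later \field{processor}, \field{arg} and copying-method constraints apply only to tiles whose \field{compzone field} is already non-blank. The \field{color}, \field{wire} and pipe fields can all be left blank on an empty zone. So no later constraint forces the corner tile to be non-blank, and the explicit corner constraint is needed.
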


\begin{numerics}
  In a valid $\Swang[\iota][S]$-tiling, the volume of a $\sigma$-macro-tile of domain $\rnorm{\pxrect{r}} \in \Srect_0$ may be larger than $\smash{\prod_{\ell=\iota}^{\iota'-1} \rspace_{\ell} \cdot \rtime_{\ell}}$. The computation zone restricts the size of the computations to a cut of $\rnorm{\pxrect{r}}$ whose volume is bounded by $\smash{\prod_{\ell=\iota+1}^{\iota'} (\rspace_{\ell})^d}$, which limits the bit length of the \field{compzone fields} to $\polylog \K[][s] \cdot \bigO(\pxspace{\iota}[\delta])$.
\end{numerics}

\pagebreak
\paragraph*{Embedding computations} We now use this computation zone to embed the space-time diagram of the RAM-simulating processor array from \Cref{calc:lem:pa-ram-simulation}. Recall that it defines a program $e_{\mathcal{U}} \in \{0,1\}^{\ast}$ which, when used on a grid of processors of domain $\ibox{n_1,\dots,n_{d-1}}$, simulates $\smash{\prod_{k=1}^{d-1} n_k}$ steps of $\log$-RAM computations in time $\smash{\bigO(\sum_{k=1}^{d-1} n_k)}$.

\medskip
To proceed, let $\compzone = \ibox{n_1,\dots,n_d}$ be the computation zone of a $\sigma$-macro-tile. Denoting $\sfacet(\compzone)$ a smallest facet $\facet{k}{-}(\compzone)$ of minimal index $1 \leq k \leq d$, and denoting by $h \in \interval{1}{d}$ the index of this facet, then $\rtime(\compzone) = n_h$. For notational convenience, we will assume that $h=d$; but the general case follows by permuting indices $1 \leq k \leq d$ in the proof below.

We use the facet $\sfacet(\compzone) = \ibox{n_1,\dots,n_{d-1}} \times \{0\}$ to draw the processor array $(\ibox{n_1,\dots,n_{d-1}}, e_{\mathcal{U}})$ (the ``space'' of the computation); and use the remaining direction $\basis{d}$ to draw successive states of the array (the ``time'' of the computation\footnote{Hence naming \emph{time} and \emph{space} the longest edge length $\rtime(\rect{r})$ and smallest facet area $\rspace(\rect{r})$ of a rectangle $\rect{r}$.}). To do so, we introduce a \field{processor field} to the colors of $\Scolor[\iota]$. On a tile's facets $\facet{d}{\pm}(t)$, it will encode a tuple of the form
\[ (\ram{PC}, (\ram{var}_{i})_{i \in V}) \in \N \times (\{0,1\}^{\ast})^{I}\]
where $\ram{PC}$ is a program counter indexing an instruction of $e_{\mathcal{U}}$, and $I$ is a finite fixed set indexing the variables defined in $e_{\mathcal{U}}$; and on a tile's facets $\facet{k}{\pm}(t)$ (for $k \neq d$), it will encode two finite lists of adjacent processor communications of the form:
\[ ((\ram{COMM}^{\ram{in}}_{\,i})_{i \in I_1}, (\ram{COMM}^{\ram{out}}_{\,i})_{i \in I_2}) \in (\{0,1\}^{\ast})^{\ast} \times (\{0,1\}^{\ast})^{\ast}.\]

In the direction $\basis{d}$, colors $(\ram{PC}, (\ram{var}_{i})_{i \in V})$ will denote consecutive states of a processor during a step of computation; and along the other directions, colors $((\ram{COMM}^{\ram{in}}_{\,i})_{i \in I_1}, (\ram{COMM}^{\ram{out}}_{\,i})_{i \in I_2})$ will draw the communications between adjacent processors. Globally, the \field{processor fields} in the computation zone $\compzone$ will thus draw a space-time diagram of the processor array $\ibox{n_1,\dots,n_{d-1}}$ for $n_d-2$ steps of computation.

\begin{remark}[Linear acceleration]
  The processor program $e_{\mathcal{U}}$ from \Cref{calc:lem:pa-ram-simulation} terminates in time $\bigO(n_1 + \dots + n_{d-1})$ on processor arrays of size $\rect{r} = \ibox{n_1,\dots,n_{d-1}}$. Thus, there exists a constant $\K[][sim] \in \N$ such that $e_{\mathcal{U}}$ terminates in time $\K[][sim] \cdot \max_{1 \leq k \leq d-1} n_k$.

  Unfortunately, we only embed space-time diagrams of the processor array $(\ibox{n_1,\dots,n_{d-1}},e_{\mathcal{U}})$ over $n_d = \rtime(\compzone) = \max_{1 \leq k \leq d} n_k$ steps of computations. To allow enough time for the program $e_{\mathcal{U}}$ to terminate all its valid runs, we actually make each tile in the computation zone embed several (here, $2\K[][sim]$) computation steps of a processor (for a global running time $2\K[][sim] \cdot (\rtime(\compzone)-2)$).
\end{remark}

\begin{details}
  For any tile $\wang{t} \in \Swang[\iota][S]$, the \field{processor fields} of $\wang{t}$ are all blank if the \field{compzone field} of $\dec(\wang{t})$ is blank. Otherwise, let $(\compzone,\vec{i})$ refer to the \field{compzone field} of $\dec(\wang{t})$.

  \medskip
  If $\vec{i}$ is a position in an external facet $\facet{k}{\pm}(\compzone)$, then the \field{processor field} of the corresponding $\facet{k}{\pm}(\wang{t})$ must be blank. Apart from this constraint, we now consider several cases:

  \vspace*{-0.7em}
  \paragraph*{Initializing the computations} Assume that $\vec{i}$ is part of the smallest facet $\sfacet(\compzone) = \facet{d}{-}(\compzone)$:
  \begin{itemize}
  \item The \field{processor field} of the facet
    $\facet{d}{+}(\wang{t})$ is some $(\ram{PC}, (\ram{var}_{i})_{i \in V}) \in \N \times (\{0,1\}^{\ast})^{I}$ for $\ram{PC} = 0$ (the entry point of the program $e_{\mathcal{U}}$). Furthermore, the values of the variables $(\ram{var}_{i})_{i \in V}$ must satisfy the following conditions:
    \begin{itemize}
    \item The $\ram{pos}$ variable must be set to $\ram{pos} =
      (\rect{r},\vec{j})$ where $\rect{r} = \ibox{n_1,\dots,n_{d-1}}$ and
      $\vec{j} = (i_1,\dots,i_{d-1})$ (where $\ibox{n_1,\dots,n_d} = \compzone$ and $(i_1,\dots,i_{d-1},0) = \vec{i}$);
    \item The $\ram{word}$ variable must be set to $\ram{word} = \K[][w] \cdot \log \rspace(\compzone)$, where $\K[][w]$ is the constant appearing in the parameters $(\K[][s],\K[][w])$ of the global function $\funram{F(e)}$;
    \item The $\ram{timeout}$ variable must be set to $\ram{timeout} = \rspace(\compzone)$;
    \item The $\ram{program}$ variable must be set to $\ram{program} = e$, where $e \in \{0,1\}^{\ast}$ is the eponymous argument given to the algorithm $F$ (recall that, given a program $e \in \{0,1\}^{\ast}$, this proof builds a program $F(e)$ that will recognize the simulating tilesets $\Swang[\iota][S]$);
    \item The $\ram{input}$ variables can be set to any $\ram{input} = (i,j,b)$, for a bit $b \in \{0,1\}$ and an index\footnote{As the function $\funram{F(e)}$ will operate on at most $2d+5$ arguments, we restrict $i$ to range in $\interval{0}{2d+4}$.} $(i,j) \in \interval{0}{2d+4} \times \interval{0}{\rvol{\compzone}}$. Restrictions on $\ram{input}$ variables will be added later;
    \item Any other variable $\ram{var}_{i}$ of the program $e_{\mathcal{U}}$ is assumed to be the empty word $\ram{var}_{i} = \varepsilon$.
    \end{itemize}
  \item The \field{processor field} of any other facet $\facet{k}{\pm}(\wang{t})$ ($k \neq d$) is blank;
  \end{itemize}

  \pagebreak
  \paragraph*{Computations} Assume that $\vec{i}$ is not part of $\facet{d}{-}(\compzone)$ nor its opposite facet $\facet{d}{+}(\compzone)$. Then:
  \begin{itemize}
  \item The \field{processor fields} of the facets $\facet{d}{-}(\wang{t})$ and $\facet{d}{+}(\wang{t})$ must respectively be some $(\ram{PC}, (\ram{var}_{i})_{i \in V})$ and $(\ram{PC}', (\ram{var}_{i}')_{i \in V})$ in $\N \times (\{0,1\}^{\ast})^{I}$;
  \item The \field{processor fields} of the other facets $\facet{k}{\pm}(\wang{t})$ ($k \neq d$) must contain two lists of length $2\K[][sim]$ whose elements $\ram{COMM}$ are either blank, or encode a tuple $(i,w)$ for $i \in V$ the index of a variable and $w \in \{0,1\}^{\ast}$ a value.\\
    On facets $\facet{k}{-}(\wang{t})$, the first list (resp.~second) is referred to as the $\ram{in}$-list (resp.~$\ram{out}$-list), and its elements are denoted $\ram{COMM}^{\ram{in}}_{\,n}$ (resp.~$\ram{COMM}^{\ram{out}}_{\,n}$) for $0 \leq n < 2\K[][sim]$; conversely, on facets $\facet{k}{+}(\wang{t})$, the first list and the second lists are respectively referred to as the $\ram{out}$-list and the $\ram{in}$-list;\footnote{This is a form of twisted pair communication!}
  \end{itemize}
  and there exists some computation steps $(\ram{PC}^{(n)}, (\ram{var}_{i}^{(n)})_{i \in V})_{0 \leq n \leq 2\K[][sim]}$ such that $(\ram{PC}^{(0)}, (\ram{var}_{i}^{(0)})_{i \in V}) = (\ram{PC}, (\ram{var}_{i})_{i \in V})$ and $(\ram{PC}^{(2\K[][sim])}, (\ram{var}_{i}^{(2\K[][sim])})_{i \in V}) = (\ram{PC}', (\ram{var}_{i}')_{i \in V})$ that satisfies, for all $0 \leq n < 2\K[][sim]$:
  \begin{itemize}
  \item (Incoming communications) If $\ram{PC}^{(n)}$ points in the program $e_{\mathcal{U}}$ to a $\ram{COMM}$ instruction reading the variable of index $i \in V$ of the processor in direction $\pm e_k$, then the element $\ram{COMM}^{\ram{in}}_{\,n}$ from the $\ram{in}$-list in the \field{processor field} of the corresponding\footnote{Notice that, in the general case where $h$ can be any direction in the interval $\interval{1}{d}$, this requires a re-indexing of the facets between the $d$-dimensional Wang tiles and the $(d-1)$-dimensional processor array.} $\facet{k}{\pm}(\wang{t})$ must satisfy $\ram{COMM}^{\ram{in}}_{\,n} =  (i,w)$ for some $w \in \{0,1\}^{\ast}$; otherwise, $\ram{COMM}^{\ram{in}}_{\,n}$ must be blank;
  \item (Outgoing communications) On any facet $\facet{k}{\pm}(\wang{t})$ for $k \neq d$, if the entry $\ram{COMM}^{\ram{out}}_{\,n}$ in the $\ram{out}$-list of the \field{processor field} is non-blank and contains some $(i,w)$ for $i \in V$ and $w \in \{0,1\}^{\ast}$, then $w$ must satisfy $\smash{w = \ram{var}_{i}^{(n)}}$.
  \item (Computation step) $\ram{PC}^{(n+1)}$ and $(\ram{var}_{i}^{(n+1)})_{i \in V}$ must be consistent with a valid step of computation running the instruction of index $\smash{\ram{PC}^{(n)}}$ in $e_{\mathcal{U}}$ on the values of variables $(\smash{\ram{var}_{i}^{(n)}})_{i \in V}$ (and, in the case of a $\ram{COMM}$ instruction, the value $w$ from the corresponding $\ram{COMM}^{\ram{in}}_{\,n}$ entry).
  \end{itemize}
  Finally, if the processor halts at some step $0 \leq n < 2\K[][sim]$, then $\ram{PC}^{(n+1)}$ and $(\ram{var}_{i}^{(n+1)})_{i \in V}$ are just copies of $\smash{\ram{PC}^{(n)}}$ and $(\smash{\ram{var}_{i}^{(n)}})_{i \in V}$.

  \paragraph*{End of the computation} Assume that $\vec{i}$ is part of the opposite facet $\facet{d}{+}(\compzone)$. Then:
  \begin{itemize}
  \item The \field{processor field} of the facet $\facet{d}{-}(\wang{t})$ must be some $(\ram{PC}, (\ram{var}_{i})_{i \in V}) \in \N \times (\{0,1\}^{\ast})^{I}$ such that $\ram{PC}$ denotes an accepting state of completed computation in $e_{\mathcal{U}}$;
  \item The \field{processor field} of any other facet $\facet{k}{\pm}(\wang{t})$ is blank.
  \end{itemize}

  Furthermore, all valid possibilities $(\ram{PC}, (\ram{var}_{i})_{i \in V})$ and $(\ram{PC}', (\ram{var}_{i}')_{i \in V})$ in $\N \times (\{0,1\}^{\ast})^{I}$ for the \field{processor fields} of the facets $\facet{d}{-}(\wang{t})$ and $\facet{d}{+}(\wang{t})$ appear when ranging across the tiles $\Swang[\iota][S]$ with fixed \field{compzone field} $(\compzone,\vec{i})$ (with $\vec{i} \notin \facet{k}{\pm}(\compzone)$) in their decorations, thus ensuring that all valid computations can actually be drawn in the computation zones of a $\sigma$-macro-tile.
\end{details}

In a $\sigma$-macro-tile $\mwang{t}$ of computation zone $\compzone = \ibox{n_1,\dots,n_d}$, we respectively call \emph{input facet} and \emph{output facet} the facets $\sfacet(\compzone)$ and its opposite, as their \field{processor fields} draw the first and the last step of the embedded computation (\textit{c.f.}~\Cref{fix:fig:computation-zone}). If we still assume that ${\sfacet(\compzone) = \facet{d}{-}(\compzone)}$ for notational convenience, the \emph{input facet} of $\compzone$ is $\ibox{n_1,\dots,n_{d-1}} \times \{0\}$; and the \emph{output facet} is $\ibox{n_1,\dots,n_{d-1}} \times \{n_d-1\}$. We summarize the previous paragraphs in the following claim:

\begin{claim}[label={pclm:processor-array-macro-tile}]
  Let $\mwang{t}$ be a $\sigma$-macro-tile of computation zone $\compzone = \ibox{n_1,\dots,n_d}$ and smallest facet ${\sfacet(\compzone) = \facet{d}{-}(\compzone)}$.\footnote{Other cases $\sfacet(\compzone) = \facet{h}{-}(\compzone)$ for $h \neq d$ follow from a permutation of the vectors $\basis{k}$.} Then for all $0 \leq n < n_d-1$, the \field{processor fields} of the facets $\facet{d}{+}(\wang{t})$ for tiles $\wang{t}$ at positions $\ibox{n_1,\dots,n_{d-1}} \times \{n\}$ draw a state of the processor array $(\ibox{n_1,\dots,n_{d-1}},e_{\mathcal{U}})$ after $2\K[][sim] \cdot n$ steps of computations.
\end{claim}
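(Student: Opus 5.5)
The plan is an induction on the row index $n$ of the computation zone, reading the local constraints of the \field{processor fields} listed in the preceding details paragraphs. First, I would use the previous claim on \field{compzone fields}: in a $\sigma$-macro-tile $\mwang{t}$, the tiles whose decorations carry a non-blank \field{compzone field} are exactly those at positions $\vec{i} \in \compzone$, and the field records $(\compzone,\vec{i})$. Consequently, each tile of $\compzone$ knows whether it lies on the input facet $\facet{d}{-}(\compzone)$, on the output facet $\facet{d}{+}(\compzone)$, or strictly between them. Each tile also knows whether it lies on an external facet $\facet{k}{\pm}(\compzone)$ for $k\neq d$, where the corresponding \field{processor field} must be blank. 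This identifies the slice $\ibox{n_1,\dots,n_{d-1}}\times\{n\}$ with the processor grid $\ibox{n_1,\dots,n_{d-1}}$, with blank boundary communications playing the role of missing neighbors.

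\emph{Base case} ($n=0$). By the ``initializing'' constraints, the $\facet{d}{+}$-colors of tiles on the input facet are states with $\ram{PC}=0$ and with $\ram{pos}$, $\ram{word}$, $\ram{timeout}$ and $\ram{program}$ set as required by \Cref{calc:lem:pa-ram-simulation}. The $\ram{input}$ variables are arbitrary admissible values, and all other variables are $\varepsilon$. This is precisely an initial state of $(\ibox{n_1,\dots,n_{d-1}},e_{\mathcal{U}})$, that is, a state after $0$ steps.

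\emph{Inductive step} ($n \to n+1$, for $n+1 < n_d-1$). A tile at position $(\vec{j},n+1)$ is interior in direction $\basis{d}$. Wang compatibility forces its $\facet{d}{-}$-color to equal the $\facet{d}{+}$-color of the tile at $(\vec{j},n)$, which by induction is the state $s_{\vec{j}}$ of processor $\vec{j}$ after $2\K[][sim]\cdot n$ steps. The ``computations'' constraints then provide intermediate states $s^{(0)}_{\vec{j}}=s_{\vec{j}},\dots,s^{(2\K[][sim])}_{\vec{j}}$, each obtained from the previous one by one instruction of $e_{\mathcal{U}}$.

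The key point, and the step I expect to be the main obstacle, is that the communications are faithful. A $\ram{COMM}$ at substep $m$ in processor $\vec{j}$ reads an entry $\ram{COMM}^{\ram{in}}_{m}=(i,w)$ on the facet shared with its neighbor $\vec{j}\pm\basis{k}$. Because of the twisted in/out convention, Wang compatibility makes that entry equal the neighbor's $\ram{COMM}^{\ram{out}}_{m}$, and the outgoing constraint forces $w = \ram{var}^{(m)}_i$ of the neighbor at the \emph{same} substep $m$. Hence all processors of the slice advance in lockstep, and each $\ram{COMM}$ reads the neighbor's variable at the correct synchronous time. I would check this carefully: both tiles index their substeps from the same starting row state, so the indices $m$ match. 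I would also verify that on boundary facets the blank field forbids a $\ram{COMM}$ toward a nonexistent neighbor, and that halted processors merely copy their state, matching the synchronous semantics of processor arrays. With this, $s^{(2\K[][sim])}_{\vec{j}}$, the $\facet{d}{+}$-color at row $n+1$, is the state after $2\K[][sim](n+1)$ synchronous steps, and the induction closes.
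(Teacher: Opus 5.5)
Your proposal is correct and follows essentially the same route as the paper, which states this claim as a direct summary of the local processor-field constraints without a separate proof. Your induction on the row index unfolds exactly those constraints: initialization on the input facet, then $2K_{\mathrm{sim}}$ chained substeps per interior row. Wang compatibility along $\basis{d}$ carries the states from row to row, and along $\basis{k}$ ($k \neq d$) the twisted in/out lists match each in-entry with the neighbour's out-entry at the same substep.
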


\enlargethispage{\baselineskip}
\paragraph*{Folding the arguments of the computation} The previous section embeds arbitrary computations of the $\log$-RAM program $e \in \{0,1\}^{\ast}$ (given to $F$ to define $F(e)$) in the form of space-time diagrams of the simulating processor array from \Cref{calc:lem:pa-ram-simulation}.

Since said simulation from \Cref{calc:lem:pa-ram-simulation} requires its input and output bits $\ram{I}[i][j]$ and $\ram{O}[i][j]$ to be folded (in lexicographic order) along the boustrophedon indexing of the processors, we introduce an \field{arg field} to the colors of $\Scolor[\iota][S]$, which contains a tuple of the form
\[ (i,j) \in \N^2. \]
It will force the input and output bits (appearing in the \field{processor fields}) of adjacent tiles from the input and output facets to follow their respective boustrophedon indexing.

\begin{details}
  For any tile $\wang{t} \in \Swang[\iota][S]$, the \field{arg fields} of $\wang{t}$ are all blank if the \field{compzone field} of $\dec(\wang{t})$ is blank. Otherwise, let $(\compzone,\vec{i})$ refer to the \field{compzone field} of $\dec(\wang{t})$, and denote $\facet{h}{-}(\compzone) = \sfacet(\compzone)$ the input facet of $\compzone$.

  \medskip
  If $\vec{i} \notin \facet{h}{-}(\compzone) \cup \facet{h}{+}(\compzone)$ does not appear on the input/output facet of $\compzone$, then the \field{arg fields} of $\wang{t}$ are blank.

  \medskip
  Otherwise, assume that $\vec{i} \in \facet{h}{-}(\compzone)$ belongs to the input facet of $\compzone$:
  \begin{itemize}
  \item If $\vec{i} = \vec{0}$, we set the \field{arg field} of $\dec(\wang{t})$ to be $(i,0)$ for some $i \in \N$, or blank.
  \item Otherwise, there exists some facet $\facet{p}{\pm}(\wang{t})$ pointing towards the predecessor of $\vec{i}$ along the boustrophedon indexing of the input facet $\sfacet(\compzone)$. Then the \field{arg field} of $\facet{p}{\pm}(\wang{t})$ and the \field{arg field} of the decoration $\dec(\wang{t})$ should be equal.
  \item In any case, the $\ram{input}$ variable (from the \field{processor field} of $\facet{h}{+}(\wang{t})$) contains a non-blank $(i,j,b)$ for some $(i,j) \in \N^2$ and bit $b \in \{0,1\}$ if and only if the \field{arg field} of $\dec(\wang{t})$ is not blank and contains the same pair $(i,j)$.
  \end{itemize}
  Furthermore, if $\vec{i}$ is not the last position in the boustrophedon indexing of $\sfacet(\compzone)$, there exists some facet $\facet{s}{\pm}(\wang{t})$ pointing towards the successor of $\vec{i}$ in $\sfacet(\compzone)$. Then:
  \begin{itemize}
  \item If the \field{arg field} of $\dec(\wang{t})$ is some $(i,j) \in \N^2$, then the \field{arg field} towards the successor $\facet{s}{\pm}(\wang{t})$ can either be $(i,j+1)$, or $(i+1,0)$, or blank;
  \item If the \field{arg field} of $\dec(\wang{t})$ is blank, then the \field{arg field} of $\facet{s}{\pm}(\wang{t})$ is also blank.
  \end{itemize}
  Finally, all other unmentioned facets $\facet{k}{\pm}(\wang{t})$ must have a blank \field{arg field}.

  \medskip
  We proceed similarly with tiles $\vec{i} \in \facet{h}{+}(\compzone)$ in the output facet of $\compzone$, but synchronize with the $\ram{output}$ variable from the \field{processor field} of the facet $\facet{h}{-}(\wang{t})$.
\end{details}

We thus claim that $\ram{input}$ variables on the input facet are folded as in \Cref{calc:lem:pa-ram-simulation}:
\begin{claim}
  In any $\sigma$-macro-tile $\mwang{t}$, let $\rect{r} = \facet{h}{-}(\compzone)$ be the input facet of the computation zone~$\compzone$. Then the $\ram{input}$ variables of the \field{processor fields} of $\facet{h}{+}(\mwang{t}_{\vec{i}})$ for $\vec{i} \in \rect{r}$ form the folded pattern $\mathrm{fold}_{\rect{r}}(\ram{I}) \in (\N^2 \times \{0,1\})^{\rect{r}}$ of some input array $\ram{I} \in \{0,1\}^{\ast\ast}$.\\
  A similar claim holds for the $\ram{output}$ variables on the output facet of $\mwang{t}$.
\end{claim}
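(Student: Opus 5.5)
The claim to prove is that the \field{arg fields} make the $\ram{input}$ variables on the input facet $\rect{r} = \facet{h}{-}(\compzone)$ spell out $\mathrm{fold}_{\rect{r}}(\ram{I})$. I will read the \field{arg field} constraints along the boustrophedon path of $\rect{r}$ and reconstruct $\ram{I}$ from them.

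Enumerate $\rect{r}$ in boustrophedon order as $\vec{p}_0 = \vec{0}, \vec{p}_1, \dots, \vec{p}_{m-1}$, with $m = \rspace(\compzone)$. By the locality property of the boustrophedon order stated in the preliminaries, each consecutive pair $\vec{p}_n, \vec{p}_{n+1}$ is facet-adjacent. So the ``predecessor'' and ``successor'' facets $\facet{p}{\pm}$, $\facet{s}{\pm}$ in the details are well defined, and, since adjacent tiles agree on their shared facet, the \field{arg field} of $\dec(\mwang{t}_{\vec{p}_{n+1}})$ equals the successor-facet \field{arg field} of $\mwang{t}_{\vec{p}_n}$.

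Let $a_n$ denote the \field{arg field} of $\dec(\mwang{t}_{\vec{p}_n})$. The constraints then say:
\begin{itemize}
\item $a_0$ is either blank or $(i,0)$;
\item $a_{n+1} \in \{(i,j+1),(i+1,0),\text{blank}\}$ whenever $a_n = (i,j)$;
\item blank is absorbing.
\end{itemize}
By induction on $n$, the non-blank values form a prefix $a_0,\dots,a_{s-1}$ that is strictly increasing in lexicographic order, and each step either increments $j$ or moves to $(i+1,0)$.

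Next I check that these values are exactly $\mathrm{flat}_{\ram{I}}^{-1}(0),\dots,\mathrm{flat}_{\ram{I}}^{-1}(s-1)$ for some array $\ram{I}$. Set $i_0$ to be the first coordinate of $a_0$. Group the prefix into maximal runs of constant first coordinate. For each run with first coordinate $i$, let $\ram{I}[i]$ be the word of length equal to the run length, whose bits are read off the $\ram{input}$ variables. Indices $i < i_0$ get empty words, and any $i$ skipped between runs also gets an empty word.

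The wrinkle is skipped indices. A step $(i,j)\to(i+1,0)$ only moves forward one index, so a skipped $i$ can only occur below $a_0$. I handle this by taking $\ram{I}[i] = \varepsilon$ for $i < i_0$, which contributes nothing to $\mathrm{flat}$.

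Each run starts at $j=0$ and increments by one, so $\mathrm{flat}_{\ram{I}}(a_n) = n$ for every $n < s$. The third constraint in the details ties the $\ram{input}$ variable of $\facet{h}{+}(\mwang{t}_{\vec{p}_n})$ to $a_n$: it is non-blank, equal to $(a_n, b)$, exactly when $a_n$ is non-blank. Hence the pattern of $\ram{input}$ variables equals $\mathrm{fold}_{\rect{r}}(\ram{I})$, with blanks beyond position $s$, matching positions past the bit length.

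The main subtlety is the empty words: a ``$(i+1,0)$'' step always produces a new bit, so $\ram{I}$ may need empty entries. Those entries are invisible to $\mathrm{flat}$ anyway, so the folding equality holds for the $\ram{I}$ constructed above, and the claim only asks for existence of some such array.

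The output facet case is symmetric, using the $\ram{output}$ variables of $\facet{h}{-}$ and the boustrophedon order of $\facet{h}{+}(\compzone)$.
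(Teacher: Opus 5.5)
Your proof is correct and matches what the paper intends. The paper states this claim without a separate proof, as a direct consequence of the local rules on the arg fields: $(i,0)$ or blank at $\vec{0}$, the predecessor facet equal to the decoration, the successor value in $\{(i,j+1),(i+1,0),\text{blank}\}$ with blank absorbing, and the $\ram{input}$ variable tied to the arg field. You unwind these rules along the boustrophedon path of the input facet, and taking $\ram{I}[i]=\varepsilon$ for $i<i_0$ correctly gives the existence of an array $\ram{I}$ whose fold matches, with the output facet handled symmetrically.
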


\begin{numerics}
  Let $\mwang{t}$ be a $\sigma$-macro-tile of computation zone $\compzone$, and let $\ram{I} \in \{0,1\}^{\ast\ast}$ be the folded input array appearing on the input facet of $\mwang{t}$. Then for any tile $\wang{t}$ appearing in $\mwang{t}$, the bit length of:
  \begin{itemize}
  \item The \field{processor fields} of $\wang{t}$ are bounded by $\bigO(\K[][w] \cdot \log \rvol{\compzone}) = \polylog \K[][s] \cdot \K[][w] \cdot \bigO(\pxspace{\iota}[\delta])$;
  \item The \field{arg fields} of $\wang{t}$ are bounded by $\log \rvol{\compzone} = \polylog \K[][s] \cdot \bigO(\pxspace{\iota}[\delta])$;
  \end{itemize}
  where $\log \rvol{\compzone} = \bigO(\log \pxspace{\iota'}) = \polylog \K[][s] \cdot \bigO(\pxspace{\iota}[\delta])$ follows from the proof of the \crtlnameref{fix:lem:staircase-lemma}.
\end{numerics}

\enlargethispage{2.4\baselineskip}
\subsubsection{Macro-colors}

In order to make $\sigma$-macro-tiles emulate the behavior of Wang tiles, we now implement in the tiles of $\Swang[\iota][S]$ a way for $\sigma$-macro-tiles to emulate colored facets.

Consider a $\sigma$-macro-tile $\mwang{t}$ of domain $\pxrect{r}$. To emulate colored facets with $\mwang{t}$, we make the individual tiles appearing on the border of $\mwang{t}$ bear a single bit facing towards the exterior of~$\mwang{t}$: along a facet $\facet{k}{\pm}(\pxrect{r})$, the concatenation of those bits (folded along a specific path) will form a binary string $c \in \{0,1\}^{\ast}$ called a \emph{macro-color}. Since these macro-colors are outward-facing, adjacent $\sigma$-macro-tiles will share the same macro-color along their adjacent facets; thus ensuring that tilings of $\sigma$-macro-tiles really simulate valid Wang tilings.

\begin{figure}[ht]
  \begin{tikzpicture}[scale=0.27]
    \dimendef\mylinewidth=0
    \pgfmathsetlength{\mylinewidth}{0.54cm}
    \fill[RainbowG!50!RainbowA] (-0.4,0) rectangle ++(0.8,2.8);
    \fill[RainbowB!70!RainbowC] (25.6,0) rectangle ++(0.8,2.8);
    \fill[RainbowE] (0,-0.4) rectangle ++(2.8,0.8);
    \fill[RainbowG!50!RainbowA!75!black] (0,0) rectangle ++(0.4,0.4); 
    \fill[RainbowD] (0,15.6) rectangle ++(2.8,0.8);
    \draw[LevelI!15,step=0.4] (0,-1.6) grid ++(26,1.6);
    \draw[LevelI!15,step=0.4] (0,16) grid ++(26,1.6);
    \draw[LevelI!15,step=0.4] (-1.6,0) grid ++(1.6,16);
    \draw[LevelI!15,step=0.4] (26,0) grid ++(1.6,16);
    \draw[LevelI!15,step=0.4] (0,0) grid (26,16);
    \foreach \x in {0,26} {
      \draw[LevelINext,very thick] (\x,-2) -- ++($(0,16) + (0,4)$);
    }
    \foreach \y in {0,16} {
      \draw[LevelINext,very thick] (-2,\y) -- ++($(26,0) + (4,0)$);
    }
    \draw[thick] (0.3,0.3) circle (1.12cm);
    \draw[dashed,bend left] (1.1,1.5) to (14,9.5);
    \begin{scope}[shift={(14,9.5)}]
      \begin{scope}
        \clip (0,0) circle (7cm);
        \fill[white] (-7,-7) rectangle (7,7);
        \begin{scope}[shift={(-2.4,-2.4)},scale=2.6]
          \fill[RainbowG!50!RainbowA,opacity=0.05] (-1,0) rectangle ++(2,5);
          \fill[RainbowE,opacity=0.05] (0,-1) rectangle ++(5,2);
          \foreach \i in {0,1,2,3,4} {
            \draw[RainbowG!50!RainbowA] (-0.25,\i.5) node {\scalebox{0.6}{$b_{\i}$}};
            \draw[RainbowG!50!RainbowA] (0.25,\i.5) node {\scalebox{0.6}{$b_{\i}$}};
          }
          \foreach \i in {0,1,2,3,4} {
            \draw[RainbowE] (\i.5,0.25) node {\scalebox{0.6}{$b_{\i}$}};
            \draw[RainbowE] (\i.5,-0.25) node {\scalebox{0.6}{$b_{\i}$}};
          }
          \draw[LevelI!60,step=1] (-2,0) grid (0,6);
          \draw[LevelI!60,step=1] (0,0) grid (6,6);
          \draw[LevelI!60,step=1] (0,-2) grid (6,0);
          \draw[LevelINext,ultra thick] (-6,0) -- (6,0);
          \draw[LevelINext,ultra thick] (0,-6) -- (0,6);
        \end{scope}
      \end{scope}
      \draw[very thick] (0,0) circle (7cm);
    \end{scope}
  \end{tikzpicture}
  \caption{Macro-colors inside a $\sigma$-macro-tile
    (drawn as\hspace{1.2ex}\begin{tikzpicture}[baseline=0.2ex,scale=0.25]
      \protect\draw[LevelINext,thick] (0,0) rectangle ++(1,1);
    \end{tikzpicture}).}
  \subcaption{Macro-colors appear as
    \begin{tikzpicture}[baseline=0.1ex,scale=0.2]
      \fill[RainbowE] (0,0) rectangle (5,1);
      \draw[LevelI!30!TilingGrid!40] (0,0) grid (5,1);
    \end{tikzpicture},
    \begin{tikzpicture}[baseline=0.1ex,scale=0.2]
      \fill[RainbowG!50!RainbowA] (0,0) rectangle (5,1);
      \draw[LevelI!30!TilingGrid!40] (0,0) grid (5,1);
    \end{tikzpicture},
    \begin{tikzpicture}[baseline=0.1ex,scale=0.2]
      \fill[RainbowD] (0,0) rectangle (5,1);
      \draw[LevelI!30!TilingGrid!40] (0,0) grid (5,1);
    \end{tikzpicture} and
    \begin{tikzpicture}[baseline=0.1ex,scale=0.2]
      \fill[RainbowB!70!RainbowB] (0,0) rectangle (5,1);
      \draw[LevelI!30!TilingGrid!40] (0,0) grid (5,1);
    \end{tikzpicture}.
    Since the \field{color fields} are facing outwards, adjacent $\sigma$-macro-tiles bear the same macro-color along their shared facet.}
\end{figure}

More precisely, we define a \field{color field} in the colors of $\Scolor[\iota]$, which consists of a single bit
\[ b \in \{0,1\}. \]
Any tile $\wang{t}$ appearing in the interior of a $\sigma$-macro-tile $\mwang{t}$ will have blank color fields, and only the tiles $\wang{t}$ appearing on the border of $\sigma$-macro-tiles will bear an outward-facing \field{color field}.

\begin{details}
  Let $\wang{t}$ be a tile of $\Swang[\iota][S]$, and let $(\rect{r}_{\ell},\vec{i}_{\ell})_{\iota \leq \ell < \iota'}$ be rectangles and positions read from the \field{position field} of its decoration $\dec(\wang{t})$. We make the \field{color field} of the decoration $\dec(\wang{t})$ always blank.

  For any direction $1 \leq k \leq d$, consider the results of the operations $\smash{\vec{j}^{\scriptscriptstyle +} = (\vec{i}_{\ell})_{\iota \leq \ell < \iota'} + \basis{k}}$ and $\smash{\vec{j}^{\scriptscriptstyle -} = (\vec{i}_{\ell})_{\iota \leq \ell < \iota'} - \basis{k}}$ in the partial odometer defined by the rectangles $(\rect{r}_{\ell})_{\iota \leq \ell < \iota'}$:
  \begin{itemize}
  \item If $\vec{j}^{\scriptscriptstyle +} = (\odnone,\dots,\odnone)$ (\textit{i.e.}~if $\wang{t}$ appears on the border of a $\sigma$-macro-tile), then the \field{color field} of $\facet{k}{+}(\wang{t})$ can either contain a bit $b \in \{0,1\}$ or be blank;
  \item If $\vec{j}^{\scriptscriptstyle -} = (\odnone,\dots,\odnone)$, the \field{color field} of $\facet{k}{-}(\wang{t})$ can similarly contain a bit $b \in \{0,1\}$ or be blank;
  \item In any other case, the \field{color field} of $\facet{k}{\pm}(\wang{t})$ is blank. \qedhere
  \end{itemize}
\end{details}

\begin{claim}
  In any $\sigma$-macro-tile $\mwang{t}$, the \emph{macro-colors} of $\mwang{t}$ define a tuple ${(c_1^{\scriptscriptstyle -},\dots,c_d^{\scriptscriptstyle -}, c_1^{\scriptscriptstyle +},\dots, c_d^{\scriptscriptstyle +})}$ of binary strings. Furthermore, in any valid $\Swang[\iota][S]$-tiling $x$ of $\Z^d$, denote $\pxgrid{g}_{\iota'} = (\pxrect{r}_{\vec{i}})_{\vec{i} \in \Z^d}$ the product grid of level~$\iota'$: then for each $\vec{i} \in \Z^d$, the macro-colors of the $\sigma$-macro-tile $\restr{x}{\pxrect{r}_{i}}$ describe a tile $\wang{t}^{(\vec{i})} \in \Swang[\ast]$ with blank decoration such that the configuration $x' \colon \vec{i} \in \Z^d \mapsto \wang{t}^{(\vec{i})}$ is a valid Wang tiling.
\end{claim}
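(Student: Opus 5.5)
The claim has two parts: macro-colors are well defined, and the induced configuration $x'$ is a valid Wang tiling. Both follow from the local rules on the \field{color fields} together with the nested grid structure of \Cref{pclm:positioning}.

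\emph{Well-definedness.} Fix a $\sigma$-macro-tile $\mwang{t}$ of domain $\pxrect{r}$ and a facet $\facet{k}{\pm}(\pxrect{r})$. By the details of the \field{color field}, a tile of $\mwang{t}$ may carry a non-blank bit on $\facet{k}{\pm}(\wang{t})$ only if the odometer operation $(\vec{i}_\ell)_{\iota\le\ell<\iota'}\pm\basis{k}$ returns a full $\odnone$-sequence. By the remark following \Cref{pclm:positioning}, this is exactly the condition that the tile lies on $\facet{k}{\pm}(\pxrect{r})$. Reading these bits along the boustrophedon order of the facet, skipping blanks, yields a binary string $c_k^{\pm}\in\{0,1\}^\ast$. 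We take this as the definition of the macro-color. It gives a tuple $(c_1^-,\dots,c_d^+)$, and hence a tile $\wang{t}^{(\vec{i})}\in\Swang[\ast]$ with blank decoration.

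\emph{Validity.} Let $\pxrect{r}_{\vec{i}}$ and $\pxrect{r}_{\vec{i}+\basis{k}}$ be adjacent rectangles of $\pxgrid{g}_{\iota'}$. Since $\pxgrid{g}_{\iota'}$ is a grid, $\facet{k}{+}(\pxrect{r}_{\vec{i}})$ and $\facet{k}{-}(\pxrect{r}_{\vec{i}+\basis{k}})$ are translates of each other and consist of pairwise facet-adjacent tiles. By the Wang rule of $\Swang[\iota][S]$, each such adjacent pair shares its common facet color, and in particular the \field{color field} bit, blank or not. So the two sides carry the same sequence of bits, position by position. The boustrophedon orders on two translated facets coincide, so the concatenated strings agree: $c_k^+(\wang{t}^{(\vec{i})})=c_k^-(\wang{t}^{(\vec{i}+\basis{k})})$. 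This is exactly the validity condition for $x'$.

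The main point needing care is that both facets are read with the same convention: the same boustrophedon indexing and the same handling of blanks. This matters because tiles at corners of $\pxrect{r}$ belong to several facets. That causes no conflict, since each tile facet $\facet{k}{\pm}(\wang{t})$ carries its own bit, and the border is treated as a disjoint union of facets.
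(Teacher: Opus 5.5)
Your proposal is correct and follows the paper's own, very brief, justification. The outward-facing color-field bits are components of the Wang colors of $\Swang[\iota][S]$, so facet-adjacent tiles on either side of two neighbouring $\sigma$-macro-tiles carry identical bit/blank patterns, and reading both facets with the same boustrophedon convention gives equal macro-colors. Your explicit handling of the reading convention (skipping blanks, and corner tiles contributing a separate bit to each facet) only spells out details the paper leaves implicit.
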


\subsubsection{Wirings} To ensure that $\sigma$-macro-tiles actually emulate the tiles of $\Swang[\iota'][S]$ (instead of arbitrary tiles from $\Swang[\ast]$), we will use the embedded computations to restrict their tuples of possible macro-colors $(c_1^{\scriptscriptstyle -},\dots,c_d^{\scriptscriptstyle -}, c_1^{\scriptscriptstyle +},\dots, c_d^{\scriptscriptstyle +})$. To do so, we will ``route'' the macro-colors with wires from the external facets of the $\sigma$-macro-tiles to the input facet of their computation zone.

To proceed, we add a \field{wire field} to the colors of $\Swang[\iota][S]$, which is either of the form
\[ b \in \{0,1\} \]
to draw wires outside of the computation zone transmitting bits in a straight line; or, inspired by the (possibly crossing) wires from \Cref{pos:sec:wiring}, of the form
\[ (w_i)_{i \in I} \in ((\N^2 \times \{0,1\}) \times \N)^{\ast} \quad \text{or} \quad (w_i)_{i \in I} \in ((\N^2 \times \{0,1\}) \times \{\mathtt{start}, \mathtt{end}, \mathtt{cross} \})^{\ast} \]
(respectively for a tile's facets and decoration) to carry entries $(i,j,b) \in \N^2 \times \{0,1\}$ from the borders of the computation zone to its input facet, using the~\crtlnameref{pos:lem:routing-lemma}~(\Cref{pos:lem:routing-lemma}).

\begin{figure}[ht]
  \centerline{\hspace*{1cm}\begin{tikzpicture}[scale=0.46]
      \dimendef\mylinewidth=0
      \pgfmathsetlength{\mylinewidth}{0.85cm}
      \fill[RainbowG!50!RainbowA] (-0.4,0) rectangle ++(0.8,2);
      \fill[RainbowB!70!RainbowC] (33.6,0) rectangle ++(0.8,2);
      \fill[RainbowE] (0,-0.4) rectangle ++(2,0.8);
      \fill[RainbowG!50!RainbowA!75!black] (0,0) rectangle ++(0.4,0.4); 
      \fill[RainbowD] (0,19.6) rectangle ++(2,0.8);
      \draw[draw=RainbowE,line width=\mylinewidth,opacity=0.1] (1,0) -- (1,4.2);
      \draw[draw=RainbowG!50!RainbowA,line width=\mylinewidth,opacity=0.1] (0,1) -- (7,1) -- (7,5.4) -- (0,5.4);
      \draw[draw=RainbowD,line width=\mylinewidth,opacity=0.1] (1,20) -- (1,15) -- (7,15) -- (7,7.4) -- (0,7.4);
      \draw[draw=RainbowB!70!RainbowC,line width=\mylinewidth,opacity=0.1] (34,1) -- (15,1) -- (15,9.4) -- (0,9.4);
      \fill[RainbowE] (0,2.4) rectangle ++(0.4,2);
      \fill[RainbowG!50!RainbowA] (0,4.4) rectangle ++(0.4,2);
      \fill[RainbowD] (0,6.4) rectangle ++(0.4,2);
      \fill[RainbowB!70!RainbowC] (0,8.4) rectangle ++(0.4,2);
      \fill[gray] (0,10.4) rectangle ++(0.4,2);
      \draw[LevelI!15,step=0.4] (0,-2) grid ++(34,2);
      \draw[LevelI!15,step=0.4] (0,20) grid ++(34,2);
      \draw[LevelI!15,step=0.4] (-2,0) grid ++(2,20);
      \draw[LevelI!15,step=0.4] (34,0) grid ++(2,20);
      \draw[LevelI!15,step=0.4] (0,0) grid (34,20);
      \foreach \k in {0,...,4} {     
        \tikzmath{\x=(0.4*\k); \ox=(0.4*(4-\k)); \xn=(0.4*(\k+1)); \oxn=(0.4*(5-\k));}
        \draw[draw=RainbowG!50!RainbowA,thin,opacity=0.8,rounded corners=0.8pt,-{Classical TikZ Rightarrow[length=1.3pt]}] ($(0.2,0.2) + (0,\x)$) -- ++(\xn,0) -- ++(5.6,0) -- ++(0,\oxn) -- ++(0,2) -- ++(0,\xn) -- ++(-\xn,0) -- ++(-5.4,0);
      }
      \foreach \k in {0,...,4} {    
        \tikzmath{\x=(0.4*\k); \ox=(0.4*(4-\k)); \xn=(0.4*(\k+1)); \oxn=(0.4*(5-\k));}
        \draw[draw=RainbowE,thin,opacity=0.8,rounded corners=0.8pt,-{Classical TikZ Rightarrow[length=1.3pt]}] ($(0.2,0.2) + (\x,0)$) -- ++(0,\xn) \ifnum\k<1\relax{-- ++(0,1.8)}\else{-- ++(0,2) -- ++($(-\x,0) + (0.2,0)$)}\fi;
      }
      \foreach \k in {0,...,4} {    
        \tikzmath{\x=(0.4*\k); \ox=(0.4*(4-\k)); \xn=(0.4*(\k+1)); \oxn=(0.4*(5-\k));}
        \draw[draw=RainbowD,thin,opacity=0.8,rounded corners=0.8pt,-{Classical TikZ Rightarrow[length=1.3pt]}] ($(0.2,19.8) + (\x,0)$) -- ++(0,-3.6) -- ++(-0,-\oxn) -- ++(6,0) -- ++(0,-7.6) -- ++(-\xn,0) -- ++(-5.4,0);
      }
      \foreach \k in {0,...,4} {    
        \tikzmath{\x=(0.4*\k); \ox=(0.4*(4-\k)); \xn=(0.4*(\k+1)); \oxn=(0.4*(5-\k));}
        \draw[draw=RainbowB!70!RainbowC,thin,opacity=0.8,rounded corners=0.8pt,-{Classical TikZ Rightarrow[length=1.3pt]}] ($(33.8,0.2) + (0,\x)$) -- ++(-17.6,0) -- ++(-\oxn,0) -- ++(0,8.4) -- ++(-\xn,0) -- ++(-13.4,0);
      }
      \begin{scope}
        \path[clip,scope fading=west,fit fading=true] (0,0) rectangle ++(-2,2);
        \foreach \k in {0,...,4} {     
          \tikzmath{\x=(0.4*\k); \ox=(0.4*(4-\k)); \xn=(0.4*(\k+1)); \oxn=(0.4*(5-\k));}
          \draw[draw=RainbowG!50!RainbowA,thin,opacity=0.6,rounded corners=0.8pt] ($(-0.2,0.2) + (0,\x)$) -- ++(-2,0);
        }
      \end{scope}
      \begin{scope}
        \path[clip,scope fading=south,fit fading=true] (0,0) rectangle ++(2,-2);
        \foreach \k in {0,...,4} {    
          \tikzmath{\x=(0.4*\k); \ox=(0.4*(4-\k)); \xn=(0.4*(\k+1)); \oxn=(0.4*(5-\k));}
          \draw[draw=RainbowE,thin,opacity=0.6,rounded corners=0.8pt] ($(0.2,-0.2) + (\x,0)$) -- ++(0,-2);
        }
      \end{scope}
      \begin{scope}
        \path[clip,scope fading=north,fit fading=true] (0,20) rectangle ++(2,2);
        \foreach \k in {0,...,4} {    
          \tikzmath{\x=(0.4*\k); \ox=(0.4*(4-\k)); \oxn=(0.4*(6-\k));}
          \draw[draw=RainbowD,thin,opacity=0.6,rounded corners=0.8pt] ($(0.2,20.2) + (\x,0)$) -- ++(0,1.8);
        }
      \end{scope}
      \begin{scope}
        \path[clip,scope fading=east,fit fading=true] (34,0) rectangle ++(2,2);
        \foreach \k in {0,...,4} {    
          \tikzmath{\x=(0.4*\k); \ox=(0.4*(4-\k)); \xn=(0.4*(\k+1)); \oxn=(0.4*(5-\k));}
          \draw[draw=RainbowB!70!RainbowC,thin,opacity=0.6,rounded corners=0.8pt] ($(34.2,0.2) + (0,\x)$) -- ++(1.8,0);
        }
      \end{scope}
      \draw[LevelINext,pattern={north east lines},pattern color=LevelINext,opacity=0.2] (0,0) rectangle ++(16,16);
      \begin{scope}
        \clip (0,20) rectangle ++(34,2);
        \draw[LevelINext,pattern={north east lines},pattern color=LevelINext,opacity=0.2] (0,20) rectangle ++(16,3);
      \end{scope}
      \begin{scope}
        \clip (34,0) rectangle ++(2,20);
        \draw[LevelINext,pattern={north east lines},pattern color=LevelINext,opacity=0.2] (34,0) rectangle ++(3,16);
      \end{scope}
      \foreach \x in {0,34} {
        \draw[LevelINext,very thick] (\x,-3) -- ++($(0,20) + (0,6)$);
      }
      \foreach \y in {0,20} {
        \draw[LevelINext,very thick] (-3,\y) -- ++($(34,0) + (6,0)$);
      }
      \draw[thick] (7.5,9.9) circle (0.8cm);
      \draw[dashed,bend left] (8.3,10.7) to (22.5,14.5);
      \begin{scope}[shift={(22.5,14.5)}]
        \begin{scope}
          \clip (0,0) circle (4.5cm);
          \fill[white] (-5,-5) rectangle (5,5);
          \begin{scope}[shift={(2.5,-7.5)},scale=2]
            \fill[RainbowB!70!RainbowC,opacity=0.1] (1,0) rectangle (-5,5);
            \fill[RainbowD,opacity=0.08] (-5,-1) rectangle (0,7);
            \foreach \k in {0,1,2,3,4} {
              \tikzmath{int \ok; \ok=4-\k;}
              \draw[thick,RainbowD] (-\k.65,-1) -- (-\k.65,7);
              \foreach \y in {-1,0,...,6} {
                \draw[RainbowD!80!black] ($(-\k,\y) + (-0.47,0.15)$) node {\scalebox{0.56}{$b_{\ok}$}};
                \draw[RainbowD!80!black] ($(-\k,\y) + (-0.47,-0.15)$) node {\scalebox{0.56}{$b_{\ok}$}};
              }
            }
            \foreach \k in {0,1,2,3,4} {
              \draw[thick,RainbowB!70!RainbowC] (1,\k.35) -- (-5,\k.35);
              \foreach \x in {-1,0,...,4} {
                \draw[RainbowB!80!black] ($(-\x,\k) + (-0.15,0.55)$) node {\scalebox{0.56}{$b_{\k}$}};
                \draw[RainbowB!80!black] ($(-\x,\k) + (+0.16,0.55)$) node {\scalebox{0.56}{$b_{\k}$}};
              }
            }
            \draw[LevelI!18!TilingGrid] (-5,-1) grid (1,7);
            \draw[LevelINext,pattern={Lines[angle=45,distance={6pt/sqrt(2)}]},pattern color=LevelINext,opacity=0.1] (-5,-1) rectangle (1,6);
          \end{scope}
        \end{scope}
        \draw[very thick] (0,0) circle (4.5cm);
        \node at (-3.7,-4.3) {\scalebox{0.72}{\field{wire field}}};
      \end{scope}
      \draw[thick] (14.7,0.65) circle (1cm);
      \draw[dashed,bend left] (15.2,1.9) to (20.5,6);
      \begin{scope}[shift={(20.5,6)}]
        \begin{scope}
          \clip (0,0) circle (3.5cm);
          \fill[white] (-4,-4) rectangle (4,4);
          \begin{scope}[shift={(-1.8,-2.2)},scale=1.5,RainbowB!70!RainbowC!60!black]
            \foreach \k in {0,...,4} {
              \draw[rounded corners,thick,RainbowB!70!RainbowC] (5,\k.35) -- (\k.35,\k.35) -- (\k.35,5);
            }
            \draw[LevelI!30!TilingGrid!60] (-1,-1) grid (5,5);
            \foreach \k in {0,...,4} {
              \draw ($(\k,\k) + (0.82,0.56)$) node {\scalebox{0.52}{$b_{\k}$}};
              \draw ($(\k,\k) + (0.56,0.82)$) node {\scalebox{0.52}{$b_{\k}$}};
              \foreach \i in {1,...,3} {
                \draw ($(\k,\k) + (\i,0) + (0.18,0.56)$) node {\scalebox{0.52}{$b_{\k}$}};
                \draw ($(\k,\k) + (\i,0) + (0.82,0.56)$) node {\scalebox{0.52}{$b_{\k}$}};
                \draw ($(\k,\k) + (0,\i) + (0.56,0.18)$) node {\scalebox{0.52}{$b_{\k}$}};
                \draw ($(\k,\k) + (0,\i) + (0.56,0.82)$) node {\scalebox{0.52}{$b_{\k}$}};
              }
            }
            \draw[LevelINext,pattern={Lines[angle=45,distance={6pt/sqrt(2)}]},pattern color=LevelINext,opacity=0.15] (-1,0) rectangle (5,5);
            \draw[LevelINext,ultra thick] (-1,0) -- (5,0);
          \end{scope}
        \end{scope}
        \draw[very thick] (0,0) circle (3.5cm);
        \node at (-2.7,-3.6) {\scalebox{0.72}{\field{wire field}}};
      \end{scope}
      \draw[thick] (33.8,0.2) circle (0.5cm);
      \draw[dashed,looseness=1.1] (34.2,0.7) to[out=60,in=10] (29.5,6);
      \begin{scope}[shift={(29.5,6)}]
        \begin{scope}
          \clip (0,0) circle (3.5cm);
          \fill[white] (-4,-4) rectangle (4,4);
          \begin{scope}[shift={(1.5,-1.7)},scale=3]
            \fill[RainbowB!70!RainbowB,opacity=0.1] (-1,0) rectangle (1,5);
            \foreach \k in {0,...,4} {
              \draw[thick,RainbowB!70!RainbowC] (-0.5,\k.35) -- ++(-5,0);
              \draw[RainbowB!80!black] (-0.18,\k.5) node {\scalebox{0.7}{$b_{\k}$}};
              \draw[RainbowB!80!black] (0.2,\k.5) node {\scalebox{0.7}{$b_{\k}$}};
              \foreach \x in {1,2,3} {
                \draw[RainbowB!80!black,opacity=0.6] ($(-\x,\k) + (-0.18,0.5)$) node {\scalebox{0.7}{$b_{\k}$}};
                \draw[RainbowB!80!black,opacity=0.6] ($(-\x,\k) + (0.2,0.5)$) node {\scalebox{0.7}{$b_{\k}$}};
              }
            }
            \draw[LevelI!30!TilingGrid!60] (-4,-1) grid (0,5);
            \draw[LevelI!30!TilingGrid!60] (0,0) grid (1,5);
            \draw[LevelINext,ultra thick] (0,-1) -- (0,5);
            \draw[LevelINext,ultra thick] (-5,0) -- (1,0);
          \end{scope}
        \end{scope}
        \node at (-2.7,-3.7) {\scalebox{0.72}{\field{wire field}}};
        \draw[->,bend left] (-3.5,-3.3) to (-2.6,-0.9);
        \node at (2.7,-3.7) {\scalebox{0.72}{\field{color field}}};
        \draw[->,bend right,looseness=0.7] (3.5,-3.3) to (2,-0.8);
        \draw[very thick] (0,0) circle (3.5cm);
      \end{scope}
    \end{tikzpicture}}

  \caption{Macro-colors and wires inside a $\sigma$-macro-tile
    (drawn as\hspace{1.2ex}\begin{tikzpicture}[baseline=0.2ex,scale=0.25]
      \protect\draw[LevelINext,thick] (0,0) rectangle ++(1,1);
    \end{tikzpicture})
    and its computation zone
    (drawn with\hspace{1.2ex}\begin{tikzpicture}[baseline=0.2ex,scale=0.25]
      \protect\draw[LevelINext,pattern={north east lines},pattern color=LevelINext] (0,0) rectangle ++(1,1);
    \end{tikzpicture}).}
  \subcaption{Macro-colors appear as
    \begin{tikzpicture}[baseline=0.1ex,scale=0.2]
      \fill[RainbowE] (0,0) rectangle (5,1);
      \draw[LevelI!30!TilingGrid!40] (0,0) grid (5,1);
    \end{tikzpicture},
    \begin{tikzpicture}[baseline=0.1ex,scale=0.2]
      \fill[RainbowG!50!RainbowA] (0,0) rectangle (5,1);
      \draw[LevelI!30!TilingGrid!40] (0,0) grid (5,1);
    \end{tikzpicture},
    \begin{tikzpicture}[baseline=0.1ex,scale=0.2]
      \fill[RainbowD] (0,0) rectangle (5,1);
      \draw[LevelI!30!TilingGrid!40] (0,0) grid (5,1);
    \end{tikzpicture} and
    \begin{tikzpicture}[baseline=0.1ex,scale=0.2]
      \fill[RainbowB!70!RainbowB] (0,0) rectangle (5,1);
      \draw[LevelI!30!TilingGrid!40] (0,0) grid (5,1);
    \end{tikzpicture}. Outside of the computation zone, wires carry the individual bits of macro-colors along straight lines. Inside the computation zones, the~\crtlnameref{pos:lem:routing-lemma} wires these bits (indices $(i,j)$ are not drawn) towards the input facet (on the left).
    On the input facet, the gray argument
    \begin{tikzpicture}[baseline=0.1ex,scale=0.2]
      \fill[gray] (0,0) rectangle (5,1);
      \draw[LevelI!30!TilingGrid!40] (0,0) grid (5,1);
    \end{tikzpicture}
    corresponds to the macro-decoration (not wired) of the $\sigma$-macro-tile. For readability purposes, this figure draws a very neat wiring inside the computation zone, even though the wiring tileset may not always do so.}
\end{figure}

\begin{details} For any tile $\wang{t}$, let $(\rect{r}_{\ell},\vec{i}_{\ell})_{\iota \leq \ell < \iota'}$ be read from the \field{position field} of $\dec(\wang{t})$.

  \smallskip
  First, let $\vec{j}^{\scriptscriptstyle +} = (\vec{i}_{\ell})_{\iota \leq \ell < \iota'} + \basis{k}$ and $\vec{j}^{\scriptscriptstyle -} = (\vec{i}_{\ell})_{\iota \leq \ell < \iota'} - \basis{k}$ denote the results of the partial odometer operations defined by the rectangles $(\rect{r}_{\ell})_{\iota \leq \ell < \iota'}$. Then if $\vec{j}^{\scriptscriptstyle +} = (\odnone,\dots,\odnone)$ (resp.~$\vec{j}^{\scriptscriptstyle +} = (\odnone,\dots,\odnone)$), then the \field{wire field} of $\facet{k}{+}(\wang{t})$ (resp.~$\facet{k}{-}(\wang{t})$) must be blank. This ensures that the \field{wire fields} between adjacent $\sigma$-macro-tiles are completely isolated from one another.

  \medskip
  Let us now consider whether $\wang{t}$ appears inside a computation zone or not. On the one hand, if the \field{compzone field} of $\dec(\wang{t})$ is blank, bits travel in straight lines:
  \begin{itemize}
  \item The \field{wire field} of $\dec(\wang{t})$ must be blank; and the \field{wire field} of any $\facet{k}{\pm}(\wang{t})$ can either be blank or a single bit $b \in \{0,1\}$;
  \item The \field{wire field} of a facet $\facet{k}{\pm}(\wang{t})$ is a bit $b \in \{0,1\}$ if and only if either the \field{wire field} of its opposite facet $\facet{k}{\mp}(\wang{t})$ contains the same bit $b \in \{0,1\}$ (\textit{i.e.}~inside a $\sigma$-macro-tile), or if the \field{color field} of said $\facet{k}{\mp}(\wang{t})$ contains the same bit $b \in \{0,1\}$ (\textit{i.e.}~on the border of a $\sigma$-macro-tile).
  \end{itemize}

  Otherwise, let $(\compzone,\vec{i})$ denote the \field{compzone field} of $\dec(\wang{t})$. Then the \field{wire fields} of $\wang{t}$ will implement the \emph{routing tileset} from \Cref{pos:sec:wiring}:
  \begin{itemize}
  \item The \field{wire field} of $\dec(\wang{t})$ contains a list of wires $(w_i^{\decsymbol})_{i \in I} \in ((\N^2 \times \{0,1\}) \times \{\mathtt{start}, \mathtt{end}, \mathtt{cross} \})^{\ast}$ of length $|I| \leq \K[][cross]$, where $\K[][cross] \in \N$ bounds the crossings required in the~\crtlnameref{pos:lem:routing-lemma};
  \item If $\vec{i} \in \facet{k}{\pm}(\compzone)$, then the \field{wire field} of $\facet{k}{\pm}(\wang{t})$ is either blank or a single bit $b \in \{0,1\}$. Otherwise, $\facet{k}{\pm}(\wang{t})$ is a list of wires $(w_i)_{i \in I} \in ((\N^2 \times \{0,1\}) \times \N)^{\ast}$ of length $|I| \leq \K[][cross]$;
  \item The \field{wire fields} of $\dec(\wang{t})$ and $\facet{k}{\pm}(\wang{t})$ must form a valid tile in the routing tileset $\Swang[\wire](\N^2 \times \{0,1\})$ (preservation of wires between facets\dots); we also bound the length of any wire $n \in \N$ by $\rvol{\compzone}$;
    \pagebreak
  \item (Starts of wires) An entry $((i,j,b), \mathtt{start})$ appears in the \field{wire field} of $\dec(\wang{t})$ if and only if $\vec{i}$ belongs to a facet $\facet{k}{\pm}(\compzone)$ for some $1 \leq k \leq d$, and the \field{color field} or the \field{wire field} of $\facet{k}{\pm}(\wang{t})$ consists of a single bit $b \in \{0,1\}$. In which case, we set $i = 3 + k$ (on $\facet{k}{-}(\compzone)$) or $i = 3+d+k$ (on $\facet{k}{+}(\compzone)$) and $j$ to the index of the position $\vec{i}$ in the boustrophedon indexing of $\facet{k}{\pm}(\compzone)$;
  \item (Ends of wires) An entry $((i,j,b), \mathtt{end})$ appears in the \field{wire field} of $\dec(\wang{t})$ if and only if $\vec{i} \in \sfacet(\compzone) = \facet{h}{-}(\compzone)$ belongs to the input facet of the computation zone, and the variable $\ram{input}$ in the \field{processor field} of the facet $\facet{h}{+}(\wang{t})$ contains the same tuple $(i,j,b)$.
  \end{itemize}

  \medskip
  Finally, in order to prevent bits of macro-colors to traverse a $\sigma$-macro-tile in straight lines without ever crossing its computation zone, we enforce the following condition:
  \begin{itemize}
  \item If the \field{color field} of $\facet{k}{-}(\wang{t})$ is non-blank, then so is the \field{compzone field} of $\dec(\wang{t})$;
  \end{itemize}
  thus ensuring that macro-colors on negative facets must already appear inside the computation zone.
\end{details}

\newpage
We then claim that macro-colors occupy, on an input facet, the arguments $\ram{I}[4]$ to $\ram{I}[2d+4]$:
\begin{claim}
  Let $\mwang{t}$ be a $\sigma$-macro-tile of macro-colors ${(c_1^{\scriptscriptstyle -},\dots,c_d^{\scriptscriptstyle -}, c_1^{\scriptscriptstyle +},\dots, c_d^{\scriptscriptstyle +})}$, and let $\ram{I} \in \{0,1\}^{\ast\ast}$ be the input array that appears on the input facet of its computation zone. Then for every $1 \leq k \leq d$, we have $\ram{I}[3+k] = c_{k}^{\scriptscriptstyle -}$ and $\ram{I}[3+d+k] = c_{k}^{\scriptscriptstyle +}$.
\end{claim}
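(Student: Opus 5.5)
The plan is to trace each bit of a macro-color from its outward-facing \field{color field} on the border of $\mwang{t}$ to the input facet of the computation zone. We then check that the labels $(i,j)$ attached to it along the way are exactly the folding indices of $\ram{I}[3+k]$ (resp.~$\ram{I}[3+d+k]$).

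\emph{Outside the computation zone.} Fix a direction $k$ and a sign. For each border tile of $\pxrect{r}$ on $\facet{k}{\pm}(\pxrect{r})$ carrying a bit $b$ in its \field{color field}, the straight-line rules on \field{wire fields} propagate $b$ along direction $\mp\basis{k}$, with each bit staying on its own line. For positive facets, the cut $\compzone=\mathrm{Cut}_N(\pxrect{r})$ is anchored at $\vec{0}$, so a line entering from a position of $\facet{k}{+}(\pxrect{r})$ hits $\facet{k}{+}(\compzone)$ exactly when its other coordinates lie in $\compzone$. For negative facets, the extra condition forces non-blank colors to sit on $\facet{k}{-}(\compzone)$ directly. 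We must also rule out bits lying outside the projection of $\compzone$. Any such bit would travel straight through the macro-tile and be blocked by the blank-\field{wire field} condition on the opposite border, so these tiles must carry blank color. The macro-color $c_k^\pm$ is thus read along $\facet{k}{\pm}(\compzone)$, and it is precisely the concatenation of bits reaching that facet in boustrophedon order. This identification is how the macro-color's folding path is defined.

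\emph{Starts of wires.} By the start rule, each such bit creates exactly one entry $((3+k\text{ or }3+d+k,\,j,\,b),\mathtt{start})$, where $j$ is the boustrophedon index on $\facet{k}{\pm}(\compzone)$. So the multiset of started triples is $\{(3+k,j,c_k^-[j])\}\cup\{(3+d+k,j,c_k^+[j])\}$. By \Cref{pos:prop:wiring-bijection}, which applies because wire fields are blank on the border of $\compzone$, every start is matched to a unique end carrying the same color $(i,j,b)$. Every end lies on the input facet and equals the $\ram{input}$ variable there. Hence the triples of the form $(i,j,b)$ with $4\le i\le 2d+4$ appearing among the $\ram{input}$ variables are exactly the started ones, each appearing once.

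\emph{Conclusion and main obstacle.} The previous claim already gives that the $\ram{input}$ variables form $\mathrm{fold}(\ram{I})$ for some array $\ram{I}$. It remains to match entries. Entry $\ram{I}[i][j]$ for $i\in\interval{4}{2d+4}$ is carried by an $\ram{input}$ triple $(i,j,b)$, which must be an end of a wire, whose start forces $b=c^\pm_k[j]$. Conversely, every index $j<|c^\pm_k|$ occurs, so $\len(\ram{I}[i])=|c_k^\pm|$. The delicate point is exactly this length and non-gap bookkeeping. The \field{arg field} rules allow the index $i$ to advance from $(i,j)$ to $(i+1,0)$ at any time, possibly skipping an index. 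We must therefore argue that no $\ram{input}$ triple with index $i\ge4$ lacks a wire, and that consecutive indices have no holes. We will handle the first via the "if and only if" in the end-of-wire rule, read together with the start rule: ends occur exactly where $\ram{input}$ has such a triple. For holes, a hole in the $j$'s for fixed $i$ is impossible because $j$ is incremented by one along the fold. Each started $j$ from $0$ to $|c_k^\pm|-1$ appears, so the fold of $\ram{I}[i]$ is exactly $c_k^\pm$.
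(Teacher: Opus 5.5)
Your proposal is correct and is essentially the argument the paper leaves implicit (it states this claim as a direct consequence of the straight-line \field{wire field} rules, the start rule labelling each bit by $(3+k,j)$ or $(3+d+k,j)$, the end rule tying wire ends to the $\ram{input}$ variables, and the colour-preserving bijection of \Cref{pos:prop:wiring-bijection}), and your length and no-hole bookkeeping through the \field{arg fields} spells out what the paper glosses over. There are two minor slips, neither of which affects the claim. First, the wired indices are $4\le i\le 2d+3$, because $\ram{I}[2d+4]$ is the unwired decoration argument, so the end-of-wire ``if and only if'' must be read as restricted to these indices. Second, bits lying outside the projection of $\compzone$ are excluded in the paper by the extra condition that non-blank colour fields on negative facets lie in the computation zone, not by wire-field isolation alone.
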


\begin{remark}
  The previous claim does not imply anything about the argument $\ram{I}[2d+4]$. In this construction, this argument plays the role of the \emph{decoration} $c_{\decsymbol}$ of the Wang tile $(c_1^{\scriptscriptstyle -},\dots,c_d^{\scriptscriptstyle -}, c_1^{\scriptscriptstyle +},\dots, c_d^{\scriptscriptstyle +},c_{\decsymbol})$. In the same way that a Wang tile's decoration does not intervene in the adjacency constraints with its neighbors, the argument $\ram{I}[2d+4]$ is processed computationally but is not wired.\\
  In the rest of this proof, when we say that the tuple $(c_1^{\scriptscriptstyle -},\dots,c_d^{\scriptscriptstyle -}, c_1^{\scriptscriptstyle +},\dots, c_d^{\scriptscriptstyle +},c_{\decsymbol})$ forms the \emph{macro-colors} of a macro-tile $\mwang{t}$, the decoration $c_{\decsymbol}\!$ will not actually be read from its \field{color fields}, but from this argument $\ram{I}[2d+4]$ in the \field{compzone fields}.
\end{remark}

\begin{numerics}
  Let $\mwang{t}$ be a $\sigma$-macro-tile of domain $\pxrect{r}$ and computation zone $\compzone$. Since the length of wires inside $\compzone$ is bounded by $\rvol{\compzone} \leq \rspace_{\iota'}^d$, and that in the carried colors indices $(i,j)$ belong to $\interval{4}{2d+4} \times \interval{0}{\rvol{\compzone}}$ (because they are mapped to the $\ram{input}$ variables of some \field{processor fields}), we deduce that the \field{wire fields} in $\mwang{t}$ have bit length $\bigO(\log \pxspace{\iota'}) = \polylog \K[][s] \cdot \bigO(\pxspace{\iota}[\delta])$.

  (This also justifies why we use straight wires outside of the computation zone: otherwise, wires could be of length $\bigO(\rtime(\pxrect{r}))$, which would result in \field{wire fields} of bit length $\polylog \K[][s] \cdot \bigO(\pxspace{\iota}[2\delta])$.)
\end{numerics}

\medskip
\subsubsection{Communication with the next level of the simulation}\label{fix:sec:macro-tile:communication-next-level}

While the arguments $\ram{I}[4]$ to $\ram{I}[2d+4]$ of the input facets are \emph{wired} from their macro-colors, we have yet to mention anything about the arguments $\ram{I}[0],\dots,\ram{I}[3]$. These will actually be used as a way to communicate information from the tiles of $\Swang[\iota][S]$ to the next level of the simulation~$\iota'$, and are thus initialized differently.

\paragraph*{The copying method} Assume that a \field{foo} field in the colors o4f $\Scolor[\iota]$ consists of a binary string that is constant across all the tiles $\wang{t} \in \Swang[\iota][S]$ composing a $\sigma$-macro-tile $\mwang{t}$. We then claim that \field{foo} can be used to initialize some argument $\ram{I}[i]$ on the input facet of the computation zone of $\mwang{t}$.

Indeed, let $\wang{t} \in \Swang[\iota][S]$ be a tile appearing in a $\sigma$-macro-tile $\mwang{t}$, and assume that $\dec(\wang{t})$ has a non-blank \field{compzone field} $(\compzone,\vec{i})$ and a \field{foo} field $u \in \{0,1\}^{\ast}$. In the definition of the tileset $\Swang[\iota][S]$, we can enforce that, if $\wang{t}$ appears on the input facet of the computation zone $\compzone$ (\textit{i.e.}~if $\vec{i} \in \sfacet(\compzone)$) and that the $\ram{input}$ variable of the corresponding \field{processor field} is some $(i',j,b)$, then if $i = i'$ the bit $b \in \{0,1\}$ must actually satisfy $b = u_{j}$. Collectively, the tiles $t \subpattern \mwang{t}$ thus ensure that their \field{foo} field equals the macro-tile's input argument $\ram{I}[i]$: we refer to this process as the \emph{copying method}.

\paragraph*{Initializing the input facet}

Using the copying method, we initialize the arguments $\ram{I}[0]$ to $\ram{I}[3]$ of the input facets of $\sigma$-macro-tiles as follows:
\begin{itemize}
\item We initialize $\ram{I}[0]$ by copying the parameters $(\K[][s],\K[][w])$ from the global function $\funram{F(e)}$;
\item We initialize $\ram{I}[1]$ by copying the \field{p-level field} $\iota' \in \N$ of the decorations $\dec(\wang{t})$;
\item We initialize $\ram{I}[2]$ by copying $(\rspace_{\ell},\rtime_{\ell})_{0 \leq \ell < \kappa'-1}$, the concatenation of $(\rspace_{\ell},\rtime_{\ell})_{0 \leq \ell < \kappa-1}$ from the function $\funram{F(e)}$ and the family $(\rspace_{\ell},\rtime_{\ell})_{\kappa-1 \leq \ell < \kappa'-1}$ from the \field{bound field} of $\dec(\wang{t})$;
\item We initialize $\ram{I}[3]$ by copying $(\rect{r}_{\ell},\vec{i}_{\ell})_{\iota' \leq \ell < \kappa'-1}$, as extracted from the family $(\rect{r}_{\ell},\vec{i}_{\ell})_{\iota \leq \ell < \kappa'-1}$ from the \field{position field} of $\dec(\wang{t})$.
\end{itemize}

\paragraph*{Reading the output facet} Instead of writing inside the input facet of the computation zone, it is also possible to use the copying method to read the values returned by the computation of the embedded processor array on the output facet. For example:
\begin{itemize}
\item We copy the argument $\ram{O}[0]$ from the output facet, which consists of a single set of directions $\nghb_{\kappa'-1} \subseteq \{\pm \basis{k} : 1 \leq k \leq d\}$, into the eponymous entry $\nghb_{\kappa'-1}$ in the \field{neighbor field} of $\dec(\wang{t})$.
\end{itemize}
We will similarly use the other arguments $\ram{O}[1]$, $\ram{O}[2]$ and $\ram{O}[3]$ later in the proof, as a way to read the results of some computations embedded in the next level of fixed-point simulation.

\newpage
\subsection{\texorpdfstring{$\bm{\tau}$}{τ}-macro-tiles (of level \texorpdfstring{$\bm{\kappa \leq \ell < \kappa'}$}{κ ≤ ℓ < κ'})}\label{fix:sec:tau-macro-tiles}

Now that the next level of simulation is implemented in the $\sigma$-macro-tiles, we turn towards the embedding of the $\tau$-substitution steps
\[ x^{(\kappa')} \mspace{6mu} \substep[\tau][\grid{g}_{\kappa'-1}] \mspace{6mu} x^{(\kappa'-1)} \mspace{6mu} \substep[\tau] \mspace{10mu} \ldots \mspace{10mu} \substep[\tau] \mspace{6mu} x^{(\kappa+1)} \mspace{6mu} \substep[\tau][\grid{g}_{\kappa}] x^{(\kappa)} \]
for $\grid{g}_{\kappa}, \dots, \grid{g}_{\kappa'-2}$ the grids determined by the \field{position fields} of a valid tiling, and $\grid{g}_{\kappa'-1}$ the grid determined by the last entry in the \field{neighbor fields}.

\medskip
\emph{In the rest of this proof, we will call $\tau$-macro-tiles of level $\ell$ the macro-tiles of level $\kappa \leq \ell < \kappa'$.\footnote{The notions of $\sigma$-macro-tiles and $\tau$-macro-tiles of level $\iota'$  actually coincide. When referring to a macro-tile of level $\iota'$ as a $\tau$-macro-tile (instead of a~$\sigma$-macro-tile), we will implicitly refer to the $\tau$-variants of its fields.}} In particular, $\tau$-macro-tiles respect the nested hierarchy already mentioned in \Cref{fix:sec:macro-tiles}: we kindly refer the reader to \Cref{fix:fig:macro-tile-hierarchy} again. This section will design the computational embeddings associated with the substitution steps $\smash{x^{(\ell+1)} \substep[\tau][\grid{g}_{\ell}] x^{(\ell)}}$ for $\kappa \leq \ell < \kappa'$ inside the $\tau$-macro-tiles of level $\ell$. In terms of configurations, a valid tiling of $\Swang[\iota][S]$ will thus draw multiple levels of substitutions superimposed on top of one another.

\medskip
Before we begin designing these $\tau$-macro-tiles, recall that the substitution $\tau$ is obtained by a $\log$-RAM program $\code{\tau} \in \{0,1\}^{\ast}$ defining:
\[
  \funram{\code{\tau}}
  \big(\rect{r},\vec{i},
  \mspace{2mu}
  a_1^{\scriptscriptstyle -},\dots,a_d^{\scriptscriptstyle -},
  a_1^{\scriptscriptstyle +},\dots,a_d^{\scriptscriptstyle +},
  a_{\decsymbol}
  \big)
  \mapsmapsto
  \big(
  b_1^{\scriptscriptstyle -},\dots,b_d^{\scriptscriptstyle -},
  b_1^{\scriptscriptstyle +},\dots,b_d^{\scriptscriptstyle +},
  b_{\decsymbol}
  \big)
\]
where $\rect{r} \in \Srect_0$ is a rectangle, $\vec{i} \in \rect{r}$ a position in $\rect{r}$, and the $a_k^{\scriptscriptstyle \pm}$'s and $a_{\decsymbol}$, and the $b_k^{\scriptscriptstyle \pm}$'s and $b_{\decsymbol}$, respectively form Wang tiles $a,b \in \Swang[\ast]$. The program $\code{\tau}$ thus implements the local function associated with the parallel substitution:
\[ \tau(a) = \bigcup_{\rect{r} \in \Srect_0} \Big\{ w \in \Swang[\ast]^{\mspace{2mu}\rect{r}} : \forall \vec{i} \in \rect{r},\;
  w_{\vec{i}} \in \funram{\code{\tau}}\big(\rect{r},\vec{i},
  a_1^{\scriptscriptstyle -},\dots,a_d^{\scriptscriptstyle -},
  a_1^{\scriptscriptstyle +},\dots,a_d^{\scriptscriptstyle +},
  a_{\decsymbol}\big) \Big\}.\]

\subsubsection{\texorpdfstring{$\tau$}{τ}-computation zones} The design of $\tau$-macro-tiles is based upon the construction of $\sigma$\nobreakdash-macro-tiles in \Cref{fix:sec:macro-tiles}.
More precisely, for every level $\ell$ in the interval $\kappa \leq \ell < \kappa'$, we introduce a \field{$\tau$-compzone field}, a \field{$\tau$-processor field} and a \field{$\tau$-arg field of level $\ell$} (in other words, there are $\kappa'-\kappa$ new \field{$\tau$-processor} (resp~\field{$\tau$-compzone}, \field{$\tau$-arg}\dots) \field{fields}: one per such level).

In a $\tau$-macro-tile of level $\ell$ and domain $\pxrect{r} \in \Srect_0$, the \field{$\tau$-compzone fields} of the form $(\compzone,\vec{i})$ (for $\compzone \in \Srect_0$ and $\vec{i} \in \compzone$) delimit the computation zone $\compzone = \mathrm{Cut}_{N}(\pxrect{r})$ with bounds $\smash{N = \prod_{i = \iota}^{\ell-1} \rspace_{i} = \pxspace{\ell}/\pxspace{\iota}}$. The \field{$\tau$-processor fields} collectively draw the space-time diagram of a processor array simulating (a parallel computation of) the substitution $\code{\tau}$; and the \field{$\tau$-arg fields} ensure that the arguments of the input and output arrays are correctly folded along the Boustrophedon indexing of the input and output facets.

\begin{details}
  The adaptation of the previously defined fields into their $\tau$-counterparts is straightforward; for clarity, we highlight a few important modifications on the \field{$\tau$-processor fields}. For any tile $\wang{t} \in \Swang[\iota][S]$, let $(\compzone,\vec{i})$ denote the \field{$\tau$-compzone field} of level $\ell$ of $\dec(\wang{t})$. Assuming that $\vec{i}$ is part of the input facet $\sfacet(\compzone) = \facet{h}{-}(\compzone)$, the variables appearing in the \field{$\tau$-processor field} of level $\ell$ of $\facet{h}{+}(\wang{t})$ satisfy:
  \begin{itemize}
  \item The $\ram{program}$ variable must be set to $\ram{program} = \code{\tau}$, where $\code{\tau} \in \{0,1\}^{\ast}$ is the code of the $\log$-RAM program defining the parallel substitution $\tau$;
  \item The $\ram{timeout}$ variable must be set to $\ram{timeout} = \smash{\K[t] \cdot \pxspace{\ell}[\alpha]}$ (\textit{c.f.}~\crtlnameref{fix:sec:housekeeping} paragraph);
  \item The $\ram{word}$ variable must be set to $\ram{word} = \K[][w] \cdot \log \rspace(\compzone)$, where $\K[][w]$ is the constant appearing in the parameters $(\K[][s],\K[][w])$ of the global function $\funram{F(e)}$;
  \item The $\ram{input}$ variable must be any value $\ram{input} = (i,j,b)$ for a bit $b \in \{0,1\}$ and an index ${(i,j) \in \interval{0}{2d+2} \times \interval{0}{\smash{\K[t]} \cdot \pxspace{\ell}[\alpha]}}$.\qedhere
  \end{itemize}
\end{details}

\begin{numerics}
  Let $\mwang{t}$ be a $\tau$-macro-tile of level $\ell$. Following similar computations from the \field{compzone} and \field{processor fields} of $\sigma$-macro-tiles, the \field{$\tau$-compzone fields of level $\ell$} and the \field{$\tau$-processor fields of level $\ell$} in $\mwang{t}$ are all of bit length $\K[][w] \cdot \bigO(\log \pxspace{\ell}) = \polylog \K[][s] \cdot \K[][w] \cdot \bigO(\pxspace{\iota}[\delta])$.

  In particular, in a valid tile $\wang{t}$, all the \field{$\tau$-compzone fields} and the \field{$\tau$-processor fields} of all levels $\kappa \leq \ell < \kappa'$ are of collective bit length $\polylog \K[][s] \cdot \K[][w] \cdot \bigO(\pxspace{\iota}[2\delta])$.
\end{numerics}

\subsubsection{\texorpdfstring{$\tau$}{τ}-colors and \texorpdfstring{$\tau$}{τ}-wirings}

Since $\tau$ substitutes Wang tilings, we create and enforce the local validity of the substituted Wang tilings by introducing, for every $\ell$ in the interval $\kappa \leq \ell < \kappa'$, a \field{$\tau$-color field} and \field{$\tau$-wire field of level $\ell$}.

In a configuration, \field{$\tau$-color fields} of adjacent $\tau$-macro-tiles of level $\ell$ define a Wang tiling $x^{(\ell)} \in (\Swang[\ast])^{\Z^d}$; and, inside each such macro-tile, \field{$\tau$-wire fields} ``route'' the corresponding $\tau$-macro-colors to the \emph{output facet} of the computation zone (which contains the corresponding pixel of the substitution step $x^{(\ell+1)} \substep[\tau][\pxgrid{g}_{\ell}] x^{(\ell)}$).

\begin{details}
  While the adaptation of the previous \field{wire fields} is straightforward, we highlight for clarity the important modifications. For any tile $\wang{t}$, let $(\compzone,\vec{i})$ denote the \field{$\tau$-compzone field} of level $\ell$ of $\dec(\wang{t})$; and denote by $\facet{h}{-}(\compzone) = \sfacet(\compzone)$ the \emph{input facet} of $\compzone$. Then the \field{$\tau$-wire field of level $\ell$} of $\dec(\wang{t})$ satisfies:
  \begin{itemize}
  \item (Start of wires) An entry $((i,j,b),\mathtt{start})$ appears in the \field{$\tau$-wire field of level $\ell$} of $\dec(\wang{t})$ if and only if $\vec{i} \in \facet{k}{\pm}(\compzone)$ for some facet $1 \leq k \leq d$ of the computation zone; and the \field{$\tau$-color field} or the \field{$\tau$-wire field of level $\ell$} of the corresponding facet $\facet{k}{\pm}(\wang{t})$ consists of a single bit $b \in \{0,1\}$. In which case, we set $i = d \pm k$ and $j$ to the index of the position $\vec{i}$ in the boustrophedon indexing of the facet $\facet{k}{\pm}(\compzone)$;
  \item (End of wires) An entry $((i,j,b), \mathtt{end})$ appears in the \field{$\tau$-wire field of level $\ell$} of $\dec(\wang{t})$ if and only if $\vec{i}$ belongs to the \emph{output facet} $\facet{h}{+}(\compzone)$ of the computation zone; and the variable $\ram{output}$ in the \field{$\tau$-processor field of level $\ell$} of the facet $\facet{h}{-}(\wang{t})$ contains the same tuple $(i,j,b)$.\qedhere
  \end{itemize}
\end{details}

\begin{numerics}
    Let $\mwang{t}$ be a $\tau$-macro-tile of level $\kappa \leq \ell < \kappa'$. Similarly to the \field{wires fields}, the \field{$\tau$-wire fields of level~$\ell$} in $\mwang{t}$ are of bit length $\bigO(\log \pxspace{\ell}) = \polylog \K[][s] \cdot \bigO(\pxspace{\iota}[\delta])$.
\end{numerics}

\subsubsection{\texorpdfstring{$\tau$}{τ}-consistency}
\label{fix:sec:tau-consistency}
In an $\Swang[\iota][S]$-valid tiling, the embedded substitution steps $\smash{x^{(\ell+1)} \substep[\tau][\grid{g}_{\ell}] x^{(\ell)}}$ must respect the structure of the grid $\grid{g}_{\ell}$ (as determined by the \field{position fields} of the tiles if $\ell < \kappa'-1$, or their \field{neighbor fields} if $\ell = \kappa'-1$). In particular, all parallel computations $\funram{\code{\tau}}(\rnorm{\rect{r}},\vec{i},a)$ belonging to the same rectangle $\rect{r}$ in $\grid{g}_{\ell}$ must be initialized with the same tile $(a_0,\dots,a_{2d}) \in \Swang[\ast]$.

For every $\kappa \leq \ell < \kappa'$, we thus add a \field{$\tau$-consistency field of level $\ell$}. Also based on wires, it ensures that the $\tau$-macro-tiles of level $\ell$ appearing in some common rectangle of the grid $\pxgrid{g}_{\ell+1}$ all substitute the same tile $(a_0,\dots,a_{2d}) \in \Swang[\ast]$ by drawing wires between their  respective \emph{input facets}:
\begin{itemize}
\item Outside the computation zone of a $\tau$-macro-tile, bits are carried in a straight line;
\item Inside the computation zone of a $\tau$-macro-tile, the \crtlnameref{pos:lem:routing-lemma}~(\Cref{pos:lem:routing-lemma}) carries bits from the border of the computation zone to its input facet;
\item Finally, wires are drawn between two adjacent $\tau$-macro-tiles of level $\ell$ if and only if they belong to the same $\tau$-macro-tile of level $\ell+1$.
\end{itemize}

\begin{details}
  Similar to \field{$\tau$-wire fields}, a \field{$\tau$-consistency field of level $\ell$} is either a single input bit
  \[ (i,j,b) \in \N^2 \times \{0,1\} \]
  to transmit bits in straight lines between the computation zones of adjacent macro-tiles; or of the form
  \[ (w_i)_{i \in I} \in ((\N^3 \times \{0,1\}) \times \N)^{\ast} \quad ; \; \text{or} \quad (w_i)_{i \in I} \in ((\N^3 \times \{0,1\}) \times \{\mathtt{start}, \mathtt{end}, \mathtt{cross} \})^{\ast} \]
  to carry entries $(d \pm k,i,j,b) \in \N^3 \times \{0,1\}$ from the borders of the computation zone to its input facet. The adaptation is straightforward, but we highlight the important modifications for clarity.

  \medskip
  Let $\wang{t} \in \Swang[\iota][S]$ be a tile whose decoration $\dec(\wang{t})$ has \field{position field} $(\rect{r}_{\ell},\vec{i}_{\ell})_{\iota \leq \ell < \kappa'-1}$, and in the the partial odometer of rectangles  $(\rect{r}_{\ell})_{\iota \leq \ell' < \ell}$ consider $\vec{j}_{k}^{\scriptscriptstyle +} = (\vec{i}_{\ell'})_{\iota \leq \ell' < \ell} + \basis{k}$ and $\vec{j}_{k}^{\scriptscriptstyle -} = (\vec{i}_{\ell})_{\iota \leq \ell' < \ell} - \basis{k}$. As opposed to the previously defined \field{wire fields}, having $\vec{j}_{k}^{\scriptscriptstyle -} = (\odnone,\dots,\odnone)$ or~$\vec{j}_{k}^{\scriptscriptstyle +} = (\odnone,\dots,\odnone)$ does not necessarily imply that the \field{$\tau$-consistency field} of $\facet{k}{-}(\wang{t})$ and $\facet{k}{+}(\wang{t})$ must be blank. More precisely, for $1 \leq k \leq d$:
  \begin{itemize}
  \item If $\vec{j}_{k}^{\scriptscriptstyle +} = (\odnone,\dots,\odnone)$: the \field{$\tau$-consistency field of level $\ell$} of $\facet{k}{+}(\wang{t})$ can contain a non-blank entry $(i,j,b)$ if and only if the direction $\basis{k}$ appears in the entry $\nghb_{\ell}$ of the \field{neighbor field} of $\dec(\wang{t})$;
  \item If $\vec{j}_{k}^{\scriptscriptstyle -} = (\odnone,\dots,\odnone)$: the \field{$\tau$-consistency field of level $\ell$} of $\facet{k}{-}(\wang{t})$ can contain a non-blank entry $(i,j,b) \in \N^2 \times \{0,1\}$ if and only if $\dec(\wang{t})$ has a non-blank \field{$\tau$-compzone field of level $\ell$};
  \end{itemize}
  Otherwise, the \field{$\tau$-consistency fields of level $\ell$} of $\wang{t}$ behave similarly to the previously defined \field{wire fields}. If the \field{$\tau$-compzone field of level $\ell$} of $\dec(\wang{t})$ is some non-blank $(\compzone,\vec{i})$, then:
  \begin{itemize}
  \item (Start of wires) An entry $((\pm k,i,j,b),\mathtt{start})$ appears in the \field{$\tau$-consistency field of level~$\ell$} of $\dec(\wang{t})$ if and only if $\vec{i} \in \smash{\facet{k}{\pm}(\compzone)}$ is on some facet $1 \leq k \leq d$ of the computation zone, and the \field{$\tau$-consistency field of level $\ell$} of the corresponding color $\smash{\facet{k}{\pm}(\wang{t})}$ contains $(i,j,b) \in \N^2 \times \{0,1\}$;
  \item (End of wires) If $\vec{i} \in \sfacet(\compzone) = \facet{h}{-}(\compzone)$ appears on the input facet of the computation zone, then the \field{$\tau$-consistency field of level $\ell$} of $\dec(\wang{t})$ contains an entry of the form ${((\pm k,i,j,b),\mathtt{end})}$ for each signed direction $\pm \basis{k}$ in the entry $\nghb_{\ell}$ of its \field{neighbor field} if and only if the variable $\ram{input}$ of the \field{$\tau$-processor field of level $\ell$} of the color $\facet{h}{+}(\wang{t})$ contains the same $(i,j,b)$ for $2 \leq i \leq 2d+2$. Otherwise, it contains no entry of the form $((\cdot,\cdot,\cdot),\mathtt{end})$.
  \end{itemize}
  By applying the \crtlnameref{pos:lem:routing-lemma} $2d$ times (one per value $\pm k$ for $1 \leq k \leq d$), this defines a valid routing between the input facet of the computation zone and each facet $\facet{k}{\pm}(\compzone)$.
\end{details}

\begin{numerics}
  Similarly to the \field{$\tau$-wire fields}, the \field{$\tau$-consistency fields of level $\ell$} (for $\kappa \leq \ell < \kappa'$) in a valid $\tau$-macro-tile~$\mwang{t}$ of level $\ell$ have bit length $\polylog \K[][s] \cdot \bigO(\pxspace{\iota}[\delta])$.
\end{numerics}

\subsubsection{\texorpdfstring{$\tau$}{τ}-pipes}
\label{fix:sec:tau-pipes}

We now initialize the input words $\ram{I}[0]$, $\ram{I}[1]$ and $\ram{I}[2],\ldots,\ram{I}[2d+2]$ on the input facet of $\tau$-macro-tiles of level $\ell$. In a valid tiling, they should respectively encode a rectangle $\rnorm{\rect{r}_{\ell}}$ from the grid $\rect{r}_{\ell} \in \grid{g}_{\ell}$, a position $\vec{i}_{\ell} \in \rnorm{\rect{r}_{\ell}}$, and a tile ${a_{\ell+1} = (a_1^{\scriptscriptstyle -},\dots,a_d^{\scriptscriptstyle -}, a_1^{\scriptscriptstyle +},\dots, a_d^{\scriptscriptstyle +}, a_{\decsymbol}) \in \Swang[\ast]}$. To ensure the consistency of this across successive substitution steps $\smash{x^{(\ell+2)} \substep x^{(\ell+1)} \substep x^{(\ell)}}$, we will synchronize the output facet of the $\tau$-macro-tiles of level $\ell+1$ with the input facets of their nested sub-$\tau$-macro-tiles of level $\ell$.

\medskip
We first state this elementary fact about the structure of nested $\tau$-macro-tiles:
\begin{claim}[label={pclm:tau-macro-tile-facet-inclusion}]
  Let $\ell' \leq \ell < \kappa'$, and consider a $\tau$-macro-tile $\mwang{t}[\ell]$ of level $\ell$ and domain $\pxrect{r}$. Then:
  \begin{itemize}
  \item There exists a unique $\tau$-macro-tile $\mwang{t}[\ell']$ of level $\ell'$ in $\mwang{t}[\ell]$ whose domain $\pxrect{r}' \subseteq \pxrect{r}$ contains the corner position $\vec{0} \in \pxrect{r}$;
  \item The computation zone $\compzone'$ of $\mwang{t}[\ell']$ has a facet $\facet{p}{-}(\compzone')$ that is included in the input facet $\sfacet(\compzone)$ of the computation zone $\compzone$ of $\mwang{t}[\ell]$.\footnote{Note that $\facet{p}{-}(\compzone')$ has no reason to actually be the input facet of the computation zone $\compzone'$.}
  \end{itemize}
\end{claim}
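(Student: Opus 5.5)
The first item follows from the nested structure of the grids given by \Cref{pclm:positioning}. I would write $\pxrect{r}$ as a union of the rectangles of $\pxgrid{g}_{\ell'}$ that it contains. Then $\vec{0}$, viewed as a position of $\pxrect{r}$ (that is, its bottom-left corner after normalization), lies in exactly one rectangle $\pxrect{r}'$ of this partition, because $\pxgrid{g}_{\ell'}$ partitions $\Z^d$. Moreover $\pxrect{r}' \subseteq \pxrect{r}$, because $\pxgrid{g}_{\ell}$ is a subgrid of $\pxgrid{g}_{\ell'}$. The corresponding pattern is the macro-tile $\mwang{t}[\ell']$, and it is unique. By induction on $\ell-\ell'$, the corner of $\pxrect{r}$ is also the corner of $\pxrect{r}'$, since a minimal rectangle in the composition contains the minimal corner. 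In other words, $\pxrect{r}'$ and $\pxrect{r}$ share their bottom-left corner.

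For the second item, I would work with normalized coordinates, where both rectangles are anchored at $\vec{0}$. Write $\pxrect{r} = \ibox{n_1,\dots,n_d}$ and $\pxrect{r}' = \ibox{n'_1,\dots,n'_d}$ with $n'_k \leq n_k$. The computation zones are the cuts $\compzone = \mathrm{Cut}_{N}(\pxrect{r})$ with $N = \pxspace{\ell}/\pxspace{\iota}$ and $\compzone' = \mathrm{Cut}_{N'}(\pxrect{r}')$ with $N' = \pxspace{\ell'}/\pxspace{\iota} \leq N$. Both $\compzone$ and $\compzone'$ therefore contain $\vec{0}$, and componentwise $\min(n'_k,N') \leq \min(n_k,N)$, so $\compzone' \subseteq \compzone$ as cuts anchored at the same corner. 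Let $h$ be the index of the input facet, so $\sfacet(\compzone) = \facet{h}{-}(\compzone) = \{\vec{i} \in \compzone : i_h = 0\}$. I would take $p = h$. Then $\facet{h}{-}(\compzone') = \{\vec{i} \in \compzone' : i_h = 0\}$, and since $\compzone' \subseteq \compzone$ this set is contained in $\{\vec{i} \in \compzone : i_h = 0\} = \sfacet(\compzone)$.

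The only step that needs care is making sure the coordinates are compatible. The inclusion $\compzone' \subseteq \compzone$ holds because $\pxrect{r}'$ sits at the corner of $\pxrect{r}$, so no translation is needed, and this is exactly why the first item is proved first. I would also note that $N' \leq N$ because $\ell' \leq \ell$ and all $\rspace_i \geq 1$, with the case $\ell' = \ell$ trivial. The footnote's warning is consistent with this plan: $h$ need not index a smallest facet of $\compzone'$, but the argument never uses that property.
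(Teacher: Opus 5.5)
Your proof is correct and is the intended argument: the paper states this claim as an elementary fact without proof, and your reasoning (nested grids make $\tilde{\mathfrak{r}}'$ share the minimal corner of $\tilde{\mathfrak{r}}$; since $n'_k \le n_k$ and $N' \le N$, the cuts anchored at that corner satisfy $\tilde{\mathfrak{c}}' \subseteq \tilde{\mathfrak{c}}$; hence $p = h$ works) is exactly what it relies on. The induction on $\ell - \ell'$ in your first paragraph is unnecessary, because $\tilde{\mathfrak{r}}' \subseteq \tilde{\mathfrak{r}}$ together with $\tilde{\mathfrak{r}}'$ containing the minimal corner of $\tilde{\mathfrak{r}}$ already forces the two minimal corners to coincide.
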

\noindent Informally, this claim ensures that there exists a facet in $\mwang{t}[\ell']$ through which the \crtlnameref{pos:lem:routing-lemma} will be able to wire information between $\mwang{t}[\ell']$ and $\mwang{t}[\ell]$. By analogy with the chaining of standard input and output streams in Unix systems, \emph{we will refer to these wirings across levels as ``pipes''}.

\paragraph*{Substitution shapes} In a $\tau$-macro-tile of level $\ell$, the first two arguments $\ram{I}[0]$ and $\ram{I}[1]$ of $\funram{\code{\tau}}$ should respectively be a rectangle $\rnorm{\rect{r}_{\ell}}$ from the grid $\grid{g}_{\ell}$ and a position $\vec{i}_{\ell} \in \rnorm{\rect{r}_{\ell}}$.

\medskip
If $\ell < \kappa' - 1$, these arguments $\ram{I}[0]$ and $\ram{I}[1]$ can be directly initialized on the input facet by using the \emph{copying method} on the entry $(\rect{r}_{\ell},\vec{i}_{\ell})$ in the \field{position field} of every tile $\wang{t} \in \Swang[\iota][S]$.

\begin{claim}
  For any $\tau$-macro-tile $\mwang{t}$ of level $\ell$ for $\ell < \kappa'-1$, let $\ram{I} \in \{0,1\}^{\ast\ast}$ be the input array appearing on the input facet of $\mwang{t}$, and let $(\rect{r}_{\ell},\vec{i}_{\ell})$ be the positions of index $\ell$ in the \field{position field} of the decoration $\dec(\wang{t})$ of every tile $\wang{t}$ in $\mwang{t}$. Then $\ram{I}[0] = \rect{r}_{\ell}$ and $\ram{I}[1] = \vec{i}_{\ell}$.
\end{claim}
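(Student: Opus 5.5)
This claim should follow directly from the copying method, so the plan is to unpack it. I would proceed in three steps.

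First, fix a $\tau$-macro-tile $\mwang{t}$ of level $\ell < \kappa'-1$, with domain $\pxrect{r} \in \pxgrid{g}_{\ell}$. I would check that the entry $(\rect{r}_{\ell},\vec{i}_{\ell})$ of index $\ell$ in the \field{position field} is constant across all tiles $\wang{t}$ composing $\mwang{t}$.
\begin{itemize}
\item By the odometer constraints on \field{position fields}, moving by $\pm\basis{k}$ inside $\mwang{t}$ changes the entry of index $\ell$ only when all lower entries $\iota \leq \ell' < \ell$ overflow simultaneously.
\item Such a total overflow at levels below $\ell$ means crossing the border of a macro-tile of level $\ell$, which by \Cref{pclm:positioning} does not happen strictly inside $\pxrect{r}$.
\item Since $\pxrect{r}$ is connected through facet-adjacent moves, the entry of index $\ell$ is therefore uniform on $\mwang{t}$. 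Equivalently, $\mwang{t}$ is exactly the union of the sub-macro-tiles forming a single rectangle cell of $\grid{g}_{\ell}$ in its decomposition, so the entry is well defined.
\end{itemize}

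Second, I would apply the copying method to this uniform field, encoded as binary strings $u^{(0)}$ for $\rect{r}_{\ell}$ and $u^{(1)}$ for $\vec{i}_{\ell}$.
\begin{itemize}
\item The tile constraints force that, at every position $\vec{i}$ of the input facet $\sfacet(\compzone)$ of the $\tau$-computation zone of level $\ell$, the $\ram{input}$ variable $(i',j,b)$ satisfies $b = u^{(i')}_j$ whenever $i' \in \{0,1\}$.
\item By the folding claim for \field{$\tau$-arg fields}, these $\ram{input}$ variables form $\mathrm{fold}(\ram{I})$ for some input array $\ram{I}$. Consequently every bit of $\ram{I}[0]$ and $\ram{I}[1]$ agrees with $u^{(0)}$ and $u^{(1)}$.
\end{itemize}

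Third, I would show that $\ram{I}[0]$ and $\ram{I}[1]$ agree with $u^{(0)}$ and $u^{(1)}$ in length, not just bit by bit. This is the only real obstacle: the copying method as described only constrains bits that happen to be present.
\begin{itemize}
\item I would strengthen the copying constraint, and argue it is implicitly part of the method: an $\ram{input}$ of index $(i',j)$ with $j \geq |u^{(i')}|$ is forbidden.
\item In addition, the arg-field succession rule from $(i',j)$ to $(i'+1,0)$ should only be allowed at $j = |u^{(i')}|-1$.
\item Both checks are local, since each tile stores $u$ in its decoration. Together they force $|\ram{I}[i']| = |u^{(i')}|$.
\end{itemize}
Combining the three steps gives $\ram{I}[0] = \rect{r}_{\ell}$ and $\ram{I}[1] = \vec{i}_{\ell}$, up to the implicit encoding of \Cref{calc:rem:data-types}.
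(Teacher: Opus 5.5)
Your first two steps are the paper's justification. The paper gives no separate proof of this claim: it only says that for $\ell < \kappa'-1$, the arguments $\ram{I}[0]$ and $\ram{I}[1]$ are initialized by the copying method applied to the level-$\ell$ entry of the \field{position field}. In doing so it takes for granted that this entry is constant on a $\tau$-macro-tile of level $\ell$, which is exactly what you derive from the odometer rule.

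Your third step goes beyond the paper, which simply asserts that the copying method makes the copied field equal to $\ram{I}[i]$. You are right that the rule ``$b = u_j$ whenever $i' = i$'' only constrains the bits actually present. However, your two extra rules still do not force $|\ram{I}[i']| = |u^{(i')}|$, because the \field{$\tau$-arg field} rules leave two loopholes.

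\begin{itemize}
\item The tile at the corner $\vec{0}$ of the input facet may carry $(i,0)$ with $i \geq 1$, which makes $\ram{I}[0] = \varepsilon$.
\item The successor of any entry $(i',j)$ may be blank, which makes $\ram{I}[0]$ a strict prefix of $u^{(0)}$ and leaves $\ram{I}[1]$ missing.
\end{itemize}

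To close these, also require that the corner tile carries $(0,0)$, and that the successor of an entry $(i',j)$ with $i' \in \{0,1\}$ is never blank. The input facet has at least $\K[][s] \cdot \pxspace{\ell}[\gamma]$ positions, far more than the bit length of $(\rect{r}_{\ell},\vec{i}_{\ell})$. With these two additions the chain is therefore forced through $(0,0),\dots,(0,|u^{(0)}|-1),(1,0),\dots,(1,|u^{(1)}|-1),(2,0)$, and your rules then give exact equality.
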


\smallskip
If $\ell = \kappa' - 1$, the tiles from $\Swang[\iota][S]$ that appear in a valid tiling do not determine a grid $\grid{g}_{\kappa'-1}$ with their \field{position fields}: \emph{this information actually appears in the next level of simulation}. To solve this communication problem between $\Swang[\iota][S]$ and the tiles of $\Swang[\iota'][S]$ simulated in its $\sigma$-macro-tiles, we use the output of the latter's computations. More precisely, if the RAM programs $e$ (run by the $\sigma$-macro-tiles) and $F(e)$ (defining $\Swang[\iota'][S]$) have the same specification\footnote{And as this proof is a fixed-point argument on the transformation $F$, they will eventually do!}, then the computations of $e$ will output a pair $\ram{O}[1] = \rect{r}_{\kappa'-1}$ and $\ram{O}[2] = \vec{i}_{\kappa'-1}$ if we so design $F(e)$. In a $\tau$-macro-tile $\mwang{t}[\kappa'-1]$ of level $\kappa'-1$, we thus wire the arguments $\ram{I}[0]$ and $\ram{I}[1]$ from the computation zone of $\mwang{t}[\kappa'-1]$ from the arguments $\ram{O}[1]$ and $\ram{O}[2]$ of the $\sigma$-macro-tiles $\mwang{t}$ contained in~$\mwang{t}[\kappa'-1]$ (\textit{c.f.}~\Cref{fix:sec:macro-tile:communication-next-level}).

\smallskip
To proceed, we define a \field{$\tau$-pos-pipe field} containing a pair $(W_1,W_2)$, which
informally corresponds to two applications of the routing lemma transporting a color $(i,j,b) \in \N^2 \times \{0,1\}$ for $b = \ram{O}[i][j]$. Since by \Cref{pclm:tau-macro-tile-facet-inclusion}, the computation zones of some macro-tile $\mwang{t}$ and of $\mwang{t}[\kappa'-1]$ intersect along a common facet, we apply the \crtlnameref{pos:lem:routing-lemma} twice:
\begin{itemize}
\item Inside the $\sigma$-macro-tile $\mwang{t}$, we route the information from the output facet of the computation zone towards this common facet using the entries $W_1$ from \field{$\tau$-pos-pipe fields};
\item Inside the $\tau$-macro-tile $\mwang{t}[\kappa'-1]$, we reorganize this information towards the corresponding arguments of the input array using the entries $W_2$ from \field{$\tau$-pos-pipe fields}.
\end{itemize}

\begin{details}
  Let $\wang{t} \in \Swang[\iota][S]$ be a tile and let $(\rect{r}_{\ell},\vec{i}_{\ell})_{\iota \leq \ell < \kappa'-1}$ be the \field{position field} of $\dec(\wang{t})$. The \field{$\tau$-pos-pipe fields} of $\wang{t}$ consist of tuples of the form $(W_1,W_2)$, where $W_1$  and $W_2$ are lists $(w_i)_{i \in I}$ of the form:
  \[ (w_i)_{i \in I} \in ((\N^2 \times \{0,1\}) \times \N)^{\ast} \quad ; \; \text{or} \quad (w_i)_{i \in I} \in ((\N^2 \times \{0,1\}) \times \{\mathtt{start}, \mathtt{end}, \mathtt{cross} \})^{\ast} \]
  to carry entries $(i,j,b) \in \interval{0}{2d} \times \N \times \{0,1\}$ using the~\crtlnameref{pos:lem:routing-lemma}.

  \medskip
  Assume that $\dec(\wang{t})$ has \field{position field} $(\rect{r}_{\ell},\vec{i}_{\ell})_{\iota \leq \ell < \kappa'-1}$, and that its \field{compzone field} and \field{$\tau$-compzone field of level $\kappa'-1$} are non-blank and of respective values $(\compzone,\vec{i})$ and $(\compzone',\vec{i}')$. Furthermore, assume that $(\vec{i}_{\ell})_{\iota' \leq \ell < \kappa'-1} = (\vec{0},\dots,\vec{0})$ (\textit{i.e.}~any $\sigma$-macro-tile containing $\wang{t}$ covers the position $\vec{0}$ in $\compzone'$). By \Cref{pclm:tau-macro-tile-facet-inclusion}, there exists a facet $\facet{p}{-}(\compzone)$ that is included in the input facet $\sfacet(\compzone')$. Then the list of wires $W_1$ in the \field{$\tau$-pos-pipe fields} of $\wang{t}$ is used to carry entries $(i,j,b) \in \interval{0}{2d} \times \N \times \{0,1\}$ from the output facet of $\compzone$ towards the facet $\facet{p}{-}(\compzone)$:
  \begin{itemize}
  \item (Border condition) If $\vec{i} \pm \basis{k} \notin \compzone$ (in the \field{compzone field} of $\wang{t}$), then the list $W_1$ of the \field{$\tau$-pos-pipe field} of $\facet{k}{\pm}(\wang{t})$ is empty;
  \item (Start of wires) Let $(W_1,\cdot)$ be the \field{$\tau$-pos-pipe field} of $\dec(\wang{t})$. If the position $\vec{i}$ belongs to the output facet $\facet{h}{+}(\compzone)$ of $\compzone$ (if~$\facet{h}{-}(\compzone) = \sfacet(\compzone)$), then $W_1$ contains an entry $((i,j,b),\mathtt{start})$ if and only if the $\ram{output}$ variable of the \field{processor field} of $\facet{h}{-}(\wang{t})$ encodes the same tuple $(i,j,b)$;
  \item (End of wires) Let $(W_1,\cdot)$ be the \field{$\tau$-pos-pipe field} of $\dec(\wang{t})$. If the position $\vec{i}$ belongs to the common facet $\facet{p}{-}(\compzone)$, then $W_1$ may contain a single entry $((i,j,b),\mathtt{end})$ for some $i \in \{1,2\}$, $j \in \interval{0}{\rvol{\compzone}}$ and $b \in \{0,1\}$.
  \end{itemize}
  In any other case, the lists $W_1$ from the \field{$\tau$-pos-pipe fields} of $\wang{t}$ must be empty.

  \medskip
  Assume now that $(\compzone',\vec{i}')$ is a non-blank \field{$\tau$-compzone field of level $\kappa'-1$} of $\wang{t}$. Then the list of wires $W_2$ in the \field{$\tau$-pos-pipe fields} of $\wang{t}$ is used to carry entries $(i,j,b) \in \interval{0}{2d} \times \N \times \{0,1\}$ from the input facet of $\compzone'$ to their correct positions in input array:
  \begin{itemize}
  \item (Border condition) If $\vec{i}' \pm \basis{k} \notin \compzone'$ (in the \field{$\tau$-compzone field of level $\kappa'-1$} of $\wang{t}$), then the list $W_2$ of the \field{$\tau$-pos-pipe field} of $\facet{k}{\pm}(\wang{t})$ is empty;
  \item (Start of wires) Let $(W_1,W_2)$ be the \field{$\tau$-pos-pipe field} of $\dec(\wang{t})$. Then $W_2$ contains an entry $((i,j,b),\mathtt{start})$ if and only if $W_1$ contains the same entry $((i,j,b),\mathtt{end})$;
  \item (End of wires) Let $(\cdot,W_2)$ be the \field{$\tau$-pos-pipe field} of $\dec(\wang{t})$. If the position $\vec{i}'$ belongs to the input facet $\facet{h}{-}(\compzone') = \sfacet(\compzone')$, then $W_2$ contains an entry $((i,j,b),\mathtt{end})$ for some $b \in \{0,1\}$ if and only if the $\ram{input}$ variable of the \field{processor field} of $\facet{h}{+}(\wang{t})$ encodes the tuple $(i-1,j,b)$.\qedhere
  \end{itemize}
\end{details}

\begin{figure}
  \begin{tikzpicture}[scale=0.24]
    \colorlet{ArgA}{RainbowE!82}
    \colorlet{ArgB}{RainbowB!50!RainbowC!82}
    \draw[LevelI,opacity=0.12,step=0.5] (0,0) grid (25,16);
    \fill[opacity=0.3,pattern={north east lines},pattern color=LevelINext] (0,0.5) rectangle (21.5,16);
    \fill[opacity=0.25,pattern={Lines[angle=45,distance={9pt/sqrt(2)}]},pattern color=LevelKLast] (0,0) rectangle ++(27.3,18.3);
    \fill[opacity=0.3,color=LevelKLast] (0,0) rectangle (25,0.5);
    \fill[opacity=0.3,color=LevelKLast,path fading=east,fit fading=true] (25,-0.2) rectangle (27.3,0.5);
    \foreach[count=\k from 0] \c in {A,B,A,A,B,A,B,A,B,B,A} {
      \tikzmath{\x=\k*0.5;}
      \fill[fill=Arg\c] (\x,0) rectangle ++(0.5,0.5);
    }
    \fill[opacity=0.3,pattern={north east lines},pattern color=LevelINext] (0,-0.2) rectangle (21.5,0.5);
    \draw[LevelI!20,opacity=0.7,step=0.5] (0,0) grid (25,0.5);
    \fill[opacity=0.4,color=LevelINext] (21.5,0) rectangle (22,16);
    \fill[ArgA] (21.5,2) rectangle ++(0.5,3);
    \fill[ArgB] (21.5,5) rectangle ++(0.5,2.5);
    \fill[opacity=0.3,pattern={north east lines},pattern color=LevelINext] (21.5,-0.2) rectangle (22,16); 
    \draw[LevelI!20,opacity=0.7,step=0.5] (21.5,0.5) grid (22,16);
    \draw[opacity=0.5,LevelINext,thick] (0,-0.2) rectangle (22,16);
    \pgfmathsetmacro{\ca}{0}
    \pgfmathsetmacro{\cb}{0}
    \foreach[count=\k from 0,remember=\ca,remember=\cb] \c in {A,B,A,A,B,A,B,A,B,B,A} {
      \tikzmath{\x=\k*0.5; \ta=\ca*0.5+2; \tb=\cb*0.5+5;}
      \expandafter\ifstrequal\expandafter{\c}{A}{%
        \draw[-{Classical TikZ Rightarrow[length=1.3pt]},rounded corners=1pt,RainbowE] ($(21.5,\ta) +(0.25,0.25)$) -- ($(\x,\ta) + (0.25,0.25)$) -- ($(\x,0) + (0.25,0.52)$);%
        \pgfmathsetmacro{\ca}{\ca+1}}
      {\draw[-{Classical TikZ Rightarrow[length=1.3pt]},rounded corners=1pt,RainbowB!60!RainbowC] ($(21.5,\tb) +(0.25,0.25)$) -- ($(\x,\tb) + (0.25,0.25)$) -- ($(\x,0) + (0.25,0.52)$);%
        \pgfmathsetmacro{\cb}{\cb+1}}
    }
    \draw[very thick,LevelINext,dashed,path fading=north] (25,16) -- ++(0,2.5);
    \draw[very thick,LevelINext,dashed,path fading=east] (25,16) -- ++(2.5,0);
    \draw[very thick,LevelINext] (0,0) rectangle ++(25,16);
    \draw[ultra thick,LevelKLast,path fading=north,fit fading=true] (-0.18,16.499) -- ++(0,2.501);
    \draw[ultra thick,LevelKLast,path fading=east,fit fading=true] (25.499,-0.18) -- ++(2.501,0);
    \draw[ultra thick,LevelKLast] (-0.18,16.5) -- ++(0,-16.68) -- ++(25.68,0);
    \node[LevelKLast!90!black] (A) at (5,-2) {\scalebox{0.7}{\parbox{1.8cm}{\centering Input facet of $\mwang{t}[\kappa'-1]$}}};
    \draw[LevelKLast!80!black,->,looseness=3.5] ($(A.east) + (-0.2,-0.1)$) to[in=85,out=0] (7,0.5);
    \node[LevelINext!90!black] (B) at (20,-2) {\scalebox{0.7}{\parbox{1.8cm}{\centering Output facet of $\mwang{t}$}}};
    \draw[LevelINext!80!black,->,looseness=1.6] ($(B.east) + (-0.4,0)$) to[out=0,in=0] (22,1.25);
  \end{tikzpicture}
  \caption{First wiring induced by the \field{$\tau$-pos pipes}.}
  \subcaption{A (partial) $\tau$-macro-tile $\mwang{t}[\kappa'-1]$ of level $\kappa'-1$ (drawn as\hspace{1.2ex}\begin{tikzpicture}[baseline=0.2ex,scale=0.25]
      \protect\draw[path fading=east,fading transform={rotate=45},LevelKLast,thick] (0,1) -- (0,0) -- (1,0);
    \end{tikzpicture}),
    the $\sigma$-macro-tile $\mwang{t}$ covering the corner $\vec{0}$
    (drawn as\hspace{1.2ex}\begin{tikzpicture}[baseline=0.2ex,scale=0.25]
      \protect\draw[LevelINext,thick] (0,0) rectangle ++(1,1);
    \end{tikzpicture}),
    and their respective computation zones $\compzone'$
    (as\hspace{1.2ex}\begin{tikzpicture}[baseline=0.2ex,scale=0.25]
      \protect\draw[LevelKLast!50,pattern={north east lines},pattern color=LevelKLast!50] (0,0) rectangle ++(1,1);
    \end{tikzpicture})
    and $\compzone$
    (as\hspace{1.2ex}\begin{tikzpicture}[baseline=0.2ex,scale=0.25]
      \protect\draw[LevelINext!50,pattern={north east lines},pattern color=LevelINext!50] (0,0) rectangle ++(1,1);
    \end{tikzpicture}). The first half of the \field{$\tau$-pos pipes} applies the \crtlnameref{pos:lem:routing-lemma} inside $\compzone$ to wire the output facet of $\compzone$ towards the facet $\facet{p}{-}(\compzone) \subseteq \sfacet(\compzone')$. The second half of the \field{$\tau$-pos pipes} (not drawn here) applies the \crtlnameref{pos:lem:routing-lemma} inside $\compzone'$ to rearrange the bits on the input facet of $\mwang{t}[\kappa'-1]$.}
\end{figure}
\begin{claim}
  Let $\mwang{t}[\kappa'-1]$ be a $\tau$-macro-tile of level $\kappa'-1$, and $\mwang{t}$ its $\sigma$-macro-tile covering the corner $\vec{0} \in \dom(\mwang{t}[\kappa'-1])$. If $\rect{r} \in \Srect_0$ and $\vec{i} \in \rect{r}$ are the arguments $\ram{O}[1]$ and~$\ram{O}[2]$ from the output facet of~$\mwang{t}$, then the arguments $\ram{I}[0]$ and $\ram{I}[1]$ on the input facet of $\mwang{t}[\kappa'-1]$ satisfy $\ram{I}[0] = \rect{r}$ and $\ram{I}[1] = \vec{i}$.
\end{claim}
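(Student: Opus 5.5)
The argument composes the two wirings carried by the \field{$\tau$-pos-pipe fields} and follows each bit from the output facet of $\mwang{t}$ to the input facet of $\mwang{t}[\kappa'-1]$. Fix $\mwang{t}[\kappa'-1]$ and the unique $\sigma$-macro-tile $\mwang{t}$ covering its corner $\vec{0}$; it exists and is unique by \Cref{pclm:tau-macro-tile-facet-inclusion}, applied with $\ell' = \iota'$ and $\ell = \kappa'-1$. That claim also provides a facet $\facet{p}{-}(\compzone) \subseteq \sfacet(\compzone')$, where $\compzone$ and $\compzone'$ are the two computation zones. Every tile of $\compzone$ has position entries $(\vec{i}_{\ell})_{\iota' \leq \ell < \kappa'-1} = (\vec{0},\dots,\vec{0})$, so the constraints on the lists $W_1$ are active exactly inside $\compzone$. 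The constraints on $W_2$ are active inside $\compzone'$.

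The first step is to handle $W_1$. By the folding claim for output facets, the $\ram{output}$ variables on the output facet of $\compzone$ form $\mathrm{fold}(\ram{O})$. The start-of-wire rule therefore places exactly one start $((i,j,b),\mathtt{start})$ for each output bit $\ram{O}[i][j] = b$. I would restrict attention to $i \in \{1,2\}$, since ends only allow those indices; the start rule has to be read (or tightened) so that only these entries are emitted. \Cref{pos:prop:wiring-bijection} gives a bijection between starts and ends that preserves the carried colour. The blank-border condition holds by the border rule, which keeps the wiring inside $\compzone$. Consequently the ends on $\facet{p}{-}(\compzone)$ carry exactly the multiset of triples $(i,j,\ram{O}[i][j])$ with $i \in \{1,2\}$. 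The routing lemma (\Cref{pos:lem:routing-lemma}) also guarantees that such a wiring exists, so valid tilings are not ruled out. This needs a size bound of $|S| \leq \rspace(\compzone)$. Since $\ram{O}[1]$ and $\ram{O}[2]$ encode a rectangle and a position of bit length $\bigO(\log \pxspace{\kappa'})$, the bound follows from the staircase estimates. Greediness of the end positions can be imposed by the arbitrary choice made in the end rule.

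The second step is to handle $W_2$. Each $W_1$-end in a tile is also a $W_2$-start in that same tile, and these tiles lie on $\sfacet(\compzone')$. Applying \Cref{pos:prop:wiring-bijection} again inside $\compzone'$, with the border rule on $W_2$, matches these starts bijectively with the $W_2$-ends. Those ends are exactly the tiles whose $\ram{input}$ variable equals $(i-1,j,b)$. Composing the two bijections, the input entries with first index $0$ or $1$ are exactly the $(i-1,j,\ram{O}[i][j])$ for $i \in \{1,2\}$. By the argument-folding claim for input facets, these entries form the folded arrays $\ram{I}[0]$ and $\ram{I}[1]$, so $\ram{I}[0] = \ram{O}[1] = \rect{r}$ and $\ram{I}[1] = \ram{O}[2] = \vec{i}$ as words.

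The main obstacle is the index bookkeeping. The multiset equality gives the set of triples, so the entries $(0,j,\cdot)$ must cover precisely $j < |\ram{O}[1]|$, with no gaps and no extra entries. I expect this to follow from the boustrophedon constraints on the \field{$\tau$-arg fields}, which force the indices $(i,j)$ to increase consecutively, together with the bijection. The same points matter for the $\ram{I}$ indices $0,1$ against the general argument range $\interval{0}{2d+2}$. In addition, no other wiring mechanism may also write into $\ram{I}[0]$ and $\ram{I}[1]$ at level $\kappa'-1$; the copying method is used there only when $\ell < \kappa'-1$. I would check these exclusivity conditions first.
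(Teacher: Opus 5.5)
Your proposal is correct and follows the paper's own (implicit) argument: compose the two wiring bijections of \Cref{pos:prop:wiring-bijection} carried by $W_1$ (inside the $\sigma$-computation zone) and $W_2$ (inside the $\tau$-computation zone of level $\kappa'-1$), apply the index shift $i \mapsto i-1$, and use the folding claims to turn the matched multisets of triples into the word equalities $\ram{I}[0]=\ram{O}[1]$ and $\ram{I}[1]=\ram{O}[2]$. You are right that the $W_1$ start rule and the $W_2$ end rule must be restricted to $i \in \{1,2\}$, which fixes an imprecision in the paper's details; the existence-of-wiring and exclusivity checks you list are not needed for this claim, since the $W_2$ end rule is an ``if and only if'' and extra constraints can only remove tilings.
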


\begin{numerics}
  Let $\mwang{t}[\kappa'-1]$ be a $\tau$-macro-tile of level $\kappa'-1$. Similarly to the \field{$\tau$-wire fields of level $\kappa'-1$}, the \field{$\tau$-pos-pipe fields} in $\mwang{t}$ are of bit length $\bigO(\log \pxspace{\kappa'-1}) = \polylog \K[][s] \cdot \bigO(\pxspace{\iota}[\delta])$.
\end{numerics}

\paragraph*{Substitution symbols} The last $2d+1$ arguments $\ram{I}[3]$ to $\ram{I}[2d+3]$ of $\funram{\code{\tau}}$ on the input facet of every $\tau$-macro-tile of level $\ell$ should be initialized with the symbol ${a_{\ell+1} = (a_1^{\scriptscriptstyle -},\dots,a_d^{\scriptscriptstyle -}, a_1^{\scriptscriptstyle +},\dots, a_d^{\scriptscriptstyle +},a_{\decsymbol}) \in \Swang[\ast]}$ that appears on the output facet of the $\tau$-macro-tile of the next level.

\medskip
If $\ell < \kappa'-1$, we transmit this information using a new \field{$\tau$-sym-pipe field of level $\ell$} encoding a pair~$(W_1,W_2)$, which corresponds to two applications of the routing lemma transporting a color $(i,j,b) \in \N^2 \times \{0,1\}$ for $b = \ram{I}[i][j]$. Since the \field{$\tau$-consistency field of level $\ell$} ensures that all macro-tiles $\mwang{t}[\ell]$ forming $\mwang{t}[\ell+1]$ share the same input symbol, it is enough to synchronize the symbol $a_{\ell+1} \in \Swang[\ast]$ from $\mwang{t}[\ell+1]$ with the input of the $\tau$-macro-tile $\mwang{t}[\ell]$ covering the corner $\vec{0} \in \dom(\mwang{t}[\ell+1])$.

We skip the details here because the \field{$\tau$-sym-pipe fields of level $\ell$} (for $\kappa \leq \ell < \kappa'-1$) are a straightforward adaptation of the already-defined \field{$\tau$-pos-pipe fields}.

\begin{claim}
  Let $\mwang{t}[\ell+1]$ be a $\tau$-macro-tile of level $\ell+1 < \kappa'$, and $\mwang{t}[\ell]$ any of its $\tau$-macro-tile of level~$\ell$. If $(a_0,\dots,a_{2d}) \in \Swang[\ast]$ denotes the arguments ${\ram{I}[3]},\dots,{\ram{I}[2d+3]}$ on the input facet of $\mwang{t}[\ell]$, then the arguments $\ram{O}[0],\dots,\ram{O}[2d]$ on the output facet of $\mwang{t}[\ell+1]$ satisfy $\ram{O}[i] = a_i$ for every $i \in \interval{0}{2d}$.
\end{claim}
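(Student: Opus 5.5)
The plan is to argue from the geometry set up by the \field{$\tau$-consistency fields} and the \field{$\tau$-sym-pipe fields}, in two steps.

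\textbf{Step 1 (the corner sub-macro-tile).} Let $\mwang{t}[\ell]_0$ denote the $\tau$-macro-tile of level $\ell$ inside $\mwang{t}[\ell+1]$ covering the corner $\vec{0} \in \dom(\mwang{t}[\ell+1])$. By \Cref{pclm:tau-macro-tile-facet-inclusion}, its computation zone $\compzone'$ has a facet $\facet{p}{-}(\compzone')$ contained in the input facet $\sfacet(\compzone)$ of $\mwang{t}[\ell+1]$. The two wirings $(W_1,W_2)$ of the \field{$\tau$-sym-pipe field of level $\ell$} act exactly as for the \field{$\tau$-pos-pipe fields}:
\begin{itemize}
\item $W_1$ routes each triple $(i,j,b)$ with $b=\ram{O}[i][j]$, read off the output facet of $\compzone$ through the copying of $\ram{output}$ variables, to $\facet{p}{-}(\compzone')$;
\item $W_2$ reorganises these triples onto the input facet of $\compzone'$, shifting the first index by $3$.
\end{itemize}
Starts and ends of wires are matched bijectively by \Cref{pos:prop:wiring-bijection}. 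The input and output arrays are folded consistently by the \field{arg field} and \field{$\tau$-arg field} claims. Together these give $\ram{I}[3+i]=\ram{O}[i]$ on $\mwang{t}[\ell]_0$ for $i \in \interval{0}{2d}$.

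\textbf{Step 2 (propagation to all sub-macro-tiles).} Extend this equality from $\mwang{t}[\ell]_0$ to every $\tau$-macro-tile of level $\ell$ in $\mwang{t}[\ell+1]$. The \field{$\tau$-consistency field of level $\ell$} draws wires between the input facets of two adjacent level-$\ell$ macro-tiles exactly when the corresponding direction lies in $\nghb_\ell$, that is, when both belong to the same level-$(\ell+1)$ macro-tile. Along each such wire, the end condition forces the transported bit $(i,j,b)$ to coincide with the $\ram{input}$ entry on both input facets. Hence adjacent siblings have identical arguments $\ram{I}[2],\dots,\ram{I}[2d+2]$ (up to the fixed index offset), which gives $\ram{I}[3],\dots,\ram{I}[2d+3]$ in the statement's indexing.

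The sub-macro-tiles of $\mwang{t}[\ell+1]$ form a rectangle $\rect{r}_\ell$ of the grid $\grid{g}_\ell$, whose facet-adjacency graph is connected. Induction along a boustrophedon path from the corner tile therefore propagates the equality to all of them.

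\textbf{Main obstacle.} The delicate point is checking that the consistency wires really enforce equality of the \emph{entire} words, including their lengths, and not just of the bits that happen to be present. An extra or missing trailing bit on one side must be excluded. I would handle this with the start/end bijection together with the requirement that every non-blank \field{$\tau$-arg} entry on the input facet have a corresponding wire end for each direction in $\nghb_\ell$. Since the set of indices $(i,j)$ is transported faithfully, the indexed entries coincide in both directions, and so do the words.

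A second check is that the routing is realizable, so that the tiling is nonempty when the data agree. Here the hypotheses of the \crtlnameref{pos:lem:routing-lemma} apply: the greedy folding on non-smallest facets holds, and the number of bits, $\bigO(\pxspace{\ell}[\alpha])$, is at most $\rspace(\compzone')$ by \eqref{eq:stair:1}.
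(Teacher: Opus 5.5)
Your proposal is correct and follows the paper's own argument. The \field{$\tau$-sym-pipe fields} (which the paper describes as a direct adaptation of the two \field{$\tau$-pos-pipe} wirings $(W_1,W_2)$) synchronize the output symbol of $\mwang{t}[\ell+1]$ with the input of the level-$\ell$ sub-macro-tile covering the corner $\vec{0}$, and the \field{$\tau$-consistency fields} then force all sibling $\tau$-macro-tiles of level $\ell$ to share that input symbol. Your treatment of the $\ram{I}[2],\dots,\ram{I}[2d+2]$ versus $\ram{I}[3],\dots,\ram{I}[2d+3]$ index offset, and of whole-word equality via the start/end wire bijection, only spells out details that the paper leaves implicit.
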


\medskip
If $\ell = \kappa'-1$, there is actually no $\tau$-macro-tile of level $\kappa'$ to synchronize with. Indeed, the substitution step $\smash{x^{(\kappa'+1)} \substep[\tau] x^{(\kappa')}}$ is embedded in the tilings of $\Swang[\iota'][S]$, \textit{i.e.}~the next level of simulation. Therefore, the synchronization of the symbol $a_{\kappa'} = (a_1^{\scriptscriptstyle -},\dots,a_d^{\scriptscriptstyle -}, a_1^{\scriptscriptstyle +},\dots, a_d^{\scriptscriptstyle +},a_{\decsymbol}) \in \Swang[\ast]$ again requires to \emph{communicate through a level of simulation}, which unfortunately turns out to be rather technical.

More precisely, the tileset $\Swang[\iota'][S]$ forms $\tau$-macro-tiles $\mwang{t}[\kappa']$ of level $\kappa'$ whose output arguments $\ram{O}[0],\dots\ram{O}[2d]$ define a symbol $a_{\kappa'} = (a_1^{\scriptscriptstyle -},\dots,a_d^{\scriptscriptstyle -}, a_1^{\scriptscriptstyle +},\dots, a_d^{\scriptscriptstyle +},a_{\decsymbol}) \in \Swang[\ast]$. Assuming again that the programs $e$ (run by $\sigma$-macro-tiles) and $F(e)$ (defining the tileset $\Swang[\iota'][S]$) have the same specification, we will use the remaining output argument $\ram{O}[3]$ on the output facet of the $\sigma$-macro-tiles to synchronize the individual bits from the symbol $a_{\kappa'}$. We divide this synchronization in two parts:
\begin{itemize}
\item Let $\mwang{t}[\kappa'-1] \in \Swang[\iota][S]^{\mspace{2mu}\Srect_0}$ be a $\tau$-macro-tile of level $\kappa'-1$ whose input arguments $\ram{I}[2],\dots,\ram{I}[2d+2]$ define a symbol $(a_0,\dots,a_{2d}) \in \Swang[\ast]$. We first design an \field{in pipe} (based on wirings again) to ensure that, if a $\sigma$-macro-tile $\mwang{t}$ in $\mwang{t}[\kappa'-1]$ outputs a non-blank argument $\ram{O}[3] = (i,j,b)$ for some $(i,j,b) \in \interval{0}{2d} \times \N \times \{0,1\}$, then $b = a_i(j)$;
\end{itemize}
However, the~\crtlnameref{pos:lem:routing-lemma} is not able to accommodate arbitrary dispositions of wires inside the rectangle $\dom(\mwang{t}[\kappa'-1])$ (wire ends must appear on the border, form greedy sets of positions\dots). Hence, for the wiring of the \field{in pipe} to be possible, we further add:
\begin{itemize}
\item Let $\mwang{t}[\kappa'] \in \Swang[\iota'][S]^{\mspace{2mu}\Srect_0}$ be a $\tau$-macro-tile of level $\kappa'$ whose output arguments $\ram{O}[0],\dots,\ram{O}[2d]$ define a symbol $a_{\kappa'} = (a_0,\dots,a_{2d}) \in \Swang[\ast]$. We then design an \field{out pipe} (based on wirings again) to distribute the associated bits towards some ``well-positioned'' tiles of $\mwang{t}[\kappa']$.
\end{itemize}

\begin{details}[Details (pixel positions)]
  In order to identify the shape (and, in particular, the smallest facet) of some rectangles $\pxrect{r}_{\ell}$ in the respective grid $\pxgrid{g}_{\ell}$, we begin by defining three \field{pixel position fields} (respectively \field{of level $\kappa-1$}, \field{$\kappa$} and \field{$\kappa'-1$}, which all contain tuples $(\pxrect{r}_{\ell},\pxvec{i}_{\ell})$ such that $\pxrect{r}_{\ell} \in \Srect_0$ and $\pxvec{i}_{\ell} \in \pxrect{r}_{\ell}$.

  For a tile $\wang{t} \in \Swang[\iota][S]$, let $(\rect{r}_{\ell},\vec{i}_{\ell})_{\iota \leq \ell < \kappa'-1}$ denote the \field{position field} of $\dec(\wang{t})$. For $\ell \in \{\kappa-1,\kappa,\kappa'-1\}$, the \field{pixel position field of level $\ell$} of $\dec(\wang{t})$ is a non-blank pair $(\pxrect{r}_{\ell},\pxvec{i}_{\ell}) \in \Srect_0 \times \N^d$ such that:
  \begin{itemize}
  \item If $\vec{i}_{j} = \vec{0}$ for every $\iota \leq j < \ell$, then $\pxvec{i}_{\ell}$ is actually $\vec{i}_{\ell}' = \vec{0}$;
  \end{itemize}
  Furthermore, if the \field{pixel position field of level $\ell$} contains $(\pxrect{r}_{\ell},\pxvec{i}_{\ell})$, the facets $\facet{k}{\pm}(\wang{t})$ satisfy:
  \begin{itemize}
  \item If the partial odometer given by the rectangles $(\rect{r}_{j})_{\iota \leq j < \ell}$ yields $(\vec{i}_{j})_{\iota \leq j < \ell} \pm \basis{k} = (\odnone,\dots,\odnone)$ (\textit{i.e.}~$\wang{t}$ appears on the border of macro-tiles of level $\ell$), then the \field{pixel position field of level~$\ell$} of $\facet{k}{\pm}(\wang{t})$ is left blank; in which case, the position $\pxvec{i}_{\ell}$ must satisfy $\pxvec{i}_{\ell} \in \facet{k}{\pm}(\pxrect{r}_{\ell})$;
  \item Otherwise, the \field{pixel position field of level $\ell$} of the facet $\facet{k}{-}(\wang{t})$ is also~$(\pxrect{r},\pxvec{i})$;
  \item And the \field{pixel position field of level $\ell$} of the facet $\facet{k}{+}(\wang{t})$ contains $(\pxrect{r}',\pxvec{i} + \basis{k})$.\qedhere
  \end{itemize}
\end{details}

\begin{claim}
  For any $\ell \in \{\kappa-1,\kappa,\kappa'-1\}$, and for any macro-tile $\mwang{t}[\ell]$ of level $\ell$ and domain $\pxrect{r}_{\ell}$, the \field{pixel position field of level $\ell$} of $\dec(\mwang{t}[\ell]_{\mspace{-4mu}\pxvec{i}})$ contains the pair $(\pxrect{r}_{\ell},\vec{i}')$ for every $\pxvec{i} \in \pxrect{r}_{\ell}$.
\end{claim}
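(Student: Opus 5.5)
This claim is the analogue, for \field{pixel position fields}, of the claims already made for the positioning tileset $\Swang[\Srect]$ (\Cref{pos:sec:rect}) and for the \field{compzone fields}. I would prove it by a connectivity argument inside a single macro-tile $\mwang{t}[\ell]$. Fix $\ell \in \{\kappa-1,\kappa,\kappa'-1\}$ and a macro-tile $\mwang{t}[\ell]$ of domain $\pxrect{r}_{\ell} = \ibox{n_1,\dots,n_d}$. By \Cref{pclm:positioning} and the \emph{Details} following the definition of the partial odometer, a tile at position $\pxvec{i} \in \pxrect{r}_{\ell}$ lies on the facet $\facet{k}{\pm}(\pxrect{r}_{\ell})$ exactly when the odometer operation $(\vec{i}_j)_{\iota \leq j < \ell} \pm \basis{k}$ returns a full $\odnone$-sequence. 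Consequently, the constraints above give three facts: the tile at the corner $\vec{0}$ (where all $\vec{i}_j=\vec{0}$) has pixel position $\vec{0}$; internal facets carry the pair unchanged along $-\basis{k}$ and with $+\basis{k}$ incremented along $+\basis{k}$; and border facets are blank.

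\emph{Step 1 (propagation of the rectangle and positions).} For two facet-adjacent tiles at positions $\pxvec{i}$ and $\pxvec{i}+\basis{k}$ of $\pxrect{r}_{\ell}$, the shared color must agree. That color is the $\facet{k}{+}$ color of the first tile, namely $(\pxrect{r}',\pxvec{i}_\ell+\basis{k})$, and the $\facet{k}{-}$ color of the second, namely its own pair. This forces the second tile's pixel position to equal the first one's plus $\basis{k}$. I would read the definition as intending $\pxrect{r}'=\pxrect{r}$, which is the same slip as in the rectangle tileset $\Swang[\Srect]$. Inducting along a monotone lattice path from $\vec{0}$, every tile at $\pxvec{i}$ then carries the same rectangle $\rect{q}$ and position $\pxvec{i}$. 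It remains to show $\rect{q}=\pxrect{r}_{\ell}$.

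\emph{Step 2 (identifying the rectangle).} Take the tile at $(n_1-1)\basis{1}$, or more generally at any position on $\facet{k}{+}(\pxrect{r}_{\ell})$. The border constraint requires its stored position to lie in $\facet{k}{+}(\rect{q})$, so $n_k-1 = q_k-1$ for each $k$. Hence $\rect{q} = \pxrect{r}_{\ell}$, and the stored position always lies in $\rect{q}$, so there is no overflow. Conversely, I would check consistency: the assignment $\pxvec{i}\mapsto(\pxrect{r}_{\ell},\pxvec{i})$ satisfies all the local constraints, so such macro-tiles exist. This mirrors the existence half of the proposition in \Cref{pos:sec:rect}.

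The main obstacle is Step 2's reliance on the border detection being exact. The odometer test for ``$\wang{t}$ lies on $\facet{k}{\pm}$ of a level-$\ell$ macro-tile'' must coincide precisely with geometric membership in $\facet{k}{\pm}(\pxrect{r}_{\ell})$, including at corners, where several facets are blank simultaneously. I would settle this first by a short induction on $\ell$ using the nested-grid structure $\pxgrid{g}_{\ell} = \grid{g}_{\ell-1}\circ\pxgrid{g}_{\ell-1}$: a tile is on the outer $+\basis{k}$ facet at level $\ell$ iff it is on the $+\basis{k}$ facet at level $\ell-1$ and its level-$(\ell-1)$ block index $\vec{i}_{\ell-1}$ lies on $\facet{k}{+}(\rect{r}_{\ell-1})$. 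This is exactly the carry condition in the odometer definition.
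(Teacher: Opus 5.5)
Your proof is correct and follows the reasoning the paper leaves implicit here. The \field{pixel position fields} simply re-implement the rectangle tileset $\Swang[\Srect]$ of \Cref{pos:sec:rect} inside each level-$\ell$ macro-tile: the corner is forced to $\vec{0}$, positions propagate across internal facets, and the border conditions detected through the partial odometer pin down the rectangle. Your reading $\pxrect{r}' = \pxrect{r}$ in the last rule is indeed needed for the claim to hold. However, $\Swang[\Srect]$ itself correctly writes $(\rect{r},\vec{i}+\basis{k})$, so this slip belongs only to the \field{pixel position field} definition, not to the rectangle tileset.
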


Consider a $\tau$-macro-tile $\mwang{t}[\kappa'-1]$ of level $\kappa'-1$ and domain $\pxrect{r}_{\kappa'-1}$ and $\mwang{t}$ a $\sigma$-macro-tile appearing in $\mwang{t}[\kappa'-1]$. Depending on the position of the $\sigma$-macro-tile $\mwang{t}$ inside $\mwang{t}[\kappa'-1]$, it may provide a single bit from the symbol $a_{\kappa'} \in \Swang[\ast]$ by encoding a tuple $(i,j,b) \in \interval{0}{2d} \times \N \times \{0,1\}$ in the argument $\ram{O}[3]$ of its output facet. To implement the synchronization of the bit $b$ with the bit $\ram{I}[i+2][j]$ from the input facet of $\mwang{t}[\kappa'-1]$, we introduce two \field{$\tau$-in pipe fields}.

\pagebreak
The \field{first $\tau$-in pipe field} is of the form $(i,j,b) \in \interval{0}{2d} \times \N \times \{0,1\}$; and:
\begin{itemize}
\item If the $\sigma$-macro-tile $\mwang{t}$ in $\mwang{t}[\kappa'-1]$ touches the smallest facet $\sfacet(\pxrect{r}_{\kappa'-1})$, the \field{first $\tau$-in pipe field} of the tiles $\wang{t}$ in $\mwang{t}$ contains some $(i,j,b)$ that is is initialized with the \emph{copying method} from the output argument $\ram{O}[3]$ of the $\sigma$-macro-tile $\mwang{t}$;
\item If the $\sigma$-macro-tile $\mwang{t}$ in $\mwang{t}[\kappa'-1]$ does not touch the smallest facet of $\mwang{t}[\kappa'-1]$, the \field{first $\tau$-in pipe field} of any tile $\wang{t}$ in $\mwang{t}$ must be blank;
\end{itemize}
The \field{second $\tau$-in pipe field} is very similar to the already-defined \field{wire fields}, and allows to apply the \crtlnameref{pos:lem:routing-lemma} from tiles with non-blank \field{first $\tau$-in pipe field} $(i,j,b) \in \N^2 \times \{0,1\}$ towards the input arguments $\ram{I}[i+2][j] = b$ of the input facet of $\mwang{t}[\kappa'-1]$.

\begin{details}
  Let $\wang{t} \in \Swang[\iota][S]$ be a tile and let $(\rect{r}_{\ell},\vec{i}_{\ell})_{\iota \leq \ell < \kappa'-1}$ be the \field{position field} and $(\pxrect{r},\pxvec{i})$ be the \field{pixel position field of level $\kappa'-1$} of $\dec(\wang{t})$. The \field{first $\tau$-in pipe field} of $\dec(\wang{t})$ satisfies the following:
  \begin{itemize}
  \item The \field{first $\tau$-in pipe field} of $\dec(\wang{t})$ can either be blank, or contain a tuple $(i,j,b)$ such that $i \in \interval{0}{2d}$, $j \in \interval{0}{\K[t] \cdot \pxspace{\kappa'-1}[\alpha]}$ and $b \in \{0,1\}$;
  \item Additionally, if $(\vec{i}_{\ell})_{\iota \leq \ell < \iota'} = (\vec{0},\dots,\vec{0})$, but that the position $\pxvec{i}$ does not belong to the smallest facet $\sfacet(\pxrect{r})$, then the \field{first $\tau$-in pipe field} of $\dec(\wang{t})$ must be blank.
  \end{itemize}
  Furthermore, if the \field{first $\tau$-in pipe field} of $\dec(\wang{t})$ is blank, then so are the \field{first $\tau$-in pipe fields} of $\facet{k}{\pm}(\wang{t})$. Otherwise, assuming that $\dec(\wang{t})$ encodes some tuple $(i,j,b)$, then for any direction $1 \leq k \leq d$:
  \begin{itemize}
  \item If the partial odometer defined by the rectangles $(\rect{r}_{\ell})_{\iota \leq \ell < \iota'}$ yields $(\vec{i}_{\ell})_{\iota \leq \ell < \iota'} \pm \basis{k} \neq (\odnone,\dots,\odnone)$, then the \field{first $\tau$-in pipe field} of $\facet{k}{\pm}(\wang{t})$ also contains $(i,j,b)$; otherwise, it is blank.
  \end{itemize}
  Thus, in any macro-tile $\mwang{t}$ appearing inside a $\tau$-macro tile $\mwang{t}[\kappa'-1]$ of level $\kappa'-1$, the \field{first $\tau$-in pipe fields} of the decorations are constant; and can only be non-blank across the smallest facet of $\mwang{t}[\kappa'-1]$. We can thus apply the \emph{copying method}: if the \field{compzone field} of $\dec(\wang{t})$ contains some tuple $(\compzone,\vec{i})$, that $\facet{h}{-}(\compzone) = \sfacet(\compzone)$ denotes the input facet of the computation zone $\compzone$, and that $\vec{i} \in \facet{h}{+}(\compzone)$ belongs to its output facet then,
  \begin{itemize}
  \item Let $u \in \{0,1\}^{\ast}$ be the (bit encoding\footnote{Until now, this proof has mostly hidden the encodings of data types into binary strings: as the most recent example, we just assumed that a \field{first $\tau$-in pipe field} usually encodes a tuple $(i,j,b) \in \N^2 \times \{0,1\}$ without specifying its binary encoding. Nevertheless, these fields are concretely written as binary strings, and we actually manipulate said field as a binary string in this specific proof item.})
    of the \field{first $\tau$-in pipe field} of $\dec(\wang{t})$. If the $\ram{output}$ variable of \field{processor field} of $\facet{h}{-}(\wang{t})$ contains a tuple $(i',j',b') \in \N^2 \times \{0,1\}$ and if $i' = 3$, then $u$ must satisfy $u_{j'} = b'$.
  \item Reciprocally, if the \field{arg fields} of $\wang{t}$ show that the argument $\ram{O}[3]$ of the $\sigma$-macro-tiles containing $\wang{t}$ are empty, then the \field{first $\tau$-in pipe field} of $\dec(\wang{t})$ must be blank.
  \end{itemize}

  \medskip
  We then use the \field{second $\tau$-in pipe fields} to implement the \crtlnameref{pos:lem:routing-lemma}. They are of the form:
  \[ (w_i)_{i \in I} \in ((\N^2 \times \{0,1\}) \times \N)^{\ast} \quad ; \; \text{or} \quad (w_i)_{i \in I} \in ((\N^2 \times \{0,1\}) \times \{\mathtt{start}, \mathtt{end}, \mathtt{cross} \})^{\ast} \]
  to carry entries $(i,j,b) \in \interval{0}{2d} \times \N \times \{0,1\}$ inside $\tau$-macro-tiles of level $\kappa'-1$ from the borders of their computation zone to their input facet. Based upon similar wirings, we skip some details and only highlight these important points:
  \begin{itemize}
  \item (Border condition) Let $(\pxrect{r},\pxvec{i})$ be the \field{pixel position field of level $\kappa'-1$} of $\dec(\wang{t})$). If $\pxvec{i} \pm \basis{k} \notin \pxrect{r}$, then the \field{second $\tau$-in pipe field} of the corresponding color $\facet{k}{\pm}(\wang{t})$ must be blank;
  \item (Start of wires) Let $(\rect{r}_{\ell},\vec{i}_{\ell})_{\iota \leq \ell < \kappa'-1}$ be the \field{position field} of $\dec(\wang{t})$. An entry $((i,j,b),\mathtt{start})$ appears in the \field{second $\tau$-in pipe field} of $\dec(\wang{t})$ if and only if $(\vec{i}_{\ell})_{\iota \leq \ell < \iota'} = (\vec{0},\dots,\vec{0})$ and its \field{first $\tau$-in pipe field} encodes the same $(i,j,b)$; otherwise, no entry $(\cdot,\mathtt{start})$ appears;
  \item (End of wires) Assume that the \field{$\tau$-compzone field of level $\kappa'-1$} of $\dec(\wang{t})$ is some non-blank~$(\compzone,\vec{i})$, and that $\vec{i}$ belongs to the smallest facet $\facet{h}{-}(\compzone) = \sfacet(\compzone)$. If the $\ram{input}$ variable of the \field{$\tau$-processor field of level $\kappa'-1$} of $\facet{h}{+}(\wang{t})$ encodes a tuple $(i+2,j,b)$, then either the tuple $((i,j,b),\mathtt{end})$ is the only wire end in the \field{second $\tau$-in pipe field} of $\dec(\wang{t})$, or none appear at all. Otherwise (\textit{i.e.}~if $\vec{i} \notin \sfacet(\compzone)$), it contains no wire end $(\cdot,\mathtt{end})$.\qedhere
  \end{itemize}
\end{details}

\begin{claim}[label={pclm:in-pipe}]
  Let $\mwang{t}[\kappa'-1]$ be a $\tau$-macro-tile of level $\kappa'-1$, and let $a_{\kappa'} = (a_0,\dots,a_{2d}) \in \Swang[\ast]$ be the symbol given by the input arguments $(\ram{I}[2],\dots,\ram{I}[2d+2])$ of its computation zone. Then for any $\sigma$-macro-tile $\mwang{t}$ appearing in $\mwang{t}[\kappa'-1]$ with non-blank output argument $\ram{O}[3] = (i,j,b)$, we have:
  \begin{itemize}
  \item $\mwang{t}$ appears along the smallest facet of $\mwang{t}[\kappa'-1]$;
  \item $i$ ranges in $\interval{0}{2d}$, $j$ ranges in $\interval{0}{|a_{i}|-1}$ and $b = a_{i}(j)$.
  \end{itemize}
\end{claim}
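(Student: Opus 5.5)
The plan is to unfold the local constraints just imposed on the \field{first} and \field{second $\tau$-in pipe fields}, and then invoke \Cref{pos:prop:wiring-bijection} to turn the wiring into a bijection between wire starts and wire ends.

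The first item comes from the blanking rules of the \field{first $\tau$-in pipe field}. Suppose $\mwang{t}$ has non-blank output argument $\ram{O}[3] = (i,j,b)$. The copying method forces the tiles on the output facet of its computation zone to carry a non-blank \field{first $\tau$-in pipe field}, since the clause ``if $\ram{O}[3]$ is empty then blank'' goes only one way, and the $\ram{output}$ bits force $u_{j'} = b'$ bit by bit. This field is propagated unchanged across all facets internal to $\mwang{t}$, so it is constant and non-blank in $\mwang{t}$, and in particular on its corner tile, where $(\vec{i}_{\ell})_{\iota \leq \ell < \iota'} = (\vec{0},\dots,\vec{0})$. The rule for that corner tile then forces its pixel position $\pxvec{i}$ in $\pxrect{r}_{\kappa'-1}$ to lie in $\sfacet(\pxrect{r}_{\kappa'-1})$. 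For this I use the claim on \field{pixel position fields of level $\kappa'-1$}, which identifies $\pxvec{i}$ with the true position in $\mwang{t}[\kappa'-1]$. Hence $\mwang{t}$ touches the smallest facet.

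For the second item, the corner tile of $\mwang{t}$ emits a wire start $((i,j,b),\mathtt{start})$ in its \field{second $\tau$-in pipe field}. These fields form a valid pattern of the routing tileset on $\dom(\mwang{t}[\kappa'-1])$, and the border condition gives blank borders. \Cref{pos:prop:wiring-bijection} then yields a wire, carrying the constant color $(i,j,b)$, from this start to some tile holding $((i,j,b),\mathtt{end})$. The end-of-wire rule says such a tile lies on the input facet $\sfacet(\compzone)$ of the $\tau$-computation zone of level $\kappa'-1$, and that its $\ram{input}$ variable encodes $(i+2,j,b)$. By the folding claim for \field{$\tau$-arg fields}, this means $\ram{I}[i+2][j] = b$. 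With $(a_0,\dots,a_{2d}) = (\ram{I}[2],\dots,\ram{I}[2d+2])$ this gives $b = a_i(j)$ and $j \le |a_i|-1$, since that input index exists. The range $i \in \interval{0}{2d}$ is imposed directly by the field's allowed values.

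The main obstacle is the bijection step. \Cref{pos:prop:wiring-bijection} pairs every start with some end, but an end is only allowed when the matching $\ram{input}$ bit exists; I would check that this is exactly what prevents a start without a partner. Some care is also needed with the ``at most one end per tile'' rule, and with the fact that the pairing preserves colors, so that the values $i,j,b$ carried along the wire are the same at both ends.
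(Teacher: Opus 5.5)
Your proposal is correct and follows the paper's own justification, which is just the unfolding of the local rules for the \field{first} and \field{second $\tau$-in pipe fields}: constancy inside $\sigma$-macro-tiles plus the corner rule on \field{pixel position fields of level $\kappa'-1$} gives the first item, and the copying method from $\ram{O}[3]$, the start/end rules and \Cref{pos:prop:wiring-bijection} give the second. The ``obstacle'' you flag is not one: $\mwang{t}[\kappa'-1]$ is a valid pattern with blank borders, so \Cref{pos:prop:wiring-bijection} already gives the wire start at the corner of $\mwang{t}$ a colour-preserving partner end, and the end-of-wire rule then forces $\ram{I}[i+2][j]=b$.
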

\noindent In particular, the first item ensures that the wire starts of the \field{$\tau$-in pipes} satisfy the hypotheses of the~\crtlnameref{pos:lem:routing-lemma}, and thus ensure that a wiring is always possible.

\begin{figure}[ht]
  \begin{tikzpicture}[scale=0.07]
    \draw[LevelI!30!TilingGrid,opacity=0.12] (0,0) grid (128,80);
    \foreach[count=\k] \y/\desty in {0/5,11/30,24/26,63/11} {
      \fill[RainbowB!20!RainbowC,opacity=0.5] (0,\y) rectangle ++(1,1);
    }
    \foreach[count=\k from 0] \x in {18, 36, 47, 59, 76, 86, 100, 118} {
      \ifnum0<\k\relax{
        \draw[very thick,LevelINext,dashed,opacity=0.3] (\x,0) -- ++(0,80);
      }\else{%
        \draw[very thick,LevelINext] (\x,0) -- ++(0,80);
      }\fi
    }
    \foreach \y in {11, 24, 34, 48, 63} {
      \draw[very thick,LevelINext] (0,\y) -- (18,\y);
      \draw[very thick,LevelINext,dashed,opacity=0.35] (18,\y) -- (128,\y);
    }
    \foreach[count=\k] \y/\desty in {0/5,11/26,24/18,63/11} {
      \tikzmath{\nb=\k*5+20;}
      \draw[RainbowE,thick,-{Classical TikZ Rightarrow[length=1.8pt]},rounded corners=0.8pt] ($(0,\y) + (0.5,0.5)$) foreach \p in {1,...,5} {-- ++(0,1) -- ++(1,0)} -- ($(\nb,\y) + (0.5,5.5)$) -- ($(\nb,\desty) + (0.5,0.5)$) -- ($(109,\desty) + (-0.2,0.5)$);
    }
    \fill[pattern={Lines[line width=0.7pt,angle=45,distance={6pt/sqrt(2)}]},pattern color=LevelKLast!40,opacity=0.5] (0,0) rectangle (110,80);
    \draw[LevelKLast,opacity=0.5] (110,0) -- (110,80);
    \fill[LevelKLast,opacity=0.5] (109,0) rectangle (110,80);
    \fill[RainbowE] (109,0) rectangle ++(1,30);
    \draw[LevelI!20,opacity=0.7] (109,0.01) grid (110,79.99);
    \draw[ultra thick,LevelKLast] (-0.3,-0.3) rectangle (128.3,80.3);
    \node[LevelKLast!90!black] (A) at (-15,40) {\scalebox{0.7}{\parbox{2.2cm}{\centering Smallest facet of $\mwang{t}[\kappa'-1]$}}};
    \draw[LevelKLast!80!black,->,looseness=1] ($(A.south) + (0,0.5)$) to[in=180,out=-90] (-0.6,30);
    \node[LevelINext!90!black] (B) at (143,40) {\scalebox{0.7}{\parbox{2.2cm}{\centering Output facet of $\mwang{t}[\kappa'-1]$}}};
    \draw[LevelINext!80!black,->,looseness=0.6] ($(B.south) + (0,0.5)$) to[out=-90,in=0] (110.2,30);
  \end{tikzpicture}
  \caption{A wiring of \field{second $\tau$-in pipes}.}
  \subcaption{In a $\tau$-macro-tile $\mwang{t}[\kappa'-1]$ of level $\kappa'-1$
    (drawn as\hspace{1.2ex}\begin{tikzpicture}[baseline=0.2ex,scale=0.25]
      \protect\draw[LevelKLast,thick] (0,0) rectangle ++(1,1);
    \end{tikzpicture}),
    the corner $\vec{0}$ of some $\sigma$-macro-tiles~$\mwang{t}$
    (drawn as\hspace{1.2ex}\begin{tikzpicture}[baseline=0.2ex,scale=0.25]
      \protect\draw[LevelINext,thick] (0,0) rectangle ++(1,1);
    \end{tikzpicture})
    appearing along the smallest facet of $\mwang{t}[\kappa'-1]$ start a wire carrying a tuple $(i,j,b) \in \N^2 \times \{0,1\}$, which goes towards the bit $\ram{I}[i+2][j]$ on the input facet of $\mwang{t}[\kappa'-1]$.}
\end{figure}

\bigskip
Let us now consider the second half of these pipes, also called \field{out pipes}: in the tilings of~$\Swang[\iota'][S]$, the $\tau$-macro-tile $\mwang{t}[\kappa']$ of level $\kappa'$ must distribute bits from their output symbol $a_{\kappa'} = (\ram{O}[0],\dots,\ram{O}[2d])$ towards some individual tiles of $\Swang[\iota'][S]$. As this proof is based upon a fixed-point argument, we must implement the same behavior in the tiles of $\Swang[\iota][S]$ (\emph{mutatis mutandis}, \textit{i.e.}~at level $\kappa$ instead of $\kappa'$).

Furthermore, \Cref{pclm:in-pipe} imposes that the individual tiles of $\Swang[\iota][S]$ receiving these bits must appear on some facet of the macro-tiles of level $\kappa-1$ \emph{that we cannot identify}\footnote{Indeed, the smallest facet of a $\tau$-macro-tile of level $\kappa-1$ \emph{in the previous level of simulation} may not coincide with the smallest facet of the corresponding macro-tile of level $\kappa-1$ \emph{in the current level of simulation}.}, we resort to a non-deterministic guess: more precisely, we introduce an \field{out-destination field} that encodes inside each macro-tile of level $\kappa-1$ the index $k_0 \in \interval{1}{d}$ of the destination facet\footnote{It would also be possible to introduce yet another input argument $\ram{I}[\dots]$ in the program $F(e)$, and to carry this information to the \field{out-destination fields} with the \emph{copying method}; but a non-deterministic guess is sufficient for our purposes.}.

\begin{details} Let $\wang{t} \in \Swang[\iota][S]$ be a tile and let $(\rect{r}_{\ell},\vec{i}_{\ell})_{\iota \leq \ell < \kappa'-1}$ be the \field{position field} of $\dec(\wang{t})$. The \field{out-destination field} of $\dec(\wang{t})$ is always some non-blank $k_0 \in \interval{1}{d}$; and for every $k \in \interval{1}{d}$:
  \begin{itemize}
  \item If the partial odometer of rectangles $(\rect{r}_{\ell})_{\iota \leq \ell < \kappa-1}$ yields $(\vec{i}_{\ell})_{\iota \leq \ell < \kappa-1} \pm \basis{k} = (\odnone,\dots,\odnone)$, then the \field{out-destination field} of $\facet{k}{\pm}(\wang{t})$ is blank;
  \item Otherwise, the \field{out-destination field} of $\facet{k}{\pm}(\wang{t})$ also contains $k_0$.
  \end{itemize}
  In any macro-tile of level $\kappa-1$, the individual tiles $\wang{t} \in \Swang[\iota][S]$ thus have the same \field{out-destination field} in their decoration; and that the associated integer $k_0$ can take any value in $k_0 \in \interval{1}{d}$.
\end{details}

\noindent Let $\mwang{t}[\kappa]$ be a $\tau$-macro-tile of level $\kappa$. We design some \field{$\tau$-out pipes} that will carry tuples $(i,j,b)$ for $b = \ram{O}[i][j]$ from the output facet of $\mwang{t}[\kappa]$ towards some well-positioned tiles of $\Swang[\iota][S]$. To proceed, we introduce three \field{$\tau$-out pipe fields}. Based on the already-defined \field{$\tau$-wire fields}, they implement three successive iterations of the \crtlnameref{pos:lem:routing-lemma}:
\begin{itemize}[itemsep=3pt]
\item The \field{first $\tau$-out pipe fields} implement a routing of the bits $(i,j,b) \in \N^2 \times \{0,1\}$ from the arguments $\ram{O}[i][j]$ on the output facet of $\mwang{t}[\kappa]$ to its smallest facet $\sfacet(\mwang{t}[\kappa])$;\footnote{Since the computation zone of $\mwang{t}[\kappa]$ is a cut of the rectangle $\dom(\mwang{t}[\kappa])$, its input facet is not necessarily included in the smallest facet of $\mwang{t}[\kappa]$ (as in \Cref{fix:fig:out-pipes}).}
\item The \field{second $\tau$-out pipe fields} distribute the bits $(i,j,b)$ by permuting them on the smallest facet $\sfacet(\mwang{t}[\kappa])$, thus ensuring that each macro-tile of level $\kappa-1$ receives at most $\smash{\prod_{\ell=\iota}^{\kappa-2} \rspace_{\ell}}$ such bits\footnote{Where $\smash{\prod_{\ell=\iota}^{\kappa-2} \rspace_{\ell}}$ is a lower bound on the size of the smallest facet of any macro-tile of level $\kappa-1$, and is thus necessary to ensure that the third part of the \field{$\tau$-out pipes} will always admit a valid wiring.};
\item The \field{third $\tau$-out pipe fields} permute these received bits, inside each macro-tile $\mwang{t}[\kappa-1]$ of level $\kappa-1$, towards the facet $\facet{k_0}{-}(\mwang{t}[\kappa-1])$ given by the \field{out-destination fields}.
\end{itemize}

\begin{figure}[ht]
  \colorlet{ArgA}{RainbowF!88}
  \colorlet{ArgB}{RainbowG!50!RainbowA}
  \vspace*{-0.2cm}
  \centerline{\begin{tikzpicture}[scale=0.32]
    \pgfmathsetmacro{\compzone}{12.5}
    \pgfmathsetmacro{\nbw}{8}
    \pgfmathsetmacro{\nbwires}{6*\nbw-1}
    \clip (-26.2,-2.8) rectangle (26.4,27.2);
    \draw[LevelI!10,step=0.25] (0,0) grid (20,25);
    \begin{scope}[transparency group]
      \fill[LevelK,opacity=0.3] (0,0) rectangle ++(0.25,\compzone);
      \draw[LevelI!20,step=0.25] (0,0) grid ++(0.25,\compzone);
      \fill[LevelK,opacity=0.3] (\compzone,\compzone) rectangle ++(-0.25,-\compzone);
      \fill[ArgA] ($(\compzone,0) + (-0.25,0)$) rectangle ++($(0.25,0) + \nbwires*(0,0.25)$);
      \draw[LevelI!20,step=0.25] (\compzone,\compzone) grid ++(-0.2501,-\compzone);
      \fill[opacity=0.3,pattern={Lines[line width=0.6pt,angle=45,distance={0.25cm/sqrt(2)}]},pattern color=LevelK] (0,0) rectangle (\compzone,\compzone);
      \draw[thick,LevelK!60] (0,0) rectangle ($(\compzone,\compzone) + (0.02,0.02)$);
    \end{scope}
    \foreach \x in {3, 7, 10, 13, 17} {
      \draw[LevelI+1,thick] (\x,0) -- (\x,25);
    }
    \foreach \y in {3, 5, 9, 12, 14, 19, 21, 24} {
      \draw[LevelI+1,thick] (0,\y) -- (20,\y);
    }
    \draw[ultra thick,LevelK] (-0.1,-0.1) rectangle (20.1,25.1);
    \begin{scope}[transparency group]
      \foreach \i in {0,...,\nbwires} {
        \tikzmath{int \nb; \oi=\nbwires-\i; \y=\i*0.25; \nx=\oi*0.25;
          \nb=div(\oi,\nbw); \nm=0.25*Mod(\oi,\nbw);
          \nnx=array({0,3,7,10,13,17},\nb); \dy=\nbw*\nb*0.25;}
        \draw[very thin,opacity=0.4,ArgA] ($(\compzone,\y) + (-0.1,0.1)$) -- ($(\nx,\y) + (0.1,0.1)$)
        \ifnum0<\i\relax{-- ($(\nx,0) + (0.1,0.1)$)}\fi
        \ifnum0<\nb\relax{-- ($(\nnx,\nnx) + (\nm,-\dy) + (0.1,0.1)$)}\fi;
        \ifnum0=\nb\relax{%
          \draw[thin,opacity=0.9,ArgA!60!ArgB,-{Classical TikZ Rightarrow[length=0.9pt]}] ($(\nx,0) + (0.1,1)$) -- ($(\nx,0) + (0.1,0.1)$);
        }\else{%
          \draw[thin,opacity=0.9,ArgA!60!ArgB,-{Classical TikZ Rightarrow[length=0.9pt]}] ($(\nnx,\nnx) + (\nm,-\dy) + (0.1,0.1)$) -- ($(\nnx,0) + (\nm,0) + (0.1,0.1)$);
        }\fi
      }
    \end{scope}
    \draw[thick] (16.5,1.5) circle (4.1cm);
    \draw[thick, dashed,looseness=1.2] (-4,20) to[out=40,in=90] (16.5,5.8);
    \begin{scope}[shift={(-14,15)}]
      \begin{scope}
        \clip (0,0) circle (12cm);
        \fill[white] (-12,-12) rectangle (12,12);
        \begin{scope}[scale=3.02,shift={(-3.45,-1.5)}]
          \fill[ArgA] (-0.5,0) rectangle ++($(0.166666,0) + \nbwires*(0,0.1666666)$);
          \draw[LevelI!20,step=0.166666666] (-3,0) grid (7,9);
          \draw[LevelK!60,thick] (-0.333333,0) -- ++(0,9);
          \begin{scope}[transparency group,opacity=0.8]
            \path[clip,scope fading=north,fit fading=true] (-3,0) rectangle (7,7);
            \foreach \mx in {-3,0,4} {
              \foreach \i in {0,...,15} {
                \tikzmath{\x=\i*0.16666666;}
                \draw[ArgA!60!ArgB,-{Classical TikZ Rightarrow[length=0.9pt]}] ($(\mx,0) + (\x,0) + (0.12444,7)$) -- ($(\mx,0) + (\x,0) + (0.12444,0.0833)$);
              }
            }
          \end{scope}
          \begin{scope}
            \fill[ArgB!80!blue,opacity=0.5] (0,0) rectangle ++(2.66666,0.166666);
            \draw[LevelI!20,step=0.166666666] (0,0) grid ++(2.66666,0.166666);
            \foreach \i in {0,...,15} {
              \tikzmath{\x=\i*0.1666666; \ny=array({6,7,0,5,0,2,2,1,1,0,0,3,4,8,0,1},\i)*0.1666666; \nx=array({8,9,0,14,3,1,12,4,10,6,11,5, 2,7,15,13},\i)*0.1666666;}
              \draw[ArgB,-{Classical TikZ Rightarrow[length=1.5pt]},rounded corners=1pt] ($(\x,0) + (0.08333,0.08333)$) -- ($(0,0.1666) + (\x,\ny) + (0.08333,0.08333)$) -- ($(0,0.1666) + (\nx,\ny) + (0.12555,0.08333)$) -- ($(\nx,0) + (0.12555,0.16666)$);
            }
          \end{scope}
          \begin{scope}[shift={(4,0)}]
            \fill[ArgB!80!blue,opacity=0.5] (0,0) rectangle ++(0.166666,2.666666);
            \draw[LevelI!20,step=0.166666666] (0,0) grid ++(0.166666,2.666666);
            \draw[ArgB,-{Classical TikZ Rightarrow[length=1.2pt]},rounded corners=1pt] (0.08333,0.08333) -- (0.083333,0.13555) [rounded corners=0.6pt] -- (0.3,0.13555) -- (0.3,0.08333) -- (0.16666,0.08333);
            \foreach \i in {1,...,15} {
              \tikzmath{\x=\i*0.1666666; \ny=array({0,3,9,15,6,8,11,2,4,10,12,5,13,14,1,7},\i)*0.1666666;}
              \draw[ArgB,-{Classical TikZ Rightarrow[length=1.5pt]},rounded corners=1pt] ($(\x,0) + (0.08333,0.08333)$) -- ($(\x,\ny) + (0.08333,0.08333)$) -- ($(0,\ny) + (0.16666,0.0833)$);
            }
          \end{scope}
          \foreach \x in {0,4,7} {
            \draw[LevelI+1,very thick] (\x,0) -- (\x,5);
          }
          \foreach \y in {3,5} {
            \draw[LevelI+1,very thick] (0,\y) -- (7,\y);
          }
          \draw[LevelK,ultra thick] (-2,-0.02) -- (7.02,-0.02) -- (7.02,9);
        \end{scope}
      \end{scope}
      \draw[ultra thick] (0,0) circle (12cm);
    \end{scope}
    \node[LevelK!90!black] (A) at (-4,2) {\scalebox{0.9}{\parbox{2.1cm}{\centering Smallest \\ facet of $\mwang{t}[\kappa]$}}};
    \draw[LevelK!80!black,->,looseness=1] ($(A.south) + (0,0.1)$) to[out=-90,in=-90] (2,-0.2);
    \node[LevelK!90!black] (B) at (24,10) {\scalebox{0.9}{\parbox{2.1cm}{\centering Output \\ facet of $\mwang{t}[\kappa]$}}};
    \draw[LevelK!80!black,->,looseness=1] ($(B.west) + (1.4,0.2)$) to[out=180,in=0] ($(\compzone,7) + (0.04,0)$);
  \end{tikzpicture}}

  \vspace*{-0.6cm}
  \caption{A wiring of \field{$\tau$-out pipe fields}.}
  \subcaption{Inside a $\tau$-macro tile $\mwang{t}[\kappa]$ of level $\kappa$
    (drawn as\hspace{1.2ex}\begin{tikzpicture}[baseline=0.2ex,scale=0.25]
      \protect\draw[LevelK,thick] (0,0) rectangle ++(1,1);
    \end{tikzpicture}),
    wires
    ~\begin{tikzpicture}[baseline=0.2ex,scale=0.25]
      \protect\draw[RainbowF!80,thick,-{Classical TikZ Rightarrow[length=2.2pt]}] (-0.1,0.5) -- (0.5,0.5) -- (1,0);
    \end{tikzpicture}\hspace{1.2ex}
    join the output facet of $\mwang{t}[\kappa]$ with the smallest facet, and then organize a permutation of the output bits. Inside the macro-tiles of level $\kappa-1$
    (drawn as\hspace{1.2ex}\begin{tikzpicture}[baseline=0.2ex,scale=0.25]
      \protect\draw[LevelI+1,thick] (0,0) rectangle ++(1,1);
    \end{tikzpicture})
    appearing on the smallest facet of $\mwang{t}[\kappa]$, wires
    ~\begin{tikzpicture}[baseline=0.2ex,scale=0.25]
      \protect\draw[RainbowG!50!RainbowA,thick,-{Classical TikZ Rightarrow[length=2.2pt]}] (-0.1,0.5) -- (0.5,0.5) -- (1,0);
    \end{tikzpicture}\hspace{1.2ex}
    carry these bits toward the facets determined by their respective \field{out-destination fields}.\\
    This figure implicitly assumes that $\iota < \kappa-1$: if $\iota = \kappa-1$, the
    ~\begin{tikzpicture}[baseline=0.2ex,scale=0.25]
      \protect\draw[RainbowG!50!RainbowA,thick,-{Classical TikZ Rightarrow[length=2.2pt]}] (-0.1,0.5) -- (0.5,0.5) -- (1,0);
    \end{tikzpicture}\hspace{1.2ex}
    wiring becomes trivial.}
  \label{fix:fig:out-pipes}
\end{figure}

\begin{details}
  The \field{first}, \field{second} and \field{third $\tau$-out pipe fields} are of the form:
  \[ (w_i)_{i \in I} \in ((\N^2 \times \{0,1\}) \times \N)^{\ast} \quad ; \; \text{or} \quad (w_i)_{i \in I} \in ((\N^2 \times \{0,1\}) \times \{\mathtt{start}, \mathtt{end}, \mathtt{cross} \})^{\ast} \]
  to carry entries $(i,j,b) \in \interval{0}{2d} \times \N \times \{0,1\}$ using the \crtlnameref{pos:lem:routing-lemma}. Based upon similar wirings, we skip the details and only highlight the bounds, starting and ending points of these wires.

  \medskip
  Let $\wang{t} \in \Swang[\iota][S]$ be a tile. For the \field{first $\tau$-out pipe field}, let $(\compzone,\vec{i})$ and $(\pxrect{r}_{\kappa},\pxvec{i}_{\kappa})$ denote the \field{$\tau$-compzone field of level $\kappa$} and the \field{pixel position field of level $\kappa$} of $\dec(\wang{t})$:
  \begin{itemize}
  \item (Border condition) If $\vec{i} \pm \basis{k} \notin \compzone$ (\textit{i.e.}~$\wang{t}$ appears on the border of the computation zone $\compzone$), then the corresponding $\facet{k}{\pm}(\wang{t})$ has blank \field{first $\tau$-out pipe field};
  \item (Start of wires) If $\facet{h}{+}(\compzone)$ denotes the output facet of the computation zone $\compzone$, and $\vec{i}$ belongs to $\facet{h}{+}(\compzone)$, then the \field{first $\tau$-out pipe field} of $\dec(\wang{t})$ contains an entry $((i,j,b),\mathtt{start})$ if and only if the $\ram{output}$ variable of $\facet{h}{-}(\wang{t})$ encodes the same $(i,j,b)$. Otherwise (\textit{i.e.}~$\vec{i}$ is not part of the output facet), it contains no wire start $(\cdot,\mathtt{start})$;
  \item (End of wires) Denote $\facet{p}{-}(\compzone)$ the facet of $\compzone$ that is included in the smallest facet $\sfacet(\pxrect{r}_{\kappa})$ (\textit{c.f.}~\Cref{pclm:tau-macro-tile-facet-inclusion}). If $\vec{i} \in \facet{p}{-}(\compzone)$, the \field{first $\tau$-out pipe field} of $\dec(\wang{t})$ can either contain a single entry $((i,j,b),\mathtt{end})$ for any $i \in \interval{0}{2d}$, $j \in \interval{0}{\rvol{\compzone}}$ and $b \in \{0,1\}$; or no entry $(\cdot,\mathtt{end})$ at all. Otherwise (\textit{i.e.}~$\vec{i}$ is not part of the facet $\facet{p}{-}(\compzone)$), it contains no wire end $(\cdot,\mathtt{end})$.
  \end{itemize}
  For the \field{second $\tau$-out pipe field}, let $(\pxrect{r}_{\kappa},\pxvec{i}_{\kappa})$ and $(\pxrect{r}_{\kappa-1},\pxvec{i}_{\kappa-1})$ denote the \field{pixel position fields of level $\kappa$} and \field{$\kappa-1$} of $\dec(\wang{t})$:
  \begin{itemize}
  \item (Border condition) If $\vec{i} \pm \basis{k} \notin \pxrect{r}_{\kappa}$ (\textit{i.e.}~$\wang{t}$ appears on the border of a macro-tile of level $\kappa$), then the corresponding $\facet{k}{\pm}(\wang{t})$ has blank \field{second $\tau$-out pipe field};
  \item (Start of wires) The \field{second $\tau$-out pipe field} of $\dec(\wang{t})$ contains an entry $((i,j,b),\mathtt{start})$ for $i,j \in \N^2$ and $b \in \{0,1\}$ if and only if the \field{first $\tau$-out pipe field} of $\dec(\wang{t})$ contains the same entry $((i,j,b),\mathtt{end})$;
  \item (End of wires) If $\pxvec{i}_{\kappa}$ belongs to the smallest facet $\facet{h}{-}(\pxrect{r}_{\kappa}) = \sfacet(\pxrect{r}_{\kappa})$, then $\pxvec{i}_{\kappa-1}$ belongs to the facet $\facet{h}{-}(\pxrect{r}_{\kappa-1})$ and we denote $n \in \N$ its boustrophedon indexing in $\facet{h}{-}(\pxrect{r}_{\kappa-1})$). In this case, if $n \leq \prod_{\ell=\iota+1}^{\kappa-1} \rspace_{\ell}$, then the \field{second $\tau$-out pipe field} of $\dec(\wang{t})$ can either contain an entry $((i,j,b),\mathtt{end})$ for some $(i,j,b) \in \interval{0}{2d} \times \interval{0}{\rvol{\pxrect{r}_{\kappa}}} \times \{0,1\}$ or be blank. Otherwise (\textit{i.e.}~$n$~is too large or $\pxvec{i}_{\kappa}$ does not belong to $\sfacet(\pxrect{r}_{\kappa})$), it must be blank.
  \end{itemize}
  For the \field{third $\tau$-out pipe field}, let $(\pxrect{r}_{\kappa},\pxvec{i}_{\kappa})$ and $(\pxrect{r}_{\kappa-1},\pxvec{i}_{\kappa-1})$ denote the \field{pixel position fields of level $\kappa$} and \field{$\kappa-1$} of $\dec(\wang{t})$:
  \begin{itemize}
  \item (Border condition) If $\vec{i} \pm \basis{k} \notin \pxrect{r}_{\kappa-1}$ (\textit{i.e.}~$\wang{t}$ appears on the border of a macro-tile of level $\kappa-1$), then the corresponding $\facet{k}{\pm}(\wang{t})$ has blank \field{third $\tau$-out pipe field};
  \item (Start of wires) The \field{third $\tau$-out pipe field} of $\dec(\wang{t})$ contains en entry $((i,j,b),\mathtt{start})$ for $i,j \in \N^2$ and $b \in \{0,1\}$ if and only if the \field{second $\tau$-out pipe field} of $\dec(\wang{t})$ contains the same entry $((i,j,b),\mathtt{end})$;
  \item (End of wires) Let $k_0 \in \interval{1}{d}$ denote the \field{out-destination field} of $\dec(\wang{t})$. If $\pxvec{i}_{\kappa-1}$ belongs to the facet $\facet{k_0}{-}(\pxrect{r}_{\kappa-1})$, then the \field{third $\tau$-out pipe field} of $\dec(\wang{t})$ can either contain an entry $((i,j,b),\mathtt{end})$ for some $(i,j,b) \in \interval{0}{2d} \times \interval{0}{\rvol{\pxrect{r}_{\kappa}}} \times \{0,1\}$ or be blank. Otherwise (\textit{i.e.}~$\pxvec{i}_{\kappa-1}$ does not belong to $\facet{k_0}{-}(\pxrect{r}_{\kappa-1})$), it contains no wire end $(\cdot,\mathtt{end})$. \qedhere
  \end{itemize}
\end{details}

Since the three \field{$\tau$-out pipes} are chained together, we obtain by \Cref{pos:prop:wiring-bijection} that the wire starts of the \field{first $\tau$-out pipes} and the wire ends of the \field{third $\tau$-out pipes} are in bijection inside any $\tau$-macro-tile of level $\kappa$. We thus deduce:

\begin{claim}[label={pclm:out-pipe}]
  Let $\mwang{t}[\kappa]$ be a $\tau$-macro-tile of level $\kappa$, and let $a_{\kappa} = (a_0,\dots,a_{2d}) \in \Swang[\ast]$ be the symbol given by the output arguments $(\ram{O}[0],\dots,\ram{O}[2d])$ of its computation zone. Then for every $i \in \interval{0}{2d}$ and $j \in \interval{0}{|a_i|-1}$, there exists a unique Wang tile $\wang{t}$ in the $\tau$-macro-tile $\mwang{t}[\kappa]$ such that the \field{third $\tau$-out pipe field} of $\dec(\wang{t})$ contains an entry $((i,j,b'),\mathtt{end})$; in which case:
  \begin{itemize}
  \item The bit $b'$ satisfies $b' = a_{i}(j)$;
  \item Denoting $\mwang{t}[\kappa-1]$ the macro-tile of level $\kappa-1$ containing $\wang{t}$, the tile $\wang{t}$ appears on the facet $\facet{k_0}{-}$ of $\mwang{t}[\kappa-1]$, where $k_0$ is the \field{out-destination field} of $\dec(\wang{t})$.
  \end{itemize}
  Furthermore, every alternative choice for the \field{out-destination fields} of the sub-macro-tiles $\mwang{t}[\kappa-1]$ (of level $\kappa-1$ in $\mwang{t}[\kappa]$) results (up to a rewiring of the \field{third $\tau$-out pipes}) in an alternative valid $\tau$-macro-tile of level $\kappa$.
\end{claim}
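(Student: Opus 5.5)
The plan is to read the claim off the three chained routings. By \Cref{pos:prop:wiring-bijection}, applied separately to the first, second and third \field{$\tau$-out pipe fields}, each wiring pattern with blank borders induces a bijection between its wire starts and its wire ends, and the carried tuple $(i,j,b)$ is preserved along each wire. The gluing conditions identify each end $((i,j,b),\mathtt{end})$ of a pipe with the start $((i,j,b),\mathtt{start})$ of the next pipe, in the same decoration. Composing the three bijections therefore gives a colour-preserving bijection between the starts of the first pipe and the ends of the third pipe inside $\mwang{t}[\kappa]$. The border conditions guarantee that no wire escapes the relevant rectangle: $\compzone$ for the first pipe, $\pxrect{r}_{\kappa}$ for the second, and the sub-macro-tiles $\pxrect{r}_{\kappa-1}$ for the third. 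The starts of the first pipe are exactly the $\ram{output}$ tuples $(i,j,\ram{O}[i][j])$ on the output facet. By the folding claim for output facets, these are precisely the entries with $i\in\interval{0}{2d}$ and $j<|a_i|$, each occurring once. This yields existence, uniqueness and $b'=a_i(j)$.

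The second bullet is then immediate from the end-of-wire condition of the third pipe. An end can only occur at a tile whose pixel position of level $\kappa-1$ lies on $\facet{k_0}{-}(\pxrect{r}_{\kappa-1})$, where $k_0$ is that tile's \field{out-destination field}. This field is constant across the macro-tile of level $\kappa-1$.

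The substantive part is the converse ("furthermore") statement: for any choice of the $k_0$'s, a valid wiring exists. I would prove it by exhibiting the three wirings through \Cref{pos:lem:routing-lemma}, checking its hypotheses at each stage.

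For the first pipe, apply the lemma inside $\compzone$. The sources lie on the output facet and are greedy by the boustrophedon folding. The receivers are placed greedily on $\facet{p}{-}(\compzone)\subseteq\sfacet(\pxrect{r}_\kappa)$, which is given by \Cref{pclm:tau-macro-tile-facet-inclusion}. The cardinality condition $|S|\le\rspace(\compzone)$ follows from the bit length $\K[t]\cdot\pxspace{\kappa}[\alpha]$ of the symbol together with \eqref{eq:stair:1} and $\alpha<\gamma$.

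For the second pipe, apply the lemma inside $\pxrect{r}_\kappa$ with both ends on its smallest facet, so that no greediness is needed. The receivers are chosen as the first $\prod_{\ell=\iota+1}^{\kappa-1}\rspace_\ell$ boustrophedon positions of each sub-macro-tile's facet. The total count fits because $\rspace(\pxrect{r}_\kappa)$ dominates the number of bits.

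For the third pipe, apply the lemma inside each $\pxrect{r}_{\kappa-1}$. At most $\rspace(\pxrect{r}_{\kappa-1})$ bits are sent from its facet $\facet{h}{-}$ to $\facet{k_0}{-}$, and receivers are chosen greedily there.

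I expect the main obstacle to be this third stage. Neither facet need be the smallest one, so the greedy hypotheses of \Cref{pos:lem:routing-lemma} must be enforced. I would handle this by choosing the second pipe's receivers greedily (an initial boustrophedon segment) on each sub-macro-tile's facet. The lower bound $\prod_{\ell=\iota}^{\kappa-2}\rspace_\ell\le\rspace(\pxrect{r}_{\kappa-1})$, which follows from the bound fields, keeps the count within range. Since only the starts and ends of the wires are constrained, and all other tiles' fields are free, changing $k_0$ merely changes the endpoints of the third routing. Re-running the lemma then gives the alternative valid $\tau$-macro-tile.
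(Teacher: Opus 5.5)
Your proof is correct and follows the paper's own, much terser, argument. The paper likewise composes the three \field{$\tau$-out pipe} wirings via \Cref{pos:prop:wiring-bijection}, and its ``furthermore'' part rests, as in your plan, on \Cref{pos:lem:routing-lemma} plus the cap on the number of bits delivered to each sub-macro-tile of level $\kappa-1$; you simply make that part explicit.

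Three minor fixes are needed. First, the cap must consistently be $\prod_{\ell=\iota}^{\kappa-2}\rspace_\ell\le\rspace(\pxrect{r}_{\kappa-1})$; your $\prod_{\ell=\iota+1}^{\kappa-1}$ in the second stage is an index slip that could violate the $|S|\le\rspace$ hypothesis at the third stage. Second, enough capped receiver slots exist not because $\rspace(\pxrect{r}_\kappa)$ is large, but because at least $\rspace_{\kappa-1}$ sub-macro-tiles meet $\sfacet(\pxrect{r}_\kappa)$, which gives $\prod_{\ell=\iota}^{\kappa-1}\rspace_\ell\ge\K[][s]\cdot\pxspace{\kappa}[\gamma]$ slots by \eqref{eq:stair:1}. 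Third, to rewire only the third pipes of an arbitrary valid macro-tile, whose second-pipe ends need not be greedy, pad those ends with dummy wires to a full initial boustrophedon segment, apply the routing lemma, and then delete the dummy wires.
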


\begin{numerics}
  The three \field{pixel position fields} are of bit length $\bigO(\sum_{\ell=\iota}^{\kappa'-2} \log \rtime_{\ell}) = \polylog \K[][s] \cdot \bigO(\pxspace{\iota}[2\delta])$. Similarly to the already defined wirings, the two \field{$\tau$-in pipe fields} of a tile $\wang{t} \in \Swang[\iota][S]$ are of bit length $\bigO(\sum_{\ell=\iota}^{\kappa'-2} \log \rtime_{\ell}) + \bigO(\log \pxspace{\kappa'-1}) = \polylog \K[][s] \cdot \bigO(\pxspace{\iota}[2\delta])$; and its \field{$\tau$-out pipe fields} are of bit length $\bigO(\sum_{\ell=\iota}^{\kappa-1} \log \rtime_{\ell} + \log \pxspace{\kappa}) = \polylog \K[][s] \cdot \bigO(\pxspace{\iota}[2\delta])$.
\end{numerics}

\enlargethispage{1.5\baselineskip}
\subsection{\boldmath Finalizing the algorithm \texorpdfstring{$F(e)$}{F(e)}}

\subsubsection{Exit and return}
There ends the definition of the tileset $\Swang[\iota][S]$, and we now finalize the algorithm $F(e)$ recognizing it. For fixed arguments $(\K[][s],\K[][w])$, $\iota \in \N$, $(\rspace_{\ell},\rtime_{\ell})_{0 \leq \ell < \kappa-1}$ and $(\rect{r}_{\ell},\vec{i}_{\ell})_{\iota \leq \ell < \kappa-1}$, we design the program $F(e)$ so that tiles ${\wang{t} = (c_1^{\scriptscriptstyle -},\dots,c_d^{\scriptscriptstyle -}, c_1^{\scriptscriptstyle +},\dots, c_d^{\scriptscriptstyle +},c_{\decsymbol}) \in \Swang[\ast]}$ are accepted by a valid computation of $\funram{F(e)}((\K[][s],\K[][w]), \iota, \dots, \wang{t})$ if and only if they satisfy all the conditions enumerated in the previous sections of this proof; in which case, it returns a tuple $(\nghb_{\kappa-1},\rect{r}_{\kappa-1},\vec{i}_{\kappa-1},u)$ where:
\begin{itemize}
\item $\nghb_{\kappa-1} \subseteq \{\pm \basis{k} : 1 \leq k \leq d\}$ is the eponymous entry in the \field{neighbor field} of $\dec(\wang{t})$;
\item $\rect{r}_{\kappa-1}$ and $\vec{i}_{\kappa-1}$ is the eponymous entry in the \field{position field} of $\dec(\wang{t})$;
\item $u \in \{0,1\}^{\ast}$ encodes some $(i,j,b) \in \interval{0}{2d} \times \N \times \{0,1\}$ if and only if the \field{third $\tau$-out pipe field} of $\dec(\wang{t})$ contains an entry $((i,j,b),\mathtt{end})$; and is otherwise blank.
\end{itemize}
Otherwise, we design $\funram{F(e)}$ to reject all computations on $\wang{t}$ and thus not return anything.

\smallskip
\noindent Intuitively, outputting the \field{neighbor field}, the \field{position field}'s entry of level $\kappa-1$  and the ends of the \field{$\tau$-out pipes} allows for the previous simulation level to read this data on the output facets of its $\sigma$-macro-tiles, and wire it towards its own computations (\textit{c.f.}~\Cref{fix:sec:tau-consistency,fix:sec:tau-pipes}).

\subsubsection{Time complexity and word length} Let $\ram{I} \in \{0,1\}^{\ast\ast}$ be an input array whose argument $\ram{I}[0]$ is interpreted as a pair of integers $\ram{I}[0] = (\K[][s],\K[][w])$, and whose \emph{bit size} is denoted $\smash{|\ram{I}| = \sum_{i=0}^{\len(\ram{I})-1} |\ram{I}[i]|}$.
Since all the validity conditions of $\funram{F}(e)$ can be checked by performing elementary operations (additions, multiplications, copies, comparisons\dots) on large integers and binary strings of size $|\ram{I}| + \K[][w] \cdot |\ram{I}[3]| + |e|$, which can all be implemented in the $\log$-RAM model in quasi-linear time, we deduce that $F(e)$ can be implemented to run in time
\[ \bigO \big(|\ram{I}| \cdot \polylog |\ram{I}| + |e| \big) + \bigO \big(\K[][w] \cdot |\ram{I}[3]| \cdot \polylog |\ram{I}[3]| \big) \]
and with word length
\[ \bigO \big(\log |\ram{I}| + \log \big(|e| + \K[][w] \big)\big). \]

\subsection{Fixed point theorem}

We now build a program $e \in \{0,1\}^{\ast}$ such that $e$ and $F(e)$ have the same behavior, thus ensuring that all levels of simulation run on the same program.

\subsubsection{Fixed point theorem} From any RAM program $e \in \{0,1\}^{\ast}$, the previous sections create a new program $F(e) \in \{0,1\}^{\ast}$ responsible for defining the various Wang tilesets $\Swang[\iota][S]$ (for parameters $\iota \in \N$, etc\dots). We now notice that the transformation $e \mapsto F(e)$ is computable and total: indeed, the program $e \in \{0,1\}^{\ast}$ is only ever used in $F(e)$ to determine the program implemented in the \field{processor fields} of the next level of simulation.

We thus deduce from the~\crtlnameref{calc:prop:fixed-point-theorem}~(\Cref{calc:prop:fixed-point-theorem}) that there exists a program $e \in \{0,1\}^{\ast}$ such that $\funram{e} = \funram{F(e)}$. Furthermore, the time complexity and the word length of this program $e \in \{0,1\}^{\ast}$ respectively satisfy $T_{e}(\ram{I}) = \bigO(T_{F(e)}(\ram{I})) + \bigO(1)$ and $W_{e}(\ram{I}) = W_{F(e)}(\ram{I}) + \bigO(1)$. In particular, since $F(e)$ is a $\log$-RAM program, so is $e$.

\subsubsection{Fixing constants}\label{fix:sec:fixing-constants}
We now fix, once and for all, the values of the integer constants $(\K[][s],\K[][w])$ (that are respectively the constant factors applied to the sizes of the macro-tiles (\textit{c.f.}~\Cref{fix:sec:positioning}) and the word length of their embedded computations (\textit{c.f.}~\Cref{fix:sec:computation-zone})). Notice that, for the construction to be correct, we need the $\sigma$-macro-tiles and $\tau$-macro-tiles to embed enough $\log$-RAM computation steps, and with large enough word length, to allow the computations of (respectively) $\funram{e}(\dots)$ and $\funram{\code{\tau}}(\dots)$ to run properly.

Recall that, in a valid tiling $x \in \Swang[\iota][S]^{\Z^d}$, a macro-tile of level $\kappa \leq \ell < \kappa'$ embeds, in a computation zone of domain $\compzone$, at least $\rspace(\compzone) \geq \K[][s] \cdot \pxspace{\ell}[\gamma]$ steps of RAM computations operating on variables of word length at most $2 \K[][w] \cdot \log \rvol{\compzone} \geq \K[][w] \cdot \log \pxspace{\ell}$.

\paragraph*{$\sigma$-macro-tiles}  Following the \emph{numerics} paragraphs from the previous sections, we deduce that for all fixed parameters $(\K[][s],\K[][w])$, $\iota$ and $(\rspace_{\ell},\rtime_{\ell})_{0 \leq \ell < \kappa-1}$, the macro-colors of a Wang tile $\wang{t} \in \Swang[\ast]$ accepted by $\funram{F(e)}$ have bit length $\polylog \K[][s] \cdot \K[][w] \cdot \bigO(\pxspace{\iota}[2\delta])$; and so do the remaining arguments $\ram{I}[0]$, $\ram{I}[1]$, $\ram{I}[2]$ and $\ram{I}[3]$ on their input facets.

We will thus choose $\K[][s]$ and $\K[][w]$ so that any $\sigma$-macro-tile has a computation zone large enough to embed the computations for the next level of simulation $\iota' > \iota$; \textit{i.e.}~for every $\ell \in \N$:
\[ \K[][w] \cdot \polylog \K[][s] \cdot \bigO(\pxspace{\ell}[2\delta] \cdot \polylog \pxspace{\ell}) \leq \K[][s] \cdot \pxspace{\ell}[\gamma]; \]
and similarly, that the word length of the embedded computations satisfies for every $\ell \in \N$:
\[ \bigO(\log \pxspace{\ell}[2\delta]) + \bigO(\log (\K[][s] + \K[][w])) \leq \K[][w] \cdot \log \pxspace{\ell}. \]
Since $2\delta < \gamma$, these bounds hold for all $\pxspace{\ell} \in \N$ if we set $\K[][w] = \sqrt{\K[][s]}$ and pick $\K[][s]$ large enough.

\paragraph*{$\tau$-macro-tiles} In the hypotheses of \Cref{sof:lem:fixpoint}, we assume that the substitution step $\smash{x^{(\ell+1)} \substep[\tau] x^{(\ell)}}$ is computed in $\smash{\K[t] \cdot \pxspace{\ell}[\alpha]}$ steps of RAM computations, with word length $\smash{W_{\code{\tau}}(\ram{I}) \leq \K[w] \log |\ram{I}|}$ on input arrays of bit size $|\ram{I}|$, and that $\smash{\norm{x^{(\ell+1)}} \leq \K[t] \cdot \pxspace{\ell}[\alpha]}$ (\textit{c.f.}~\crtlnameref{fix:sec:housekeeping}). We will thus choose $\K[][s]$ and $\K[][w]$ so that the computation zone of any $\tau$-macro-tile of level $\ell$ is large enough to both embed the arguments and the computations of $\smash{x^{(\ell+1)} \substep[\tau] x^{(\ell)}}$, \textit{i.e.} that for every $\ell \in \N$:
\[ \K[t] \cdot \pxspace{\ell}[\alpha] \leq \K[][s] \cdot \pxspace{\ell}[\gamma]; \]
and similarly, that the word length of the embedded computations satisfies for every $\ell \in \N$:
\[ \K[w] \cdot \bigO(\log (2d \cdot \rtime_{\ell} + \K[t] \cdot \pxspace{\ell}[\alpha])) \leq \K[][w] \cdot \log \pxspace{\ell}. \]
Since $2\delta < \gamma$ and $\log \lambda_{\ell} = \bigO(\log \pxspace{\ell})$, this indeed holds for any value of $\pxspace{\ell}$ if we set $\K[][w] = \sqrt{\K[][s]}$ and pick $\K[][s]$ large enough. In what follows, we thus fix such a pair $(\K[][s],\K[][w])$.

\subsection{End of the proof}
\label{fix:sec:final-sft}

The previous section has thus defined a function $\funram{e}(\dots)$ that recognizes a subset of valid tiles from $\Swang[\ast]$ that, in a sense, implements an infinite hierarchy of simulations. We now create a shift of finite type $\mathcal{X}$ that implements the ``level $0$'' of these simulations.

\subsubsection*{The shift $\mathcal{X}$}

Let $X \subseteq \alphabet^{\Z^d}$ be the initial sofic shift space, for a finite $\alphabet \subseteq \{0,1\}^{\ast}$. Since $X$ is sofic, there exists some shift of finite type $\smash{X' \subseteq \alphabet[B]^{\Z^d}}$ on some finite alphabet $\alphabet[B]$ and a projection $\pi' \colon \alphabet[B] \to \alphabet$ such that $\pi'(X') = X$.
\smallskip
We now define a shift space $\mathcal{X}$ whose symbols are of the form
\[ (a,b,\iota',\kappa',(\rspace_{\ell},\rtime_{\ell})_{0 \leq \ell < \kappa'}, (\rect{r}_{\ell},\vec{i}_{\ell})_{0 \leq \ell < \kappa'},(a_{\ell})_{0 \leq \ell \leq \kappa'}, \nghb_{\kappa'-1}, \wang{t}) \]
where:
\begin{itemize}
\item $a \in \alphabet$ is a letter from the original shift $X$;
\item $b \in \alphabet[B]$ is a letter from the cover of finite type $X'$ of $X$, and $a = \pi(b)$;
\item $\iota', \kappa'$ are two integers $\iota' < \kappa'$; furthermore, we assume that $\kappa' \leq N$ for some constant $N \in \N$ to be determined later;
\item $(\rspace_{\ell},\rtime_{\ell})_{0 \leq \ell < \kappa'}$ are pairs of integers satisfying the hypotheses of the \crtlnameref{fix:lem:staircase-lemma}, \textit{i.e.}~denoting $\pxspace{\ell} = \prod_{i < \ell} \rspace_{i}$:
  \begin{itemize}[label={$\triangleright$}]
  \item $2 \leq \rtime_{\ell} \leq 2^{\K \cdot \pxspace{\ell}[\delta]}$ ; $\rspace_{\ell}^{\K \cdot \log \pxspace{\ell+1}} \geq \rtime_{\ell}$ and $\prod_{i = 0}^{\kappa'-1} \rspace_{i} \geq \K[][s] \cdot \pxspace{\kappa'}[\gamma]$;
  \item The integer $\kappa'$ is the minimal integer satisfying $\prod_{i = 0}^{\kappa'-1} \rspace_{i} \geq \K[][s] \cdot \pxspace{\kappa'}[\gamma]$;
  \item The integer $\iota'$ is the maximal integer in $\interval{0}{\kappa'-1}$ satisfying $\prod_{i=\iota'}^{\kappa'-1} \rspace_{i} \geq \K[][s] \cdot \pxspace{\kappa'}[\gamma]$;
  \end{itemize}
\item $(\rect{r}_{\ell},\vec{i}_{\ell})_{0 \leq \ell < \kappa'}$ are pairs $\rect{r}_{\ell} \in \Srect_0$ and $\vec{i}_{\ell} \in \rect{r}_{\ell}$ such that $\rspace_{\ell} \leq \rspace(\rect{r}_{\ell})$ and $\rtime(\rect{r}_{\ell}) \leq \rtime_{\ell}$;
\item $(a_{\ell})_{0 \leq \ell \leq \kappa'}$ is a sequence of symbols such that:
  \begin{itemize}[label={$\triangleright$}]
  \item $\dec(a_0) = a$;
  \item For every $0 \leq \ell \leq \kappa'$, the Wang tile $a_{\ell} \in \Swang[\ast]$ is of word length $\norm{a_{\ell}} \leq \K[\code{\tau}] \cdot \pxspace{\ell-1}[\alpha]$;
  \item For every $0 \leq \ell < \kappa'$, the Wang tile $a_{\ell} \in \Swang[\ast]$ is a valid output of $\funram{\code{\tau}}(\rect{r}_{\ell},\vec{i}_{\ell},a_{\ell+1})$ computed in time $\K[\code{\tau}] \cdot \pxspace{\ell}[\alpha]$.
  \end{itemize}
\item $\nghb_{\kappa'-1}$ is a valid set of directions $\nghb_{\kappa'-1} \subseteq \{\pm \basis{k} : 1 \leq k \leq d\}$;
\item $\wang{t} \in \Swang[\ast]$ is a Wang tile such that:
  \begin{itemize}[label={$\triangleright$}]
  \item $\funram{e}((\K[][s],\K[][w]), \iota', (\rspace_{\ell},\rtime_{\ell})_{0 \leq \ell < \kappa'-1}, (\rect{r}_{\ell},\vec{i}_{\ell})_{\iota' \leq \ell < \kappa'-1}, \wang{t})$ is an accepting computation of $\funram{e}$ that returns $(\nghb_{\kappa'-1},\rect{r}_{\kappa'-1},\vec{i}_{\kappa'-1},u)$ in time $\K[][s] \cdot \pxspace{\iota'}[\gamma]$ for some $u \in \{0,1\}^{\ast}$;
  \item Let us denote $(c_0,\dots,c_{2d}) = a_{\kappa'}$. If $u$ is non-blank, then $u$ encodes a tuple $(i,j,b)$ for some $i \in \interval{0}{2d}$, $j \in \interval{0}{|c_i|-1}$ and such that $b = c_i(j)$.
  \end{itemize}
\end{itemize}

\noindent Since $N \in \N$ is fixed, the alphabet of the shift space $\mathcal{X}$ is in fact finite. We actually set
\[ N = \frac{1}{1-\gamma} \log \K[][s] \]
to ensure that there always exists $\kappa' \leq N$ such that $\pxspace{\kappa'} \geq \K[][s] \cdot \pxspace{\kappa'}[\gamma]$: indeed, for any sequence $(\rspace_{\ell},\rtime_{\ell})_{0 \leq \ell < N}$, we have $\pxspace{N} \geq 2^N > \K[][s]^{\frac{1}{1-\gamma}}$ and thus $\pxspace{N} \geq \K[][s] \cdot \pxspace{N}[\gamma]$.

\medskip
As they follow a hierarchy of nested grids, the configurations of the shift $\mathcal{X}$ are straightforward:
\begin{itemize}
\item The integers $\iota'$, $\kappa'$ and the sequence $(\rspace_{\ell},\rtime_{\ell})_{0 \leq \ell < \kappa'}$ are constant across any configuration;
\item The sequences $(\rect{r}_{\ell},\vec{i}_{\ell})_{0 \leq \ell < \kappa'}$ should draw nested grids in an odometer-like fashion (\textit{c.f.}~the \field{position field} from \Cref{fix:sec:positioning}). Thus, any valid configuration $\chi \in \mathcal{X}$ defines a unique sequence of grids $(\grid{g}_{\ell})_{0 \leq \ell < \kappa'}$;
\item In a valid configuration $\chi \in \mathcal{X}$ defining grids $(\grid{g}_{\ell})_{0 \leq \ell < \kappa'}$, letters $a_{\ell}$ should be constant across adjacent position $\vec{i}_{\ell-1}$ in the same rectangle $\rect{r}_{\ell-1}$ of $\grid{g}_{\ell-1}$; furthermore, across two adjacent rectangles $\rect{r} = \rect{r}_{\vec{j}}$ and $\rect{r}' = \rect{r}_{\vec{j} + \basis{k}}$ in a grid $\grid{g}_{\ell-1}$, we enforce adjacency conditions between the corresponding symbols $a_{\ell}$ and $a_{\ell}'$, \textit{i.e.}~$\facet{k}{+}(a_{\ell}) = \facet{k}{-}(a_{\ell}')$;
\item In a valid configuration $\chi \in \mathcal{X}$ defining the grid $\grid{g}_{\kappa'-1}$, sets $\nghb_{\kappa'-1}$ should be constant across adjacent position $\vec{i}_{\kappa'-1}$ in the same rectangle $\rect{r}_{\kappa'-1}$ of $\grid{g}_{\kappa'-1}$; furthermore, across adjacent rectangles $\rect{r} = \rect{r}_{\vec{j}}$ and $\rect{r}' = \rect{r}_{\vec{j} + \basis{k}}$ in $\grid{g}_{\kappa'-1}$, the corresponding sets $\nghb_{\kappa'-1}$ and $\nghb_{\kappa'-1}'$ should respectively contain $\basis{k}$ and $-\basis{k}$;
\item In a valid configuration $\chi \in \mathcal{X}$ defining a value $\iota' < \kappa'$ and a grid $\grid{g}_{\iota'-1}$, the tile $\wang{t}$ should be constant across adjacent position $\vec{i}_{\iota'-1}$ in the same rectangle $\rect{r}_{\iota'}$ of $\grid{g}_{\iota'-1}$; furthermore, across two adjacent rectangles $\rect{r} = \rect{r}_{\vec{j}}$ and $\rect{r}' = \rect{r}_{\vec{j} + \basis{k}}$ in the grid $\grid{g}_{\iota'-1}$, we enforce adjacency conditions between the corresponding symbols $\wang{t}$ and $\wang{t}'$, \textit{i.e.}~$\facet{k}{+}(\wang{t}) = \facet{k}{-}(\wang{t}')$;
\item Finally, using the finite family of forbidden patterns of $X'$, we ensure that the projection of a valid configuration $\chi \in \mathcal{X}$ to the alphabet $\alphabet[B]$ results in a valid configuration of $X'$;
\end{itemize}

Since there exists a uniform bound (depending on $N$) on the largest possible cells in the grids $\grid{g}_{\ell}$ for $\ell < N$, the shift space $\mathcal{X}$ is actually a shift of finite type.

\subsubsection*{End of the proof}

Let us denote $\pi_{\alphabet} \colon \mathcal{X} \to \alphabet^{\Z^{d}}$ the projection of $\mathcal{X}$ its first component:
\begin{claim}
  The shift space $\mathcal{X}$ satisfies $\pi_{\alphabet}(\mathcal{X}) = \limspace{X}[\code{\tau}]$. In particular, $\limspace{X}[\code{\tau}]$ is a sofic shift.
\end{claim}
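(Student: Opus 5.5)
We prove the two inclusions $\pi_{\alphabet}(\mathcal{X}) \subseteq \limspace{X}[\code{\tau}]$ and $\limspace{X}[\code{\tau}] \subseteq \pi_{\alphabet}(\mathcal{X})$ separately. Soficity then follows at once, because $\mathcal{X}$ is a shift of finite type and $\pi_{\alphabet}$ is a cellwise projection.

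\emph{Inclusion $\pi_{\alphabet}(\mathcal{X}) \subseteq \limspace{X}[\code{\tau}]$.} Let $\chi \in \mathcal{X}$. The finitely many ``bottom'' levels of $\chi$ already determine several objects: grids $\grid{g}_0,\dots,\grid{g}_{\kappa'-1}$, Wang tilings $x^{(0)},\dots,x^{(\kappa')}$ given by the symbols $a_\ell$ together with $x^{(\ell+1)} \substep[\tau][\grid{g}_\ell] x^{(\ell)}$, and a valid tiling of $\Swang[\iota'][S]$ read from the $\wang{t}$-components along $\grid{g}_{\iota'-1}$. We then argue by induction on simulation levels. Suppose we have a valid $\Swang[\iota][S]$-tiling whose parameters satisfy the staircase constraints. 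Its $\sigma$-macro-tiles simulate a valid $\Swang[\iota'][S]$-tiling, by the claims on macro-colors and computation zones together with $\funram{e} = \funram{F(e)}$. Its $\tau$-macro-tiles of levels $\kappa \le \ell < \kappa'$ extend the sequences $(\grid{g}_\ell)$ and $(x^{(\ell)})$ with valid substitution steps, and the in/out pipes (\Cref{pclm:in-pipe,pclm:out-pipe}) together with the output arguments $\ram{O}[0..3]$ glue level $\kappa'-1$ consistently to level $\kappa'$ of the next simulation. Iterating this yields infinite sequences satisfying every hypothesis of \Cref{sof:lem:fixpoint}. The bounds~(1)(i)--(ii) are checked inside the bound fields. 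The color and time bounds~(2)(i)--(ii) hold because they are checked by the embedded computations, which have enough room thanks to the constants fixed in \Cref{fix:sec:fixing-constants}. Finally, $\dec(x^{(0)}) = \pi_{\alphabet}(\chi) \in X$ because the $\alphabet[B]$-layer lies in $X'$.

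\emph{Inclusion $\limspace{X}[\code{\tau}] \subseteq \pi_{\alphabet}(\mathcal{X})$.} Let $x \in \limspace{X}[\code{\tau}]$ with witnesses $(\grid{g}_\ell)$ and $(x^{(\ell)})$, together with any preimage in $X'$. We set $\rspace_\ell = \rspace(\grid{g}_\ell)$ and $\rtime_\ell = \rtime(\grid{g}_\ell)$, and define $\iota,\kappa$ by the \crtlnameref{fix:lem:staircase-lemma}. We then build by induction a level-$\iota(n)$ tiling of $\Swang[\iota(n)][S]$ for every $n$. Each field is filled canonically: positions from the grids, $\nghb$ from $\grid{g}_{\kappa'-1}$, $\tau$-processor fields from accepting runs of $\code{\tau}$ supplied by \Cref{calc:lem:pa-ram-simulation}, and wirings from the \crtlnameref{pos:lem:routing-lemma}. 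One must check that the routing hypotheses hold. These are the greediness conditions and the bound on the number of wires by $\rspace$, and they are guaranteed by \eqref{eq:stair:1} and the numerics paragraphs. For the $\sigma$-layer, the accepting computations of $e$ exist because the next level's tile is accepted by $\funram{F(e)} = \funram{e}$. These finite-level constructions are mutually consistent, since level $\iota(n)$ is built from level $\iota(n+1)$. We obtain the full configuration by compactness: build consistent patterns on growing windows from levels $0..n$ and take a limit point, which exists because the alphabet of $\mathcal{X}$ is finite.

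The main obstacle is the synchronization across a simulation boundary, namely grid $\grid{g}_{\kappa'-1}$ and symbol $a_{\kappa'}$, which are stored one level up. In the first inclusion, we must show that the in-pipe constraints of \Cref{pclm:in-pipe} at level $\iota$ and the out-pipe bijection of \Cref{pclm:out-pipe} at level $\iota'$ together force every bit of $a_{\kappa'}$ to agree between the two representations. In the second inclusion, we must show that the non-deterministic out-destination guess can always be chosen to match the smallest facet of the corresponding macro-tiles one level below. We would handle this via the ``alternative choices'' clause of \Cref{pclm:out-pipe}: choose $k_0$ to be the index of the smallest facet of each $\tau$-macro-tile of level $\kappa'-1$, which is known once the grids are fixed.
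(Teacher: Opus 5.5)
Your proposal is correct and follows essentially the same route as the paper. The inclusion $\pi_{\alphabet}(\mathcal{X}) \subseteq \limspace{X}[\code{\tau}]$ goes by induction over simulation levels, using the $\sigma$-macro-tiles with $\funram{e} = \funram{F(e)}$, the $\tau$-macro-tiles, and the $\tau$-pos/in/out pipes together with the output arguments $\ram{O}[0]$ to $\ram{O}[3]$ to glue level $\kappa'-1$ to level $\kappa'$. The converse fills every field from the given $S$-adic structure, with $\iota,\kappa$ taken from the staircase lemma, completes the $\sigma$-layers top-down, and concludes by compactness.

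The differences are cosmetic. You apply compactness once, in the finite alphabet of $\mathcal{X}$, rather than through the paper's diagonal argument on the level-wise configurations $\chi^{(\iota(n))}$. In both versions your ``induction'' has to be read as unfolding a small seed pattern from a finite top level $\iota(N)$ downwards, on windows that grow because the grids are centred at $\vec{0}$ and $\tau$ is expanding. You also make explicit the choice of the \field{out-destination fields} as the index of the smallest facet of the lower-level pixel rectangle; the paper leaves this implicit when it fixes the pipes consistently across levels.
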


We prove both inclusions.
\begin{proof}[{Proof ($\limspace{X}_{\!\code{\tau}} \subseteq \pi_{\alphabet}(\mathcal{X})$)}]
  Let $x \in \limspace{X}[\code{\tau}]$ be a valid configuration. By definition, there exists a sequence of grids $(\grid{g}_{\ell})_{\ell \in \N}$ such that $(\rspace(\grid{g}_{\ell}),\rtime(\grid{g}_{\ell}))_{\ell \in \N}$ satisfies the \crtlnameref{fix:lem:staircase-lemma}; and a sequence of configurations $(x^{(\ell)})_{\ell \geq 0}$ such that $\smash{x^{(\ell+1)} \substep[\tau][\grid{g}_{\ell}] x^{(\ell)}}$ is computed in time $\smash{\K[t] \cdot \pxspace{\ell}[\alpha]}$, that the bit length $\norm{x^{(\ell)}}$ satisfies $\smash{\norm{x^{(\ell)}} \leq \K[t] \cdot \pxspace{\ell-1}[\alpha]}$ and that $\dec(x^{(0)}) = x$ (where, as usual, $\smash{\pxspace{\ell} = \prod_{i=0}^{\ell-1} \rspace(\grid{g}_{\ell})}$).

  We denote $\grid{g}_{\ell} = (\rect{r}^{(\ell)}_{\vec{i}})_{\vec{i} \in \Z^d}$. Up to re-indexing of the grids $\grid{g}_{\ell}$ and shifting the configurations $\smash{(x^{(\ell)})_{\ell \in \N}}$, we assume that $\smash{\vec{0} \in \rect{r}^{(\ell)}_{\vec{0}}}$ for every $\ell \in \N$. In particular, if $\pxgrid{g}_{\ell} = (\pxrect{r}_{\vec{i}})_{\vec{i} \in \Z^d}$ refers to any product grid $\smash{\pxgrid{g}_{\ell} = \bigcomp_{i=0}^{\ell-1} \grid{g}_{i}}$, then $\vec{0} \in \pxrect{r}_{\vec{0}}$.

  We now prove that there exists a configuration $\chi \in \mathcal{X}$ such that $\pi_{\alphabet}(\chi) = x$. To do so, consider the sequence $(\iota(n),\kappa(n))_{n \geq 1}$ given by the \crtlnameref{fix:lem:staircase-lemma}. We build in parallel a sequence of configurations $(\chi^{(\iota(n))})_{n \geq 1}$ corresponding to the different levels of fixed-point simulation, \textit{i.e.}~such that tiles $\wang{t}$ appearing in $\chi^{(\iota(n))}$ are accepted by a valid computation $\funram{e}(\dots,\iota(n),\dots,\wang{t})$.

  \medskip
  Let $n \in \N$. For a position $\vec{i} \in \Z^d$, we inductively define a sequence $f_{\iota(n)}(\vec{i}) = (\rect{r}_{\ell},\vec{i}_{\ell})_{\ell \geq \iota(n)}$: initializing $\vec{i}_{\iota(n)} = \vec{i}$, there exists for each $\ell \geq \iota(n)$ a unique rectangle $\rect{r}_{\ell}$ in $\grid{g}_{\ell}$ such that $\vec{i}_{\ell} \in \rect{r}_{\ell}$; we then set $\vec{i}_{\ell+1}$ to the index of the rectangle $\rect{r}_{\ell}$ in $\grid{g}_{\ell}$. Intuitively, $(\rect{r}_{\ell},\vec{i}_{\ell})_{\ell \geq \iota(n)}$ refers to the shapes and positions in the higher-level macro-tiles that contain the position $\vec{i} \in \Z^d$ from the level $\iota(n)$.

  We begin by fixing the as many fields in $\chi^{(\iota(n))}$ as we can: for $\vec{i} \in \Z^d$, let $\wang{t}$ refer to the tile appearing at position $\vec{i}$ in $\chi^{(\iota(n))}$, and denote $f_{\iota(n)}(\vec{i}) = (\rect{r}_{\ell},\vec{i}_{\ell})_{\ell \geq \iota(n)}$; then:
  \begin{itemize}
  \item The \field{bound field} of $\dec(\wang{t})$ is set to $(\rspace(\grid{g}_{\ell}),\rtime(\grid{g}_{\ell}))_{\kappa(n)-1 \leq \ell < \kappa(n+1)-1}$;
  \item The \field{position field} of $\dec(\wang{t})$ is set to $ (\rnorm{\rect{r}_{\ell}},\vec{i}_{\ell} \bmod \grid{g}_{\ell})_{\iota(n) \leq \ell < \kappa(n+1)-1}$;
  \item The \field{neighbor field} of $\dec(\wang{t})$ is set to $(\nghb_{\ell})_{\iota(n) \leq \ell < \kappa(n+1)}$, where $\nghb_{\ell}$ is the set of directions $\pm \basis{k}$ such that both positions $\vec{i}_{\ell-1}$ and $\vec{i}_{\ell-1} \pm \basis{k}$ belong to the same rectangle $\rect{r}_{\ell}$;
  \item The \field{p-level field} of $\dec(\wang{t})$ is set to $\iota(n+1)$.
  \end{itemize}
  The \field{bound} and \field{position fields} uniquely determine the \field{compzone fields} in $\chi^{(\iota(n))}$. Using our hypotheses on the configurations $(x^{(\ell)})_{\kappa(n) \leq \ell \leq \kappa(n+1)}$ and the computation of the substitution $\smash{x^{(\ell+1)} \substep[\tau][\grid{g}_{\ell}] x^{(\ell)}}$ for $\kappa(n) \leq \ell < \kappa(n+1)$, we furthermore entirely fix all \field{$\tau$-(\dots) fields} in $\chi^{(\iota(n))}$:
  \begin{itemize}
  \item The \field{position fields} of $\chi^{(\iota(n))}$ uniquely determine the $\tau$-macro-tiles of level $\ell$ in $\chi^{(\iota(n))}$ for every $\kappa(n) \leq \ell < \kappa(n+1)$;
  \item Such a $\tau$-macro-tile $\mwang{t}$ of level $\ell$ covering the position $\vec{i} \in \Z^d$ in $\chi^{(\iota(n))}$ should compute (a pixel of) the substitution $\smash{x^{(\ell+1)}_{\vec{i}_{\ell+1}} \substep[\tau] \restr{x^{(\ell)}}{\rect{r}_{\ell}}}$, where $(\rect{r}_{\ell},\vec{i}_{\ell})_{\ell \geq \iota(n)} = f_{\iota(n)}(\vec{i})$;
  \item In $\chi^{(\iota(n))}$, we arbitrarily fix a layout of valid \field{$\tau$-wires} and \field{$\tau$-consistency fields}; similarly, we arbitrarily fix a layout of \field{$\tau$-in} and \field{$\tau$-out pipes}, that is consistent with their counterparts in $\chi^{(\iota(n+1))}$ and $\chi^{(\iota(n-1))}$. The \crtlnameref{pos:lem:routing-lemma} ensures that all these wiring-based fields can be filed in a valid way inside the tilings~$(\chi^{(\iota(n))})_{n \geq 1}$.
  \end{itemize}

  \enlargethispage{2\baselineskip}
  We have yet to fix the \field{processor}, \field{color} and \field{wire fields} in $\chi^{(\iota(n))}$. To fix these, let us first consider some level $\iota(N)$ and the associated (partially defined) configuration $\chi^{(\iota(N))}$. For any $\vec{i} \in \Z^d$, consider the tile $\wang{t}$ at position $\vec{i}$ in $\chi^{(\iota(N))}$. Assume that there exists a completion of the \field{processor}, \field{color} and \field{wire fields} of $\wang{t}$ such that
  \[ \funram{e} \big((\K[][s],\K[][w]), \iota(N), (\rspace(\grid{g}_{\ell}),\rtime(\grid{g}_{\ell}))_{0 \leq \ell < \kappa(N)-1}, (\rnorm{\rect{r}_{\ell}},\vec{i}_{\ell} \bmod \grid{g}_{\ell})_{\iota(N) \leq \ell < \kappa(N)-1}, \wang{t} \big) \]
  admits an accepting computation of length at most $\K[][s] \cdot \pxspace{\iota(N)}[\gamma]$ (\textit{c.f.}~\Cref{fix:sec:fixing-constants}). Then this computation can be used to fix the \field{processor}, \field{color} and \field{wire fields} of the corresponding $\sigma$-macro-tile of level $\iota(N-1)$:
  \begin{itemize}
  \item More precisely, let $(\pxrect{r}_{\vec{j}})_{\vec{j} \in \Z^d}$ denote the rectangles of the product grid $\smash{\bigcomp_{\ell=\iota(N-1)}^{\iota(N)-1} \grid{g}_{\ell}}$, and let $\mwang{t}$ refer to the $\sigma$-macro-tile $\restr{\chi^{\iota(N-1)}}{\pxrect{r}_{\vec{i}}}$;
  \item We fix the \field{color fields} of $\mwang{t}$ to draw the colors of the tile $\wang{t}$;
  \item We fix the \field{processor fields} in $\mwang{t}$ to implement the accepting computation of $\wang{t}$ by $\funram{e}(\dots)$ of length $K \cdot \pxspace{\iota(N)}[\gamma]$; In particular, by design, the computation zone $\compzone$ of $\mwang{t}$ is large enough to embed the whole computation;
  \item We fix any valid layout for the \field{wire fields} in $\mwang{t}$, which is possible by the \crtlnameref{pos:lem:routing-lemma}.
  \end{itemize}
  \noindent This inductively defines a completion of these fields for increasingly large rectangles of tiles in the configurations $\chi^{(\iota(n))}$ for $\iota(n) < \iota(N)$.

  We now conclude the construction of the configurations $(\chi^{(\iota(n))})_{n \geq 1}$ by a compactness argument. Instead of considering a single Wang tile $\smash{\chi^{(\iota(N))}_{\vec{i}}}$ accepted by $\funram{e}$, consider a completion of the \field{processor}, \field{color} and \field{wire fields} of $\smash{\restr{\chi^{\iota(N)}}{\{-1,0,1\}^d}}$ (it is always possible to define a locally valid pattern in $\chi^{\iota(N)}$ of such a small size). Since we assumed that $\vec{0} \in \rect{r}_{\vec{0}}^{(\ell)}$ for every $\ell \in \N$, and that the substitution $\tau$ is \emph{expanding} (\textit{i.e.}~the rectangles in the grids $\grid{g}_{\ell}$ of edge length at least~$2$), this small pattern in the configuration $x^{(\iota(N))}$ recursively defines for every $\iota(n) < \iota(N)$ a locally valid completion of the \field{processor}, \field{color} and \field{wire fields} of the configuration $\chi^{(\iota(n))}$ in a domain that includes $\{-2^{\iota(N)-\iota(n)},\dots,2^{\iota(N)-\iota(n)}\}^d$. In particular, for increasing values of $\iota(N)$, the resulting partial (and valid) tiling in $x^{(\iota(n))}$ eventually fills the entire domain $\Z^d$.

  Using compactness and a diagonal argument, we deduce the existence of configurations $(\chi^{(\iota(n))})_{n \geq 1}$ such that, for every $n \geq 1$ and $\vec{i} \in \Z^d$,
  \[ \funram{e} \big((\K[][s],\K[][w]), \iota(n), (\rspace(\grid{g}_{\ell}),\rtime(\grid{g}_{\ell}))_{0 \leq \ell < \kappa(n)-1}, (\rnorm{\rect{r}_{\ell}},\vec{i}_{\ell} \bmod \grid{g}_{\ell})_{\iota(n) \leq \ell < \kappa(n)-1}, \chi^{\iota(n)}_{\vec{i}} \big) \]
  results in an accepting computation, where $(\rect{r}_{\ell},\vec{i}_{\ell})_{\ell \geq \iota(n)} = f_{\iota(n)}(\vec{i})$. By considering a valid configuration $x' \in \mathcal{X}$ implementing the initialization of the configuration $\chi^{(\iota(1))}$ and such that $\pi_{\alphabet}(x') = x$, we conclude that $x \in \pi_{\alphabet}(\mathcal{X})$, and thus that $\limspace{X}[\code{\tau}] \subseteq \pi_{\alphabet}(\mathcal{X})$.
\end{proof}

We now prove the converse inclusion:
\begin{proof}[{Proof ($\limspace{X}_{\!\code{\tau}} \supseteq \pi_{\alphabet}(\mathcal{X})$)}]
  Let $\chi \in \mathcal{X}$. By definition, $\pi_{\alphabet}(\chi)$ is a valid configuration in the sofic shift space $X$. We now prove that it belongs to $\limspace{X}_{\!\code{\tau}}$ by exhibiting a sequence of configurations $\smash{(x^{(\ell)})_{\ell \geq 0}}$ such that $\dec(x^{(0)}) = \pi_{\alphabet}(\chi)$ and $\smash{x^{(\ell+1)} \substep[\tau] x^{(\ell)}}$ satisfying the hypotheses of \Cref{sof:lem:fixpoint}.

  We build this sequence $(x^{(\ell)})_{\ell \geq 0}$ by defining two sequences $(\iota(n),\kappa(n))_{n \geq 1}$ as in the~\crtlnameref{fix:lem:staircase-lemma}, and inductively defining the segment $(x^{(\ell)})_{\kappa(n) \leq \ell \leq \kappa(n+1)}$.

  \paragraph*{Initialization}
  Consider the configuration $\chi \in \mathcal{X}$: by definition of $\mathcal{X}$, there exists some integers $\iota' < \kappa'$, a sequence of bounds $(\rtime_{\ell},\rspace_{\ell})_{0 \leq \ell < \kappa'}$, a sequence of grids $(\grid{g}_{\ell})_{0 \leq \ell < \kappa'}$ and a sequence of configurations $(x^{(\ell)})_{0 \leq \ell \leq \kappa'}$ such that $\smash{x^{(\ell+1)} \substep[\tau][\grid{g}_{\ell}] x^{(\ell)}}$ in time $\smash{\K[t] \cdot \pxspace{\ell}[\alpha]}$ (for $\pxspace{\ell} = \prod_{i < \ell} \rspace_{i}$). Furthermore, it defines a configuration $\smash{\chi' \in \Swang[\ast]^{\Z^d}}$ such that each tile $\chi'_{\vec{i}}$ is accepted by
  \[ \funram{e} \big((\K[][s],\K[][w]), \iota', (\rspace_{\ell},\rtime_{\ell})_{\iota' \leq \ell < \kappa'-1}, (\rect{r}_{\ell},\vec{i}_{\ell})_{\iota' \leq \ell < \kappa'-1}, \chi'_{\vec{i}} \big)\]
  for some $(\rect{r}_{\ell},\vec{i}_{\ell})_{0 \leq \ell < \kappa'-1}$ given by $f_{\iota(0)}(\vec{i})$. We denote $\iota(1) = \iota'$ and $\kappa(1) = \kappa'$.

  \paragraph*{Induction}
  Assume that $\chi'$ is a valid $\Swang[\ast]$-tiling whose every tiles are accepted by a computation of
  \[ \funram{e} \big((\K[][s],\K[][w]), \iota(n), (\rspace_{\ell},\rtime_{\ell})_{0 \leq \ell < \kappa(n)-1}, (\rect{r}_{\ell},\vec{i}_{\ell})_{\iota(n) \leq \ell < \kappa(n)-1}, \chi'_{\vec{i}} \big),\]
  where $(\rect{r}_{\ell},\vec{i}_{\ell})_{\iota(n) \leq \ell < \kappa(n)-1}$ is deduced from $f_{\iota(n)}(\vec{i})$ in the grids $(\grid{g}_{\ell})_{0 \leq \ell < \kappa(n)}$. Then by construction of $\funram{e}$, we have:
  \begin{itemize}
  \item The \field{bound fields} and \field{p-level fields} in the tiling $\chi'$ uniquely determine two integers $\iota' < \kappa'$ and the next segment of bounds $(\rspace_{\ell},\rtime_{\ell})_{\kappa(n)-1 \leq \ell < \kappa'-1}$. Furthermore, they ensure that the sequence $(\iota(m),\kappa(m))_{m < \kappa(n)}$ satisfies the hypotheses of the~\crtlnameref{fix:lem:staircase-lemma};
  \item The \field{position fields} in the tiling $\chi'$ uniquely determine a extension of the grid sequence $(\grid{g}_{\ell})_{0 \leq \ell < \kappa(n)-1}$ into $(\grid{g}_{\ell})_{0 \leq \ell < \kappa'-1}$;
  \end{itemize}

  Furthermore, the $\sigma$-macro-tiles in $\chi'$ define a grid $\pxgrid{g}_{\iota'} = (\pxrect{r}_{\vec{i}})_{\vec{i} \in \Z^d}$ and a new tiling $\chi'' \in \Swang[\ast]^{\Z^d}$. More precisely, for every $\vec{i} \in \Z^d$, let $\mwang{t}$ be the $\sigma$-macro-tile $\restr{\chi'}{\pxrect{r}_{\vec{i}}}$:
  \begin{itemize}
  \item The \field{color fields} in $\mwang{t}$ define the Wang tile $\chi''_{\vec{i}} \in \Swang[\ast]$;
  \item The \field{processor fields} in $\mwang{t}$ embed an accepting computation of the form
    \[ \funram{e} \big((\K[][s],\K[][w]), \iota', (\rspace_{\ell},\rtime_{\ell})_{0 \leq \ell < \kappa'-1}, (\rect{r}_{\ell},\vec{i}_{\ell})_{\iota' \leq \ell < \kappa'-1}, \chi''_{\vec{i}} \big)\]
    where $(\rect{r}_{\ell},\vec{i}_{\ell})_{\iota' \leq \ell < \kappa'-1}$ is deduced from the \field{position fields} of the tiles in $\mwang{t}$;
  \item And this computation outputs some tuple $(\nghb_{\kappa'-1},\rect{r}_{\kappa'-1},\vec{i}_{\kappa'-1},u)$ for $\nghb_{\kappa'-1} \subseteq \{\pm \basis{k} : 1 \leq k \leq d\}$, $\rect{r}_{\kappa'-1} \in \Srect_0$, $\vec{i}_{\kappa'-1} \in \rect{r}_{\kappa'}$ and $u \in \{0,1\}^{\ast}$.
  \end{itemize}
  Furthermore, the pair $(\rect{r}_{\kappa'-1},\vec{i}_{\kappa'-1})$ output by the $\sigma$-macro-tile $\restr{\chi'}{\pxrect{r}_{\vec{i}}}$ coincides with the eponymous pair from the \field{position field} of $\chi''_{\vec{i}}$; thus, since $\chi''$ is a valid tiling whose every tile is accepted by $\funram{e}(\dots,\iota',\dots)$, these pairs collectively define a valid grid $\grid{g}_{\kappa'-1}$.

  Consider now the $\tau$-macro-tiles of $\chi'$. For every $\kappa(n) \leq \ell < \kappa'$, the $\tau$-macro-tiles of level $\ell$ define a grid $\pxgrid{g}_{\ell} = (\pxrect{r}_{\vec{i}})_{\vec{i} \in \Z^d}$; and they define an image configuration~$x^{(\ell)}$ and a source configuration~$\tilde{x}^{(\ell+1)}$ as follows. For every $\vec{i} \in \Z^d$, let $\mwang{t}$ be the corresponding $\tau$-macro-tile $\smash{\restr{\chi'}{\pxrect{r}_{\vec{i}}}}$ of level~$\ell$:
  \begin{itemize}
  \item The \field{$\tau$-color fields} in $\mwang{t}$ define, on the border of $\mwang{t}$, a valid Wang tile $x^{(\ell)}_{\vec{i}} \in \Swang[\ast]$;
  \item The \field{$\tau$-consistency fields} in $\mwang{t}$ define, on the border of $\mwang{t}$, a valid Wang tile $\tilde{x}^{(\ell+1)}_{\vec{i}} \in \Swang[\ast]$;
  \item The \field{$\tau$-processor fields} in $\mwang{t}$ embed an accepting computation of $\funram{\code{\tau}}(\rect{r}_{\ell},\vec{i}_{\ell},\tilde{x}^{(\ell+1)}_{\vec{i}}) \rightarrow x^{(\ell)}_{\vec{i}}$, where $(\rect{r}_{\ell},\vec{i}_{\ell})$ is the eponymous pair from the \field{position fields} of the tiles in $\mwang{t}$;
  \end{itemize}
  Furthermore, the \field{$\tau$-wire fields} ensure that every configuration $x^{(\ell)}$ forms a valid Wang tiling\footnote{Notice that the configuration $\tilde{x}^{(\ell+1)}$ is not a valid tiling, since the \field{$\tau$-consistency fields} of adjacent $\tau$-macro-tiles of level $\ell$ will repeat the same symbol if they belong to the same macro-tile of level~${\ell+1}$.}; and for every $\kappa(n) \leq \ell < \kappa' - 1$, the \field{$\tau$-pipes} ensure that for every $\vec{i} \in \Z^d$, every tile $\smash{\wang{t} \subpattern \restr{\tilde{x}^{(\ell+1)}}{\rect{r}_{\vec{i}}}}$ is actually equal to $\smash{x^{(\ell+1)}_{\vec{i}}}$ (for $(\rect{r}_{\vec{i}})_{\vec{i} \in \Z^d} = \grid{g}_{\ell}$). In order to conclude, we are left with comparing the source configuration $\smash{\tilde{x}^{(\kappa')}}$ from the last step of substitution in $\chi'$ and the image configuration $\smash{x^{(\kappa')}}$ of the first substitution step implemented in $\smash{\chi''}$.

  More precisely, applying the previous paragraphs on $\chi''$, the \field{$\tau$-color fields} of the $\tau$-macro-tiles of level $\kappa'$ in $\chi''$ define a valid Wang tiling $\smash{x^{(\kappa')}}$. For notations, we introduce $(\pxrect{r}'_{\vec{i}})_{\vec{i} \in \Z^d}$ the rectangles of the product grid $\smash{\bigcomp_{\ell=\iota}^{\kappa'-1} \grid{g}_{\ell}}$ and $(\pxrect{r}''_{\vec{i}})_{\vec{i} \in \Z^d}$ those of product grid $\smash{\bigcomp_{\ell=\iota'}^{\kappa'-1} \grid{g}_{\ell}}$. For $\vec{i} \in \Z^d$, we prove that every tile $\smash{t \subpattern \restr{\tilde{x}^{(\kappa')}}{\rect{r}'_{\vec{i}}}}$ (\textit{i.e.}~symbol of level $\kappa'$ given by the $\tau$-macro-tiles of level $\kappa'-1$ in $\chi'$) is equal to $\smash{\tilde{x}^{(\kappa')}_{\vec{i}}}$ (symbol of level $\kappa'$ given by the $\tau$-macro-tiles of level $\kappa'$ in $\chi''$):
  \begin{itemize}
  \item Denote $\smash{(c_0,\dots,c_{2d}) = x^{(\kappa')}_{\vec{i}}}$, and let us fix some $(i,j) \in \interval{0}{2d} \times \interval{0}{|c_i|-1}$;
  \item In the configuration $\chi''$, the \field{$\tau$-out pipes} in the $\tau$-macro-tile $\smash{\restr{\chi''}{\pxrect{r}_{\vec{i}}''}}$ define a unique position $\vec{p}_{i,j} \in \pxrect{r}_{\vec{i}}''$ such that the \field{third $\tau$-out pipe field} of $\smash{\dec(\chi''_{\vec{p}_{i,j}})}$ contains an entry $((i,j,b),\mathtt{end})$ for some $b \in \{0,1\}$; in which case, $b = c_i(j)$;
  \item Consider the corresponding $\sigma$-macro-tile in $\chi'$: denoting $(\pxrect{r}_{\vec{i}})_{\vec{i} \in \Z^d}$ the rectangles of the grid product $\smash{\bigcomp_{\ell=\iota(n)}^{\iota'-1} \grid{g}_{\ell}}$, and considering the output of the algorithm $\funram{e}(\dots)$, we deduce that $\vec{p}_{i,j} \in \Z^d$ is the index of the unique $\sigma$-macro-tile $\smash{\restr{\chi'}{\pxrect{r}_{\vec{p}_{i,j}}}}$ (inside the domain $\pxrect{r}'_{\vec{i}}$) in which the \field{first $\tau$-in pipe fields} contain en entry $((i,j,b),\mathtt{start})$; in which case, $b = c_i(j)$;
  \item Following the \field{$\tau$-in pipes}, this bit is correctly wired to the input of the parent $\tau$-macro-tile of level $\kappa'-1$ in $\chi'$.
  \end{itemize}
  Since this argument applies for all bits of indices $(i,j) \in \interval{0}{2d} \times \interval{0}{|c_i|-1}$, and that the \field{$\tau$-consistency fields} ensure, inside the pattern $\smash{\restr{\chi'}{\pxrect{r}'_{\vec{i}}}}$, that all the $\tau$-macro-tiles of level $\kappa'-1$ have the same input symbol, we conclude that for every $\vec{i} \in \Z^d$, every tile $\smash{\wang{t} \subpattern \restr{\tilde{x}^{(\kappa')}}{\rect{r}'_{\vec{i}}}}$ is equal to~$\smash{x^{(\kappa')}_{\vec{i}}}$. We conclude this induction step by setting $\iota(n+1) = \iota'$ and $\kappa(n+1) = \kappa'$.

  \bigskip
  By induction, we conclude that there exists a sequence $(x^{(\ell)})_{\ell \geq 0}$ such that $\smash{x^{(\ell+1)} \substep[\tau][\grid{g}_{\ell}] x^{(\ell)}}$, and computed in time $\smash{\K[t] \cdot \pxspace{\ell}[\gamma]}$. Since the sequence $(\rspace_{\ell},\rtime_{\ell})_{\ell \in \N}$ satisfies the growth conditions of \Cref{sof:lem:fixpoint}, so does the sequence $(\rspace_{\ell}(\grid{g}_{\ell}),\rtime_{\ell}(\grid{g}_{\ell}))_{\ell \in \N}$; and we conclude that $\dec(x^{(0)}) = x$ is a valid configuration in $\limspace{X}[\code{\tau}]$.
\end{proof}

\enlargethispage{2\baselineskip}
\subsection{Final word}

We now make a few comments on \Cref{sof:lem:fixpoint} and its proof.

\paragraph*{Pre-composition vs.\ parallel computations}

One of the main difficulties of this proof (compared to previous fixed-point constructions in the literature) comes from simulating several substitution steps $\smash{x^{(\kappa(n+1))} \substep \dots \substep x^{(\kappa(n))}}$ at a single level of fixed-point simulation $\iota(n)$. This is actually necessary because the substitution $\tau$ can output arbitrarily small patterns, \textit{e.g.}~of domain~$\ibox{2}^d$.

Instead of drawing multiple levels of $\tau$-macro-tiles in parallel, one could prefer to make single $\tau$-macro-tiles compute several compositions of $\tau$ with itself algorithmically (informally, using a standard fixed-point construction computing some power $\tau^{n}$ instead of $\tau$), thus ensuring that the resulting patterns grow fast enough to apply a step of fixed-point simulation. We claim that, while this choice is definitely possible, it results in significantly worse bounds on the possible growths of the grids $(\grid{g}_{\ell})_{\ell \in \N}$ and the computational requirements on $\tau$.

Indeed, consider a macro-tile $\mwang{t}$ of level $\ell$ and domain $\pxrect{r}_{\ell} \in \Srect_0$. If we are to compute some $n \in \N$ steps of $\tau$ inside $\mwang{t}$, the embedding of \Cref{calc:lem:pa-ram-simulation} only allows to draw $\bigO(\rspace(\pxrect{r}_{\ell})^{\alpha})$ steps of RAM computations in $\mwang{t}$. In particular, this limits the word length of the symbols in $x^{(\ell+n)}$ to $\norm{x^{(\ell+n)}} = \bigO(\pxspace{\ell}[\alpha])$, instead of the more permissive $\bigO(\pxspace{\ell+n-1}[\alpha])$ from Condition 2.(ii) in \Cref{sof:lem:fixpoint}.

\paragraph*{Algorithmic vs.\ geometric computations} One may also wonder why we actually consider a computable parallel substitution $\funram{\code{\tau}}$ instead of a more classical notion of computability on substitutions. The answer is, once again, that it allows for better bounds in \Cref{sof:lem:fixpoint}.

For fixed $n \in \N$, assume that we are designing a finite tileset $\Swang$ over the colors $\{0,1\}^n$ in which individual tiles can embed $t$ steps of RAM computations. In a finite cube $\ibox{N}^d$ (for $N \leq 2^n$), there are two possibilities to embed computations in the patterns of $\smash{\Swang^{\ibox{N}^d}}$:
\begin{itemize}
\item On the one hand, the individual tiles in $\ibox{N}^d$ can perform independent computations (with, maybe, some small $\bigO(n)$ communications between adjacent tiles). Globally, \emph{such a pattern will embed
  $t \cdot N^d$
  steps of RAM computations}, at the cost of distributing them as independent segments of length $t$;
\item On the other hand, the tiles in $\ibox{N}^d$ can collectively perform the RAM simulation of \Cref{calc:lem:pa-ram-simulation}. This embedding is based upon drawing the \emph{space-time diagram} of a processor array, thus requiring the use of a dimension to act as the ``time'' of the computation. Since this processor array is limited by its size $\bigO(N^{d-1})$ to embed RAM computations, \emph{such a pattern will only embed
  $t \cdot N^{d-1}$
  steps of RAM computations}. Nevertheless, these computations may be performed by a single (non-distributed) machine, thus allowing for as many steps of sequential computations.
\end{itemize}
Distributing the computations of $\tau$ into independent pixels thus reduces the size of the output of~$\funram{\code{\tau}}$, and in turn allows more flexibility on the computation times of $\code{\tau}$ and the growth of the grids $(\grid{g}_{\ell})_{\ell \in \N}$.

\paragraph*{Recognizable $S$-adic structure}
Let us briefly explain why the $S$-adic structures of the cover $\mathcal{X}$ and the limit space $\limspace{X}{\code{\tau}}$ (\textit{c.f.}~\Cref{sof:par:comment-structure}) is recognizable.
As seen in the proof by double inclusion in \Cref{fix:sec:final-sft}, any configuration $\chi$ in the finite-type cover $\mathcal{X}$ uniquely and inductively determines two functions $\iota,\kappa \colon \N \to \N$, and some sequences of grids $(\grid{g}_{\ell})_{\ell \in \N}$, configurations $(\chi^{(\iota(n))})_{n \geq 1}$ and simulations $(\sigma_{\iota(n)})_{n \in \N})$ such that:
\[ \chi \rsubstep[\sigma_{\iota(0)}^{-1}][\grid{g}_{\iota(1)-1} \circ \dots \circ \grid{g}_0] \chi^{(\iota(1))} \rsubstep[\sigma_{\iota(1)}^{-1}][\grid{g}_{\iota(2)-1} \circ \dots \circ \grid{g}_{\iota(1)}] \chi^{(\iota(2))}  \rsubstep[\sigma_{\iota(2)}^{-1}][\grid{g}_{\iota(3)-1} \circ \dots \circ \grid{g}_{\iota(2)}] \chi^{(\iota(3))} \dots \]
In particular, the configurations $\chi^{(\iota(n))}$ and the grid products $\grid{g}_{\iota(n)-1} \circ \dots \circ \grid{g}_{\iota(n-1)}$ form a sequence of simulated configurations by the cover configuration $\chi \in \mathcal{X}$, hence a recognizable $S$-adic structure.

Furthermore, let us denote $x = \pi_{\alphabet}(\chi) \in \limspace{X}[\code{\tau}]$. By definition of $\mathcal{X}$, every configuration $\chi^{(\iota(n))}$ draws (and thus uniquely determines) some configurations $x^{(\kappa(n))},\dots,x^{(\kappa(n+1)-1)}$ such that $x = \dec(x^{(0)})$ and, altogether, the sequence
\[ x^{(0)} \rsubstep[\tau][\grid{g}_{0}] x^{(1)} \rsubstep[\tau][\grid{g}_1] x^{(2)} \rsubstep[\tau][\grid{g}_2] x^{(3)} \dots\]
is a valid $S$-adic structure for $x^{(0)}$. Denoting by $\tilde{x}^{(\ell)}$ the ``zoomed'' configuration given by $x^{(\ell)}$ and the grids $(\grid{g}_k)_{0 \leq k < \ell}$, \textit{i.e.}~the configuration such that, for every $\vec{p} \in \Z^d$, we have $(\tilde{x}^{(\ell)})_{\vec{p}} = (x^{(\ell)})_{\vec{i}}$, where $\vec{i}$ is the unique rectangle index in the grid product $\smash{\pxgrid{g}_{\ell} = \bigcirc_{k=1}^{\ell} \grid{g}_{k}}$ such that $\vec{p} \in \pxgrid{g}_{\ell}(\vec{i})$; then the configuration $\chi$ factors onto every $\smash{\tilde{x}^{(\ell)}}$ by mapping the $\tau$-macro-tiles of level $\ell$ towards their symbol in $x^{(\ell)}$.

Since every macro-tile explicitly stores its position in the sequence of grids $(\grid{g}_{\ell})_{\ell \in \N}$, the shape and the symbol of a macro-tile can be determined locally. Furthermore, since for fixed $\alpha < 1$ and $\delta < \frac{1}{2}$ the $\tau$-macro-tiles of level $\ell$ are all of uniformly bounded radius, this mapping can be implemented as a factor map $\smash{\pi^{(\ell)} \colon X \to \Swang[\ast]^{\mspace{2mu}\Z^d}}$. We thus deduce that any configuration $\chi \in \mathcal{X}$ uniquely determines an $S$-adic structure (\textit{i.e.}~grids $(\grid{g}_{\ell})_{\ell \in \N}$ and configurations $(x^{(\ell)})_{\ell \in \N}$) for the configuration $\pi_{\alphabet}(\chi) \in \limspace{X}[\code{\tau}]$, which can be recovered by the factor maps $\pi^{(\ell)}$ that recognize the macro-tiles of level $\ell$.

\smallskip
In addition to the recognizability of the $S$-adic structures, the other inclusion in \Cref{fix:sec:final-sft} proves that any $S$-adic structure for $x$ using the substitution $\tau$ and that satisfies the hypotheses of \Cref{sof:lem:fixpoint} can be implemented as a cover configuration in $\mathcal{X}$. Thus, we conclude that there exists a correspondence between the $S$-adic structures of a configuration $x \in \limspace{X}[\code{\tau}]$ and the structure given by the factor maps $(\pi^{(\ell)})_{\ell \geq 0}$ on the preimages $\chi \in \pi_{\alphabet}^{-1}(x)$.

\paragraph*{Entropy of the construction}
Let us sketch a justification for \Cref{sof:par:comment-entropy}. In order to study the pattern complexity of the construction, we consider below its main sources of non-determinism:
\begin{itemize}
\item The embedded $\log$-RAM simulations from \Cref{calc:lem:pa-ram-simulation} create exactly one pattern per valid run of the program $\code{\tau}$. In particular, if the computations of $\code{\tau}$ are unambiguous, then the embedded computations will not increase the entropy of the construction;
\item The \crtlnameref{pos:lem:routing-lemma} may result in several possible wirings for a given set of end points, thus increasing the overall entropy.
  This can actually be avoided by considering its deterministic variant (\textit{c.f.}~\Cref{pos:rem:deterministic-routing-lemma}).
\item The \field{$\tau$-out pipes} may also result in several wirings. This issue can also be fixed by implementing a fixed layout for these pipes to follow (\textit{e.g.}~depending on the position of the $\sigma$-macro-tiles in each $\tau$-macro-tile of level $\kappa$).
\end{itemize}

We could thus define a slightly more specified cover $\mathcal{X}$ whose pattern complexity corresponds -- in terms of exponential growth -- to the number of patterns in $\limspace{X}$, counted with the multiplicity of their ambiguous $\code{\tau}$-computations and all their valid $S$-adic structures.
Note that the contribution of the latter becomes trivial when the directive sequence is assumed recognizable (and is polynomial when the substitutions are assumed injective over finite patterns, because placing a macro-tiles fixes its whole sub-hierarchy).

%
\section{Perspectives}
\label{persp:sec}

Future studies about the~\crtlnameref{sof:lem:fixpoint} will focus on better understanding the resulting finite-type cover. By going beyond the informal discussions of \Cref{sof:par:comment-constructivism} and \Cref{sof:par:comment-structure} towards expliciting the sequence of simulations (\textit{i.e.}~dill maps) that defines this cover, we could hope to prove its minimality and unique ergodicity (provided that the same properties hold for the original $N$-adic space), or that it has nearly the same pattern complexity (up to a polynomial factor?): is the projection map almost $1$-to-$1$ in some sense? is it the case at each level?

Some other properties, however, are not that easily preserved between sofic shifts and their covers.
For example, the second author and Charalampos Zinoviadis have studied directions of determinism in multidimensional shifts of finite type (see~\cite{Zinoviadis_2016:phd:expansiveness-2d-sfts}): they also defined suitable $N$-adic structures, but relied on a suitable computation model (partial reversible cellular automata) to ensure that configurations in the cover preserve the same directions of determinism.

As mentioned in \Cref{sof:par:comment-entropy}, we prove that a huge class of sofic shifts (\textit{e.g.}~$N$-adic shifts with injective ndill maps) have a cover of equal entropy. It is an open question whether every sofic shift admits such a cover. Some candidates for not having one (like the odd shift from~\cite{Cassaigne-Guillon_misc:odd-shift}) do not seem to admit an efficiently computable $N$-adic structure of low pattern complexity. Our construction can also be interpreted as a sufficient condition for multidimensional soficity in terms of bound on recursive information contained in the patterns/configurations of a shift; and while similar necessary conditions exist \cite{Guillon-Jeandel_2015:infinite-communication-complexity,Destombes-Romashchenko_2022:ressource-bounded-Kolmogorov-complexity-soficness-multidimensional-shifts}, a gap still remains when the communication complexity between two zones is exponential in their boundary.

Finally, a mysterious question would be to better understand what obstacles exist when generalizing the fixed-point construction from $\Z^d$ to other Cayley graphs, or other forms of ``substitutive'' graphs.
Note that an $S$-adic formalism also exists in all countable groups \cite{Bitar-Cabezas-Guillon_misc:substitutions-and-hierarchical-structures-in-countable-groups}, and that soficity of substitutive limit sets has been proven in a more general graph setting \cite{PavietSalomon_2024:phd:undecidability-geometric-invariants-tilings}.

\printbibliography
\end{document}
